      \theoremstyle{plain}
      \newtheorem{theorem}{Theorem}[section]
      \newtheorem{lemma}[theorem]{Lemma}
      \newtheorem{claim}[theorem]{Claim}
      \newtheorem{corollary}[theorem]{Corollary}
      \newtheorem{observation}[theorem]{Observation}
      \newtheorem{conjecture}[theorem]{Conjecture}
      \theoremstyle{definition}
      \theoremstyle{remark}
\def\cNP{\hbox{\rm \sffamily NP}}
\begin{document}

\title{Toward the Hanani--Tutte Theorem for Clustered Graphs}

\author{Radoslav Fulek\thanks{IST Austria,
\texttt{radoslav.fulek@gmail.com}.
The author gratefully acknowledges support from the Swiss National Science Foundation Grant PBELP2\_146705. 
The research leading to these results has received funding from the People Programme (Marie Curie Actions) of the European Union's Seventh Framework Programme (FP7/2007-2013) under REA grant agreement no [291734].
}}

\date{}

\maketitle

\thispagestyle{empty}


\begin{abstract}
The weak variant of Hanani--Tutte theorem says that a graph is planar, if it can be
drawn in the plane so that every pair of edges cross an even number of times.
Moreover, we can turn such a drawing into an embedding without changing the order
in which edges leave the vertices.
We prove a generalization of the weak Hanani--Tutte theorem  that also easily implies the monotone variant of the weak Hanani--Tutte theorem by Pach and T\'oth. Thus, our result
can be thought of as a common generalization of these two neat results.
In other words, we prove the weak Hanani-Tutte theorem for strip clustered graphs, whose clusters are linearly
ordered vertical strips in the plane and edges join only vertices in the same cluster or in neighboring clusters with respect to this order.
In order to prove our main result we first obtain a forbidden substructure characterization of embedded strip planar clustered graphs.

The Hanani--Tutte theorem says that a graph is planar, if it can be
drawn in the plane so that every pair of edges not sharing a vertex cross an even number of times.
We prove the variant of Hanani--Tutte theorem for strip clustered graphs if the underlying abstract graph is (i) three connected,
(ii) a tree; or (iii) a finite set of internally disjoint paths joining a pair of vertices. 
In the case of trees our result implies that c-planarity for flat clustered graphs with three clusters
is solvable in a polynomial time if the underlying abstract
graph is a tree.
\end{abstract}

\newpage

\thispagestyle{empty}

\tableofcontents

\newpage
\setcounter{page}{1}

\section{Introduction}

A \emph{drawing} of a graph $G=(V,E)$ is a representation of $G$ in the plane where every vertex
 in $V$ is represented by a unique point and every
edge $e=uv$ in $E$ is represented by a simple arc joining the two points that represent $u$ and $v$. If it leads to no confusion, we do not distinguish between
a vertex or an edge and its representation in the drawing and we use the words ``vertex'' and ``edge'' in both contexts. We assume that in a drawing no edge passes through a vertex,
no two edges touch and every pair of edges cross in finitely many points. A drawing of a graph is an \emph{embedding} if no two edges cross in the drawing. A graph is
\emph{planar} if it admits an  embedding.

\subsection{Hanani--Tutte theorem}
The Hanani--Tutte theorem~\cite{C34,T70} is a classical result that provides an algebraic characterization of planarity with interesting algorithmic consequences~\cite{FKP12J}. 
The (strong) Hanani--Tutte theorem says that a graph is planar as soon as it can be drawn in the plane so that no pair of edges that do not share a vertex
cross an odd number of times.
Moreover, its variant known as the weak Hanani--Tutte theorem~\cite{CN00,PT00,PSS06} states that if we have a drawing $\mathcal{D}$ of a graph $G$ where every pair of edges cross an even number of times then $G$ has an embedding that preserves the cyclic order of edges at vertices from $\mathcal{D}$.
Note that the weak variant does not directly follow from the strong Hanani--Tutte theorem.
For sub-cubic graphs, the weak variant implies the strong variant.

Other variants of the Hanani--Tutte theorem in the plane were proved for $x$-monotone drawings~\cite{FPSS12,PT04_monotone},
partially embedded planar graphs, simultaneously embedded planar graphs~\cite{S12+}, and two--clustered graphs~\cite{FKP12J}.
As for the closed surfaces of genus higher than zero, the weak variant is known to hold in all closed surfaces~\cite{PSS09}, and the strong variant was proved only
for the projective plane~\cite{PSS09c}.
It is an intriguing open problem to decide if the strong Hanani--Tutte theorem holds
for closed surfaces other than the sphere and projective plane.

To prove a strong variant for a closed surface it is enough to prove it for all the minor minimal graphs (see e.g.~\cite{D10} for the definition of a graph minor) not embeddable in the surface. Moreover, it is known that the list of such graphs is finite for every closed surface, see e.g.~\cite[Section 12]{D10}. Thus, proving or disproving the strong Hanani--Tutte theorem on a closed surface boils down to a search for a counterexample among a finite number of graphs. That sounds quite promising, since checking a particular graph is reducible to a finitely many, and not so many, drawings, see e.g.~\cite{SS13}. However, we do not have a complete list of such graphs for any surface besides the sphere and projective plane.

On the positive side, the list of possible minimal counterexamples for each surface was recently narrowed down to vertex two-connected graphs~\cite{SS13}.
See~\cite{S12} for a recent survey on applications of the Hanani--Tutte theorem and related results.

\subsection{Notation}
In the present paper we assume that $G=(V,E)$ is a (multi)graph as opposed to a simple graph.
Thus, we allow a graph to have multiple edges and loops unless stated otherwise.
We refer to an embedding of $G$ as to a \emph{plane} graph $G$.
The \emph{rotation} at a vertex $v$ is the clockwise cyclic order of the end pieces of edges incident to $v$. The \emph{rotation system} of a graph is the set of rotations at all its vertices.
An embedding of $G$ is up to an isotopy and the choice of an orientation and an outer (unbounded) face described by its rotation system.
Two embeddings of a graph are the \emph{same} if they have the same rotation system.
A pair of edges in a graph is \emph{independent} if they do not share a vertex.
An edge in a drawing is \emph{even} if it crosses every other edge an even number of times.
An edge in a drawing is \emph{independently even} if it crosses every other non-adjacent edge an even number of times.
A drawing of a graph is \emph{(independently) even} if all edges are (independently) even. Note that an embedding is an even drawing.
Let $x(v)$ and $y(v)$, respectively, denote the $x$-coordinate and $y$-coordinate of a vertex in a drawing.

Throughout the appendix we use the notation from~\cite{D10} to denote paths and walks in an abstract graph.
Thus, if $P$ is a path in $G$, i.e., a sequence of vertices without a repetition such that between every two consecutive
vertices there exists an edge in $G$, we write $P=v_1\ldots v_kPu_1\ldots u_l$ to label the first $k$ and the last $l$ vertices of $P$.
We use the same notation for walks that differ from paths by the possibility of revisiting vertices.
 When talking about oriented walks, i.e, walks traversed in a fixed direction, the orientation of the sub-walks of a walk $W$ is
inherited the orientation of $W$.
A walk is \emph{closed} if its first and last vertex coincide.
We denote by $Wvu_1\ldots u_kvW$ or $WvW'vW$ the concatenation of a walk $W$ with a closed walk $W'=vW'v=vu_1\ldots u_kv$ at the vertex $v$ of $W$.
A cycle in a graph is a closed walk without repetitions except for the first and last vertex that are the same.

\subsection{Strip clustered graphs}

A \emph{clustered graph}\footnote{This type of clustered graphs is usually called flat clustered graph in the graph drawing literature. We choose this simplified notation in order not to overburden the reader with unnecessary notation.}  is an ordered pair $(G,T)$, where $G$ is a graph, and
$T=\{V_i| i=1,\ldots, k\}$ is a partition of the vertex set of $G$ into $k$ parts. We call the sets $V_i$ clusters.
A drawing of a \emph{clustered graph} $(G,T)$ is clustered if vertices in $V_i$ are drawn inside
a topological disc $D_i$ for each $i$ such that $D_i\cap D_j=\emptyset$ and every edge of $G$
intersects the boundary of every disc $D_i$ at most once.
We use the term ``cluster $V_i$'' also when referring to the topological disc $D_i$ containing the vertices in $V_i$.
A clustered graph $(G,T)$ is \emph{clustered planar} (or briefly \emph{c-planar}) if $(G,T)$ has a clustered embedding.

A strip clustered graph is a concept introduced recently
 by Angelini et al.~\cite{ADDF13+}\footnote{The author was interested in this planarity variant independently prior
  to the publication of~\cite{ADDF13+} and adopted
 the notation introduced therein.} For  convenience  we slightly alter their definition and define ``strip clustered graphs''
as  ``proper'' instances of ``strip planarity'' in~\cite{ADDF13+}.
In the present paper we are primarily concerned with the following subclass of clustered graphs.
A \emph{clustered graph} $(G,T)$ is \emph{strip clustered}
 if $G=\left(V_1\cup \ldots \cup V_k, E\subseteq \bigcup_i{V_i \cup V_{i+1} \choose 2}\right)$, i.e., the edges
 in $G$ are either contained inside a part or join vertices in two consecutive parts.
A drawing of a strip clustered graph $(G,T)$ in the plane is \emph{strip clustered} if $i<x(v_i)<i+1$ for
all $v_i\in V_i$, and every line of the form $x=i$, $i\in \mathbb{N}$, intersects every edge at most once.
 Thus, strip clustered drawings constitute a restricted class
of clustered drawings.
We use the term ``cluster $V_i$'' also when referring to the vertical strip containing the vertices in $V_i$.
A strip clustered graph $(G,T)$ is \emph{strip planar}
if $(G,T)$ has a strip clustered embedding in the plane.
In the case when $G$ is given by an embedding with a given outer face we say that $(G,T)$ is an embedded strip clustered graph.
Then $(G,T)$ is \emph{strip planar}  if $(G,T)$ has a strip clustered embedding in the plane
in which the embedding of $G$ and the outer face of $G$ are as given.

The notion of clustered planarity appeared for the first time
in the literature in the work of Feng, Cohen and Eades~\cite{Feng95,Feng95+} in 1995 under the name of c-planarity
and its variant was considered already by Lengauer~\cite{L89} in 1989.
 See, e.g.,~\cite{BAT05,Feng95,Feng95+} for the general definition of c-planarity. Here, we consider only a special case of it.
 See, e.g.,~\cite{BAT05} for further references.
 We remark that it has been an intriguing open problem already for almost two decades to decide if c-planarity is \cNP-hard,
 despite of considerable effort of many researchers and that already for strip clustered graphs the problem is open~\cite{ADDF13+}.

To illustrate the difficulty of c-planarity testing we mention that
already in the case of three clusters~\cite{Cor05}, if $G$  is a cycle, the polynomial time algorithm for c-planarity is not trivial,
while if $G$ can be any graph, its existence is still open.
For comparison, if $G$ is a cycle then a strip clustered graph $(G,T)$ is trivially c-planar.
Moreover, this case dominates the strip planarity testing complexity-wise.

\begin{lemma}
\label{lemma:strip3}
The problem of strip planarity testing is reducible in linear time
to the problem of c-planarity testing in the case of flat clustered graphs with three clusters.
\end{lemma}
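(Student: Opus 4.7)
The plan is to exhibit an explicit linear-time reduction that transforms a strip clustered graph $(G,T)$ with $T=\{V_1,\ldots,V_k\}$ into a flat clustered graph $(G',T')$ with exactly three clusters whose c-planarity is equivalent to the strip planarity of $(G,T)$.

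The key construction is a linear-order-enforcing \emph{backbone}. I would introduce a path $P=p_1p_2\ldots p_k$ of fresh vertices and connect each $p_i$ to $V_i$ by a small, planarity-preserving gadget (for instance, a single edge to a designated vertex of $V_i$, after augmenting $V_i$'s induced subgraph so that this vertex is guaranteed to lie on the outer face in every admissible strip planar embedding). The partition $T'=\{C_P,C_{\mathrm{odd}},C_{\mathrm{even}}\}$ is defined by $C_P=V(P)$, $C_{\mathrm{odd}}=\bigcup_{i \text{ odd}}V_i$, and $C_{\mathrm{even}}=\bigcup_{i \text{ even}}V_i$. By construction, every edge of $G'$ lies within a single cluster, joins $C_P$ with $C_{\mathrm{odd}}\cup C_{\mathrm{even}}$, or joins $C_{\mathrm{odd}}$ with $C_{\mathrm{even}}$.

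For the forward direction, given a strip planar embedding of $(G,T)$ I would route $P$ as a horizontal arc through all strips, placing $p_i$ in strip $i$ below the rest of the drawing, and then realize the three cluster disks as pairwise disjoint combed regions: a thin horizontal tube around $P$ for $C_P$, and for $C_{\mathrm{odd}}$ and $C_{\mathrm{even}}$ ``combs'' whose teeth reach vertically into the corresponding strips from above and below. A direct verification then shows that each edge crosses the boundary of each cluster disk at most once.

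For the backward direction, I would analyze an arbitrary c-planar embedding of $(G',T')$. Inside the cluster disk of $C_P$ the path $P$ has an essentially unique embedding up to reflection, so the cyclic order in which the attachment edges exit this disk is forced to be compatible with the linear order $p_1,\ldots,p_k$. This rigidity propagates through the attachments and pins down the combinatorial order in which the strips meet the backbone, yielding after a homeomorphism of the plane a strip planar embedding of $(G,T)$. The main obstacle is the attachment gadget: too sparse a connection leaves the backward direction unforced, while too dense a connection destroys planarity of $G'$ in the forward direction (adding a universal vertex to a planar but non-outerplanar subgraph already fails planarity). Resolving this tension by designing a gadget that is simultaneously planarity-preserving and rigid enough to transport the strip order is the heart of the proof; once such a gadget is in place, the construction adds only $O(k)$ vertices and $O(k+\sum_i|V_i|)$ edges, yielding the claimed linear time bound.
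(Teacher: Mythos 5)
There is a genuine gap, and you acknowledge it yourself: the attachment gadget that is supposed to make the backbone ``rigid enough to transport the strip order'' while preserving planarity is never constructed, so the reduction is not actually exhibited. Beyond that, the backward direction would not work as sketched even granting a single-edge attachment. A path $p_1\ldots p_k$ with one pendant attachment edge per vertex does \emph{not} have an essentially unique embedding inside the disk of $C_P$: each pendant edge may leave the path on either side, so the cyclic order in which the attachment edges cross the boundary of $C_P$ is only a shuffle of an increasing and a decreasing subsequence, not the linear order $1,\ldots,k$. And even if the exit order were forced, it would only pin down one designated vertex per original cluster; since $V_1,V_3,V_5,\ldots$ are merged into a single cluster $C_{\mathrm{odd}}$ (and likewise for $C_{\mathrm{even}}$), nothing prevents the remaining vertices of different original clusters from interleaving inside the same disk, or the whole drawing from winding around, so no strip planar embedding of $(G,T)$ can be read off. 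This ``unwinding'' problem is exactly the crux of the backward direction, and your proposal does not address it. (Your forward direction also leans on an unproved step: augmenting $G[V_i]$ so that a designated vertex lies on the outer face in \emph{every} admissible strip planar embedding is itself nontrivial and could change strip planarity.)

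For comparison, the paper's reduction needs no gadget at all: it keeps $G$ unchanged and simply merges the clusters modulo~3, $V_j'=\bigcup_{i\equiv j\ (\mathrm{mod}\ 3)}V_i$, which makes the linear-time claim immediate. The forward direction is handled by viewing the three clusters as wedges at the origin and applying a spiral transformation $(\phi,r)\mapsto(k\phi,c\phi\cdot r)$ to a strip planar drawing; the backward direction is the substantive part, where a given three-cluster c-planar embedding is normalized by induction on the original clusters, maintaining an invariant on the radial distances at which the edge sets $E[V_{j-1},V_j]$ cross the separating rays, so that the embedding is an unrollable spiral and hence yields a strip clustered embedding. Your backbone idea is an attempt to enforce by construction what the paper proves can always be arranged by redrawing; to salvage it you would need both a concrete gadget and an argument replacing that inductive normalization, neither of which is present.
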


\begin{proof}
Given an instance of $(G,T)$ of strip clustered graph we construct a clustered graph $(G,T')$ with three clusters $V_0', V_1'$ and  $V_2'$ as follows. We put $T'=\{V_j'|  \ V_j'=\bigcup_{i; \ i \mod 3=j}V_i  \}$. Note that without loss of generality
we can assume that in a drawing of $(G,T')$ the clusters are drawn as regions bounded by a pair of rays emanating from the origin. By the inverse of a projective transformation taking the origin to the vertical infinity we can also assume that the same is true for a drawing of $(G,T)$.
 Notice that such clustered embedding of $(G,T)$ can be continuously deformed by a rotational
transformation of the form $(\phi,r)\rightarrow (k\phi, c\phi \cdot r)$ for appropriately chosen $k,c>1$, which is expressed in polar coordinates,
 so that we obtain a clustered embedding of $(G,T')$.
 We remark that  $(x,y)$ in Cartesian coordinates corresponds to $(\phi,\sqrt{x^2+y^2})$ such that $\sin \phi=\frac{y}{\sqrt{x^2+y^2}}$
and $\cos \phi=\frac{x}{\sqrt{x^2+y^2}}$ in polar coordinates.
On the other hand, it is not hard to see that if $(G,T')$ is c-planar then there exists a clustered embedding of $(G,T')$ with the following property.
For each $i=0,1,2$ and $j$ the vertices of $V_i'$ belonging to $V_j$
and the parts of their adjacent edges in the region representing $V_j'$ belong to a topological disc
 $D_j$ such that $D_j\cap D_{j'}=\emptyset$ for $j\not=j'$ fully contained in this region.
To this end we proceed as follows.
Let $E_i=E[V_{i-1},V_{i}]$ denote the edges in $G$ between $V_{i-1}$ and $V_{i}$.
Let $r_i$ denote the ray emanating from the origin that separates $V_{i-1}'$ from $V_{i}'$.
Given a clustered drawing of $(G,T')$, $p_e(\mathcal{D})$, for $e\in E_i$, is the intersection point of $e$
with the ray $r_{i \mod 3}$. Let $p$ denote the origin. Let $|pq|$ for a pair of points in the plane denote the Euclidean distance between $p$ and $q$.
Recall that $G$ has clusters $V_1,\ldots, V_k$. We obtain a desired embedding of $(G,T')$ inductively as $\mathcal{D}_{k}$ starting with
$\mathcal{D}_4$.
For $\mathcal{D}_i=(G,T')$, $i=5,\ldots, k$, we maintain the following invariant.
For each $j, \ 5\leq j\leq i$, we have $$\max_{e\in E[V_{j-4},V_{j-3}]}|pp_e| < \min_{e\in E[V_{j-1},V_{j}]}|pp_e| \  \ (*).$$
Let $\mathcal{D}$ denote a clustered embedding of $(G,T')$.
We start with a clustered embedding of $\mathcal{D}_4$ of $(G[V_1\cup V_2\cup V_3\cup V_4],T')$ inherited from $\mathcal{D}$.  In the $i^{\mathrm{th}}$ step of the induction we extend
$\mathcal{D}_{i-1}$ of $(G[V_1\cup V_2\ldots \cup V_{i-1}],T')$ inside the wedge corresponding to $V_{i-1 \mod 3}'$ and $V_{i \mod 3}'$
 thereby obtaining an embedding $\mathcal{D}_{i}$ of
$(G[V_1\cup V_2\ldots \cup V_{i}],T')$ so that the resulting embedding $\mathcal{D}_{i}$
is still clustered, and $(*)$ is satisfied. Since by induction hypothesis we have
$\max_{e\in E[V_{i-3j-2},V_{i-3j-1}]}|pp_e| < \min_{e\in E[V_{i-2},V_{i-1}]}|pp_e|$, for all possible $j$,
in $\mathcal{D}_{i-1}$ we have $G[V_{i-1}]$ drawn in the outer face of  $G[V_1 \cup \ldots \cup V_{i-2}]$.
Thus, we can extend the embedding of $\mathcal{D}_{i-1}$ into $\mathcal{D}_{i}$ in which all the edges
of $E_i$ cross $r_{i \mod 3}$ in the same order as in $\mathcal{D}$ while maintaining the invariant $(*)$
and the rotation system inherited from $\mathcal{D}$.
The obtained embedding $\mathcal{D}_{k}$ of $(G,T')$  can be easily transformed into a strip clustered embedding.

 Thus, $(G,T)$ is strip planar if and only if $(G,T')$ is c-planar.
\end{proof}

If $G$ is a tree also the converse of Lemma~\ref{lemma:strip3} is true. In other words,
 given an instance of clustered tree $(G,T)$ with three clusters $V_0, V_1$ and  $V_2$ we can easily construct a strip clustered tree $(G,T')$
with the same underlying abstract graph
such that $(G,T')$ is strip planar if and only if $(G,T)$ is c-planar. Indeed, the desired equivalent instance
 is obtained by partitioning the vertex set of $G$ into clusters thereby obtaining $(G,T'=\{V_i'|\ i\in I\subset \mathbb{N}\})$ as follows.
In the base case, pick an arbitrary vertex $v$ from a non-empty cluster $V_i$ of $G$ into $V_{i}'$,
and no vertex is processed.

In the inductive step we pick an unprocessed vertex $u$ that was already put into a set $V_j'$ for some $j$.
 We put neighbors of
$u$ in $V_{j \mod 3}$ into $V_j'$, neighbors in $V_{j+1 \mod 3}$ into $V_{j+1}'$,
 and neighbors of $v$ in $V_{j-1 \mod 3}$ into $V_{j-1}'$. Then we mark $u$ as processed.
 Since $G$ is a tree, the partition $T'$ is well defined.
Now, the argument of Lemma~\ref{lemma:strip3} gives us the following.

\begin{lemma}
\label{lemma:striptree}
The problem of c-planarity testing in the case of flat clustered graphs with three clusters is reducible in linear time
to the strip planarity testing if the underlying abstract graph is a tree.
\end{lemma}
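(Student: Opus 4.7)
The plan is to first verify that the partition $T'$ described just before the lemma is consistently defined and yields a strip clustered graph. Since $G$ is a tree, each vertex has a unique path to the starting vertex $v$, so the BFS-style propagation rule assigns each vertex a single cluster $V_j'$ with no conflict; and because the rule changes the strip index by $0$ or $\pm 1$ along every edge, adjacent vertices lie in the same or consecutive strip clusters, so $(G,T')$ is strip clustered.

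The implication ``$(G,T')$ strip planar $\Rightarrow (G,T)$ c-planar'' will follow immediately from Lemma~\ref{lemma:strip3}: the construction of $T'$ guarantees that a vertex placed in $V_j'$ originally belongs to $V_{j \bmod 3}$, so grouping the strip clusters of $T'$ by residue class mod $3$ reproduces exactly the partition $T$. Hence the $3$-cluster graph obtained from $(G,T')$ by the reduction of Lemma~\ref{lemma:strip3} is $(G,T)$ itself, which is therefore c-planar.

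For the converse, I would take a c-planar embedding of $(G,T)$ and, as in the argument of Lemma~\ref{lemma:strip3}, deform it so that the three clusters occupy three wedges $W_0,W_1,W_2$ emanating from a common apex $p$ in counterclockwise cyclic order (reflecting the drawing if necessary). Because $G$ is a tree, its drawing in the punctured plane $\mathbb{R}^2\setminus\{p\}$ lifts to the universal cover, parametrized in polar coordinates as $\{(\phi,r):\phi\in\mathbb{R},\ r>0\}$ with covering map $(\phi,r)\mapsto(\phi\bmod 2\pi,r)$. Fixing a lift of the starting vertex $v$ so that its $\phi$-coordinate lies in the wedge labeled by strip index $i$, every other vertex has a unique lift, and traversing an edge whose endpoints lie in original clusters differing by $\pm 1 \pmod 3$ shifts $\phi$ by exactly $\pm 2\pi/3$, i.e., by one strip width. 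Thus the lift places every vertex in the strip prescribed by the abstract construction of $T'$, and applying the inverse of the rotational transformation of Lemma~\ref{lemma:strip3} straightens the lifted wedges into vertical strips, producing a strip clustered embedding of $(G,T')$.

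The crux of the argument is the converse direction, and the tree hypothesis is essential precisely there: only for a simply connected $G$ does the universal-cover lift exist, so that the combinatorial winding encoded in the BFS construction of $T'$ is automatically realizable by a strip embedding. If $G$ contained a cycle, nontrivial monodromy around $p$ could obstruct the lift, and $(G,T)$ being c-planar would no longer force $(G,T')$ to be strip planar.
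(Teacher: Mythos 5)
Your proposal is correct, but the way you handle the crucial direction differs from the paper. The construction of $T'$ and its well-definedness on a tree, as well as the direction ``$(G,T')$ strip planar $\Rightarrow (G,T)$ c-planar'' (grouping the strips of $T'$ by residue mod $3$ recovers $T$ exactly, then apply the spiral transformation of Lemma~\ref{lemma:strip3}), coincide with what the paper does. For the converse, however, the paper simply reuses the second half of the proof of Lemma~\ref{lemma:strip3}: starting from a c-planar embedding with wedge-shaped clusters, it inductively re-embeds $G[V_1'\cup\dots\cup V_i']$ so that the invariant $(*)$ on radial distances holds, producing a ``spiral'' drawing that unrolls into strips; the tree hypothesis enters only in making $T'$ well defined. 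You instead keep the given c-planar embedding untouched and lift it to the universal cover of the punctured plane, using simple connectivity of the tree, and observe that the clustered-drawing condition (each edge meets each cluster boundary at most once) forces every inter-cluster edge to cross exactly the common boundary ray exactly once, so the lift lands each vertex in the strip prescribed by the BFS construction of $T'$. This is a genuinely different and arguably more conceptual argument: it avoids any re-embedding and makes transparent that the obstruction for general graphs is precisely nontrivial winding around the apex (which is the topological counterpart of the inconsistency of $T'$ on a cycle), whereas the paper's inductive argument has the advantage of working verbatim for arbitrary underlying graphs once a consistent strip structure $T'$ is available, without invoking covering-space machinery. One small point you leave implicit and should state: edges with both endpoints in the same cluster cross no boundary ray at all (otherwise they would meet their own cluster's boundary twice), so their lifts stay inside a single wedge copy; together with your observation about inter-cluster edges this is what guarantees that every vertical line $x=i$ meets each lifted edge at most once after straightening.
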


\subsection{Hanani--Tutte for strip clustered graphs}

We show the following generalization of the weak Hanani--Tutte theorem for strip clustered graphs. See Figure~\ref{fig:strips3} and~\ref{fig:strips4} for an illustration.

\bigskip
\begin{figure}[htp]
\centering
\subfigure[]{\includegraphics[scale=0.7]{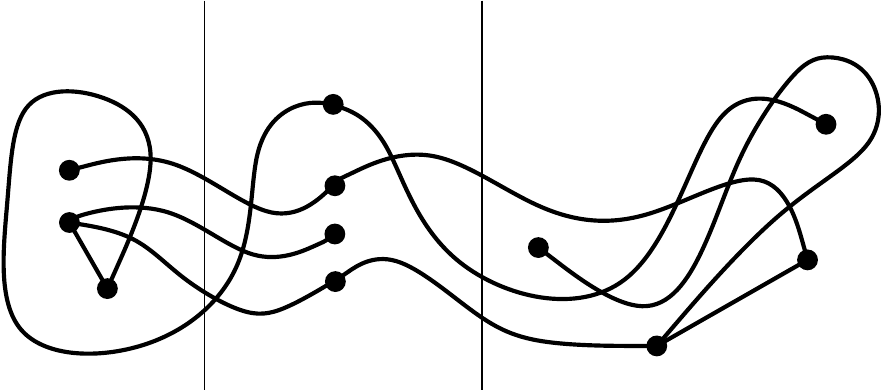}
    \label{fig:strips3}
	} \hspace{2cm}
\subfigure[]{\includegraphics[scale=0.7]{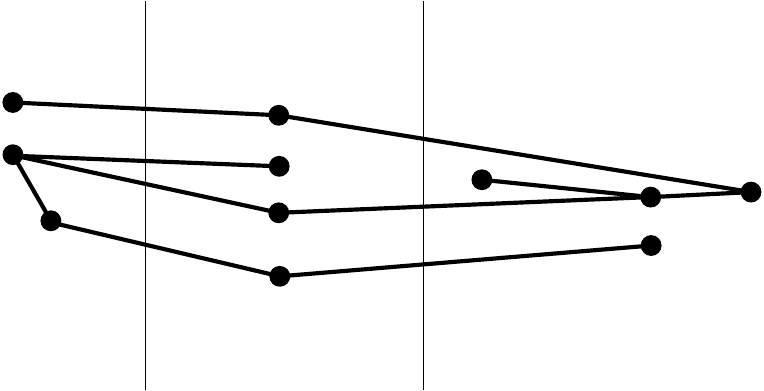}
	\label{fig:strips4}
	}
\caption{(a) Even clustered drawing of a strip clustered graph; (b) Strip clustered embedding of the same clustered graph.}
\end{figure}

\begin{theorem}\label{thm:linearly}
If a strip clustered graph $(G,T)$ admits an even strip clustered drawing $\mathcal{D}$ then $(G,T)$ is strip planar.
Moreover, there exists a strip clustered embedding of $(G,T)$ with the same rotation system as in $\mathcal{D}$.
\end{theorem}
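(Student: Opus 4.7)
My plan is to reduce Theorem~\ref{thm:linearly} to the (classical) weak Hanani--Tutte theorem, using the forbidden substructure characterization of embedded strip planar clustered graphs promised in the abstract as the crucial intermediate tool. Since $\mathcal{D}$ is in particular an even drawing of the abstract graph $G$, the weak Hanani--Tutte theorem applied to $G$ (ignoring the cluster structure) yields an embedding $\mathcal{E}$ of $G$ whose rotation system coincides with that of $\mathcal{D}$. Once we fix the outer face suggested by $\mathcal{D}$, the embedding $\mathcal{E}$ is completely determined, and the ``Moreover'' clause of the theorem follows automatically as soon as we verify that $(\mathcal{E}, T)$ is strip planar.

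To verify the latter, I would argue by contradiction: if $(\mathcal{E}, T)$ were not strip planar, the forbidden substructure characterization would furnish a concrete combinatorial obstruction living in $\mathcal{E}$. The core task is then to transfer this obstruction back to $\mathcal{D}$, showing that it forces some pair of independent edges to cross an odd number of times and thereby contradicting evenness. Since $\mathcal{E}$ and $\mathcal{D}$ share the same rotation system, differences between the two drawings can be analyzed edge-by-edge via parity bookkeeping along the vertical strip boundaries $x=i$. In a strip clustered drawing an inter-strip edge crosses each relevant boundary exactly once and an intra-strip edge crosses no boundary at all, so the homotopy class of each edge modulo $\mathbb{Z}/2$ is essentially fixed by the strip clustered condition; the cluster structure thus enters the parity calculus in a controlled way.

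The step I expect to be the main obstacle is precisely this transfer from a combinatorial obstruction in the embedding $\mathcal{E}$ to a genuinely odd pair of crossings in the non-embedded drawing $\mathcal{D}$: although $\mathcal{E}$ and $\mathcal{D}$ agree rotationally, $\mathcal{D}$ is not an embedding, so matching each obstruction with an odd pair requires a careful accounting of how individual edges have been ``redrawn'' in passing from $\mathcal{D}$ to $\mathcal{E}$. Should a direct parity argument prove unwieldy, my fallback is induction on the number $k$ of strips: peel off the last cluster $V_k$ by subdividing each edge of $E[V_{k-1},V_k]$ at its unique crossing with the line $x=k-1$, apply induction to the resulting strip clustered graph on $V_1,\ldots,V_{k-1}$ with pinned boundary vertices, and then re-attach the remaining drawing inside strip $V_k$ using the weak Hanani--Tutte theorem with these boundary vertices held fixed in the order read off from $\mathcal{D}$. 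In either approach, the role of evenness is to ensure that the linear order of subdivision vertices on $x=k-1$ inherited from $\mathcal{D}$ is compatible with the embeddings produced on both sides.
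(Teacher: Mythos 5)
Your skeleton does coincide with the paper's route (apply the weak Hanani--Tutte theorem to get an embedding with the rotation system of $\mathcal{D}$, then invoke the forbidden substructure characterization, then rule out the obstructions using evenness), but the step you yourself flag as the main obstacle is precisely where the actual proof content lies, and your proposed mechanism for it does not work as described. There is no need for edge-by-edge "parity bookkeeping" of how edges are redrawn in passing from $\mathcal{D}$ to $\mathcal{E}$, nor for homotopy classes relative to the strip boundaries: the two obstructions of Theorem~\ref{thm:characterization} (an infeasible interleaving pair consisting of an $i$-cap and a $j$-cup, and a trapped vertex) are determined by the abstract graph, the cluster labels and the rotation system alone, so they are already present in the even strip clustered drawing $\mathcal{D}$ itself. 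There, an infeasible pair is killed directly: close the cap and the cup by arcs drawn inside their extremal clusters (possible because $\mathcal{D}$ is strip clustered and $i+1<j$, so the two closing arcs avoid the other path); the crossing parity of the two resulting closed curves equals the (even) sum of edge-edge crossings plus the parity of $i_A$ read off from the rotation, so infeasibility would force two closed curves crossing an odd number of times, which is impossible. A trapped vertex is excluded by a separate argument that your plan does not touch: one needs that the boundary of the outer face meets every cluster (connectivity) and then a parity argument about where a path from the trapping cycle to the trapped vertex, respectively to the outer face, starts relative to the cycle. Related to this, your phrase "fix the outer face suggested by $\mathcal{D}$" hides a real issue: $\mathcal{D}$ is not an embedding, and the paper has to prove (Lemma~\ref{obs:outer-face}, via the bouquet-of-loops/winding-number argument) that an even drawing has an outer face at all before the characterization can be applied with a legitimate choice of outer face.

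Your fallback of inducting on the number of strips is not a safe retreat either: re-attaching the drawing inside the last strip "using the weak Hanani--Tutte theorem with boundary vertices held fixed" in a prescribed linear order along the line $x=k-1$ is not the classical weak Hanani--Tutte theorem but a constrained variant in which vertices are pinned on a line in a given order with edges confined to one side --- essentially the $x$-monotone/strip statement you are trying to prove, so as stated this step is circular. In short, the high-level decomposition matches the paper, but the two load-bearing arguments (existence of an outer face in an even drawing, and the direct closed-curve parity argument ruling out infeasible pairs and trapped vertices in $\mathcal{D}$) are missing, and the substitutes you sketch would not supply them.
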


Due to the family of counterexamples in~\cite[Section 6]{FKP12J}, Theorem~\ref{thm:linearly} does not leave too much room for straightforward generalizations.
Let $(G,T)$ denote a clustered graph, and let $G'=G'(G,T)$ denote a graph obtained from $(G,T)$ by contracting every cluster to a vertex and deleting all the loops and multiple edges.
If $(G,T)$ is a strip clustered graph, $G'$ is a sub-graph of a path.
Note that the converse is not true.
In this sense, the most general variant of Hanani--Tutte, the weak or strong one, we can hope for, is the one for the class of clustered graphs $(G,T)$, for which $G'$ is path.
In other words, in this variant $(G,T)$ is a strip clustered graph and a clustered drawing
is not necessarily strip clustered.

Our proof of Theorem~\ref{thm:linearly} is slightly technical, and combines a characterization of upward planar digraphs from~\cite{BBLM94} and Hall's theorem~\cite[Section 2]{D10}.
Using the result from~\cite{BBLM94} in our situation is quite natural, as was already observed in \cite{ADDF13+}, where they solve an intimately related algorithmic question discussed below.
The reason is that deciding the c-planarity for embedded strip clustered graphs is, essentially, a special case of the upward planarity testing.
The technical part of our argument augments the given drawing with subdivided edges so that we are able to apply Hall's Theorem.
Hence, the real novelty of our work lies in Lemma~\ref{marriage:lemma} that implies
the marriage condition, which makes the characterization do the work for us.
Moreover, as a byproduct of our proof we obtain a forbidden substructure characterization for embedded strip planar clustered graphs.

Our characterization verifies the following conjecture for $n=2$
stated for $n$-dimensional polytopal piecewise linearly embedded complexes~\footnote{The notion of the polytopal complex is not used anywhere else in the paper, and thus, a reader not interested in higher dimensional analogs of embedded graphs can skip following paragraphs.} generalizing embedded graphs.

Let $M_1$ and $M_2$, respectively, be $n_1$ and $n_2$-dimensional orientable manifold (possibly with boundaries) such that $n_1+n_2=n$.
Assume that $M_1$ and $M_2$ are PL embedded into $\mathbb{R}^n$ such that they are in general position,
i.e., they intersect in a finite set of points. Let us fix an orientation on $M_1$ and $M_2$.
The algebraic intersection number $i_A(M_1,M_2) = \sum_{p}o(p)$, where we sum over all intersection
points $p$ of $M_1$ and $M_2$ and $o(p)$ is 1 is if the intersection point is positive and -1 if the 
intersection point is negative with respect to the chosen orientations.
If $M_1$ and $M_2$ are not in a general position $i_A(M_1,M_2)$ denotes
$i_A(M_1',M_2')$, where $M_1'$ and $M_2'$, respectively, is slightly perturbed $M_1$ and $M_2$.
(A perturbation eliminates ``touchings'' and does not introduce new ``crossings''.)
Note that $i_A(M_1,M_2)=0$ is not affected by the choice of orientation.

Let $X$ be an an $n$-dim.~polytopal complex PL embedded in $\mathbb{R}^n$ simplicial up to the dimension $n-1$.
 Let $\mathcal{C}=\mathcal{C}(X,\mathbb{Z}_2)$ be the corresponding chain complex, and let $\gamma: X^0 \rightarrow \{1,\ldots, k\}$, where $X^0$ denote the set of vertices of $X$.
The complex $\mathcal{C}$ is \emph{compatible with $\gamma$} if for every pair of pure chains $(C_1,C_2)$ such that \\ 
(i) $dim(C_1)+dim(C_2)=n$; and  \\ (ii) the support of both $C_1$ and $C_2$ is homeomorphic to a ball of the corresponding dimension, \\

we have $i_A(C_1,C_2)= 0$ whenever $\gamma(C_1) \cap \gamma(\partial C_2) = \emptyset$ and $\gamma(\partial C_1) \cap \gamma(C_2) = \emptyset$.

\begin{conjecture}
\label{conj:conj1}
Suppose that $\mathcal{C}=\mathcal{C}(X,\mathbb{Z}_2)$,
where  $X$ is  an $n$-dim.~polytopal complex simplicial up to dimension $n-1$ PL embedded in $\mathbb{R}^n$,  is compatible with $\gamma: V \rightarrow \{1,\ldots, k\}$.
There  
exists a PL  embedding in $\mathbb{R}^n$ of $X$ (ambiently) isotopic
to the given embedding of $X$ such that  \\
(i) every non-empty intersection of an $i$-face, $0<i<n$, with a hyperplane $x_1=a$, for $a\in \mathbb{R}^n$, is homeomorphic to a $j$-dim.~ball, $j<i$; and\\
(ii) for every $u,v\in X^0$, $x_1(u)<x_1(v)$ if $\gamma(u)<\gamma(v)$, where $x_1(.)$
denotes the first coordinate.
\end{conjecture}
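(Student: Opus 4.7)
The plan is to attack Conjecture~\ref{conj:conj1} by induction on the ambient dimension $n$. The base case $n=2$ is exactly Theorem~\ref{thm:linearly}: for a $2$-dim.\ complex $X$ in $\mathbb{R}^2$ simplicial up to dimension $1$, the $1$-skeleton is a graph and the compatibility hypothesis, restricted to pairs $(C_1,C_2)$ where $C_1$ is a $1$-chain supported on an edge and $C_2$ is a disk-supported $2$-chain avoiding the endpoints' labels, is (mod~$2$) equivalent to the independently-even condition on the drawing. Condition (i) degenerates to $x$-monotonicity of edges, and condition (ii) is precisely the strip clustered condition, so Theorem~\ref{thm:linearly} provides the required ambient isotopy.

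For the inductive step, I would slice the given PL embedded complex $X$ by the ``half-integer'' hyperplanes $H_j=\{x_1=j+\tfrac{1}{2}\}$ for $j\in\mathbb{Z}$ between consecutive clusters. After a small generic perturbation inside the PL isotopy class, each intersection $X_j=X\cap H_j$ inherits the structure of an $(n-1)$-dim.\ polytopal complex simplicial up to dimension $n-2$ (PL embedded in $H_j\cong\mathbb{R}^{n-1}$), while condition (ii) forces every $0$-cell of $X$ to lie in exactly one slab $S_j=\{j<x_1<j+1\}$. One equips $X_j$ with an induced labeling $\gamma_j$ recording which of the two neighboring clusters each cell ``came from,'' so that $\gamma_j$ takes only two values; the goal is to straighten each $X_j$ inside $H_j$ via the inductive hypothesis, and then, inside each slab, realize an ambient PL isotopy that interpolates between the two flanking slices by a product structure on the two halves of the slab.

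The crux is the propagation of compatibility from $X$ to the slices $X_j$, which is the higher-dimensional analogue of the marriage argument of Lemma~\ref{marriage:lemma}. Given a pair of pure chains $(C_1',C_2')$ in $X_j$ with $\dim C_1'+\dim C_2'=n-1$ and disjoint $\gamma_j$-boundary labels, I would lift them to chains $(C_1,C_2)$ in $X$ by cellular thickening into the slab on whichever side contains the boundary-label sets, so that $\dim C_1+\dim C_2=n$ and the mod-$2$ algebraic intersection in $\mathbb{R}^n$ equals the algebraic intersection in $H_j$. The ambient compatibility hypothesis on $X$ then yields compatibility of $(X_j,\gamma_j)$, and the inductive hypothesis produces a straightening isotopy inside $H_j$. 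Assembling slab-by-slab and patching across $H_j$ requires an upward-PL realization analogous to the use of~\cite{BBLM94} in the proof of Theorem~\ref{thm:linearly}, with ``upward planarity'' replaced by the requirement that each slab admits a level-preserving PL homeomorphism onto $X_j\times[j,j+\tfrac{1}{2}]\cup X_{j-1}\times[j-\tfrac{1}{2},j]$ fixing the two slice boundaries.

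The main obstacle is precisely where high-dimensional Hanani--Tutte phenomena are known to be subtle. Even for abstract simplicial complexes, the question of whether vanishing of the $\mathbb{Z}_2$ van Kampen--Wu obstruction implies embeddability is open outside the metastable range, and the $r$-fold Whitney trick of Mabillard--Wagner breaks down below it. I therefore expect the strategy to succeed unconditionally only when the slab-level realization problem sits in the metastable range (in particular for $n=3$, where slices are $2$-dimensional and the base case already handles the slice step, so only an ambient realization in $\mathbb{R}^3$ remains), and to require genuinely new ideas---or reveal a counterexample---for $n\geq 4$ in low codimension. A reasonable incremental target, beyond the $n=2$ theorem proven in the paper, is therefore the $n=3$ case with $X$ a $3$-dim.\ polytopal complex whose $2$-skeleton is simplicial, where the slice step is essentially Theorem~\ref{thm:linearly}.
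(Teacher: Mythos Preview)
The statement is a \emph{conjecture}; the paper does not prove it. The paper explicitly remarks that it has ``no support [for] the conjecture except for the fact that it holds for $n=2$,'' and that the $n=2$ case is verified by Theorem~\ref{thm:characterization} (the forbidden-substructure characterization of embedded strip planar graphs), not by Theorem~\ref{thm:linearly} as you claim. Your identification of the base case is doubly wrong: first, the hypothesis of the conjecture is that $X$ is already PL \emph{embedded}, so there is no ``independently-even drawing'' in play---the relevant $n=2$ statement is about embedded clustered graphs, which is Theorem~\ref{thm:characterization}; second, your dimension count is off, since for $n=2$ the compatibility condition concerns pairs with $\dim C_1+\dim C_2=2$, so both chains are $1$-dimensional (these are the $i$-cap/$j$-cup paths with disjoint boundary labels), not a $1$-chain against a $2$-chain.

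Beyond the base case, what you have written is a strategy outline, not a proof, and you yourself concede the decisive obstacles: the lifting of compatibility from $X$ to the slices $X_j$ is asserted rather than argued (it is not clear that every relevant pair $(C_1',C_2')$ in $H_j$ arises as the slice of a pair in $X$ satisfying the label hypotheses, nor that algebraic intersection is preserved under your ``cellular thickening''), and the slab-by-slab gluing step invokes an unnamed higher-dimensional analogue of upward planarity for which no existing result is cited. Since the paper offers no proof to compare against, the honest summary is that your proposal is a plausible line of attack on an open problem, with a misattributed and dimensionally incorrect base case and an inductive step whose key lemmas are unproved.
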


We remark that it is not clear to us if Conjecture~\ref{conj:conj1} is the right generalization of Theorem~\ref{thm:characterization} and also we do not have 
any support the conjecture except for the fact that it holds for $n=2$.
The fact that in condition~(ii) of the conjecture we do not require 
edges to have end vertices mapped by $\gamma$ at most one unit apart
does not make any difference since in in order to apply 
Theorem~\ref{thm:characterization} we would just subdivide edges
if necessary.

An edge $e$ of a topological graph is \emph{$x$-monotone} if every vertical line intersects
$e$ at most once.
Pach and T\'oth~\cite{PT04_monotone} (see also~\cite{FPSS12} for a different proof of the same result) proved the following theorem.

\begin{theorem}\label{thm:mono}
Let $G$ denote a graph whose vertices are totally ordered.
Suppose that there exists a drawing $\mathcal{D}$ of $G$, in which $x$-coordinates of vertices respect their order, edges are $x$-monotone and every pair of edges cross an even number of times.
Then there exists an embedding of $G$, in which the vertices are drawn as in $\mathcal{D}$, the edges are $x$-monotone, and the rotation system is the same as in $\mathcal{D}$.
\end{theorem}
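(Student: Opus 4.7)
The plan is to deduce Theorem~\ref{thm:mono} from Theorem~\ref{thm:linearly} through a subdivision construction. Given the $x$-monotone even drawing $\mathcal{D}$ of $G$ with vertices $v_1,\ldots,v_n$ ordered as prescribed, I first rescale $\mathcal{D}$ horizontally so that $i<x(v_i)<i+1$ for each $i$, and let $S_k=\{(x,y):k<x<k+1\}$ denote the $k$-th vertical strip. For every edge $e=v_iv_j$ of $G$ with $j\geq i+2$, I introduce degree-2 subdivision vertices $u^e_k$ placed on $e$ inside the strip $S_k$ for each $k$ with $i<k<j$. Writing $G'$ for the resulting subdivided graph and $T=\{V_k\}$ with $V_k=\{v_k\}\cup\{u^e_k\}$, the pair $(G',T)$ is a strip clustered graph, and the natural drawing $\mathcal{D}'$ inherited from $\mathcal{D}$ is a strip clustered drawing since each segment joins vertices in the same or neighboring clusters.

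The heart of the argument is to choose the positions of the $u^e_k$ so that $\mathcal{D}'$ is an even drawing. Two segments arising from the same $x$-monotone edge of $G$ never cross one another, so it suffices to handle pairs of segments coming from distinct original edges $e,e'$. The total number of crossings in $\mathcal{D}'$ among all such pairs equals the number of crossings between $e$ and $e'$ in $\mathcal{D}$ and is therefore even. By sliding $u^e_k$ along $e$ within $S_k$, one reassigns the strip-$k$ crossings of $e$ with every other edge between the two segments of $e$ incident to $u^e_k$. A suitable consistent choice of all $u^e_k$---morally, choosing each $u^e_k$ so that all crossings of $e$ with other edges lying in $S_k$ fall ``on the same side'' of $u^e_k$---makes every pair of distinct segments of $\mathcal{D}'$ cross an even number of times. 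Once $\mathcal{D}'$ is shown to be even, Theorem~\ref{thm:linearly} yields a strip clustered embedding $\mathcal{E}'$ of $(G',T)$ whose rotation system agrees with that of $\mathcal{D}'$.

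To conclude, I smooth out each degree-2 subdivision vertex of $\mathcal{E}'$, producing an embedding $\mathcal{E}$ of $G$. Smoothing preserves the cyclic order of edges at all remaining vertices, so the rotation system of $\mathcal{E}$ at $v_1,\ldots,v_n$ matches that of $\mathcal{D}$. Because the clusters $V_1,\ldots,V_n$ appear left-to-right in $\mathcal{E}'$ and every edge of $\mathcal{E}'$ lies between consecutive clusters, a homeomorphism of $\mathbb{R}^2$ turns $\mathcal{E}$ into an $x$-monotone embedding in which each $v_i$ sits at its prescribed $x$-coordinate. I expect the main obstacle to be the placement step in the middle paragraph: a naive subdivision may fail to give an even drawing because the per-strip crossings of a pair of original edges can be odd even though their total is even, and exhibiting a global placement of all subdivision vertices that simultaneously makes every segment-pair crossing count even is where the real combinatorial work of the proof lies.
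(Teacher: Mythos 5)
Your overall reduction is exactly the paper's: rescale so that each vertex sits in its own vertical strip, subdivide long edges so that the result is strip clustered, apply Theorem~\ref{thm:linearly}, smooth the degree-2 vertices, and straighten within strips. The gap is precisely the step you flag yourself: producing an \emph{even} strip clustered drawing $\mathcal{D}'$. Your proposed mechanism -- sliding each subdivision vertex $u^e_k$ along $e$ within its strip so that crossings fall ``on the same side'' -- does not work in general, and no choice of positions alone suffices. Sliding only decides, for each existing crossing of $e$ with another edge, which of the two incident sub-segments of $e$ inherits it; it cannot create or cancel crossings. One can realize $x$-monotone even configurations with three edges $e,f,g$ in which the constraint coming from the pair $(e,f)$ forces $u^e_k$ to lie to the right of one crossing in strip $S_k$, while the constraint from $(e,g)$ forces it to lie to the left of an earlier crossing in the same strip, so the two requirements are incompatible; the per-strip crossing parities simply cannot always be repaired by repositioning subdivision points.

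The paper's proof resolves this with Lemma~\ref{lemma:bounded}, and the resolution is of a different nature: subdivide at an arbitrary point $z$, and then \emph{redraw the other edges locally}, not move $z$. The key parity observation is that, because the drawing was even before subdividing, no edge can cross exactly one of the two halves of $e$ an odd number of times; every offending edge crosses \emph{both} halves oddly. Hence performing an edge-vertex switch of $z$ with each such edge (pulling it over $z$ inside the strip) flips both parities simultaneously and restores evenness, while introducing crossings with all other edges only in pairs and creating no new intersections with the strip boundaries or changes of rotation. This is the missing ingredient in your argument; without it (or some substitute for it), the appeal to Theorem~\ref{thm:linearly} is not justified. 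A second, minor point: your final ``homeomorphism of $\mathbb{R}^2$'' does not by itself yield $x$-monotone edges; the paper instead replaces each edge by a polygonal path whose bends are its intersection points with the separating vertical lines, which is legitimate because in a strip clustered embedding each edge meets each such line at most once.
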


We show that Theorem~\ref{thm:linearly} easily implies  Theorem~\ref{thm:mono}.
Our argument for showing that suggests a slightly different variant of Theorem~\ref{thm:linearly} for not necessarily clustered drawings that directly implies Theorem~\ref{thm:mono} (see Section~\ref{sec:bounded}).

The strong variant of Theorem~\ref{thm:linearly}, which is conjectured
to hold, would imply the existence
of a polynomial time algorithm for the corresponding variant of the
c-planarity testing~\cite{FKP12J}.
To the best of our knowledge, a polynomial time algorithm
was given only in the case, when the underlying planar graph has a prescribed planar embedding~\cite{ADDF13+}.
Our weak variant gives a polynomial time algorithm if $G$ is sub-cubic, and in the same case as~\cite{ADDF13+}.
Nevertheless, we think that the weak variant is interesting in its own right.

To support our conjecture we prove the strong variant of Theorem~\ref{thm:linearly} under
the condition that the underlying abstract graph $G$ of a clustered graph is a subdivision of a vertex three-connected graph
or a tree.
In general, we only know that this variant is true for two clusters~\cite{FKP12J}.

\begin{theorem}\label{thm:linearlyStrong}
Let $G$ denote a subdivision of a vertex three-connected graph.
If a strip clustered graph $(G,T)$ admits an independently even strip clustered drawing  then $(G,T)$ is strip planar.\footnote{The argument in the proof of Theorem~\ref{thm:linearlyStrong}
proves, in fact, a strong variant even in the case, when we require the vertices participating in a cut or two-cut to have the maximum degree three. Hence, we obtained a polynomial time algorithm even in the case of sub-cubic cuts and two-cuts.}
\end{theorem}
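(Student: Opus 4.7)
The plan is to reduce the strong variant to the weak variant (Theorem~\ref{thm:linearly}) by locally modifying the given independently even strip clustered drawing $\mathcal{D}$ of $(G,T)$ into an \emph{even} strip clustered drawing. The 3-connectivity of the abstract graph $H$ of which $G$ is a subdivision is precisely what makes the reduction work: by Whitney's theorem, $H$ has an essentially unique combinatorial planar embedding, which constrains the rotation at every branch vertex.

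I would proceed as follows. First, apply the classical strong Hanani--Tutte theorem to $\mathcal{D}$ (ignoring the clustering) to conclude that $G$ is planar. Second, examine the rotation system induced by $\mathcal{D}$ at the branch vertices (those of degree $\geq 3$) of $G$; combining 3-connectivity with Whitney's theorem forces this rotation, up to a single global reflection, to match the unique planar embedding of $G$. Third, around each branch vertex $v$ perform a local redrawing inside a small topological disk $D_v$ contained in the vertical strip of $v$ and meeting each incident edge in a single arc: reroute the edge ends inside $D_v$ so that they leave $v$ in the cyclic order inherited from $\mathcal{D}$ without crossing one another. This local move changes only parities of crossings between pairs of edges sharing $v$, so it preserves independent evenness, while fixing parities of adjacent pairs at $v$. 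Finally, verify that the resulting drawing is both strip clustered and even, and apply Theorem~\ref{thm:linearly} to obtain a strip clustered embedding of $(G,T)$ with the same rotation system at branch vertices.

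The hard part will be arguing that after the local redrawing \emph{every} pair of adjacent edges crosses evenly, not only the adjacent pairs incident to branch vertices but also the pair of edges at each degree-two subdivision vertex. Here the rigidity forced by 3-connectivity is essential: Whitney's theorem pins down the rotation globally from its values at the branch vertices, and then the parities of adjacent crossings at subdivision vertices are determined by tracing the paths of the subdivision between branch vertices and tracking how the drawing realizes the unique combinatorial embedding along them. The extension noted in the footnote, in which cuts and two-cuts are permitted through vertices of degree at most three, then follows from the same redrawing scheme, since at sub-cubic vertices the weak Hanani--Tutte theorem already implies the strong one and the local move can be carried out without appealing to global uniqueness of the embedding.
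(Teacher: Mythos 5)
Your high-level plan---turn the independently even strip clustered drawing into an even one and then invoke Theorem~\ref{thm:linearly}---is indeed how the paper proceeds, but the way you propose to reach evenness has two genuine gaps. First, the Whitney-based claim in your second step is false: independent evenness puts no constraint at all on rotations, because redrawing the edge ends inside a small disk around a vertex creates crossings only between edges sharing that vertex, which independent evenness ignores. Hence the rotation of $\mathcal{D}$ at the branch vertices can be completely arbitrary and need not agree, even up to a global reflection, with the unique embedding of the underlying three-connected graph; Whitney's theorem gives you no leverage on $\mathcal{D}$. Second, the assertion that your local move ``fixes parities of adjacent pairs at $v$'' is unsupported and is in fact the entire difficulty: crossings between adjacent edges may occur far from $v$, and a redrawing inside $D_v$ that preserves the cyclic order can only flip, for a chosen edge, its crossing parity with \emph{all} the other edges at $v$ simultaneously (by looping it around $v$); it cannot correct one adjacent pair at a time, so there is no reason that all adjacent pairs---at branch vertices, let alone at the degree-two subdivision vertices---become even. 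The ``hard part'' you defer to rigidity therefore has no argument behind it, since the rigidity premise itself fails.

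The paper's proof uses three-connectivity quite differently (Lemma~\ref{lemma:removeOdd}): besides local redrawings at vertices it allows \emph{vertex splits}, and three-connectivity enters through a connectivity/parity argument rather than uniqueness of the embedding. One processes a cycle $C$ containing an edge crossed oddly by an adjacent edge, first making the edges of $C$ even; at a vertex $v$ of $C$, either every edge starting inside $C$ crosses every edge starting outside $C$ evenly---then $v$ is split into an ``inside'' and an ``outside'' copy, which preserves independent evenness and protects $C$ from future odd crossings---or some $vu$ starting inside crosses some $vw$ starting outside an odd number of times; in that case two paths from $u$ and $w$ to two distinct vertices of $C$, internally disjoint from $C$ (this is exactly where three-connectivity is used), produce two cycles whose edges cross an odd number of times in total, contradicting the fact that two closed curves cross evenly. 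A potential function shows the splitting terminates, Theorem~\ref{thm:linearly} is applied to the split graph (each split vertex inherits its parent's cluster), and the split edges are contracted back. To repair your outline you would need analogues of these two ingredients; applying the strong Hanani--Tutte theorem to conclude planarity of $G$ and citing Whitney's theorem does not produce the even strip clustered drawing that Theorem~\ref{thm:linearly} requires.
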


The proof of Theorem~\ref{thm:linearlyStrong} reduces to Theorem~\ref{thm:linearly} by correcting the rotations at the vertices
of $G$ so that the theorem becomes applicable.
By combining our characterization for embedded strip planar clustered graphs with the characterization of 0--1 matrices with consecutive
ones property due to Tucker~\cite{Tucker72} we prove the strong variant also for strip clustered graphs $(G,T)$,
where $G$ is a tree.

\begin{theorem}\label{thm:tree}
Let $G$ be a tree.
If a strip clustered graph $(G,T)$ admits an independently even strip clustered drawing  then $(G,T)$ is  strip planar.
\end{theorem}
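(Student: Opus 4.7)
The plan is to combine the forbidden substructure characterization of embedded strip planar clustered graphs (the byproduct of the proof of Theorem~\ref{thm:linearly}) with Tucker's classification of the forbidden submatrices for the consecutive ones property (C1P) of $0$--$1$ matrices.

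Since $G$ is a tree, every rotation system on $G$ realizes an embedding, and the strip clustered constraint only forces that at each vertex $v\in V_i$ the edges going into $V_{i-1}$, staying in $V_i$, and going into $V_{i+1}$ occupy three contiguous arcs of the rotation. For a tree the orderings inside these arcs at different vertices (and hence at different boundaries) can be chosen independently---no cycle couples the constraints coming from two distinct boundaries---so $(G,T)$ is strip planar if and only if, for every boundary $\ell_i=\{x=i\}$, one can pick a linear order on the edges crossing $\ell_i$ such that for every incident vertex $v\in V_{i-1}\cup V_i$ the edges of $v$ appear as a consecutive block. Encoding this as a $0$--$1$ matrix $M_i$ whose rows are the vertices of $V_{i-1}\cup V_i$ incident to some edge crossing $\ell_i$ and whose columns are those edges, with the obvious incidence entries, the condition is precisely that each $M_i$ has the consecutive ones property.

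The core step is then the local claim: \emph{if some $M_i$ fails C1P, then two independent edges of $G$ must cross an odd number of times in $\mathcal{D}$, contradicting the hypothesis}. Since a strip clustered drawing meets $\ell_i$ in at most one point per edge, the drawing already supplies a natural linear order on the columns of $M_i$; by Tucker's theorem the failure of C1P is witnessed by one of his finitely many forbidden finite submatrices or a member of his two infinite families realized with respect to this order. For each such pattern I would identify, inside or just outside the bare submatrix, a pair of edges whose endpoints and boundary-crossing positions along $\ell_i$ are forced to interleave in a manner that can only be realized by an odd crossing of those two edges inside the strip immediately to the left or to the right of $\ell_i$. Once the local claim is in hand, every $M_i$ has C1P, compatible rotations can be assigned to all vertices, and the forbidden substructure characterization certifies that the resulting embedded clustered graph is strip planar.

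The principal obstacle is the case analysis on Tucker's forbidden submatrices. For each pattern one must exhibit the two offending edges and, crucially, check that they share no common endpoint in $G$, since the hypothesis constrains only independent pairs; this may require looking beyond the bare submatrix to additional rows to replace an incident vertex by a distinct one. A secondary subtlety is that the combinatorial obstruction, identified at a single boundary $\ell_i$, actually propagates to a topological odd crossing inside a single strip: one must argue that the edges chosen cannot be perturbed inside $\mathcal{D}$ so as to avoid each other without violating the strip constraint (each edge meets every vertical line $x=j$ at most once) and the position of the endpoints in their assigned strips. Handling these points uniformly across Tucker's patterns is where the main work lies.
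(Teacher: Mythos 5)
Your reduction to the per-boundary matrices $M_i$ is not a correct characterization, and the failure already kills the core step of the plan. Consider the tree with a vertex $v\in V_m$ and three branches $v\,a_i\,b_i$, where $a_i\in V_{m-1}$ and $b_i\in V_m$ ($i=1,2,3$); write $e_i=va_i$ and $f_i=a_ib_i$ for the six edges crossing the line $x=m$. This clustered tree is strip planar: draw the three ``hooks'' stacked one above the other. Yet your matrix at this boundary has the row of $v$ (forcing $\{e_1,e_2,e_3\}$ to be a consecutive block) and the rows of $a_1,a_2,a_3$ (forcing each $\{e_i,f_i\}$ to be consecutive), and no linear order of the six columns satisfies all four constraints: the middle element of the block $\{e_1,e_2,e_3\}$ cannot be adjacent to its partner $f_i$. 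So $M_m$ fails C1P while $(G,T)$ is strip planar, which refutes the ``only if'' direction of your claimed equivalence; worse, since this instance has a crossing-free (hence independently even) strip clustered drawing, your central local claim --- that failure of C1P at a boundary forces two independent edges to cross oddly --- is false, so no case analysis over Tucker's patterns can establish it. The underlying misconception is that in a strip planar embedding the edges of a vertex need not cross a boundary line consecutively: edges of other vertices may thread between them, so vertex-versus-crossing-edge incidence along a single line is not the invariant to test. The claim that for trees the orderings at different boundaries ``can be chosen independently'' is likewise unjustified: the rotation at a single vertex couples its left-going and right-going edges, and the genuine obstructions span many clusters.

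The paper's route is structurally different. Its forbidden-substructure characterization (Theorem~\ref{thm:characterization}) yields, at each vertex of degree at least three, non-alternation constraints in the rotation coming from interleaving cap--cup pairs --- constraints indexed by intervals $(s,b)$ of clusters, not by a single boundary. These are encoded as circular-ones (not consecutive-ones) conditions on matrices whose columns are the edges at that vertex (ultimately the leaves of the tree) and which contain ambiguous entries with a suffix-closure property; the per-vertex systems are then coupled across the tree by the bridge-constraint matrix $M_G$ and a careful ``trimming'' argument (Lemma~\ref{lemma:treelemma}), and solved by PC-trees. Tucker's theorem does enter, but at this level: the Hanani--Tutte step shows that an independently even drawing excludes the Tucker obstructions of those matrices via parity arguments on closed curves obtained by closing up caps and cups inside their extremal clusters (Observation~\ref{obs:separation} and Lemmas~\ref{lemma:separation}--\ref{lemma:antiSeparation}), not via an odd crossing localized near one boundary line. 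To repair your approach you would have to replace the per-boundary incidence matrices by constraints of this cap--cup type and account for their coupling along the tree, which is essentially the paper's proof.
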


As we noted above, the weak Hanani--Tutte theorem fails already for three clusters.
The underlying graph in the counterexample is a cycle~\cite{FKP12J}, and thus, the strong variant fails as well in
general clustered graphs without imposing additional restrictions.
On the other hand, by an argument analogous to the one yielding Lemma~\ref{lemma:striptree} and~\cite[Lemma 10]{FKP12J}
an independently even clustered drawing of a tree with three clusters gives us an independently even strip clustered drawing
of the same tree which is strip planar by Theorem~\ref{thm:tree}. Hence, the former clustered graph is c-planar. Thus, we have  the following variant
of the (strong) Hanani--Tutte theorem.
Due to the previously mentioned counterexample it cannot be generalized to a more general class of graphs without additional restrictions.

\begin{corollary}\label{thm:treeHT}
Let $G$ be a tree.
If a flat clustered graph $(G,T)$ with three clusters admits an independently even clustered drawing  then $(G,T)$ is c-planar.
\end{corollary}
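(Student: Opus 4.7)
The plan is to reduce Corollary~\ref{thm:treeHT} to Theorem~\ref{thm:tree} by lifting the reduction of Lemma~\ref{lemma:striptree} from the level of abstract clustered graphs to the level of drawings. The key observation is that the inductive construction in Lemma~\ref{lemma:striptree} that produces a strip clustering $T'$ from a three-cluster partition $T$ on a tree $G$ depends only on the abstract tree structure, while the geometric transformations used in the proof of Lemma~\ref{lemma:strip3} to pass between three-cluster and strip clustered drawings are homeomorphisms of the plane, and hence preserve the parity of the number of crossings between every pair of independent edges.

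First, I would apply the inductive procedure from Lemma~\ref{lemma:striptree} to the tree $(G,T)$ to obtain a strip clustering $T' = \{V_j' : j \in I \subset \mathbb{Z}\}$ with $V_j' \subseteq V_{j \bmod 3}$ such that every edge of $G$ joins vertices in consecutive strips. Next, I would convert the given independently even clustered drawing $\mathcal{D}$ of $(G, T)$ into an independently even strip clustered drawing $\mathcal{D}'$ of $(G, T')$ as follows: apply a plane homeomorphism to place each cluster disc in a wedge $W_i$ bounded by two rays from a common origin, with $W_0, W_1, W_2$ partitioning the plane around that origin; then apply the inverse of the rotational deformation $(\phi, r) \mapsto (k\phi, c\phi \cdot r)$ from the proof of Lemma~\ref{lemma:strip3} to unwind the three wedges into successive vertical strips. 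Because $G$ is a tree, a straightforward induction along the tree shows that the strip index a vertex receives geometrically coincides with the abstract strip index assigned by $T'$; because the overall transformation is a plane homeomorphism, $\mathcal{D}'$ is again independently even.

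Finally, I would apply Theorem~\ref{thm:tree} to $(G, T')$ and $\mathcal{D}'$ to obtain a strip clustered embedding of $(G, T')$, and then run the forward direction of the construction in the proof of Lemma~\ref{lemma:strip3} to wrap the strips back into three cluster discs, producing the desired clustered embedding of $(G, T)$.

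The principal obstacle I anticipate is the second step, specifically checking that after the rotational deformation every vertical line $x = j$ meets each edge at most once, so that $\mathcal{D}'$ qualifies as a genuine strip clustered drawing, and that the geometric strip index assigned to each vertex really does match the abstract one assigned by $T'$. This requires choosing the deformation parameters $k$ and $c$ carefully and exploiting the fact that in the wedge drawing each edge lies within a single wedge or between two adjacent wedges; the matching of indices then follows from the tree structure of $G$, which forbids the kind of ``winding'' around the origin that could desynchronize the two labelings. Once this alignment is set up correctly, the rest of the argument is bookkeeping built on top of Theorem~\ref{thm:tree} and the reductions already present in the paper.
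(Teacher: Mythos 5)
Your overall skeleton — define the strip clustering $T'$ from the tree as in Lemma~\ref{lemma:striptree}, pass to an independently even strip clustered drawing, apply Theorem~\ref{thm:tree}, and wrap back up as in Lemma~\ref{lemma:strip3} — is exactly the route the paper takes. The gap is in your second step, the conversion of the drawing. The map $(\phi,r)\mapsto(k\phi,c\phi\cdot r)$ with $k,c>1$ is many-to-one (it wraps the plane around the origin), so it has no inverse acting on the plane; ``applying the inverse of the rotational deformation'' is therefore not a plane homeomorphism, and the claim that crossing parities are preserved for this reason does not stand. More substantively, an arbitrary independently even three-cluster drawing need not be unwindable at all: unwinding amounts to lifting the drawing to a cover of the punctured plane and choosing, for each vertex, the sheet prescribed by its level $j$ in $T'$. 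Nothing in the hypothesis forces the drawing to be consistent with such a choice — an edge of the drawing may wind around the meeting point of the three wedges (a single edge can do this while still crossing every other edge evenly, since a full loop around all three cluster discs crosses any inter-cluster edge an even number of times), in which case its lift joins a vertex at level $j$ to a vertex at level $j\pm 1\mp 3k$ rather than to the prescribed lift; and the radial placement of vertices of $V_j$ and $V_{j+3}$ inside one cluster disc is arbitrary, so no geometric strip structure is present to recover. Winding is a property of the drawing, not of the abstract graph, so the tree structure of $G$ does not forbid it. Finally, even when lifts exist, pairwise crossing parities are not inherited by the chosen lifts: an even crossing count downstairs can split as an odd count between the two chosen lifts plus crossings with discarded lifts.

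The repair is to make the conversion combinatorial rather than geometric, which is what the paper does by an argument analogous to Lemma~\ref{lemma:striptree} together with~\cite[Lemma 10]{FKP12J}, in the spirit of Lemma~\ref{lemma:bounded}: keep the vertex-to-strip assignment dictated by $T'$, place each vertex of $V_j'$ into the $j$-th vertical strip, redraw the edges so that each crosses every line $x=i$ at most once (possible since each edge joins consecutive strips), and restore independent evenness of the new drawing by performing edge-vertex switches to compensate for the parities changed during the redrawing. Once you have an independently even strip clustered drawing of $(G,T')$ this way, your remaining steps (Theorem~\ref{thm:tree}, then the forward wrapping of Lemma~\ref{lemma:strip3}) go through as you describe.
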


By Lemma~\ref{lemma:striptree}, and the fact that a variant of the (strong) Hanani--Tutte theorem gives a polynomial time algorithm
 for the corresponding case of strip c-planarity testing, we have the following corollary.
However, we present a cubic time algorithm as a byproduct of our proof of Theorem~\ref{thm:tree}.

\begin{theorem}\label{thm:ahahah}
The c-planarity testing is solvable in a cubic time for flat clustered graphs with three clusters in the case
when the underlying abstract graph is a tree.
\end{theorem}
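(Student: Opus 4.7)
The plan is to combine the reduction of Lemma~\ref{lemma:striptree} with the algorithmic content sitting inside the proof of Theorem~\ref{thm:tree}. First, given a flat clustered tree $(G,T)$ with three clusters $V_0,V_1,V_2$, I would invoke Lemma~\ref{lemma:striptree} to produce, in linear time, a strip clustered tree $(G,T')$ on the same underlying graph such that $(G,T')$ is strip planar if and only if $(G,T)$ is c-planar. Since the reduction only partitions the vertex set of the tree by a BFS/DFS traversal that assigns each vertex a strip index based on its parent, it clearly runs in linear time and produces an instance whose size is proportional to $|V(G)|$.

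Next, I would apply the testing procedure hidden in the proof of Theorem~\ref{thm:tree}. The idea there is to use the forbidden substructure characterization of embedded strip planar clustered graphs together with Tucker's characterization of $0$--$1$ matrices with the consecutive-ones property. The characterization translates strip planarity of a tree into a statement about consecutive-ones arrangements of rows that record, for each edge of the tree crossing a strip boundary, the relative order in which it must meet that boundary. Tucker's theorem, combined with PQ-tree machinery, decides such consecutive-ones problems in time linear in the size of the matrix; iterating this test over the $O(|V(G)|)$ strip boundaries and combining with the tree's recursive structure yields the claimed cubic bound.

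The main step, and the one where I expect the work, is \emph{bookkeeping}: certifying that the reduction of Lemma~\ref{lemma:striptree} preserves all the structural features that the algorithm in the proof of Theorem~\ref{thm:tree} needs, and that the combinatorial objects fed to Tucker's algorithm have total size $O(|V(G)|^2)$ so that running the consecutive-ones test at each strip boundary produces a cubic overall running time. The hard part is therefore not a new combinatorial insight but a careful complexity analysis; once one checks that each strip boundary can be processed in $O(|V(G)|^2)$ time and that there are $O(|V(G)|)$ boundaries, the cubic bound follows. Correctness is immediate from chaining the ``if and only if'' of Lemma~\ref{lemma:striptree} with the strong Hanani--Tutte statement of Theorem~\ref{thm:tree} and the soundness of the forbidden-substructure/consecutive-ones characterization.
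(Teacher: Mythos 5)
Your first step---reducing the three-cluster instance to strip planarity of a tree via Lemma~\ref{lemma:striptree} and then invoking the testing procedure developed for Theorem~\ref{thm:tree}---is exactly the paper's route, but your description of that procedure, and of why it is correct and cubic, contains genuine errors rather than mere bookkeeping to be filled in. The paper's algorithm does not run an independent consecutive-ones test at each strip boundary recording ``the order in which edges meet that boundary.'' It encodes, for every vertex $v$ of degree at least three, the constraints that interleaving cap/cup pairs through $v$ impose on the \emph{rotation} at $v$ (Lemma~\ref{lemma:cap-cupG}); the columns of the resulting matrix are the leaves of the tree, the rows range over pairs of clusters $(s,b)$, ambiguous symbols $*$ are introduced, and all of these blocks, together with the bridge constraints $M_G$, are combined into a single matrix $M$ that is tested \emph{once} for the circular-ones property by a modified PC-tree algorithm (Lemma~\ref{lemma:treelemma} and Theorem~\ref{thm:matrices}). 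Deciding each boundary separately is not sound as stated: a single rotation system must satisfy all constraints simultaneously, and making the per-vertex constraint systems consistent with one another (the ordering of intervals in~(\ref{eqn:order}), the ``trimming'' analysis of Lemma~\ref{lemma:star}, and its extension to trees in Lemma~\ref{lemma:treelemma}) is precisely the nontrivial content of the proof, not a complexity calculation.

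Second, the roles you assign to Tucker's theorem and to Theorem~\ref{thm:tree} are reversed. Tucker's forbidden-submatrix characterization is used only to prove the Hanani--Tutte statement (an independently even drawing excludes the obstruction matrices, so the algorithm accepts such instances); the decision procedure itself relies on the Booth--Lueker/Hsu--McConnell PQ/PC-tree machinery, extended in Theorem~\ref{thm:matrices} to matrices with ambiguous entries. Likewise, correctness of the cubic algorithm does not follow from the Hanani--Tutte statement of Theorem~\ref{thm:tree}: on its own, that statement yields only the generic (and slower) polynomial algorithm obtained by searching for an independently even drawing via a linear system over $\mathbb{Z}_2$. What is actually needed is the equivalence that $M$ has the circular-ones property if and only if the reduced instance is strip planar (Lemma~\ref{lemma:treelemma}, resting on the characterization of Theorem~\ref{thm:characterization}), plus the fact that this property is testable in time linear in the size of $M$; the cubic bound then comes from the $O(|V(G)|)$ subdivided stars contributing $O(|V(G)|^2)$ rows each, not from processing $O(|V(G)|)$ boundaries at quadratic cost apiece.
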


We are not aware of any other polynomial time algorithm solving this particular case of c-planarity testing.
In order to prove Theorem~\ref{thm:tree} we give an algorithm for the corresponding
strip planarity testing. Then the characterization due to Tucker~\cite{Tucker72}
is used to conclude that the algorithm recognizes an instance admitting
an independently even clustered drawing as positive.
The algorithm works, in fact, with 0--1 matrices having some elements ambiguous, and
can be thought of as a special case of Simultaneous PQ-ordering considered recently by
Bl{\"{a}}sius and Rutter~\cite{BR14}. However, we not need any  result from~\cite{BR14}
in the case of trees.

Using a more general variant of Simultaneous PQ-ordering we prove that strip planarity is 
polynomial time solvable also when the abstract graph is a set of internally vertex disjoint paths joining a pair
of vertices. We call such a graph a \emph{theta-graph}.
Unlike in the case of trees, in the case of theta-graphs we crucially rely on the main result
of~\cite{BR14}.
The following theorem follows immediately from Theorem~\ref{thm:theta_alg}.

\begin{theorem}\label{thm:theta}
The strip planarity testing is solvable in a quartic time if the underlying abstract graph is a 
theta-graph.
\end{theorem}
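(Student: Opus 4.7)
The plan is to reduce strip planarity testing for theta-graphs to an instance of Simultaneous PQ-ordering and then invoke the main algorithmic theorem of Bl\"asius and Rutter~\cite{BR14}; the quartic runtime bound will follow from their analysis applied to an instance whose size is linear in $|V(G)|$.

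Let $G = P_1 \cup \cdots \cup P_k$ be the given theta-graph, with internally disjoint $uv$-paths $P_i$, and let $T$ be the cluster partition. First I would analyze the strip planarity problem for a single path $(P_i,T)$, whose underlying graph is a tree, by specializing the algorithm that proves Theorem~\ref{thm:tree}. Using the forbidden substructure characterization of embedded strip planar clustered graphs (obtained as a byproduct of the proof of Theorem~\ref{thm:linearly}) together with Tucker's characterization of $0$--$1$ matrices with the consecutive ones property~\cite{Tucker72}, the set of admissible linear orderings at each cluster boundary that $P_i$ crosses should be representable by a PQ-tree; the overall set of admissible rotations of $P_i$ at the endpoints $u$ and $v$ then also forms a PQ-tree in a natural way.

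Next, since the paths $P_i$ share only the vertices $u$ and $v$, the orderings chosen independently for the individual $P_i$ need only be compatible at these two vertices. Encoding this as a Simultaneous PQ-ordering instance - with one PQ-tree per path $P_i$ and the leaves at $u$ (and at $v$) declared shared - lets me apply the main result of~\cite{BR14}. A verification that the resulting instance has size and parameters linear in $|V(G)|$ then yields the claimed $O(n^4)$ running time, giving the algorithmic statement \ref{thm:theta_alg} from which Theorem~\ref{thm:theta} follows immediately.

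The main obstacle is (a) faithfully encoding the strip planarity constraints for a single path $P_i$ as a PQ-tree whose leaves are in bijection with the edge-ends of $P_i$ at $u$ and $v$, in particular handling the case where $P_i$ enters a cluster multiple times or visits clusters non-monotonically; and (b) checking that the shared-leaf structure at $u$ and $v$ is exactly what is needed to ensure a simultaneous strip planar embedding of the whole theta-graph, rather than just of each path in isolation. Once these two reductions are in place, the runtime analysis from the Bl\"asius--Rutter framework closes the proof.
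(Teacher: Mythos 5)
There is a genuine gap, and it lies in the central claim of your reduction. A single $uv$-path $P_i$ is by itself always strip planar (just as a single cycle is trivially c-planar), and it has only one edge-end at $u$ and one at $v$; consequently a ``PQ-tree per path $P_i$'' carries essentially no information, and the constraints that actually matter cannot be generated path-by-path. The binding constraints on the rotation at $u$ (resp.\ $v$) come from $i$-caps and $j$-cups formed by \emph{pairs} of different paths meeting at $u$ (resp.\ $v$), i.e.\ from the subdivided stars around the poles. The paper therefore does not decompose into the $k$ paths at all: it builds an auxiliary strip clustered \emph{tree} $G'$ consisting of two subdivided stars $G_u\cong G\setminus(\{v\}\cup E(P_\alpha))$ and $G_v\cong G\setminus(\{u\}\cup E(P_\alpha))$ joined by a carefully chosen path $P_\alpha$ (one minimizing the spanned cluster interval), feeds the tree machinery of Section~\ref{sec:tree} into embedding trees $T_1,\dots,T_m$, and couples the two pole rotations through a tree $T_0$ and a consistency tree $T_C$ with two arcs $(T_0,T_C,\varphi_1),(T_0,T_C,\varphi_2)$ forcing the cyclic orders at $u$ and $v$ to be reverses of each other.

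Your assertion that ``the orderings chosen independently for the individual $P_i$ need only be compatible at these two vertices'' is exactly the statement that requires proof, and as stated it is false for the constraints you encode. Two issues are not captured by per-path planarity plus reversed rotations: (i) theta-graphs contain cycles, so the characterization of Theorem~\ref{thm:characterization} also forbids \emph{trapped vertices}; the paper needs an extra tree $T_1$, built with a dummy leaf for the outer-face position, to enforce constraints of the form $\{e_{i'}e_{i''}\}\{e_ie_0\}$ preventing a vertex of $P_i$ from being trapped inside the cycle $P_{i'}\cup P_{i''}$; and (ii) unfeasible interleaving cap/cup pairs whose intersection is a sub-path of some $P_l$ with $l\neq\alpha$ (so they meet neither at a pole nor along $P_\alpha$) impose constraints linking the rotations at $u$ and $v$ that are \emph{not} explicitly encoded anywhere; the heart of the paper's proof of Theorem~\ref{thm:theta_alg} is the case analysis (in particular the case where $I_l$ does not contain $I_\alpha$) showing these constraints are \emph{implied} by $T_0$, $T_C$, $T_1$ and the constraints of $G'$. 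Your proposal takes both points for granted, so the reduction, as described, neither encodes all necessary constraints nor proves that the missing ones are implied; in addition, relying on~\cite{BR14} requires checking that the resulting instance is $2$-fixed (the paper does this via the specific DAG structure, including a fix for the multi-edge $T_0\to T_C$), which your shared-leaves formulation does not address.
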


Similarly as for trees we are not aware of any previous algorithm with a polynomial running time in this case.
The corresponding Hanani--Tutte variant is then obtained similarly as Theorem~\ref{thm:treeHT}

\begin{theorem}\label{thm:thetaHT}
Let $G$ be a theta-graph.
If a strip clustered graph $(G,T)$ admits an independently even strip clustered drawing  then $(G,T)$ is  strip planar.
\end{theorem}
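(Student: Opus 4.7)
The plan is to mirror the proof of Theorem~\ref{thm:tree}: establish an algorithmic characterization of strip planarity for theta-graphs (this is Theorem~\ref{thm:theta_alg}, whose proof I am assuming here), and then show that the constraints extracted from any independently even strip clustered drawing $\mathcal{D}$ automatically satisfy the conditions the algorithm checks. Concretely, I would first unwind the description of the Simultaneous PQ-ordering instance $\mathcal{I}(G,T)$ built by that algorithm for a theta-graph $G$ with poles $u,v$ and internally disjoint paths $P_1,\ldots,P_m$: at each pole there is a cyclic-ordering PQ-constraint on the paths, while at each cluster boundary between consecutive clusters there is a linear-ordering constraint on the endpoints of those paths that traverse the boundary, and these constraints are linked by the Bl{\"a}sius--Rutter simultaneity framework~\cite{BR14}.

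Starting from $\mathcal{D}$, I would read off a candidate simultaneous ordering: the rotations at the two poles give cyclic orderings of the paths incident to them, and the order in which the edges of the various paths cross each vertical line $x=i$ in $\mathcal{D}$ gives the linear orderings on each cluster boundary. I then need to verify that this candidate satisfies every local constraint in $\mathcal{I}(G,T)$. For the linear constraints inside each cluster and on each boundary I would use the tree-case result: deleting one path from the theta-graph leaves a graph whose 2-core is a single path, so each (path together with a pole) yields a tree-like subinstance which, restricted from $\mathcal{D}$, is still independently even and is therefore strip planar by Theorem~\ref{thm:tree}; this already certifies the individual PQ-constraints.

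The main obstacle, and the only genuinely new step, is the simultaneity at the two poles: I must show that the cyclic order at $u$ and the cyclic order at $v$ read off from $\mathcal{D}$ can be realized by a single strip planar embedding, not merely as separate tree embeddings one per pole. Here I would argue as in the proof of Theorem~\ref{thm:linearly} via a parity/marriage style argument: for any two paths $P_i, P_j$, their relative cyclic position at $u$ is determined modulo $2$ by the number of crossings between $P_i$ and $P_j$ in $\mathcal{D}$; since those crossings are all even ($P_i$ and $P_j$ are edge-disjoint away from the poles), the same parity dictates their relative position at $v$. This pairwise compatibility is exactly the invariant that the Bl{\"a}sius--Rutter framework needs to merge the two PQ-constraints at the poles into one feasible simultaneous ordering. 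The hardest part of this step will be to handle paths of length one (or paths sharing structure in higher multiplicities), where the parity argument must be combined with a careful redrawing near the poles, analogous to the ``rotation-correcting'' trick used in the proof of Theorem~\ref{thm:linearlyStrong}.

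Once pairwise parity compatibility is established, the Simultaneous PQ-ordering instance $\mathcal{I}(G,T)$ is solvable, Theorem~\ref{thm:theta_alg} returns YES, and so $(G,T)$ is strip planar, as required. Alternatively, the proof can be written without invoking~\cite{BR14} explicitly by noting that the pairwise parity condition directly yields a strip clustered embedding with the rotation system of $\mathcal{D}$, and then appealing to Theorem~\ref{thm:linearly} on the resulting even drawing to finish; this would give a self-contained proof at the cost of redoing part of~\cite{BR14} in the theta-graph setting.
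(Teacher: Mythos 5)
There is a genuine gap at the heart of your argument. Your key step asserts that for two distinct paths $P_i,P_j$ of the theta-graph the crossings in $\mathcal{D}$ ``are all even'' because the paths are edge-disjoint, and that therefore the rotations at the two poles can simply be read off from $\mathcal{D}$ and are automatically compatible. Edge-disjointness is not enough: the first edges of $P_i$ and $P_j$ at the shared pole $u$ (and likewise at $v$) are \emph{adjacent}, and an independently even drawing allows adjacent edges to cross an odd number of times. Coping with exactly these odd crossings at the poles is the whole content of the strong (as opposed to weak) variant; in particular the rotation at $u$ and $v$ in $\mathcal{D}$ need not be realizable and cannot serve as the candidate ordering. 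For the same reason your fallback of ``appealing to Theorem~\ref{thm:linearly} on the resulting even drawing'' does not work: Theorem~\ref{thm:linearly} requires all pairs, including adjacent ones, to cross evenly, and for theta-graphs there is no analogue of Lemma~\ref{lemma:removeOdd} that would clean up the poles without changing rotations. A smaller but real error is the claim that deleting one path from a theta-graph leaves a graph whose 2-core is a single path; with $m\ge 3$ paths (or even $m=2$) the remainder still contains a cycle through both poles, so your tree subinstances are not obtained this way.

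The paper's proof takes the same high-level route (certify that the algorithm of Theorem~\ref{thm:theta_alg} answers yes), but the machinery it uses is precisely what your parity claim skips. It first obtains an independently even drawing of the auxiliary \emph{tree} $(G',T')$ by doubling the $u$--$v$ paths in a small neighborhood, so that the constraints of $T_0,T_2,\ldots,T_m$ are covered by the tree argument of Theorem~\ref{thm:tree}; it then verifies the four-edge parity identity of Observation~\ref{obs:separation} for the trapped-vertex constraints of $T_1$, including a dummy outer-face edge made independently even by switching it over one side of an edge cut; and finally it propagates Observation~\ref{obs:separation} through the expansion graph of the Bl\"asius--Rutter instance by induction, relying on Lemmas~\ref{lemma:separation}, \ref{lemma:modular} and \ref{lemma:antiSeparation} (pulling edges over the poles) and on Tucker's forbidden submatrices to exclude obstructions. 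Your proposal would need to supply an argument of this kind for adjacent edges at the poles; as written, the ``pairwise parity compatibility'' step is unsupported and the proof does not go through.
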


To prove Hanani--Tutte variants Theorem~\ref{thm:treeHT} and~\ref{thm:thetaHT} we use the characterization due to Tucker~\cite{Tucker72} of matrices with consecutive ones property
to conclude that the corresponding algorithm recognizes as ``yes'' instance those that admit
an independently even clustered drawing.
One might wonder why the Hanani--Tutte approach in various planarity variants
does not require us to deal with the rotation system, whereas the PQ-tree or SPQR-tree approach
is usually about deciding if a rotation system of a given graph satisfying certain conditions exists.
Since 0--1 matrices with consecutive ones property are combinatorial analogs of PQ-trees, our
 proof sheds some light on why independently even drawings guarantee that a desired rotation system exists.
 We believe that this connection
between PQ-trees and independently even drawings deserves further exploration.

\subsection{Relation to Level Planarities}

In a recent work Angelini et al.~\cite{Angelini20151} consider a related notion of clustered-level planarity
introduced by Forster and Bachmaier~\cite{FB2004}. Therein they  prove a tractability result for clustered-level planarity testing using Simultaneous PQ-ordering (via an intermediate reduction to another problem)
in the case of ``proper'' instances and \cNP-hardness in general.
For proving the \cNP-hardness result the authors reduce the problem of finding a total ordering~\cite{O79},
whose  variant we use to prove all of our algorithmic result, to theirs.
A reader might wonder if strip planarity is not just a closely related variant of clustered-level 
planarity, whose tractability can be determined just by applying same techniques.

The tractability of clustered-level planarity (in the case of proper instances), and other types of level planarity~\cite{BBF04,JLM98} is, perhaps most naturally, obtained via a variant of PQ-ordering with only a very limited use of the topology of the plane. Hence, in this sense level planarities seems to be much more closely related to representations
of graphs in one dimensional topological spaces such as interval or sub-tree representations.
In the case of strip planarity we do not see how to apply the theory of PQ-ordering, or related
SPQR-trees~\cite{DT89}, without using Theorem~\ref{thm:characterization} that turns the problem into a ``1-dimensional one''. Our proof of Theorem~\ref{thm:characterization} relies crucially on Euler's formula, and
Hall's theorem.
So far we were not able to solve the 1-dimensional one problem in its full generality.
However, we also believe that we did not exhaust potential of our approach, and suspect that
a resolution of the tractability status of strip planarity or c-planarity must use the topology of the plane in an ``essential way'', e.g., by using its topological invariants such as  Euler characteristic that is usually not exploited
in approaches based on PQ-tree style data structures.
In fact, we are not aware of any prior work in the context of algorithmic theory of planarity variants combining an approach relying on Euler's formula with PQ-tree style data structure,
except when Euler's formula is merely used to count the number of edges or faces in the running time analysis.


\section{Preliminaries}

\label{sec:prelim}

\subsection{Basic tricks and definitions}

\label{sec:even}

We will use the following well-known fact about closed curves in the plane.
Let $C$ denote a closed (possibly self-crossing) curve in the plane.

\begin{lemma}
\label{lemma:twocolor}
The regions in the complement of $C$ can be two-colored so that
two regions sharing a non-trivial part of the boundary receive opposite colors.
\end{lemma}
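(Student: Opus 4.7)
The plan is to use the classical ``parity of crossings'' argument. First, by a small generic perturbation that preserves the combinatorial structure of the complement, I would assume that $C$ is in general position: all self-intersections of $C$ are transverse double points, and there are only finitely many of them. For every region $R$ of $\mathbb{R}^2\setminus C$, I pick any point $p\in R$ and any generic ray $\rho$ emanating from $p$ to infinity that meets $C$ only at finitely many transverse regular points and avoids all self-intersections of $C$; I then color $R$ by the parity of $|\rho\cap C|$.

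The heart of the argument is to show that this coloring is well defined, i.e.\ independent of the choices of $p\in R$ and $\rho$. Any two such choices $(p,\rho)$ and $(p',\rho')$ can be connected by a continuous path inside the open connected set $R$ combined with a continuous rotation of the ray. During this deformation, the parity of $|\rho\cap C|$ can change only at a singular event: either the ray becomes tangent to $C$ at a smooth point of $C$, in which case two crossings appear or disappear simultaneously, or the ray sweeps through a self-intersection of $C$, in which case two crossings pass through each other and the total count is unchanged. In either case the number of crossings changes by an even amount, so the parity is preserved. This is the main technical step and the only place where I expect to have to argue carefully, since everything else is elementary.

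Finally, the adjacency property is straightforward. If $R_1$ and $R_2$ share a non-trivial boundary arc $\alpha\subseteq C$, pick a point $q$ in the relative interior of $\alpha$ that is a smooth point of $C$, and take points $p_i\in R_i$ on opposite sides of $\alpha$ very close to $q$. One can then choose rays $\rho_i$ from $p_i$ that agree outside a small disk around $q$ and differ inside the disk by exactly one transverse crossing of $\alpha$. Consequently $|\rho_1\cap C|$ and $|\rho_2\cap C|$ have opposite parity, so $R_1$ and $R_2$ receive opposite colors, as required.
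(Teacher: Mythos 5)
Your proof is correct, but it takes a different route from the paper. You use the classical ray--casting parity argument: color a region by the parity of the number of transverse intersections of a generic ray with $C$, check well-definedness by deforming the ray (tangency events and sweeps through double points change the count by an even amount), and then observe that crossing a shared boundary arc once flips the parity. The paper instead argues combinatorially about the adjacency graph of the regions: it claims every cycle in that graph is a symmetric difference of cycles of even length (essentially the $4$-cycles of regions around each self-crossing of $C$), so the graph is bipartite and a proper two-coloring exists. Your approach has the advantage of producing an explicit invariant --- the crossing parity with respect to a point --- which in particular pins down the color of the unbounded region and meshes directly with the paper's subsequent definition of a point being ``inside'' or ``outside'' of $C$; the cost is the (standard but slightly delicate) general-position and deformation bookkeeping you rightly flag as the main technical step. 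The paper's argument is shorter as stated but leaves to the reader the verification that cycles of regions decompose into even cycles; neither gap is serious, and both arguments are sound.
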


\begin{proof}
The two-coloring is possible, since we are coloring a graph in which every  cycle
can be written as the symmetric difference of a set of cycles of even length.
Hence, every cycle in our graph has en even length, and thus the graph is bipartite.
\end{proof}

Let us two-color the regions in the complement of $C$ so that
two regions sharing a non-trivial part of the boundary receive opposite colors.
A point not lying on $C$ is \emph{outside} of $C$, if it is contained in the region with the same color as the unbounded region. Otherwise,
such a point is \emph{inside} of $C$.
As a simple corollary of Lemma~\ref{lemma:twocolor} we obtain a well-known fact that a pair of closed curves in the plane cross an even
number of times.
 We use this fact tacitly throughout the paper.

Let $G$ denote a planar graph.
Since in the problem we study connected components of $G$ can be treated separately, we can afford to assume that $G$ is connected.
A \emph{face} in an embedding of $G$ is a connected component of the complement of the embedding of $G$ (as a topological space) in the plane.
 The \emph{facial walk} of $f$ is the walk in $G$ that we obtain by traversing the boundary of $f$.
A pair of consecutive edges $e$ and $e'$ in a facial walk $f$ create a \emph{wedge} incident to $f$ at their common vertex.
The cardinality $|f|$ of $f$ denotes the number of edges (counted with multiplicities) in the facial walk of $f$.
Let $F$ denote a set of faces in an embedding. We let $G[F]$ denote the sub-graph of $G$ induced by the edges incident to the faces of $F$.
A vertex or an edge is \emph{incident} to a face, if it appears on the facial walk of $f$.
The \emph{interior} and \emph{exterior}, respectively, of a cycle in an embedded graph is the bounded and unbounded connected component
of its complement in the plane.
Similarly, the \emph{interior} and \emph{exterior}, respectively, of an inner face in an embedded graph is the bounded and unbounded connected component
of the complement of its facial walk in the plane, and vice-versa for the outer-face.

Let $\gamma:V \rightarrow \mathbb{N}$ be a labeling of the vertices of $G$ by integers. Given a face $f$ in an embedding of $G$,
a vertex $v$ incident to $f$ is a \emph{local minimum} (\emph{maximum}) of $f$ if in the corresponding facial walk $W$ of $f$ the value
of $\gamma(v)$ is not bigger (not smaller) than the value of its successor and predecessor on $W$. A minimal and maximal, respectively, local
minimum and  maximum of $f$ is called  \emph{global minimum} and \emph{maximum} of $f$.
The face $f$ is \emph{simple} with respect to $\gamma$ if $f$ has exactly one local minimum and one local maximum.
The face $f$ is \emph{semi-simple} (with respect to $\gamma$) if $f$ has exactly two local minima and these minima have the same value, and two local maxima and these maxima have the same value.
A path $P$ is \emph{(strictly) monotone with respect to $\gamma$} if the labels of the  vertices on $P$ form a (strictly) monotone sequence if ordered
in the correspondence with their appearance on  $P$.

Given a strip clustered graph $(G,T)$ we naturally associate with it a labeling $\gamma$ that for each vertex $v$ returns
the index $i$ of the cluster $V_i$ that $v$ belongs to. We refer to the cluster whose vertices get label $i$ as to the $i^{\mathrm{th}}$ cluster.
Let $(\overrightarrow{G},T)$ denote the directed strip clustered graph obtained from $(G,T)$ by orienting every edge
$uv$ from the vertex with the smaller label to the vertex with the bigger label, and in case of a tie orienting $uv$ arbitrarily.
A \emph{sink} and \emph{source}, respectively, of $\overrightarrow{G}$ is
a vertex with no outgoing and incoming edges.

In our arguments we use a continuous deformation in order to transform a given drawing into a drawing with desired properties.
Observe that during such transformation of a drawing of a graph
the parity of crossings between a pair of edges is affected only when an edge $e$ passes over a vertex $v$,
in which case we change the parity of crossings of $e$ with all the edges incident to $v$. Let us call such an event an \emph{edge-vertex switch}.

\paragraph{Edge contraction and vertex split.}
A \emph{contraction} of an  edge $e=uv$ in a topological graph is an operation that turns
$e$ into a vertex
by moving $v$ along $e$ towards $u$ while dragging all the other edges incident to $v$ along $e$.
Note that by contracting an edge in an even drawing, we obtain again an even drawing.
By a contraction we can introduce multi-edges or loops at the vertices.

We will also often use the following operation which can be thought of as the inverse operation of the edge contraction
in a topological graph.
A \emph{vertex split} in a drawing of a graph $G$ is the operation that replaces a vertex $v$ by two vertices $v'$ and $v''$
drawn in a small neighborhood of $v$ joined by a short crossing free edge so that the neighbors of $v$ are partitioned into two parts
according to whether they are joined with $v'$ or $v''$ in the resulting drawing, the rotations at $v'$ and $v''$ are inherited from the
rotation at $v$, and the new edges are drawn in the small  neighborhood of the edges they correspond to in $G$.

\paragraph{Bounded Edges.}

\label{sec:bounded}

Theorem~\ref{thm:linearly} can be extended to more general clustered graphs $(G,T)$ that are not necessarily strip clustered, and drawings that
are not necessarily clustered.
The clusters $V_1,\ldots, V_k$ of $(G,T)$ in our drawing $\mathcal{D}$
 are still linearly ordered and
drawn as vertical strips respecting this order. An edge $uv\in E(G)$, where $u\in V_i, v\in V_j$, can join any two vertices of $G$, but it must be drawn so that it intersects only clusters $V_l$ such that $i\leq l \leq j$. We say that the edge $uv$ is \emph{bounded}, and the drawing \emph{quasi-clustered}.

A similar extension of a variant of the Hanani--Tutte theorem is also possible in the case of $x$-monotone drawings~\cite{FPSS12}.
In the $x$-monotone setting instead of the $x$-monotonicity of edges in an (independently) even drawing it is only required that the vertical projection of each edge is bounded by the vertical projections of its vertices. Thus, each edge stays between its end vertices.

In the same vein as for $x$-monotone drawing the extension of our result
to drawings $\mathcal{D}$ of clustered graphs with bounded edges can be proved by a reduction to the original claim, Theorem~\ref{thm:linearly}.
To this end we just need to subdivide every edge $e$ of $(G,T)$ violating conditions of strip clustered drawings so that newly created edges join the vertices in the same or neighbouring clusters, and perform edge-vertex switches in order to restore the even parity of the number of crossings between every pair of edges.
The reduction is carried out by the following lemma that is also used in the proof of
Theorem~\ref{thm:mono}.

\bigskip
\begin{figure}[htp]
\centering
\subfigure[]{\includegraphics[scale=0.7]{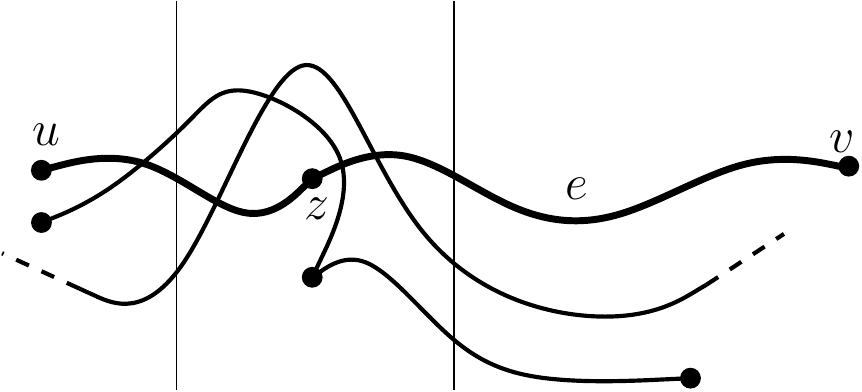}
    \label{fig:strips1}
	} \hspace{2cm}
\subfigure[]{\includegraphics[scale=0.7]{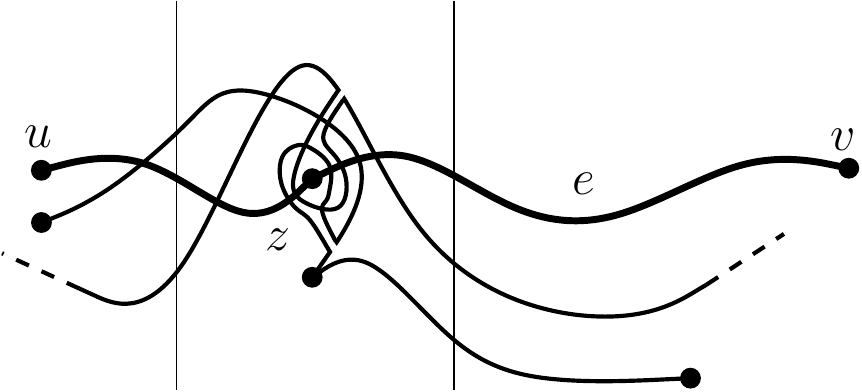}
	\label{fig:strips2}
	}
\caption{(a) Subdivision of the edge $e$ by the vertex $z$ resulting into odd crossing pairs; (b) Restoration of the evenness by performing edge-vertex switches with $z$.}
\end{figure}

\begin{lemma}
\label{lemma:bounded}
Let $\mathcal{D}$ denote an even quasi-clustered drawing of a clustered graph $(G,T)$.
Let $e=uv$, where $u\in V_i, v\in V_j$ denote an edge of $G$.
Let $G'$ denote a graph obtained from $G$ by subdiving $e$ by $|i-j|-1$ vertices.
Let $(G',T')$ denote the clustered graph, where $T'$ is inherited from $T$ so that the subdivided edge $e$ is turned into a strictly monotone path w.r.t. $\gamma$.
There exists an even quasi-clustered drawing $\mathcal{D}'$ of $(G',T')$, in which each new edge crosses the boundary of a cluster exactly once
and in which no new intersections of edges with boundaries of the clusters are introduced.
\end{lemma}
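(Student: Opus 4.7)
Assume without loss of generality that $i<j$; if $i=j$ there is nothing to do. Parametrize $e$ from $u$ to $v$. The idea is to place the $j-i-1$ new subdivision vertices on the curve $e$ itself, locally redraw each resulting subedge to cross only one cluster boundary, and finally restore evenness of the drawing by edge-vertex switches at the new vertices. For each $l$ with $i<l<j$ let $\tau_l$ be the last parameter along $e$ at which the curve enters $V_l$ coming from $V_{l-1}$, and place $z_l\in V_l$ just after $\tau_l$. By this choice, from $z_l$ onwards $e$ never returns to $V_{l-1}$, so the subarc of $e$ from $z_{l-1}$ to $z_l$ is contained in $V_{l-1}\cup V_l\cup\ldots\cup V_j$. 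Redraw each subedge $e_l:=z_{l-1}z_l$ into a simple arc contained in $V_{l-1}\cup V_l$ that crosses $V_{l-1}|V_l$ exactly once; this is done by short-circuiting every excursion of the subarc into some cluster $V_m$ with $m>l$ and every back-and-forth crossing of $V_{l-1}|V_l$, via a local deformation whose support is a small disc near the excursion. These deformations only decrease the number of intersections with cluster boundaries, so no new edge-boundary intersections are introduced.

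The redrawing may leave some pairs crossing an odd number of times. For each $f\in E(G)\setminus\{e\}$ write $b_l(f):=|e_l\cap f|\bmod 2$. If each short-circuit disc is chosen disjoint from the vertices of $G$, then the concatenated curve $e'=e_1\cup\ldots\cup e_{j-i}$ is homotopic to $e$ in $\mathbb{R}^2\setminus(V(G)\setminus\{u,v\})$, and hence $\sum_l b_l(f)\equiv |e\cap f|\equiv 0\pmod{2}$. An edge-vertex switch of $f$ past a subdivision vertex $z_l$ flips exactly $b_l(f)$ and $b_{l+1}(f)$; since the number of $l$ with $b_l(f)=1$ is even, one can zero out every $b_l(f)$ by a suitable sequence of such switches. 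Analogous switches, performed between subedges and other vertices of $G'$, take care of odd crossings between $e_l$ and $e_{l+1}$, or between subedges of distinct edges of $G$. Since every switch is carried out in a small neighborhood of some $z_l$ placed in the interior of $V_l$, no new edge-boundary intersections appear.

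The most delicate point is to realize the redrawing step so that every short-circuit disc is genuinely disjoint from all vertices of $G$. If some vertex $w$ lies in the way, the short-circuit must be rerouted around $w$, locally toggling the crossing parities of $e_l$ with edges incident to $w$; careful bookkeeping of these forced parity changes, and their subsequent compensation by additional edge-vertex switches at $w$, is where the proof becomes technical. Once this is done, $\mathcal{D}'$ is a quasi-clustered drawing in which each new subedge crosses the boundary of exactly one cluster exactly once, every pair of edges crosses evenly, and no new edge-boundary intersections have been introduced.
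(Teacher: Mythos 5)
Your proof is correct and follows essentially the same route as the paper: redraw $e$ to remove superfluous boundary crossings, subdivide inside the clusters, and restore evenness by edge-vertex switches, with the evenness of the original drawing guaranteeing that the parity obstruction at the new subdivision vertices vanishes (your $\sum_l b_l(f)\equiv 0$ argument is the paper's observation that no edge can cross exactly one of the two edges incident to a subdivision vertex an odd number of times). The only difference is organizational: the paper eliminates one double crossing of a cluster boundary at a time, whereas you place all subdivision points at once and do a global parity bookkeeping; both treat the compensating switches at vertices lying in the way of the deformation at the same, equally informal, level of detail.
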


\begin{proof}
Refer to Figure~\ref{fig:strips1} and~\ref{fig:strips2}.
First, we continuously deform $e$ so that $e$ crosses the boundary of every cluster it visits at most twice. During the deformation
we could change the parity of the number of crossings between $e$ and some edges of $G$.
This happens when $e$ passes over a vertex $w$. We remind the reader that we call this event an edge-vertex switch.
Note that we can further deform $e$ so that it performs another edge-vertex switch
with each such vertex $w$,
while introducing new crossings with edges ``far'' from $w$ only in pairs.
Thus, by performing the appropriate edge-vertex switches of $e$ with
vertices of $G$
we maintain the parity of the number of crossings
of $e$ with the edges of $G$ and we do not introduce
intersections of $e$ with the boundaries of the clusters.

Second, if $e$ crosses the boundary of a cluster twice,
we subdivide $e$  by a vertex $z$ inside the cluster thereby turning $e$
into two edges, the edge joining $u$ with $z$ and the edge joining $z$ with $v$.
After we subdivide $e$ by $z$, the resulting drawing is not necessarily even.
However, it cannot happen that an edge crosses an odd number
 of times exactly one edge incident to $z$, since prior
  to subdividing the edge $e$ the drawing was even.
Thus, by performing edge-vertex switches of $z$ with edges
 that cross both edges incident to $z$ an odd number of times we restore the even parity of crossings between all pairs of edges.
By repeating the second step until we have no edge that crosses the boundary of a cluster twice we obtain a desired drawing of $G'$.
\end{proof}

\subsection{From strip clustered graphs to the marriage condition}

\label{sec:labelings}

\begin{wrapfigure}{r}{.5\textwidth}
\centering
\includegraphics[scale=0.7]{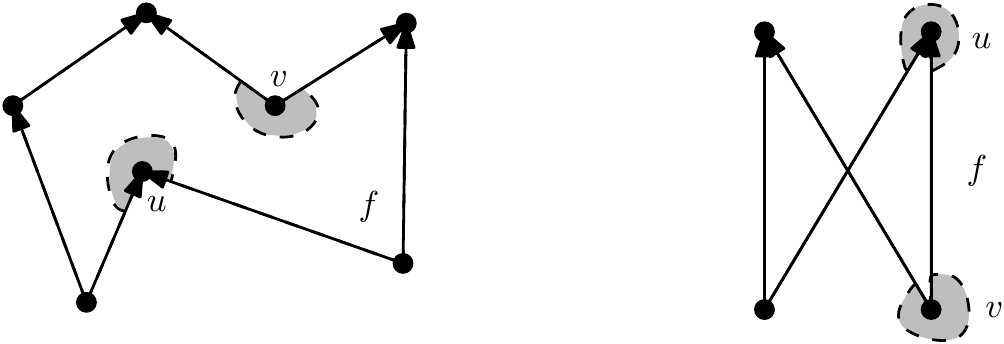}
\caption{Concave angles at $u$ and $v$ inside $f$ in an upward embedding (left), and concave angles in $f$ not admissible in an upward embedding (right).}
\label{fig:poincare1}
\end{wrapfigure}

The main tool for proving Theorem~\ref{thm:characterization} is~\cite[Theorem 3]{BBLM94} by Bertolazzi et al. that characterizes
embedded directed planar graphs, whose embedding can be straightened (the edges turned
into straight line segments) so that all the edges are directed upward,
 i.e., every edge is directed towards the vertex with a higher $y$-coordinate.
Here, it is not crucial that the edges are drawn as straight line segments, since
we can straighten them as soon as they are $y$-monotone~\cite{PT04_monotone}.
The theorem says that an embedded directed planar graph  $\overrightarrow{G}$ admits such an embedding, if there exists a mapping
from the set of sources and sinks of
$\overrightarrow{G}$ to the set of  faces of $\overrightarrow{G}$ that is easily seen to be necessary for such a drawing to exist.
Intuitively, given an upward embedding a sink or source $v$ is mapped to a face $f$
if and only if a pair of edges $vw$ and $vz$, incident to $f$ form inside $f$ a concave angle,
i.e., an angle bigger than $\pi$
(see Figure~\ref{fig:poincare1} for an illustration).
Thus, a vertex can be mapped to a face only if it is incident to it. First, note that the number of sinks incident to a face $f$ is the same as the number of sources incident to $f$. The mentioned easy necessary condition for the existence of an upward embedding is that (i) an internal and external face with $2k$ extremes (sinks or sources) have precisely $k-1$ and  $k+1$, respectively, of them mapped to it, and that (ii) the rotation at each vertex can be split into two parts consisting of incoming and outgoing edges. The embeddings satisfying the latter are dubbed \emph{candidate embeddings} by~\cite{BBLM94}.

Assuming that in $(G,T)$ each cluster forms an independent set, we would like to prove that $(\overrightarrow{G},T)$ satisfies this condition if $(G,T)$ does
not contain certain forbidden substructures in the hypothesis of Theorem~\ref{thm:characterization}. That would give us the desired clustered drawing by an easy geometric argument.
However, we do not know how to do it directly if faces have arbitrarily many sinks and sources. Thus, we first augment the given embedding by adding
edges and vertices so that (i) the outer face in $\overrightarrow{G}$ is incident to at most one sink and one source; (ii) each
internal face, that is not simple, is incident to exactly two sinks and two sources; and (iii) 
the hypothesis of Theorem~\ref{thm:characterization} is still satisfied.
 Let $(G',T')$  denote the resulting strip clustered graph.
This reduces the proof to showing that there exists a bijection between the set of internal semi-simple faces, and the set of sinks and sources in $\overrightarrow{G}'$ excluding the source and sink incident to the outer face.

By~\cite[Lemma~5]{BBLM94} the total number of sinks and sources is exactly the
total demand  by all the faces (in our case, the number of semi-simple faces plus two) in a candidate embedding.
Hence, by Hall's Theorem the bijection exists
 if every subset of internal semi-simple faces
 of size $l$ is incident to at least $l$ sinks
 and sources. The heart of the proof is then showing that
 the hypothesis of Theorem~\ref{thm:characterization}
 guarantees that this condition is satisfied (Lemma~\ref{marriage:lemma}).

\subsection{Necessary conditions for strip planarity}

\label{sec:crossing}

We present two necessary conditions that an embedded strip planar clustered graph
has to satisfy. In Section~\ref{sec:char} we show that the conditions
are, in fact, also sufficient.
For the remainder of this section let $(G,T)$ denote an embedded strip clustered graph. Let us assume that
$(G,T)$ is strip planar and let $\mathcal{D}$ denote the corresponding embedding
of $G$ with the given outer face.

In what follows we define the notion $i_A(P_1,P_2)$ of \emph{algebraic intersection number}~\cite{CN00} of a pair of oriented paths $P_1$ and $P_2$ in an embedding of a graph. We orient $P_1$ and $P_2$ arbitrarily.
 Let $P$ denote the sub-graph of $G$ which is the union of $P_1$ and $P_2$.
We define $cr_{P_1,P_2}(v)=+1$ ($cr_{P_1,P_2}(v)=-1$) if $v$ is a vertex of degree four in $P$ such that the paths $P_1$ and $P_2$ alternate in the rotation at $v$
and at $v$ the path $P_2$ crosses $P_1$ from left to right (right to left) with respect
to the chosen orientations of $P_1$ and $P_2$.
We define $cr_{P_1,P_2}(v)=+1/2$ ($cr_{P_1,P_2}(v)=-1/2$) if $v$ is a vertex of degree three in $P$ such that at $v$ the path
$P_2$ is oriented towards $P_1$ from left, or from $P_1$ to right (towards $P_1$ from right, or from $P_1$ to left) in the direction of $P_1$.
The algebraic intersection number of $P_1$ and $P_2$ is then the sum of $cr_{P_1,P_2}(v)$ over all vertices of degree three and four in $P$.

We extend the notion of algebraic intersection number to oriented walks as follows.
Let $i_A(W_1,W_2)=\sum cr_{u_1v_1w_1,u_2v_1w_2}(v_1)$, where the sum runs over all pairs $u_1v_1w_1\subseteq W_1$ and $u_2v_1w_2\subseteq W_2$
of oriented sub-paths of $W_1$ and $W_2$, respectively. (Sub-walks of length two in which $u_1=w_1$ or $u_2=w_2$ does not
 have to be considered in the sum, since their contribution towards the algebraic intersection number is zero anyway.)
Note that $i_A(W_1,W_2)$ is zero for a pair of closed walks.
Indeed, $i_A(C_1,C_2)=0$ for any pair of closed continuous curves in the plane which can be proved by observing that the statement
is true for a pair of non-intersecting curves and preserved
under a continuous deformation.
Whenever talking about algebraic intersection number of a pair of walks we tacitly assume that the walks are oriented. The actual orientation is not important to us since
in our arguments only the parity of the algebraic intersection number matters.

We will need the following property of $i_A(P_1,P_2)$ for a pair of paths in $G$.
Let $W$ denote a closed walk containing a vertex $v$.
Let $P$ denote a path having both end vertices (strictly) inside a single component of the complement of the embedding of $W$
in the plane.

\begin{lemma}
\label{lemma:symmDif}
For any walk $W'$ passing through $v$ we have
$i_A(W'vWvW',P)=i_A(W',P)$
\end{lemma}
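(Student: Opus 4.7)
My plan is to decompose $i_A(W'vWvW',P)$ via an additivity principle for the algebraic intersection pairing with respect to the insertion of a closed sub-walk at a shared vertex, and then to argue that the extra contribution coming from $W$ vanishes on topological grounds.

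Step~1 (additivity of $i_A$ under insertion of a closed sub-walk at $v$). I would first verify the identity
\[
i_A(W'vWvW',P) \;\equiv\; i_A(W',P) + i_A(W,P) \pmod{2}.
\]
Since the paper's definition expresses $i_A$ as a sum $\sum cr(v_1)$ over pairs of length-$2$ sub-paths $u_1v_1w_1 \subseteq W'vWvW'$ and $u_2v_1w_2 \subseteq P$ meeting at a common vertex $v_1$, the terms with $v_1 \ne v$ split cleanly as the sum of the contributions coming from $W'$ and from $W$, which gives the $i_A(W',P)+i_A(W,P)$ part. At $v_1 = v$ itself, the sub-paths through $v$ in $W'vWvW'$ consist of the internal sub-paths of $W'$ through $v$, the internal sub-paths of $W$ through $v$, plus an even number of ``junction'' sub-paths in which the last edge of $W'$ entering $v$ is paired with the first edge of $W$, and so on. A short case check shows that, when paired with any length-$2$ sub-path of $P$ meeting at $v$, these junction contributions occur in canceling pairs modulo~$2$.

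Step~2 (vanishing of $i_A(W,P)$). I would then prove $i_A(W,P) = 0$. After a small perturbation to general position, $W$ becomes a closed curve in the plane and $P$ an arc crossing $W$ transversely in finitely many points, so $i_A(W,P)$ equals the signed count of these crossings. By the standard winding-number interpretation this count equals $w(W,b) - w(W,a)$, where $a,b$ are the endpoints of $P$; since by hypothesis $a$ and $b$ lie in the same component of $\mathbb{R}^2 \setminus W$, the two winding numbers coincide, so $i_A(W,P) = 0$. A purely combinatorial mod-$2$ version of this step is available from Lemma~\ref{lemma:twocolor}: two points of the same color in the planar two-coloring induced by $W$ are joined by an arc crossing $W$ an even number of times.

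Combining Steps~1 and~2 yields the conclusion. The main obstacle I anticipate is the junction bookkeeping at $v$ in Step~1, particularly because the definition of $cr$ assigns half-integer values $\pm 1/2$ to degree-three configurations, which complicates any attempt to verify the identity as an exact equality rather than modulo~$2$. Working modulo~$2$ -- which suffices for all downstream uses in the paper, since the authors explicitly note that only the parity of the algebraic intersection number matters -- makes the half-integers irrelevant and renders the cancellation of the junction terms transparent.
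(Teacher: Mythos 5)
Your overall skeleton is the same as the paper's: split $i_A(W'vWvW',P)$ into the contributions of $W'$ and $W$ plus the junction terms at the splice vertex $v$, and kill $i_A(W,P)$ using the hypothesis that both ends of $P$ lie in one component of the complement of $W$ (the paper simply asserts this vanishing; your winding-number/two-coloring justification is fine). The gap is precisely in the step you yourself flag as the obstacle: the junction bookkeeping at $v$ is only asserted (``a short case check shows\ldots''), and your way around the half-integer contributions does not work as stated. Reducing modulo $2$ does not make the $\pm\frac12$ terms ``irrelevant'': when $W'$ or $W$ shares an edge of $P$ at $v$, the individual junction quantities $i_A(uvz,P)$, $i_A(wvu',P)$, $i_A(uvu',P)$, $i_A(wvz,P)$ can themselves be half-integers, so their parities are not even well defined and the ``canceling pairs mod $2$'' mechanism breaks down exactly in the degenerate cases that cause the trouble. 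Moreover, the lemma claims an equality of integers (or half-integers), not a congruence; proving only a parity version is a strictly weaker statement than what is asserted, even if parity happens to suffice for the paper's later applications where $i_A\in\{0,\pm1\}$.

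The fix, which is what the paper's proof does, is to carry out the junction computation exactly rather than mod $2$: writing $u,u'$ for the neighbours of the spliced occurrence of $v$ on $W'$ and $w,z$ for its neighbours on $W$, one has
\[
i_A(W'vWvW',P)=i_A(W',P)-i_A(uvu',P)+i_A(W,P)-i_A(wvz,P)+i_A(uvz,P)+i_A(wvu',P),
\]
and then the local wedge contributions at $v$ obey the exact additivity $i_A(uvz,P)+i_A(zvw,P)+i_A(wvu',P)=i_A(uvu',P)$ together with antisymmetry under orientation reversal, $i_A(wvz,P)=-i_A(zvw,P)$. These identities hold uniformly, including the degree-three configurations, because the $\pm\frac12$ convention is designed exactly so that wedges concatenate additively; they make the four junction terms cancel on the nose, after which $i_A(W,P)=0$ finishes the proof. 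So your decomposition is the right one, but you need this exact cancellation argument (or an equivalent explicit case analysis covering the shared-edge cases) in place of the mod-$2$ shortcut.
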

\begin{proof}
In the proof $v$ stands only for the given occurrence of $v$ on $W'$.
We note that $i_A(W,P)=0$.
Let $u$ and $u'$, respectively, denote a vertex of $G$ that is the predecessor and successor of $v$ on $W'$.
Let $w$ and $z$, respectively, denote a vertex of $G$ that is the predecessor and successor of $v$ on $W$.
By the definition of $i_A$ we have
\begin{eqnarray}
\nonumber
i_A(W'vWvW',P)& = & i_A(W'uvzWwvu'W',P)\\
\nonumber
&= &i_A(W',P)-i_A(uvu',P)+i_A(W,P)-i_A(wvz,P)+ \\
\nonumber
& &     i_A(uvz,P)+i_A(wvu',P) \\
\nonumber
  &=& i_A(W',P)-i_A(uvu',P)
+i_A(uvz,P)+i_A(zvw,P)+i_A(wvu',P)\\
\nonumber
&=& i_A(W',P)-i_A(uvu',P)+i_A(uvu',P)=i_A(W',P).
\end{eqnarray}
\end{proof}

Given a strip clustered graph $(G,T)$ we naturally associate with it a labeling $\gamma$ that for each vertex $v$ returns
the index $i$ of the cluster $V_i$ that $v$ belongs to.
Let $G'\subseteq G$. Let $\max (G')$ and $\min (G')$, respectively, denote the maximal and minimal value of $\gamma(v)$, $v\in V(G')$.

\begin{figure}[t]
  \centering
\centering
{
\includegraphics[scale=0.7]{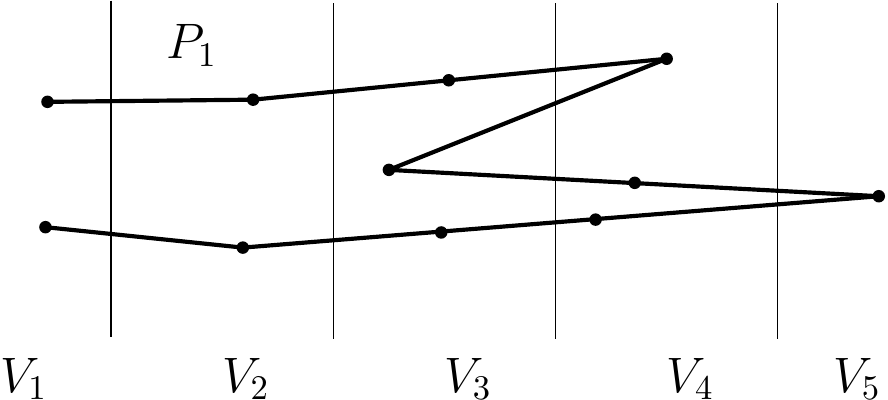}
    	} \hspace{10px}
{ 
\includegraphics[scale=0.7]{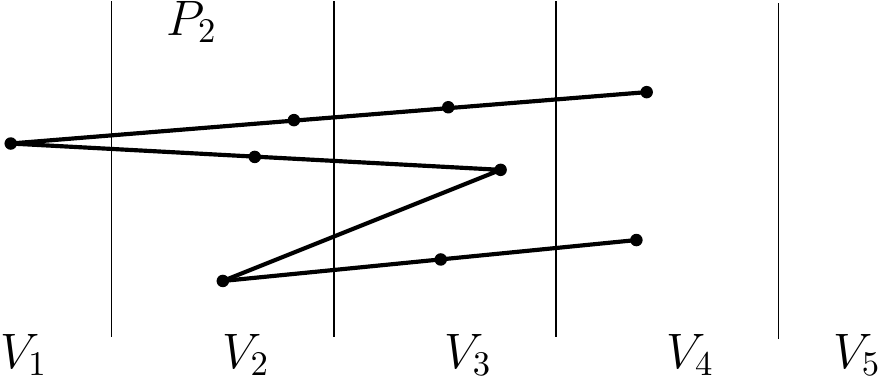}
		}
\caption{A path $P_1$ that is 1-cap (top); and a path $P_2$ that is a 4-cup (bottom).}
\label{fig:cupcup}
\end{figure}

\begin{figure}[t]
  \centering
\centering
{
\includegraphics[scale=0.7]{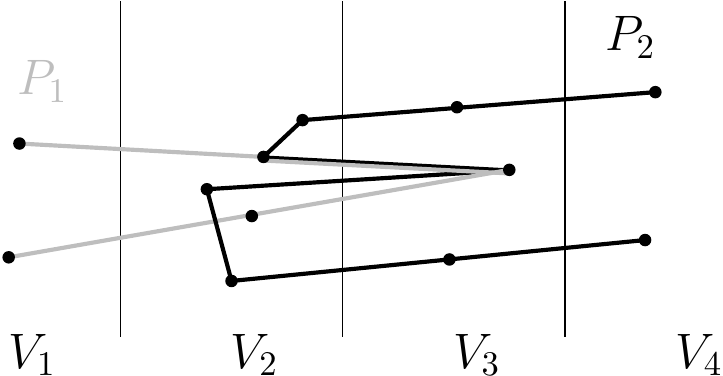}
    	} \hspace{10px}
{
\includegraphics[scale=0.7]{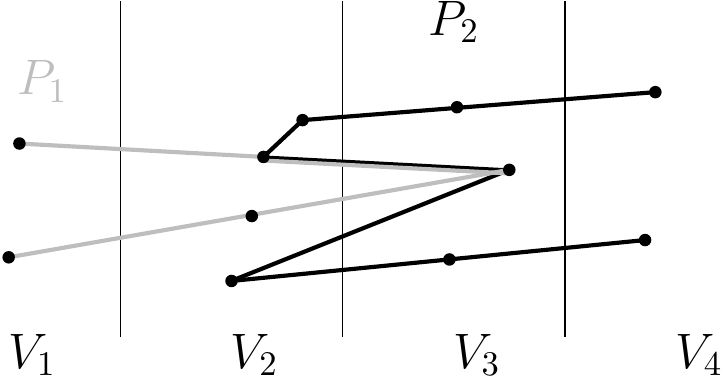}
		}
\caption{An infeasible pair of an 1-cap $P_1$ and a 4-cup  $P_2$ (top); and a feasible pair of an 1-cap $P_1$ and a 4-cup  $P_2$ (bottom).}
\label{fig:cupcup2}
\end{figure}

\paragraph{Definition of an $i$-cap and $i$-cup.}
A path $P$ in $G$ is an \emph{$i$-cap} and \emph{$j$-cup} if for the end vertices $u,v$ of $P$ and all $w\not=u,v$ of $P$
we have $\min (P) = \gamma (u) =\gamma(v)=i\not=\gamma(w)$ and $\max (P) = \gamma (u) =\gamma(v)=j\not=\gamma(w)$, respectively, (see Figure~\ref{fig:cupcup}).
A pair of  an \emph{$i$-cap} $P_1$ and \emph{$j$-cup} $P_2$ is \emph{interleaving} if (i) $\min(P_1)< \min(P_2)\le \max(P_1) < \max(P_2)$; and (ii) $P_1$ and $P_2$ intersect in a path (or a single vertex).
An interleaving pair of an oriented $i$-cap $P_1$ and $j$-cup $P_2$ is \emph{infeasible}, if  $i_A(P_1,P_2)\not=0$, and \emph{feasible},
otherwise (see Figure~\ref{fig:cupcup2}).
Thus, feasibility does not depend on the orientation.
Note that $i_A(P_1,P_2)$ can be either $0,1$ or $-1$. Throughout the paper by an infeasible and feasible pair of paths we mean an infeasible and feasible, respectively,  interleaving pair of an {$i$-cap} and {$j$-cup}.


\begin{observation}
\label{obs:crossing}
In $\mathcal{D}$ there does not exist an unfeasible interleaving pair $P_1$ and $P_2$ of an $i$-cap and $j$-cup, $i+1<j$.
\end{observation}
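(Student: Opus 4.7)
The plan is to derive a contradiction from the supposed infeasible interleaving pair $(P_1,P_2)$ by closing each path into a cycle that is confined to a region bounded away from the other cycle, which forces their algebraic intersection number to vanish.

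First I would pin down the horizontal extent of $P_1$ and $P_2$. All labels are integers, so $\max(P_1)<j$ gives $\max(P_1)\leq j-1$; the endpoints of $P_1$ lie in $V_i$ and all other vertices of $P_1$ have $\gamma$-value in $\{i+1,\ldots,\max(P_1)\}$. Because $\mathcal{D}$ is strip clustered, each edge either stays inside a single open strip or crosses one vertical integer line between two consecutive strips, so the whole of $P_1$ is contained in the open half-plane $\{x<\max(P_1)+1\}\subseteq\{x<j\}$. The symmetric analysis, using $\min(P_2)>i$ hence $\min(P_2)\geq i+1$, yields $P_2\subseteq\{x>i+1\}$.

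Next I would augment the picture by two auxiliary edges: a simple arc $\alpha_1$ inside the open strip $V_i=\{i<x<i+1\}$ joining the two endpoints of $P_1$, and analogously an arc $\alpha_2$ inside $V_j=\{j<x<j+1\}$ joining the two endpoints of $P_2$. Such arcs exist since each open strip is a connected region containing the respective pair of endpoints. Viewing the augmented drawing combinatorially by adjoining $\alpha_1$ and $\alpha_2$ as edges of the underlying graph, the concatenations $C_1=P_1\cdot\alpha_1$ and $C_2=P_2\cdot\alpha_2$ are closed walks in the plane; by the remark immediately following the definition of algebraic intersection number, $i_A(C_1,C_2)=0$.

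It remains to identify $i_A(C_1,C_2)$ with $i_A(P_1,P_2)$. Because $i+1<j$, the line $x=i+1$ strictly separates $\alpha_1\subseteq\{x<i+1\}$ from $P_2\cup\alpha_2\subseteq\{x>i+1\}$, and symmetrically the line $x=j$ strictly separates $\alpha_2\subseteq\{x>j\}$ from $P_1\cup\alpha_1\subseteq\{x<j\}$. The endpoints of $\alpha_1$ lie in $V_i\subseteq\{x<i+1\}$ and hence are not vertices of $C_2$, and similarly the endpoints of $\alpha_2$ are not vertices of $C_1$. Consequently $C_1$ and $C_2$ share precisely the vertices that $P_1$ and $P_2$ share, with the same local rotations, so every term in the defining sum of $i_A(C_1,C_2)$ matches the corresponding term of $i_A(P_1,P_2)$. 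Therefore $i_A(P_1,P_2)=i_A(C_1,C_2)=0$, contradicting the infeasibility of the pair. The only delicate step is the horizontal confinement of $P_1$ and $P_2$; once that is established, the strict inequality $i+1<j$ is exactly what provides room for the two closing arcs to lie on the correct sides of the separating vertical lines.
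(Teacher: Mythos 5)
Your proof is correct and follows essentially the same route the paper has in mind: the paper states this as an observation without a written proof, but the intended argument—close the $i$-cap and the $j$-cup by arcs inside their extremal strips and use that two closed curves in the plane have algebraic intersection number zero—is exactly the one the paper spells out in analogous places (e.g.\ the proof of Observation~\ref{obs:separation} and the discussion in Section~\ref{sec:linearly}). Your added details on the horizontal confinement of $P_1$ and $P_2$ and the role of $i+1<j$ are accurate and fill in the routine verification the paper omits.
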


As a special case of Observation~\ref{obs:crossing} we obtain the following.

\begin{observation}
\label{obs:alternate}
The incoming and outgoing edges do not alternate at any vertex $v$ of $\overrightarrow{G}$  (defined in Section~\ref{sec:even}) in the rotation given by $\mathcal{D}$, i.e., the
incoming and outgoing edges incident to $v$ form two disjoint intervals in the rotation at $v$.
\end{observation}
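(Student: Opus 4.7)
The plan is a direct derivation of Observation~\ref{obs:alternate} from Observation~\ref{obs:crossing}, using the two length-two paths through the alleged ``alternating'' vertex as the witness pair. I proceed by contradiction: suppose some vertex $v\in V_k$ of $\overrightarrow G$ has, in the rotation inherited from $\mathcal{D}$, two incoming edges $e_1=u_1v$, $e_2=u_2v$ interleaved with two outgoing edges $e_1'=vw_1$, $e_2'=vw_2$, occurring in the cyclic order $e_1,e_1',e_2,e_2'$ around $v$. Since $(G,T)$ is strip clustered, the orientation of $\overrightarrow G$ forces $u_1,u_2\in V_{k-1}\cup V_k$ and $w_1,w_2\in V_k\cup V_{k+1}$.

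The core case is the generic one in which all four of $u_1,u_2,w_1,w_2$ lie outside $V_k$, so $u_1,u_2\in V_{k-1}$ and $w_1,w_2\in V_{k+1}$. Set $P_1:=u_1\,v\,u_2$ and $P_2:=w_1\,v\,w_2$. Then $P_1$ is a $(k{-}1)$-cap with $\max(P_1)=k$ and $P_2$ is a $(k{+}1)$-cup with $\min(P_2)=k$, and $P_1\cap P_2=\{v\}$; hence the two paths meet in a single vertex. The interleaving chain
\[
\min(P_1)=k-1\;<\;k=\min(P_2)\;\le\;k=\max(P_1)\;<\;k+1=\max(P_2)
\]
is satisfied, so $(P_1,P_2)$ is interleaving. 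The sole vertex of $P_1\cup P_2$ having degree larger than two is $v$, of degree four, and by the assumed alternation the two paths cross transversally at $v$, so $cr_{P_1,P_2}(v)=\pm 1$. Hence $i_A(P_1,P_2)=\pm 1\ne 0$ and the pair is infeasible. With $i=k-1$ and $j=k+1$ one has $i+1=k<k+1=j$, giving a direct contradiction with Observation~\ref{obs:crossing}.

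The main obstacle is the remaining case, where some of $u_1,u_2,w_1,w_2$ lie in $V_k$, which can only arise from the freedom in the arbitrary tie-breaking used to orient intra-cluster edges of $\overrightarrow G$; here the two-edge witness paths degenerate, since the internal vertex $v$ shares its label with an endpoint and so they no longer qualify as a cap or cup. I would resolve this by using exactly that freedom in the opposite direction: the preceding paragraph shows that the inter-cluster incoming and inter-cluster outgoing edges at $v$ already form two disjoint arcs in the rotation, and each intra-cluster edge at $v$ may be oriented so that its ``slot'' in the rotation extends the arc on whichever side it lies, rather than breaking it. With this compatible local choice of tie-breaking orientations, no vertex of $\overrightarrow G$ exhibits alternation, which is the content of Observation~\ref{obs:alternate}.
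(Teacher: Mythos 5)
Your first paragraph is exactly the paper's proof: the paper obtains Observation~\ref{obs:alternate} as a special case of Observation~\ref{obs:crossing}, and the intended specialization is precisely your pair $P_1=u_1vu_2$, $P_2=w_1vw_2$, an interleaving pair of a $(k-1)$-cap and a $(k+1)$-cup with $i_A(P_1,P_2)=\pm 1\neq 0$ and $j=i+2>i+1$, so the alternation at $v$ would contradict strip planarity.

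Regarding your ``remaining case'' with neighbours of $v$ in $V_k$: the paper ignores it, and legitimately so, because the observation is only ever invoked when every cluster is an independent set (Section~\ref{sec:labelings} assumes this explicitly, and the instances in Lemma~\ref{lemma:characterizationSpecial} are normalized), so no ties arise. Your proposed repair---re-choosing the tie-breaking orientations---should be flagged, for two reasons. First, it does not prove the statement as literally written, since there $\overrightarrow{G}$ is fixed in advance with an arbitrary tie-breaking; in fact, with an adversarial tie-breaking the literal statement fails (e.g.\ a star whose center and all leaves lie in a single cluster, with the loose edges oriented alternately in and out around the center), so this is a caveat about the statement rather than something a proof could close. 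Second, the local re-orientation rule is not obviously globally consistent: an intra-cluster edge has two endpoints, and the orientation that extends the arc at $v$ may be exactly the one that breaks the arc at the other endpoint, so a greedy per-vertex choice needs further justification. None of this affects the observation in the setting where the paper uses it, where your first paragraph is the whole argument and coincides with the paper's.
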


Observation~\ref{obs:alternate} implies that the corresponding embedding of the upward digraph  $\overrightarrow{G}$
is a candidate embedding, if the clusters are independent sets.

We say that a vertex $v\in V(G)$ is \emph{trapped} in the interior of a cycle $C$
if in $\mathcal{D}$ the vertex $v$ is in the interior of $C$ and we have $\min(C) > \gamma(v)$ or $\gamma(v)> \max (C) $, where
$\max (C)$ and $\min (C)$, respectively, denotes the maximal and minimal label of a vertex of $C$.
A vertex $v$ is trapped if it is trapped in the interior of a cycle.

\begin{observation}
\label{obs:trap}
In $\mathcal{D}$ there does not exist a trapped vertex.
\end{observation}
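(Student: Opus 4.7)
The plan is to exploit the geometric rigidity that a strip clustered embedding imposes on any cycle, and show that a trapped vertex cannot lie in the planar region the cycle bounds. The key observation is that, in a strip clustered drawing, the image of the cycle $C$ is confined to a bounded horizontal range determined by $\min(C)$ and $\max(C)$, so a vertex whose $x$-coordinate falls outside that range cannot possibly lie in the interior of $C$.

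First I would prove the horizontal confinement of $C$. By the definition of a strip clustered drawing, every vertex of $V_i$ satisfies $i < x(v) < i+1$, and each edge crosses a vertical line $x = j$ at most once. For an edge $uv$ of $G$ with $u \in V_i$ and $v \in V_j$ (where $|i-j|\leq 1$), this forces the arc representing the edge to be contained in the closed vertical strip $\min(i,j) \leq x \leq \max(i,j)+1$; otherwise the edge would either leave the strip and come back, violating the at-most-one-crossing condition with the boundary line, or it would contradict the positions of $u$ and $v$. Taking the union over all edges and vertices of $C$, the image of $C$ lies in the closed vertical strip $S_C := \{\min(C) \leq x \leq \max(C)+1\}$.

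Next I would derive the contradiction. Suppose $v$ is trapped in the interior of $C$ with $\gamma(v) < \min(C)$; the other case $\gamma(v) > \max(C)$ is symmetric. Then $x(v) < \gamma(v)+1 \leq \min(C)$, so $v$ lies strictly to the left of the vertical strip $S_C$. Consider the horizontal leftward ray from $v$; it stays in the open half-plane $\{x < \min(C)\}$ and hence does not meet $C$. Thus $v$ belongs to the unbounded component of $\mathbb{R}^2 \setminus C$, i.e.\ to the exterior of $C$, contradicting the assumption that $v$ is inside $C$.

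The main obstacle, such as it is, is the confinement claim for edges. It follows cleanly from the two defining properties of a strip clustered drawing (namely, vertex coordinates respecting the strip indexing and each integer vertical line crossing each edge at most once), but one must make sure to invoke both. Once the confinement of $C$ to $S_C$ is established, the contradiction for the trapped vertex is an immediate consequence of the Jordan curve theorem via the leftward (respectively rightward) horizontal ray argument sketched above.
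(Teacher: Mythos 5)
Your proof is correct: the paper states this observation without an explicit proof, treating it as immediate, and your argument---confining the cycle $C$ to the vertical strip $\min(C)\le x\le \max(C)+1$ via the condition that each line $x=i$ meets every edge at most once, and then using a horizontal ray to place a vertex with label outside $[\min(C),\max(C)]$ in the unbounded component of the complement of $C$---is exactly the natural formalization of the intended reasoning. Nothing is missing.
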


\section{Marriage condition}

\label{sec:labeling}

In the present section we prove a statement, Lemma~\ref{marriage:lemma}, about planar bipartite graphs that is used later to show that
the marriage condition for applying Theorem 3 from~\cite{BBLM94} is satisfied.
Lemma~\ref{marriage:lemma} says that if necessary conditions for an embedded strip clustered graph for c-planarity from Section~\ref{sec:crossing}
holds then for each subset of semi-simple faces $F$ we have sufficiently many sources and sinks in $\overrightarrow{G}$  (defined in Section~\ref{sec:even}) incident to the faces in $F$.
The lemma works only for a special class of clustered graphs. Thus, in order to apply it in Sections~\ref{sec:normalized} we need to normalize our clustered graph so that it has a special form.

Let $G=(V,E)$ denote a planar bipartite connected graph given by the isotopy class
of an embedding. Let $V_1$ and $V_2$ denote the two parts of the bipartition of $G$.
All the sub-graphs in the present section are given by 
the isotopy class of an embedding inherited from $G$.

\bigskip

\begin{figure}[htp]
\centering
\subfigure[]{\includegraphics[scale=0.7]{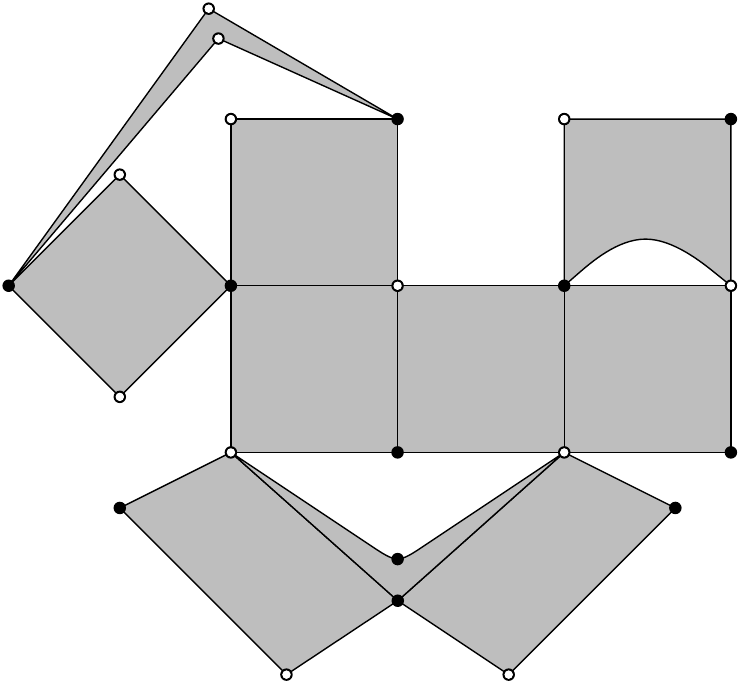}
    \label{fig:4faces}
	} \hspace{2cm}
\subfigure[]{\includegraphics[scale=0.7]{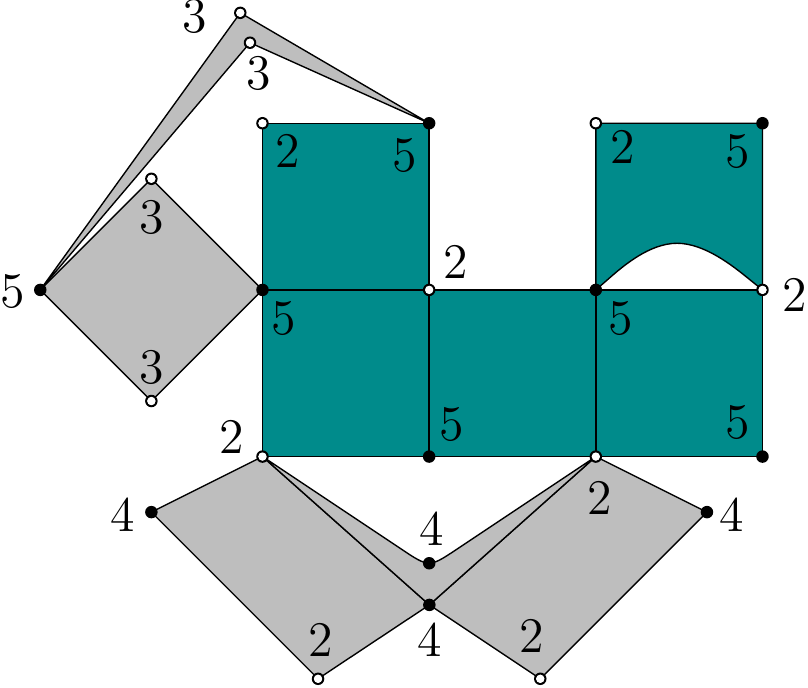}
	\label{fig:4faces1}
	}
\caption{(a) A set of fancy faces $F$ of $G$ filled by grey color. The vertices in $V_1$ and $V_2$, respectively, are marked by empty discs and full discs; (b)
The labeling $\gamma$ of the vertices in $V(G)$; the faces in an $\approx$-class are highlighted.}

\end{figure}

Refer to Figure~\ref{fig:4faces} and~\ref{fig:4faces1}.
For the remainder of this section let $\gamma:V \rightarrow \mathbb{N}$ be a labeling of the vertices of $G$ by integers and let $F$ denote a subset of inner faces of $G$
all of which are four-cycles such that

\begin{enumerate}[(i)]
\item
\label{it:semisimple}
all faces in $F$ are semi-simple; (defined in Section~\ref{sec:even})
\item
\label{it:v1v2}
for every $f\in F$
the vertices incident to $f$ in $V_1$ receive a smaller label than the vertices incident to $f$ belonging to $V_2$. Thus,  local minima of $f$ belong to $V_1$
and local maxima of $f$ to $V_2$; and
\item
\label{it:cycle}
$G$ does not contain a vertex $v$ trapped in the interior of a cycle. (defined in Section~\ref{sec:crossing})
\end{enumerate}

Let $F'$ denote the set of remaining faces of $G$, i.e., the faces not in $F$.
We say that $F$ is a set of \emph{fancy} faces in $G$ (with respect to $\gamma$).
The set $F$ is the one for which we show the marriage condition. Thus, we assume that $G$ is connected and $G=G[F]$.
Let $\mathcal{F}$ denote the closure of the union of faces in $F$.
A vertex $v$ of $G$ is a \emph{joint}, if no intersection of a disc neighborhood of $v$ with $\mathcal{F}$ is homeomorphic to a disc.

Refer to Figure~\ref{fig:4faces3}. The \emph{cardinality} of a relation $R$ is the number of pairs $(a,b)\in R$.
We define an equivalence relation $\approx$ on the set $F$ as the transitive closure of a relation $\sim$, where $f_1\sim f_2$,
if $f_1$ and $f_2$ share at least one vertex and their vertices in both $V_1$ and $V_2$ have the same label.
Let $\approx_1$ (resp. $\approx_2$) denote the minimal (with respect to its cardinality) transitive relation on $F$ such that $f_1\approx_1 f_2$ (resp. $f_1\approx_2 f_2$)
if $f_1\approx f_2$, or $f_1$ and $f_2$ are from two different $\approx$-classes that share a vertex from $V_1$ (resp. $V_2$), which are necessarily joints.

\begin{figure}[htp]
\centering
\subfigure[]{\includegraphics[scale=0.7]{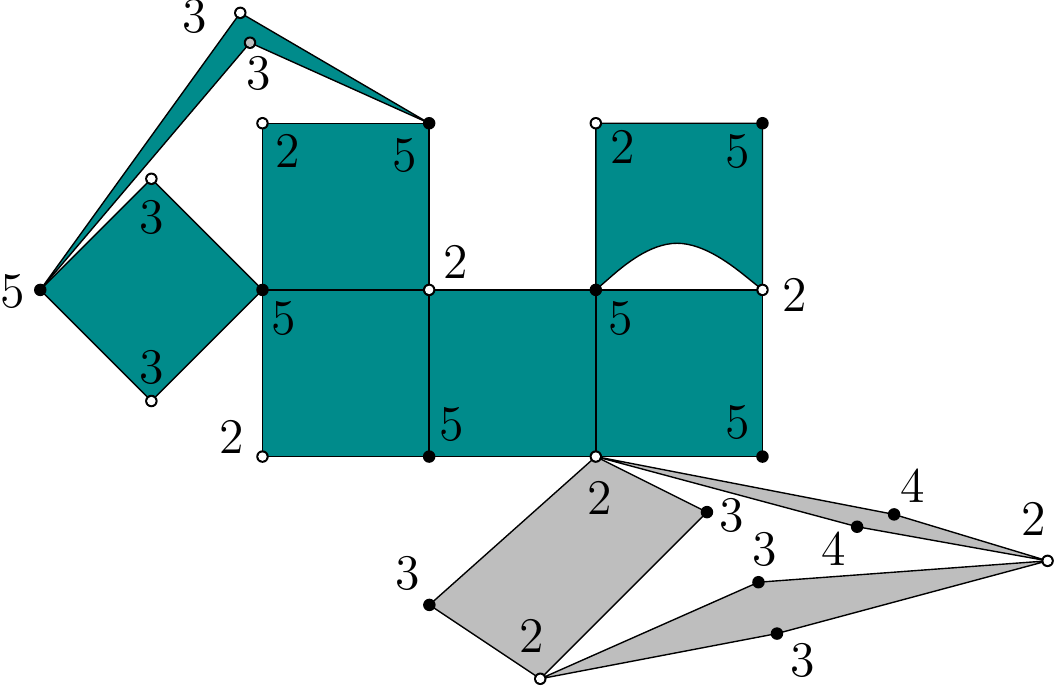}
    \label{fig:4faces3}
	} \hspace{2cm}
\subfigure[]{\includegraphics[scale=0.7]{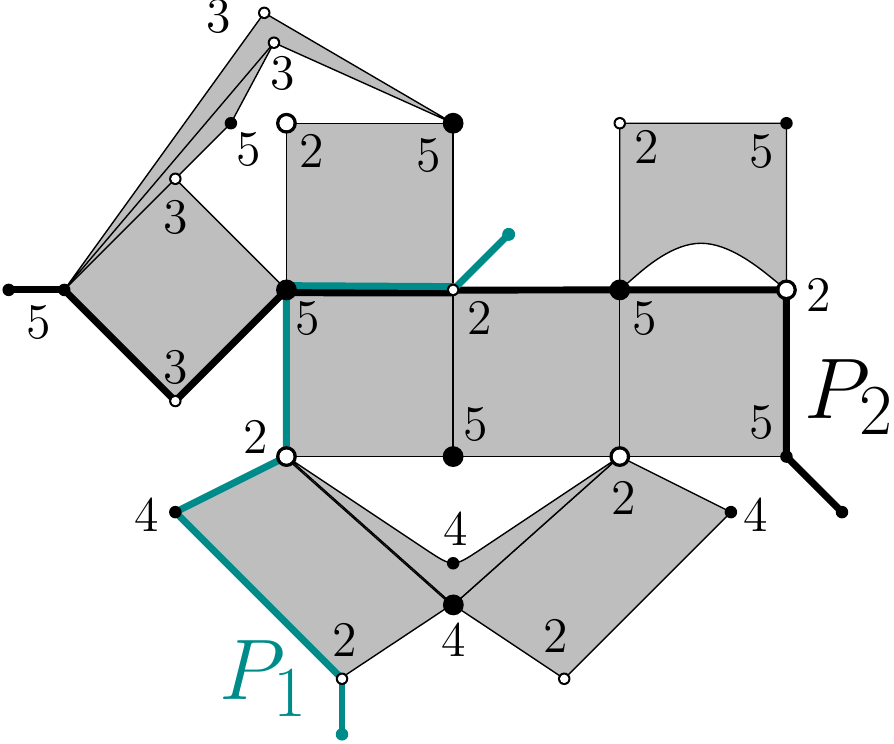}
	\label{fig:4faces-paths}
	}
\caption{(a) The faces in an $\approx_2$-class are highlighted.
(b) A pair of paths $P_1$ and $P_2$ that alternate, $P_1$ is a 2-cap and $P_2$ is a 5-cup.
 The marked vertices are enlarged.
 The marking violates property~(\ref{it:paths}) of $M$. }

\end{figure}

Refer to Figure~\ref{fig:4faces-paths}. In what follows we define a \emph{forbidden pair} of paths $P_1$ and $P_2$ that is obtained
from an unfeasible pair by deleting one edge from each end of a path
in the pair, but satisfying some additional properties.
A wedge $\omega$ at a vertex $v\in V_1$ in $G$ is \emph{good} if
$\omega$ is incident to a face in $F'$ and we can add to $G$ an edge $e$ incident to $v$ with the other end vertex $u$ of degree one,  that eliminates $\omega$ and has $\gamma(u)=\gamma(v)-1$, without
violating~(iii).
We define a good wedge at $v$ if $v\in V_2$ in the same way
except that we require $\gamma(u)=\gamma(v)+1$.

Let $P_1$ and $P_2$ denote an $i$-cap and $j$-cup, $i<j$, respectively,  intersecting in a path, whose each end vertex is incident to at least one 
good wedge  in $F'$. In the case an end vertex $v_1$
of $P_1$ is contained in $P_2$ we additionally require that the set of all edges creating good wedges at $v_1$ do not alternate with any pair of consecutive edges of $P_2$ in the rotation, and vice-versa. Let us 
attach to each end vertex of $P_1$ and $P_2$ a new additional edge eliminating a good wedge while keeping the drawing crossing free. (Note the short protruding edges in the figure.) The newly introduced vertices have degree one and each belongs to the interior of a face in $F'$.
Let $e(P_1)$ and $e(P_2)$ denote the paths obtained from $P_1$ and $P_2$, respectively, by extending them by newly added edges.
(We care about pairs of paths $e(P_1)$ and $e(P_2)$  yielding an unfeasible pair. Thus, we will be interested only in pairs $P_1$ and $P_2$
ending in the vertices that do not represent sinks or sources in our directed graph defined in Section~\ref{sec:even}.)

The pair of paths $P_1$ and $P_2$ is \emph{forbidden} if

\begin{enumerate}[(A)]
\item
$P_1$ and $P_2$ is contained in the sub-graph of $G$ corresponding to an $\approx_1$ and $\approx_2$, respectively, equivalence class,
and its end vertices belong to $V_1$ and $V_2$;
\item
$i_A(e(P_1),e(P_2))\not=0$  (note 
that the value of $i_A(e(P_1),e(P_2))$ is the same regardless of how
we extended $P_1$ and $P_2$ into $e(P_1)$ and $e(P_2)$ due to the definition of a forbidden pair); and
\item
$\min(P_1)\le \min(P_2)\le \max(P_1) \le \max(P_2)$.
\end{enumerate}

Next, we define a subset of $V$ that will be substituted by the set of the sinks and sources in $G$,
when using the result of the present section, Lemma~\ref{marriage:lemma}, later in Section~\ref{sec:normalized}.
A subset $M$ of $V$ is called the subset of \emph{marked} vertices of $V$ and has the following properties.

\begin{enumerate}[(a)]
\item
\label{it:interior}
Let $C$ denote a cycle contained in a sub-graph induced by faces in an $\approx_1$ or $\approx_2$ equivalence class.
For every such $C$, $M$ must contain all the vertices with label $\min(C)$ and $\max(C)$ that are either incident to $C$ or in the interior of $C$
except for vertices incident to a face of $F'$ in the exterior of $C$.
\item
\label{it:paths}
The graph $G$ does not contain a forbidden pair of an $i$-cap $P_1$ and $j$-cup $P_2$, $i<j$, none of whose end vertices belongs to $M$.
\end{enumerate}

Note that $M$ does not have to exist. Thus, the following applies only to graphs $G$ with a labeling $\gamma$ for which $M$ exists.
When applying Lemma~\ref{marriage:lemma} condition~(\ref{it:interior}) is ``enforced'' by Observation~\ref{obs:trap},
since $M$ corresponds to the set of sinks and sources of
 $\overrightarrow{G}$.
Condition~(\ref{it:paths}) is ``enforced'' by Observation~\ref{obs:crossing}.

The lemma bounds from below the size of $M$ by the size of $F$ and constitutes the heart of the proof of
characterization of embedded strip planar clustered graphs.

\begin{lemma}
\label{marriage:lemma}
Suppose that $M$ exists.
We have $|M|\ge |F|$.
\end{lemma}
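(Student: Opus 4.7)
The plan is to prove $|M| \ge |F|$ by a direct counting argument keyed to the $\approx$-class structure of $F$. For a fixed $\approx$-class $K$, every vertex of $G[K]$ carries one of only two labels $a_K < b_K$, and every internal face of $G[K]$ is a $4$-cycle belonging to $F$. Applying Euler's formula to $G[K]$ (viewed, block by block, as a planar quadrangulation with outer boundary) yields an identity of the form $|F(K)| = n(K) - \ell(K)/2$, up to a small additive correction, where $n(K)$ is the number of vertices of $G[K]$ and $\ell(K)$ the length of its outer boundary. My aim is to attribute at least $|F(K)|$ vertices of $M$ to each class $K$ via a consistent global charging that does not double-count joint vertices.

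Condition (\ref{it:interior}), applied to the outer boundary cycle $C$ of $K$, forces every vertex strictly in the interior of $K$ into $M$: such a vertex carries label $a_K$ or $b_K$ and cannot be incident to any $F'$-face lying outside $C$. This already contributes $n(K) - \ell(K)$ marked vertices from the interior of $K$, so the remaining $\ell(K)/2$ per class must come from its boundary (possibly from joints shared with neighboring classes). This is where condition (\ref{it:paths}) does the work: if too few boundary vertices of $K$ were marked, I would assemble, from a long consecutive unmarked arc of the boundary of $K$ together with boundary arcs of a neighboring $\approx_1$- or $\approx_2$-class, an $i$-cap $P_1$ inside an $\approx_1$-class (with endpoints in $V_1$) and a $j$-cup $P_2$ inside an $\approx_2$-class (with endpoints in $V_2$). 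Using good wedges into adjacent $F'$-faces at all four endpoints, the extensions $e(P_1)$ and $e(P_2)$ would be well-defined and have nonzero algebraic intersection number while satisfying $\min(P_1)\le \min(P_2)\le \max(P_1)\le \max(P_2)$, producing a forbidden pair with no endpoints in $M$ and contradicting (\ref{it:paths}).

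Summing the resulting local inequality $|M(K)| \ge |F(K)|$ across all $\approx$-classes, with a canonical charge assignment that sends each joint in $V_1$ to a single $\approx_1$-class and each joint in $V_2$ to a single $\approx_2$-class, avoids double-counting and yields $|M| \ge \sum_K |F(K)| = |F|$. Property~(\ref{it:v1v2}) of $F$ (that minima lie in $V_1$ and maxima in $V_2$) is what makes this $\approx_1$/$\approx_2$ split well-defined, and property~(\ref{it:cycle}) is what prevents exotic cycles from smuggling in uncounted vertices in the interior of some $C$.

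The main obstacle is the boundary-marking step. Constructing the forbidden pair requires one to produce, from a single deficient boundary arc, all of the following at once: an $i$-cap of the right form in an $\approx_1$-class with $V_1$-endpoints, a $j$-cup of the right form in an $\approx_2$-class with $V_2$-endpoints, good wedges into $F'$-faces at all four endpoints so that $e(P_1)$ and $e(P_2)$ exist, and an odd algebraic intersection number of the extensions. Arranging these four requirements simultaneously, especially in joint-heavy configurations where several $\approx$-classes meet along a shared boundary, is the technical heart of the lemma, and is precisely what condition (\ref{it:paths}) is engineered to block.
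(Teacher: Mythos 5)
Your outline identifies the right ingredients (Euler's formula, property~(\ref{it:interior}) to force interior vertices into $M$, property~(\ref{it:paths}) triggered on the boundary, and a charging of joints between classes), but it is not a proof: the step you yourself flag as ``the main obstacle'' --- manufacturing a forbidden pair from a deficient boundary, simultaneously producing the cap in an $\approx_1$-class, the cup in an $\approx_2$-class, good wedges at all four endpoints, and odd algebraic intersection of the extensions, in the presence of joints --- is exactly the content of the lemma, and you give no argument for it. In the paper this is handled by a substantial machine you do not supply: first separations at all unmarked joints (and a proof, Claim~\ref{claim:nonViolate}, that separations cannot create a violating pair, which requires the $i_A$-surgery of Lemma~\ref{lemma:symmDif} on walks that become paths only after separation); then a conditional, not ``canonical'', assignment of joints to $G'$ or $G''$ that depends on whether all $V_1$-vertices of the attached class $G_0$ are marked; then contraction of minimal boundary walks $W_0$ (whose interior $V_2$-vertices are forced into $M'$ by~(\ref{it:interior})) so that Observation~\ref{obs:simple} about alternating quadruples on an even cycle can be applied; and finally an extension argument lifting the alternating quadruple in the contracted auxiliary graph back to a forbidden pair in $G$. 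Note also that the correct combinatorial trigger is an alternating $V_1/V_2$ quadruple of unmarked boundary vertices, not ``a long consecutive unmarked arc''.

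In addition, your per-class bookkeeping starts from a false identity. For an $\approx$-class $K$, the induced subgraph $G[K]$ need not be two-connected, its ``outer boundary'' need not be a cycle, and its internal faces need not all lie in $F$ (the paper explicitly treats internal faces of $G'$ not in $F$, and classes nested inside faces of other classes), so $|F(K)| = n(K) - \ell(K)/2$ does not hold in general; the paper instead applies Euler's formula globally, reducing the claim to $|U| \le \sum_{f' \in F'}(|f'|/2 - 1) + 2$ and bounding unmarked vertices face-by-face over $F'$ within a recursion. Likewise, a blanket rule ``send each $V_1$-joint to one $\approx_1$-class'' does not obviously avoid both under-counting in the base case and breaking the inductive hypothesis for the remainder graph; the paper's case distinction on whether $G_0$ has all its $V_1$-vertices marked is what makes both sides of the recursion go through. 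As it stands, the proposal is a plausible plan with the decisive steps missing or incorrectly set up.
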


The key idea in the proof of the lemma is the combination of properties~(\ref{it:interior}) and~(\ref{it:paths}) of $M$ with the following simple observation.

\begin{observation}
\label{obs:simple}
Let $C=v_1v_2\ldots v_{2a}$, $a\ge 2$, denote an even cycle. Let $V'$ denote a subset of the vertices of $C$ of size
at least $a+2$. Then $V'$ contains four vertices $v_i,v_j,v_k$ and $v_l$, where $i<j<k<l$, such that $i,k$ is odd and $j,l$ is even (or vice versa).
\end{observation}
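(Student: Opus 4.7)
The plan is to argue entirely by parities of indices. First, partition $V'$ into $V'_O := V' \cap \{v_1, v_3, \ldots, v_{2a-1}\}$ and $V'_E := V' \cap \{v_2, v_4, \ldots, v_{2a}\}$. Since each class contains at most $a$ elements and $|V'| \geq a+2$, both $|V'_O|$ and $|V'_E|$ are at least $2$. Next, I would list the elements of $V'$ in increasing order of index and record the index-parity ($O$ or $E$) of each in a word $\sigma$ of length $|V'|$. A length-four subword of $\sigma$ matching $OEOE$ or $EOEO$ is exactly a choice of indices $i<j<k<l$ as demanded by the statement, so the whole task reduces to showing that $\sigma$ contains such a subword.

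The driving observation is that this is controlled by the number $r$ of maximal constant-parity runs of $\sigma$: if $r \geq 4$, one picks a representative from each of the first four runs, and the resulting parities alternate automatically since consecutive runs carry opposite parities. The bulk of the proof is therefore to rule out $r \leq 3$. The cases $r = 1$ and $r = 2$ are quick: in the former $|V'| \leq a$ outright, while in the latter, for a pattern $O^p E^q$, the largest odd index in $V'$ is at least $2p-1$ and the smallest even index in $V'$ at most $2a-2q+2$, forcing $2p-1 < 2a-2q+2$ and hence $p+q \leq a+1$; the reverse pattern $E^q O^p$ is symmetric (and in fact even tighter).

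The main obstacle, and the only case requiring real care, is $r = 3$. For a pattern $\sigma = O^{p_1} E^q O^{p_2}$, the $q$ even indices in the middle block all lie strictly between the largest odd index of the first block (which is $\geq 2p_1 - 1$) and the smallest odd index of the third block (which is $\leq 2a - 2p_2 + 1$); hence these $q$ even integers live in the interval $[2p_1,\, 2a - 2p_2]$, which contains only $a+1 - p_1 - p_2$ even integers. Thus $|V'| = p_1 + q + p_2 \leq a+1$, contradicting $|V'| \geq a+2$; the pattern $E^{q_1} O^p E^{q_2}$ is handled by the same squeezing with the roles of odd and even swapped. Combining these cases forces $r \geq 4$, which immediately yields the required four vertices.
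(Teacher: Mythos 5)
Your proof is correct. Every step checks out: the reduction to finding an $OEOE$ or $EOEO$ subword in the parity word is exactly the statement; $r\ge 4$ runs immediately yields such a subword; and the squeezing bounds in the cases $r=1,2,3$ are all valid (in particular, for $O^{p_1}E^{q}O^{p_2}$ the $q$ even indices are indeed confined to $\{2p_1,2p_1+2,\dots,2a-2p_2\}$, which has $a+1-p_1-p_2$ elements, giving $|V'|\le a+1$), so $|V'|\ge a+2$ forces $r\ge 4$.

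Your route differs from the paper's, although both are short counting arguments. The paper argues by contradiction and works cyclically: taking the larger parity class, say the even-index vertices $V_0'$ of $V'$, it observes that if two distinct cyclic gaps between consecutive vertices of $V_0'$ each contained a chosen odd-index vertex, the resulting quadruple would alternate (cyclically, hence also in the linear index order), so all but at most one gap must contain an \emph{unchosen} odd-index vertex; counting these gives $a-|V_1'|\ge |V_0'|-1$, contradicting $|V_0'|+|V_1'|\ge a+2$. You instead stay entirely in the linear order and classify by the number of parity runs, replacing the single cyclic pigeonhole count with a three-case interval-squeezing argument. The paper's proof is slightly more compact and uses the cycle structure directly (at the cost of the implicit step that cyclic alternation yields the required linear alternation), while yours avoids any cyclic reasoning and makes the extremal bound $|V'|\le a+1$ for non-alternating sets explicit in each configuration; the two approaches deliver essentially the same content.
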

\begin{proof}
For the sake of contradiction we assume that $V'$ does not contain four such vertices.
Let $V_0$ and $V_1$, respectively, denote the vertices of $V$ with even and odd index.
Similarly, let $V_0'$ and $V_1'$, respectively, denote the vertices of $V'$ with even and odd index.
Suppose that $2 \le |V_0'|\ge |V_1'|$ and fix a direction in which we traverse $C$.
Between every two consecutive vertices of $V_0'$ along $C$ except for at most one pair of consecutive vertices
we have a vertex in $V_1-V_1'$. Thus, $|V_1-V_1'|\geq |V_0'|-1$.
On the other hand, $|V_1-V_1'|=a-|V_1'|\leq a-(a+2-|V_0'|)=|V_0'|-2$ (contradiction).
\end{proof}

Before we turn to the proof of the lemma, let us illustrate how Observation~\ref{obs:simple}
and properties~(\ref{it:interior}) and~(\ref{it:paths}) of $M$ implies the lemma in the case, when all the faces of $G[F]$  are in $F$
except for the outer face $f_o$, and $G[F]$ is two-connected. Note that in this case all the faces of $F$ are in the same $\approx$ class.
We have $2|E|=4|F|+\sum_{f'\in F'} |f'|$, and thus, by Euler's formula we obtain $2|V|+2|F|=2|E|+2=4|F|+|f_o|+2$.
Thus, $|V|-\frac 12|f_o|-1=|F|$. If $|M|<|F|$, by property~(\ref{it:interior}), we have at least $\frac 12|f_o|+2$ vertices
incident to $f_o$ not belonging to $M$, and hence, by Observation~\ref{obs:simple} we find four vertices incident to $f_o$, whose existence
yields a pair of paths violating property~(\ref{it:paths}).

\begin{proof}[{\it Proof of Lemma~\ref{marriage:lemma}}]
We extend the previous illustration for the case
when all the fancy faces of $G$ are in the same $\approx$
to the general one by devising a charging scheme recursively assigning marked vertices to sub-graphs
induced by fancy faces in a single $\approx$ class.
Similarly as above, by Euler's formula $|V|+|F|+|F'|=|E|+2$ and the identity $4|F|+\sum_{f'\in F'}|f'|=2|E|$.
We obtain the following
\begin{equation}
\label{eqn:euler}
|V|=|F|+\sum_{f'\in F'}(|f'|/2 -1)+2
\end{equation}

Let $U$ denote $V\setminus M$, i.e., the set of vertices of $V$ that are not marked.
By (\ref{eqn:euler}), it is enough to show that $|U|\le \sum_{f'\in F'}(|f'|/2 -1)+2$.

Refer to Figure~\ref{fig:split}.
A \emph{separation} of $G$ at a joint is an operation that replaces $v$ by as many vertices as there are
arc-connected components in an intersection of a small punctured disc neighborhood of $v$ with $\mathcal{F}$ so that these vertices form an independent set; and for each face $f$ in $F$ incident to $v$ we obtain a new face in $F$ by replacing $v$ with a copy of $v$ corresponding to the arc-connected component meeting $f$.
Thus, a separation preserves the number of faces of $F$ in $G$.
As a pre-processing step, we first perform the separation at every joint of $G$ which is not marked, i.e., not in $M$.
By slightly abusing the notation we denote the resulting graph by $G$.
 By performing the separation we definitely cannot violate~(\ref{it:interior}) in the resulting embedded graph $G$, but we have also the following.

\bigskip
\begin{figure}[htp]
\centering
\subfigure[]{\includegraphics[scale=0.7]{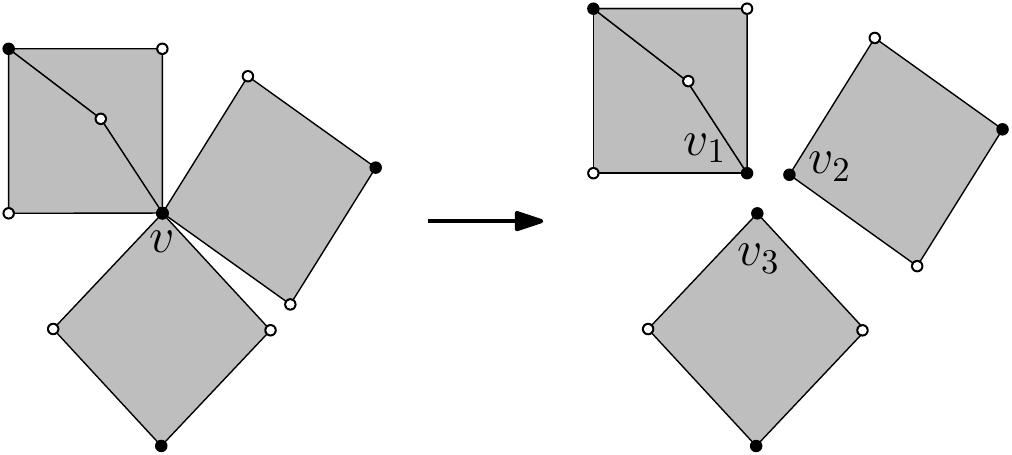}
    \label{fig:split}
	} \hspace{2cm}
\subfigure[]{\includegraphics[scale=0.7]{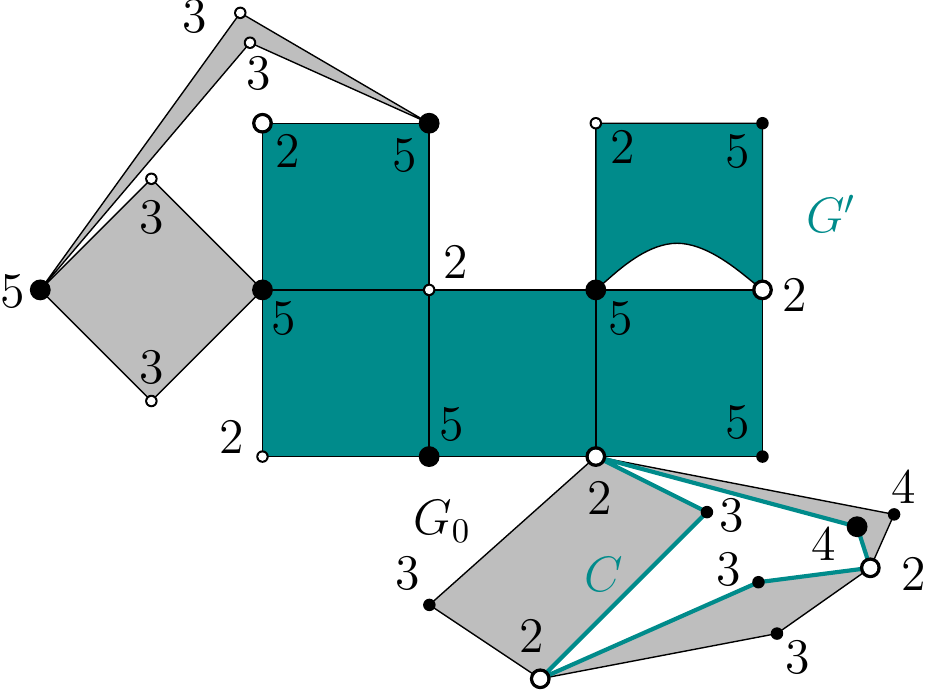}
	\label{fig:facesProof}
	}
\caption{(a) Separation at the vertex $v$; (b) The sub-graph $G'$ of $G$ with an $\approx_1$ equivalence class corresponding to $G_0$ attached to it. The marked vertices are enlarged.}
\end{figure}

\begin{figure}[htp]
\centering
\includegraphics[scale=0.7]{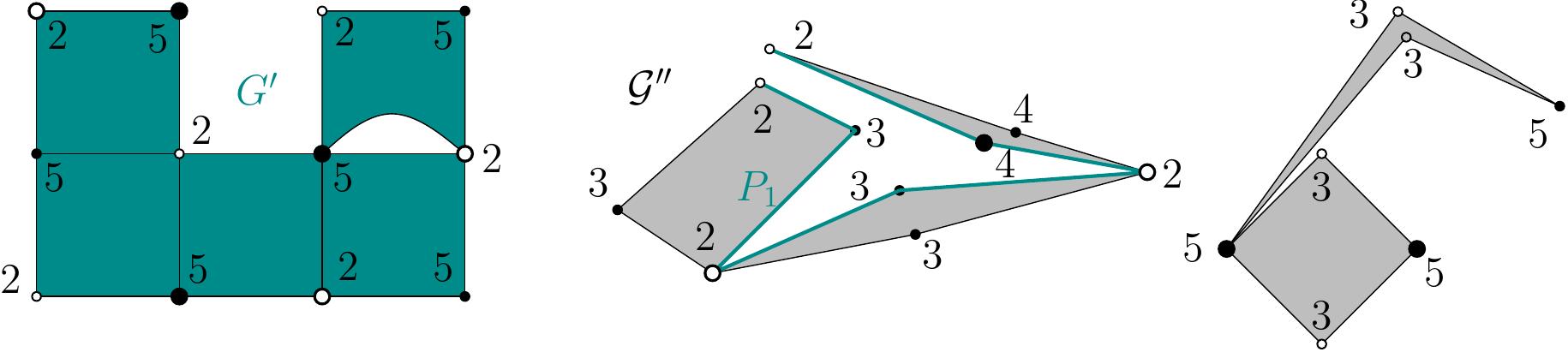}
\caption{The mapping of marked vertices to $G'$ and $G''$. The path $P_1$ is a 2-cap and every cup that forms a forbidden pair with $P_1$ has to end in a vertex with label four.}
\label{fig:facesProof2}
\end{figure}

\begin{claim}
\label{claim:nonViolate}
By performing the separations at all non-marked joint vertices we cannot violate property~(\ref{it:paths})
if none of~(\ref{it:interior}) and~(\ref{it:paths})
was violated before the separations.
\end{claim}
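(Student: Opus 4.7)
The plan is to argue by contradiction and to lift a hypothetical failure of property~(\ref{it:paths}) from the separated graph back to the original one. It suffices to consider a single separation: let $v$ be a non-marked joint, let $G^-$ denote the graph before the separation and $G^+$ the graph after, and write $v_1,\ldots,v_k$ for the copies of $v$ in $G^+$. Assume that $(P_1,P_2)$ is a forbidden pair in $G^+$ whose endpoints avoid $M$. Since $\{v_1,\ldots,v_k\}$ is an independent set in $G^+$, each of $P_1$ and $P_2$ meets it in at most one vertex, so collapsing each such copy back to $v$ yields paths $P_1',P_2'$ in $G^-$ using the same edges.

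Three of the four defining properties of a forbidden pair transfer easily. Condition~(C) is unchanged since $\gamma$ is invariant. The endpoints of $P_1'$ and $P_2'$ are either endpoints of $P_1,P_2$ or the vertex $v\notin M$, so they avoid $M$. For condition~(A), observe that separation can only refine the relations $\approx$, $\approx_1$ and $\approx_2$: the independent set $\{v_1,\ldots,v_k\}$ creates no new face-adjacencies, only destroys old ones at $v$. Hence an $\approx_1$-class sub-graph of $G^+$ containing $P_1$ lies inside an $\approx_1$-class sub-graph of $G^-$ containing $P_1'$, and likewise for $P_2'$.

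The crux is to show that $i_A(e(P_1'),e(P_2'))\neq 0$, given that $i_A(e(P_1),e(P_2))\neq 0$. The contributions to $i_A$ at every vertex other than $v$ are identical in the two graphs. At $v$ in $G^-$, the rotation factors into blocks, one per arc-component of $\mathcal{F}$ around $v$, each block equal to the rotation at the corresponding copy $v_i$ in $G^+$. If $P_1$ and $P_2$ traverse the same copy $v_i$, the local picture at $v$ in $G^-$ is identical to that at $v_i$ in $G^+$, so the contributions coincide. If they traverse distinct copies $v_i\neq v_j$, then the edges of $P_1'$ at $v$ lie in one block and those of $P_2'$ lie in another, so $P_1'$ and $P_2'$ cannot alternate at $v$ and the local contribution vanishes; in $G^+$, $v_i$ belongs only to $P_1$ and $v_j$ only to $P_2$, so neither contributes either. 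The case in which a copy of $v$ is an endpoint of one of the paths is handled by the same block argument applied to the extended path $e(P_1)$ or $e(P_2)$.

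It remains to verify that the extensions $e(P_1),e(P_2)$ lift to legitimate extensions in $G^-$. A good wedge at $v_i$ is a wedge at $v$ in $G^-$ bounded by the same face of $F'$, and the no-trapped-vertex condition for the attached pendant edge is unaffected by the purely combinatorial split of $v$. The non-alternation requirement at an endpoint $v_i\in P_2$ transfers to $v$: in the same-block case it is inherited from $G^+$, and in the different-block case it holds trivially because different blocks do not alternate. Therefore $(P_1',P_2')$ is a forbidden pair in $G^-$ with endpoints outside $M$, contradicting property~(\ref{it:paths}) for $G^-$. The principal obstacle in this plan is the local analysis of $i_A$ at $v$; everything else is a routine consequence of the fact that a separation preserves $\gamma$, the face structure, and the class-adjacencies in the direction we need.
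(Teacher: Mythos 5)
There is a genuine gap, and it sits exactly where the real work of the paper's proof lies. Your reduction rests on the assertion that, because the copies $v_1,\ldots,v_k$ of a separated joint form an independent set, each of $P_1$ and $P_2$ meets them in at most one vertex, so that collapsing the copies back to $v$ yields \emph{paths} $P_1',P_2'$ in the original graph. Independence only forbids two copies from appearing consecutively on a path; a path in the separated graph can perfectly well visit $v_1$ and later $v_3$ via intermediate vertices, and since separations are performed at \emph{all} non-marked joints it may also pass through copies of several different joints. Hence the preimage of $P_1$ (or $P_2$) is in general a walk with repeated vertices, not a path, and in the extreme case --- when the two end vertices of $P_1$ are two distinct copies of the same joint, which is possible since both ends of an $i$-cap carry the same label --- the preimage is a \emph{closed} walk, so there are no endpoints to join by a sub-path at all. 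These are precisely the cases the paper's proof is devoted to: it passes from the walks $W_1,W_2$ to paths using Lemma~\ref{lemma:symmDif} (which preserves the algebraic intersection number), and in the closed-walk case it argues that $W_1$ can be taken to be a cycle through a single separated joint and then invokes property~(\ref{it:interior}) to place the $F'$-faces at the ends of $P_2$ in the exterior of $W_1$, forcing $i_A(W_1,e(P_2))=0$ and hence $i_A(e(P_1),e(P_2))=0$, a contradiction. None of this appears in your argument, and without it the contradiction in the unseparated graph is not obtained.

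A second, smaller gap: you dismiss the transfer of good wedges with the remark that ``the no-trapped-vertex condition for the attached pendant edge is unaffected by the purely combinatorial split of $v$.'' It is affected: a separation destroys cycles through $v$, so a wedge that is good after the separation may fail to be good before it, because the pendant vertex $u$ with $\gamma(u)=\gamma(v)\mp 1$ can be trapped in a cycle of the unseparated graph that no longer exists afterwards. The paper closes this using properties~(\ref{it:cycle}) and~(\ref{it:interior}): if an end vertex had no good wedge before the separation, it would lie on or inside a cycle $C$ with extremal label $\gamma(v)$ and have no incident $F'$-face in the exterior of $C$, hence it would be marked, contradicting that the endpoints of a forbidden pair avoid $M$. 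Your block analysis of the local contribution to $i_A$ at $v$ is sound as far as it goes, but it only covers the easy situation in which the collapsed objects are still paths.
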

\begin{proof}
The violation of~(\ref{it:paths}) is ruled out as follows if a pair of 
paths $P_1$ and $P_2$ violating it were paths before the separation. 
Due to properties~(\ref{it:interior}) and~(\ref{it:cycle}) good wedges  at the end vertices of $P_1$ and $P_2$ after the separation yield good wedges at the end vertices of $P_1$ and $P_2$ before the separations.
Indeed, by~(\ref{it:cycle}) the only way for an end vertex $v_1$ of $P_1$ not to have an incident good wedge (before the separation) is to be contained in the interior of $C$ or on  $C$, where $C$ is a cycle in $G$ such that $\min(C)=\gamma(v)$, and not 
to be incident to a face in $F'$ in the exterior of $C$,
and hence, by~(\ref{it:interior}) to be marked (contradiction).
The same argument works for $P_2$ except $\max(C)=\gamma(v)$.
Now, a pair of  paths $P_1$ and $P_2$ violating~(\ref{it:paths})
is excluded by observing that none of the end vertices of such $P_1$ and $P_2$
is a joint by the separation(s) performed, and thus, if an end vertex $v_1$ of $P_1$ is
contained in $P_2$ all the good wedges at $v_1$ are on the same side of $P_2$
in the rotation at $v_1$, and vice-versa.

To rule out a violation of~(\ref{it:paths}) when one of $P_1$ and $P_2$ was not a path before the separation
we proceed as follows.
For the sake of contradiction consider  a forbidden pair $P_1$ and $P_2$ violating~(\ref{it:paths}).
 Let $W_1$ denote the walk in $G$
that was turned by  separations  into $P_1$.  
 Let $W_2$ denote the walk in $G$
that was turned by  separations  into $P_2$.  

First,  suppose that the end vertices of both $W_1$ and $W_2$ are different.
Let $P_1'$ and $P_2'$ denote the path contained in $W_1$ and $W_2$, respectively, connecting its end vertices. 
By Lemma~\ref{lemma:symmDif}, we have $i_A(e(P_1'),e(P_2'))=i_A(e(W_1),e(W_2))$, where $e(.)$ is defined for walks in the same way as for paths.
Moreover, by the definition of $i_A$ we have  $i_A(e(P_1),e(P_2))=i_A(e(W_1),e(W_2))$ (contradiction with~(\ref{it:paths}) before the separations).

\begin{figure}[htp]
\centering
\includegraphics[scale=0.7]{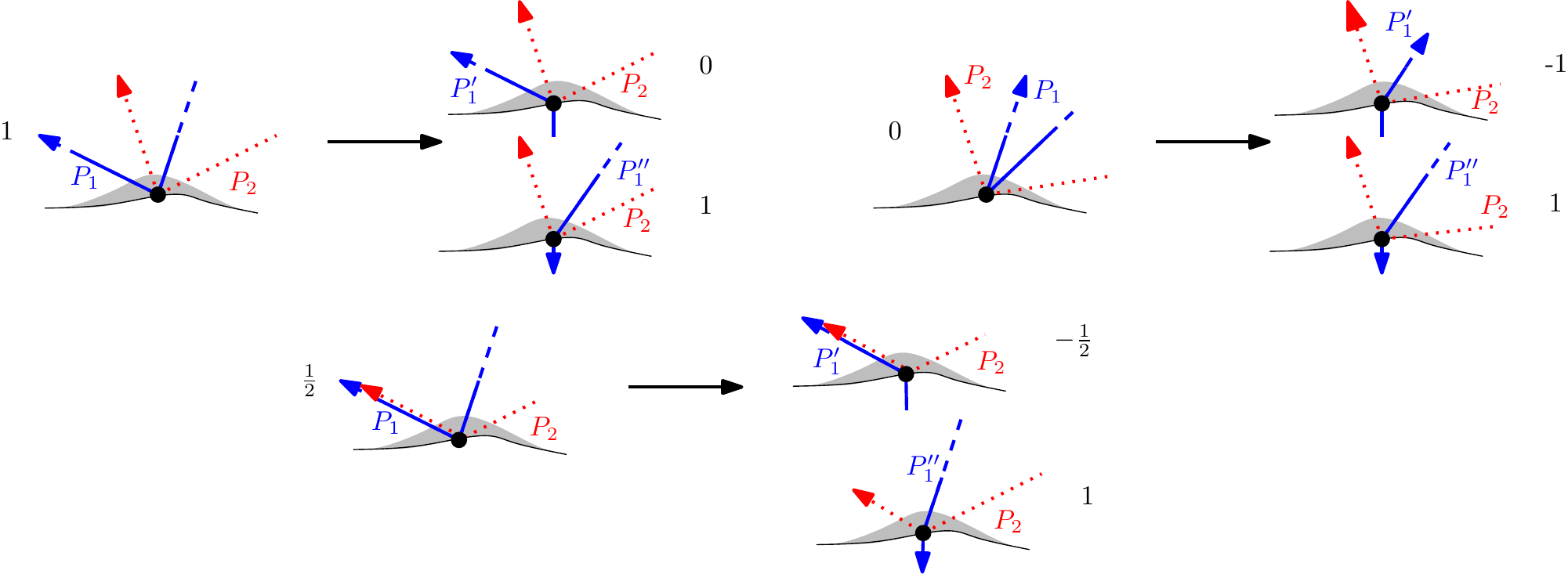}
\caption{The path $P_1$ after we divide it into two parts $P_1'$ and $P_1''$ at a former joint, and the corresponding contributions towards algebraic intersection numbers with $P_2$. Note the short additional edges that come from the definition of the forbidden pair of paths.}
\label{fig:patch}
\end{figure}

 Hence, we assume that the end vertices of $P_1$ were identified in $W_1$. 
 Let us choose $P_1$ so that its length is smallest possible.
We claim that $W_1$ is, in fact, a cycle passing only through one joint of $G$ that underwent a separation splitting vertices of $W_1$.
Indeed, otherwise we can divide $W_1$ into two parts ending in joints, one of which either violates the choice of $P_1$, or violates property~(\ref{it:paths}) (together with $P_2$)  in $G$.  The last fact follows, since if we divide $W_1$ into $W_1'$ and $W_1''$, we have $i_A(e(W_1),e(W_2))=i_A(e(W_1'),e(W_2))+i_A(e(W_1''),e(W_2))$, for the corresponding pairs, see Figure~\ref{fig:patch},
and $i_A(e(W_1'),e(W_2))=i_A(e(P_1'),e(P_2))$ and
$i_A(e(W_1''),e(W_2))=i_A(e(P_1''),e(P_2))$, where $P_1'$ and $P_1''$ is obtained from $W_1'$ and $W_1''$, respectively, by separations.

Hence, we assume that $W_1$ is a cycle.
Recall that $P_1$ joins a pair of vertices in $V_1$.  
Since we eliminated the joints by separations, each end vertex $v_e$ of $P_2$ is incident to only one face in $F'$ 
after the separations. Moreover, the faces in $F'$, that $v_e$'s are incident to before the separations, are all in the exterior of $W_1$ due to~(\ref{it:interior}).
Hence, if $P_2$ was a path before separations we have $i_A(W_1,e(P_2))=0$,
and a consideration as in Figure~\ref{fig:patch} gives $i_A(e(P_1),e(P_2))=0$ after the separation
(contradiction).
If $P_2$ was a cycle before the separations we obtain  again $i_A(e(P_1),e(P_2))=0$ by the same token.
\end{proof}

After the separation $G$ could split into connected components. Note that we can treat the connected components separately.
Thus, we just assume that $G$ is connected.
We have also one more condition to satisfy in the induction that follows.
\begin{enumerate}
\item[(c)]
\label{it:jointsInM}
All the joints are in $M$.
\end{enumerate}

Let $G'=G[F_0]$ denote a sub-graph of $G$ induced by $F_0\subseteq F$ in an $\approx$-class for which
the label of the vertices in $V(G)\cap V_2$ (which is the same for all the vertices in such a sub-graph)
is maximized and under that condition the label of the vertices in $V(G)\cap V_1$ is minimized. We choose $G'$
so that no other sub-graph in $G$ that can play the role of $G'$ contains $G'$ in the closure of the interior of its inner face.

Let $G''$ denote the sub-graph of $G$ induced by $E-E(G')$. Note that $G''$ might be an empty graph.
Let $J_1$ and $J_2$, respectively, denote the set of joints of $G$ belonging to the intersection of $V(G')$ and $V(G'')$ in $V_1$ and $V_2$.
Note that $J_1$ and $J_2$ are all marked due to condition~(c).

\paragraph{Partition of $M$ into $M'$ and $M''$.}
Refer to Figures~\ref{fig:facesProof} and~\ref{fig:facesProof2}.
In what follows we map each joint in $J_1 \cup J_2$ either to $G'$ or $G''$. A vertex in $M$ not belonging to $J_1 \cup J_2$
is mapped to $G'$ if it belongs to $V(G')$
and to $G''$ if it belongs to $V(G'')$.
Let $G_0$ denote a sub-graph of $G''$ corresponding to the union of the faces in an $\approx_1$-class sharing a vertex of $V_1$ with $G'$.
If all the vertices of $G_0$ in $V_1$  are in $M$, we map an arbitrary joint between $G_0$ and $G'$ to $G'$,
and all the other joints to $G''$. Otherwise, if not all the vertices  of $G_0$ in $V_1$ are in $M$, we map all the joints between $G_0$ and $G'$ to $G''$.
Similarly, we handle $\approx_2$-classes of $G''$ sharing a joint in $V_2$ with $G'$.


We perform separations in $G''$ at the joints in $M$  that were mapped to $G'$.
We denote by $\mathcal{G}''$ the resulting graph.
We show that in the resulting partition of $M$ into $M'$ and $M''$ corresponding to the mapping of $M$ to $\{G',G''\}$, where $M'\subseteq V(G')$ and $M''\subseteq V(\mathcal{G}'')$,
the size of $M'$ is bounded from below by the number of faces in $F$ belonging to $G'$; and that the hypothesis of the lemma is satisfied for each connected component of $\mathcal{G}''$,
where the marked vertices are those belonging to $M''$.
Clearly, once we establish that the lemma follows.


\paragraph{Proof of the properties (a)--(c) for $M''$ in $\mathcal{G}''$.}
Note that all the joints are again in $M''$, since we performed separations at all the joints that were not in $M''$.
Thus, condition~(c) is satisfied.
By property~(\ref{it:cycle}) of the labeling $\gamma$ and the choice of $G'$, the graph $G'$ is not contained in the closure of the interior of an internal face of a sub-graph of  $\mathcal{G}''$ induced by faces in  an $\approx_1$ or $\approx_2$ equivalence class.
Thus, property~(\ref{it:interior}) is satisfied by $M''$ in $\mathcal{G}''$.
Property~(\ref{it:paths}) follows immediately, if all the joints from $J_1 \cup J_2$ in $G_0$ were mapped to ${G}''$.
Otherwise, either we have only one non-marked vertex in $V_1$ or $V_2$ in $G_0$, where a path of a forbidden pair could end
which is impossible, or there exist walks in $G_0$ passing through a vertex in $M$ mapped to $G'$ that are turned into  a forbidden pair of paths by separations.
However, this is ruled out by Claim~\ref{claim:nonViolate}.

\paragraph{Base case for $G'$.}
We check that the number of vertices in $M'$ is bounded from below by the number of faces in $F$ for each connected
component of $G'$ obtained after separations,
or equivalently that
\begin{equation}
\label{eqn:euler2}
|U'|\le \sum_{f'\in F' \ \mathrm{of} \  \mathcal{G}'}(|f'|/2 -1)+2,
\end{equation}
where $U'=V(G')\setminus M'$.

Note that $G'$ is connected.
We will prove~(\ref{eqn:euler2}) by showing that on the outer face $f_e$ of $G'$ we have at most $|f_e|/2 +1$ vertices in $U'$,
and on an internal face $f_i$ of $G'$ not belonging to $F$ we have at most $|f_i|/2-1$ vertices in $U'$. Summing that over all
faces of $G'$ not belonging to $F$ gives~(\ref{eqn:euler2}).
Indeed, by~(\ref{it:interior}) a vertex not incident to a  face in $F'$
must be in $M$.

For the outer face $f_e$ of $G'$, let $G_0$ denote (as above)
a sub-graph of $G''$ corresponding to the union of the faces in an $\approx_1$-class and $\approx_2$-class, respectively, sharing  with $G'$
vertices of $J_1$ and $J_2$ on the outer face of $f_e$. Let us assume the former. The other case is analogous.
Let $v_1,\ldots, v_k$ denote the joints  between $G_0$ and $G'$.
All the vertices $v_1,\ldots, v_k$ are in $V_1$.
Let $W_0$ denote the walk between $v_1$ and $v_k$ contained in $f_e$,
and internally disjoint from the outer face of $G_0\cup G'$.

\bigskip

\begin{figure}[htp]
\centering
\includegraphics[scale=0.7]{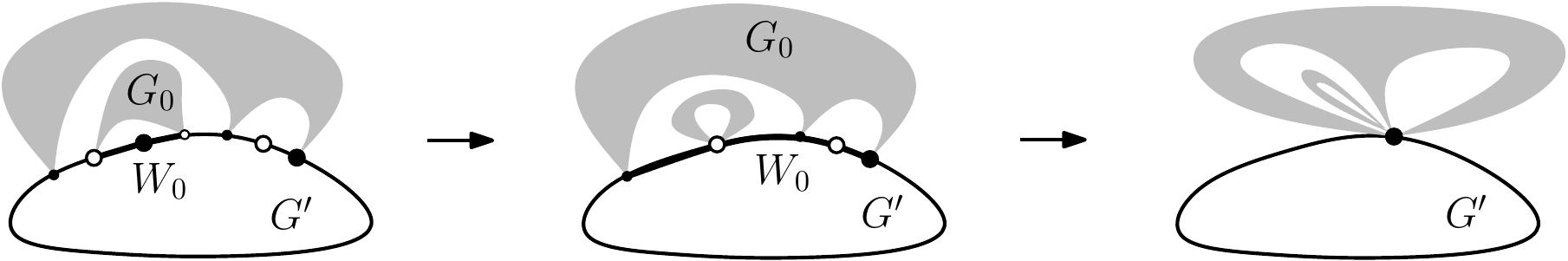}
\caption{Two consecutive contractions of minimal walks $W_0$. Enlarged vertices in the interiors of $W_0$'s are necessarily in $M$, and hence also in $M'$, due to  property~(\ref{it:interior}).}
\label{fig:contractWalk}
\end{figure}

Refer to Figure~\ref{fig:contractWalk}.
Let us consider $W_0$ as a multiset of vertices where each vertex appears as many times as it is visited by the walk $W_0$.
Assume that $G_0$ was chosen so that the length of $W_0$ is minimal.
Note that the cardinality $|W_0|-1$ is even.
Exactly $(|W_0|-1)/2$
vertices (counted with multiplicities) of $W_0$ belong to $V_2$  all of which are in $M'$.
 Indeed, by the choice of $W_0$ none of these vertices is incident to $G''$. By property~(\ref{it:interior})
  and the fact that these vertices have the maximal label at an internal face of $G_0\cup G'$,
and hence also of $G_0 \cup {G}'$, they are marked and belong to $M'$.
The walk $W_0$ corresponds to a portion of $f_e$ of length $|W_0|-1$, in which every  vertex of $V_2$ is mapped to $G'$.

Thus, the corresponding portion of $f_e$ contributes at most $|W_0|-(|W_0|-1)/2=(|W_0|-1)/2+1$ vertices to $U'$ if all the vertices
of $J_1$  on $W_0$ were mapped to $G''$, and at most $(|W_0|-1)/2$ if all the vertices of $J_1$  in $W_0$ but
one were mapped to $G''$.
 Hence, in the light of what we want to prove, $W_0$ can play a role of a vertex in $V_1$  that belongs to $M''$, if all the vertices of $J_1$ on $W_0$
 were mapped to $G''$, and that belongs to $M'$, if all the vertices of $J_1$  on $W_0$ but one
 were mapped to $G''$.

 Note that if the end vertices of the walk $W_0$ are not incident to the outer face of $G'\cup G_0$ then
  by property (\ref{it:interior}) and the mapping of marked vertices to $G'$ and $G''$, $W_0$  necessarily corresponds to a vertex that belongs to $M'$.
  Thus, if $W_0$ is not maximal, the contracted vertex is mapped to $G'$.

It follows that until there does not exist a minimal walk $W_0$, we can keep contracting minimal walks $W_0$ of $f_e$
into the vertices that either correspond to a vertex in $M'$ or in $M''$.
A new vertex belongs to $V_1$ and $V_2$,  respectively, if the first and last vertex of its corresponding walk belongs to $V_1$ and $V_2$.
Let $G_{aux}$ denote the resulting auxiliary graph.
Let $f_e'$ denote the outer face of $G_{aux}$.
We summarize the discussion from the above in the next claim.

\begin{claim}
\label{claim:contract}
If $G_{aux}$ has at most $|f_e'|/2+1$ vertices in $U'$
incident to $f_e'$ then $G'$ has at most
$|f_e|/2+1$ vertices in $U'$
incident to $f_e$.
\end{claim}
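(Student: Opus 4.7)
The plan is to prove Claim~\ref{claim:contract} by induction on the number of contractions performed to transform $G'$ into $G_{aux}$, tracking the quantity $\Phi(G) := |f_e|/2 + 1 - |U'\cap f_e|$ and showing that it does not increase under any single contraction of a minimal walk. Once monotonicity is established, the hypothesis $\Phi(G_{aux}) \geq 0$ propagates back to $\Phi(G') \geq \Phi(G_{aux}) \geq 0$, which is exactly the desired conclusion.

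The core of the argument is a local calculation at one contraction. Fix a minimal walk $W_0$ with $|W_0|$ vertices counted with multiplicity, and assume without loss of generality that its endpoints lie in $V_1$ (the other case is symmetric). Then $W_0$ alternates between $V_1$ and $V_2$ and carries $(|W_0|-1)/2$ vertices in $V_2$, all of which lie in $M'$ by the paragraph preceding the claim; hence the portion of $f_e$ traced by $W_0$ contains at most $(|W_0|+1)/2$ vertices of $U'$, dropping to at most $(|W_0|-1)/2$ if exactly one $V_1$-joint on $W_0$ was mapped to $G'$. The contraction itself collapses the $|W_0|-1$ edge-positions of this portion into a single edge-position at the new vertex, so $|f_e|/2$ decreases by precisely $(|W_0|-1)/2$.

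Matching this with the rule labelling the virtual contracted vertex gives the monotonicity in both cases. If all $V_1$-joints on $W_0$ were mapped to $G''$, then the new vertex is counted in $U'$, contributing $1$, and $|U'\cap f_e|$ changes by at least $1-(|W_0|+1)/2 = -(|W_0|-1)/2$; if instead one $V_1$-joint was mapped to $G'$, the new vertex is counted in $M'$ contributing $0$, and the refined bound $(|W_0|-1)/2$ on the original count yields the same lower bound on the change. In either situation the decrease in $|U'\cap f_e|$ is at most $(|W_0|-1)/2$, matching the decrease in $|f_e|/2$, so $\Phi_{\mathrm{after}} - \Phi_{\mathrm{before}} \leq 0$ and the induction proceeds. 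I expect the only delicate point to be keeping the case analysis of virtual-vertex labels consistent with the rules for assigning joints to $M'$ and $M''$ established in the preceding paragraphs; once those are applied, the arithmetic balances exactly and each contraction preserves or tightens the outer-face bound, which is all that is needed.
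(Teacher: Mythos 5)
Your proof is correct and takes essentially the same route as the paper: the per-walk accounting (at most $(|W_0|-1)/2+1$ vertices of $U'$ on the contracted portion when the new vertex is deemed unmarked, at most $(|W_0|-1)/2$ when it is deemed marked, balanced against a decrease of $(|W_0|-1)/2$ in $|f_e|/2$) is exactly the discussion the paper summarizes in this claim, merely repackaged as a potential function that is monotone under successive contractions of minimal walks. The only cosmetic slip is the phrase about collapsing the $|W_0|-1$ edge-positions ``into a single edge-position'': the edges of the contracted walk simply disappear from the facial walk, so $|f_e|$ drops by exactly $|W_0|-1$, which is the figure your arithmetic actually uses.
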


 We show that $G_{aux}$ has at most $|f_e'|/2+1$ vertices in $U'$ incident to $f_e'$. 
 It will follow by Claim~\ref{claim:contract} that in $G'$ we have
 at most $|f_e|/2+1$ vertices in $U'$ incident to $f_e$
 as desired.

\bigskip

\begin{figure}[htp]
\centering
\includegraphics[scale=0.7]{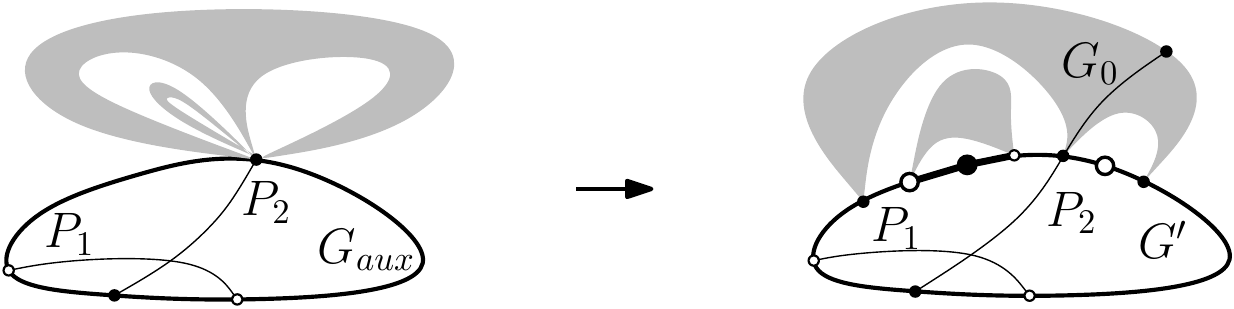}
\caption{An unfeasible interleaving pair $P_1$ and $P_2$ of an $i$-cap and $j$-cup in $G_{aux}$ and its corresponding forbidden pair of paths in $G$.}
\label{fig:extend}
\end{figure}

\begin{claim}
 $G_{aux}$ has at most $|f_e'|/2+1$ vertices in $U'$ incident to $f_e'$. 
\end{claim}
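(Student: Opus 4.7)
The plan is to argue by contradiction: suppose $f_e'$ contains at least $|f_e'|/2+2$ vertices from $U'$. I would then produce a forbidden pair of paths in $G$ (in the sense of (A)--(C) and the good-wedge conditions), contradicting property~(\ref{it:paths}) of $M$.

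First I would check that $f_e'$ is an even cycle. This follows because each minimal walk $W_0$ contracted during the construction starts and ends in the same bipartition part (both in $V_1$ for an $\approx_1$-attachment through a $V_1$-joint, or both in $V_2$ for an $\approx_2$-attachment through a $V_2$-joint), so the vertex-count $|W_0|$ is odd and its contraction to a single vertex of the same part preserves the bipartite alternation along the outer facial walk. Since $G_{aux}$ is bipartite with bipartition inherited from $G$, the outer face $f_e'$ is indeed of even length, and Observation~\ref{obs:simple} applies to the subset of $U'$-vertices on $f_e'$. It yields four vertices $u_1,u_2,u_3,u_4$ appearing in this cyclic order on $f_e'$ with $u_1,u_3\in V_1$ and $u_2,u_4\in V_2$ (or vice versa).

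Next, I would let $P_1$ be the subpath of $f_e'$ from $u_1$ to $u_3$ through $u_2$, and $P_2$ be the subpath from $u_2$ to $u_4$ through $u_3$, so that $P_1\cap P_2$ is the arc between $u_2$ and $u_3$. These have endpoints in the correct bipartition parts, and, interpreted in $G$ after uncontraction, $P_1$ lies in the $\approx_1$-class containing $G'$ together with the adjacent $\approx_1$-attachments $G_0$, while $P_2$ lies in the corresponding $\approx_2$-class. To lift them back to $G$ I would uncontract each internal contracted-walk vertex of $P_i$ into the minimal walk $W_0$ it represents, and then extract actual paths $Q_1, Q_2$ between the respective endpoints from the resulting walks. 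By Lemma~\ref{lemma:symmDif}, every closed sub-walk obtained from re-inserting a $W_0$ at a single joint contributes $0$ to $i_A$, so the parity of $i_A(e(Q_1),e(Q_2))$ is the same as that of the intersection computed for the ``crossing $X$'' configuration at $u_2$ and $u_3$, which is odd because $u_1,u_2,u_3,u_4$ alternate around $f_e'$. Condition~(C) holds because the $V_1$- and $V_2$-labels of $G'$ are a strictly separated pair and each uncontracted $W_0$ only dips one cluster unit beyond these, so the extended caps and cups sit in adjacent integer strips as required. Finally, the good-wedge conditions on $u_1,\ldots,u_4$ follow by exactly the argument used in Claim~\ref{claim:nonViolate}: these vertices are in $U'$, hence unmarked, so properties~(\ref{it:interior}) and~(\ref{it:cycle}) preclude an extremal trapping cycle at them, and an incident $F'$-face admits a new degree-one vertex at label $\gamma(\cdot)\pm 1$ without violating~(iii).

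Putting these pieces together, $(Q_1, Q_2)$ is a forbidden pair in $G$, contradicting~(\ref{it:paths}), which proves the claim. The main obstacle I anticipate is the careful bookkeeping of $i_A(e(Q_1),e(Q_2))$ under multiple re-insertions of walks $W_0$, particularly when such a $W_0$ sits at or near the meeting vertices $u_2$ or $u_3$; resolving these overlaps cleanly via Lemma~\ref{lemma:symmDif} — together with the fact, recorded in the definition of a forbidden pair, that the value of $i_A(e(P_1),e(P_2))$ is insensitive to the particular choice of extending edges at each endpoint — is the delicate step, while the remaining verifications are direct consequences of the bipartite structure of $G'$ and the labelling constraints on the $\approx_1$- and $\approx_2$-classes.
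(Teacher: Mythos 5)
Your argument covers only the case in which $f_e'$ is a simple cycle, and that is where the genuine gap lies. You assert ``first I would check that $f_e'$ is an even cycle,'' but bipartiteness only guarantees that the outer facial walk has even length, not that it is a cycle: $G'$ (and hence $G_{aux}$) need not be two-connected, since a vertex can be a cut vertex of the subgraph induced by a single $\approx$-class without being a joint of $G$ with respect to the union of \emph{all} faces of $F$ (other $F$-faces from different classes may fill in its neighborhood), so it was not removed by the preprocessing separations and it may well be unmarked. In that situation the outer facial walk of $G_{aux}$ repeats vertices, Observation~\ref{obs:simple} is not applicable as stated (it is formulated for a cycle), and your four alternating $U'$-vertices need not exist as distinct vertices in convex position on a cycle. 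The paper closes exactly this case by an induction on the number of cut-vertices of $f_e'$: at an unmarked cut vertex the facial walk $W_e'$ is split into two walks $W_1$ and $W_2$, each of which by induction carries at most $|W_i|/2+1$ unmarked vertices, and since the cut vertex is counted in both one gets $|W_1|/2+1+|W_2|/2+1-1=|f_e'|/2+1$; marked vertices occurring several times on the walk are simply duplicated, which is harmless because they do not contribute to $U'$. Without some such reduction your proof does not establish the claim.

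For the cycle case your route is essentially the paper's (Observation~\ref{obs:simple} plus lifting the resulting alternating four-tuple to a forbidden pair in $G$ violating property~(\ref{it:paths})), and your use of Lemma~\ref{lemma:symmDif} to control $i_A$ under uncontraction is in the right spirit. One secondary imprecision there: when an alternating vertex of $G_{aux}$ is a contracted walk $W_0$, membership in $U'$ does \emph{not} mean the underlying vertices of $G$ are unmarked --- it means all joints of $J_1$ on $W_0$ were mapped to $G''$, and those joints may be marked. The unmarked vertex of the same label guaranteed in this situation lies in the attached subgraph $G_0$ of $G''$, so the lifted path must be prolonged into $G_0$ to end at a genuinely unmarked vertex, as required by the definition of a forbidden pair; saying ``these vertices are in $U'$, hence unmarked'' conflates the contracted vertex of $G_{aux}$ with the vertices of $G$ it represents.
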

\begin{proof}
We proceed by induction on the number of cut-vertices
of $f_e'$ treated as a sub-graph of $G_{aux}$. 
If $f_e'$ is a cycle
 we show that $G_{aux}$ cannot contain four vertices $v_1,v_2,v_3,v_4\in U'$ in this order on $f_e'$ such that
$v_1,v_3\in V_1$ and $v_2,v_4\in V_2$.
Such a four-tuple gives rise to a forbidden pair $P_1$ and $P_2$ of
$i$-cap and $j$-cup thereby violating property~(\ref{it:paths}) in $G_{aux}$.
Nevertheless, a non-marked vertex in $G_{aux}$ represents either a vertex that was previously non-marked,
or a set of joints between $G'$ and $G''$. Moreover, in the latter the corresponding sub-graph $G_0$ of $G''$
attached at those joints contains an unmarked vertex with the same label. Thus, $P_1$ and $P_2$ can be 
extended to a forbidden pair of paths also in $G$ (see Figure~\ref{fig:extend} for an illustration).
Hence, by Observation~\ref{obs:simple} at most $|f_e'|/2+1$ vertices incident to $f_e'$ belongs to $U'$ if $f_e'$ is a cycle which concludes the base case.

In the inductive case we turn $f_e'$ into a walk $W_e'$ in which we represent each marked vertex 
as many times as it appears on $W_e$ by different vertices. 
An unmarked joint $v$ of $G'$ in $W_e'$ splits
$W_e'$ into two walks $W_1$ and $W_2$.
By the induction hypothesis $W_1$ has at most 
$|W_1|/2+1$ non-marked vertices, and $W_2$ has at most
$|W_2|/2+1$ non-marked vertices. 
Since $v$ is shared by $W_1$ and $W_2$ we have
at most $|f_e'|/2+1=|W_1|/2+1+|W_2|/2+1-1$ non-marked
vertices incident to $f_e'$,  and the desired upper bound on $U'$ follows.
\end{proof}

For an internal face $f_i$ of $G'$ not belonging to $F$ we show, in fact, that after contracting all walks $W_0$ defined
 similarly as above no vertex incident to $f_i$ is in $U'$.
Observe that, by the choice of $G'$, at an internal face $f_i$ of $G'$, all the vertices in $V_1$ have the label $\min(f_i)$,
and all the vertices in $V_2$ have the label $\max(f_i)$.
By property~(\ref{it:interior}), $M$ contains all the vertices with label $\min(f_i)$ incident to or in the interior of $f_i$.
Hence, $G_0$ (defined as above) living in the interior of an internal face of $G'$ has all the vertices of $V_1$ in $M$.
Thus, a joint between $G_0$ and $G'$ was mapped to $G'$.
Similarly, we argue in the case, when the joints between $G_0$
and $G'$ are in $V_2$, and $G_0$ is contained in an internal face of $G'$.
It follows that we can assume that all the vertices of $f_i$ are in $M'$. Since $f_i$ is always incident to at least $2\le f_i/2+1$ vertices,
the lemma follows.
\end{proof}

\section{Characterization of normalized  embedded strip planar graphs}

\label{sec:normalized}

In this section we prove our characterization of strip planar clustered graphs
in a special case.
In the next section we establish the results in general by reducing it to the
case considered in the present section.

An embedded strip clustered graph $(G,T)$ with a given outer face is \emph{normalized}
if (i) every cluster induces an independent set; and (ii) every
internal face is either simple or semi-simple and the outer face is simple.

\begin{lemma}
\label{lemma:characterizationSpecial}
Let $(G,T)$ denote a normalized strip clustered graph.
$(G,T)$ is c-planar if and only if $(G,T)$ does not contain
an unfeasible pair of paths, or a trapped vertex.
\end{lemma}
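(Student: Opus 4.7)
For the forward direction, a strip clustered embedding of $(G,T)$ cannot contain either forbidden configuration: Observation~\ref{obs:crossing} rules out unfeasible interleaving pairs, and Observation~\ref{obs:trap} rules out trapped vertices.

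For the converse, the plan is to produce a strip clustered embedding by feeding $\overrightarrow{G}$ into the upward planarity criterion of Bertolazzi, Di Battista, Liotta, and Mannino~\cite[Theorem 3]{BBLM94}. Orient every edge of $G$ from the lower-indexed cluster to the higher-indexed cluster, which is well defined since clusters are independent. By Observation~\ref{obs:alternate} the given embedding of $\overrightarrow{G}$ is a candidate embedding, so by~\cite[Lemma~5 and Theorem~3]{BBLM94} an upward realization with the prescribed rotation system and outer face exists if and only if one can match every non-simple face with a source or a sink placed at a concave angle of that face (the outer face absorbing one extra source and one extra sink). Before applying the criterion I would augment $(G,T)$ to $(G',T')$ by (i) subdividing the monotone portions of facial walks of semi-simple faces until every non-simple internal face becomes a four-cycle whose two minima lie in some cluster $V_i$ and whose two maxima lie in the adjacent cluster $V_{i+1}$, and (ii) inserting auxiliary monotone paths along the outer face so that it becomes simple. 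Both steps are local and preserve normalization, the one-to-one correspondence between demanding faces and sources/sinks before and after the augmentation, and the two hypotheses on forbidden configurations.

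With $(G',T')$ in hand, apply Lemma~\ref{marriage:lemma} to the bipartite graph $H$ obtained as the union of the four-cycle semi-simple faces of $G'$, with $V_1$ the minima corners, $V_2$ the maxima corners, $\gamma$ the cluster index, $F$ the set of semi-simple faces, and $M$ the set of sources and sinks of $\overrightarrow{G}'$. Conditions~(\ref{it:semisimple})--(\ref{it:cycle}) on $F$ hold by construction of the augmentation, by the definition of corners, and by the no-trapped-vertex hypothesis, respectively. Property~(\ref{it:interior}) of $M$ follows because any non-source/non-sink vertex with extremal label on or in the interior of a cycle $C$ inside an $\approx_j$-component would be trapped by $C$, contradicting the hypothesis. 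Property~(\ref{it:paths}) follows because a forbidden pair of paths, after attaching their good-wedge extensions $e(P_1),e(P_2)$, becomes an unfeasible interleaving pair of an $i$-cap and $j$-cup in $G'$, again contradicting the hypothesis. Lemma~\ref{marriage:lemma} then gives $|M|\ge|F|$; applying the same argument to the sub-graph induced by any subset $F_0\subseteq F$ yields the marriage condition for every neighbourhood.

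Hall's theorem~\cite[Section 2]{D10} now produces the source/sink-to-face assignment required by~\cite[Theorem 3]{BBLM94}, giving a straight-line upward planar drawing of $\overrightarrow{G}'$ with the given rotation and outer face. Scaling the $y$-coordinates so that each vertex of $V_i$ lies strictly between $y=i$ and $y=i+1$, and then swapping axes, produces a strip clustered embedding of $(G',T')$; erasing the augmentation yields a strip clustered embedding of $(G,T)$. The main obstacle is step~(i) of the augmentation: breaking a multi-cluster semi-simple face into four-cycles supported on two consecutive clusters while simultaneously avoiding new unfeasible pairs or trapped vertices, preserving the source/sink demand count, and ensuring a consistent global bipartition so that the hypotheses of Lemma~\ref{marriage:lemma} are actually met. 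Once this reduction is in place, the rest is combinatorial bookkeeping handled cleanly by Lemma~\ref{marriage:lemma} together with~\cite[Theorem 3]{BBLM94}.
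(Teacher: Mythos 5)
Your overall strategy is the one the paper follows (orient edges upward, use Observation~\ref{obs:alternate} to get a candidate embedding, verify the marriage condition via Lemma~\ref{marriage:lemma} and Hall's theorem, and invoke \cite[Theorem~3, Lemma~5]{BBLM94}), but the step you yourself flag as ``the main obstacle'' is a genuine gap, and it is not resolved the way you suggest. You cannot turn a semi-simple face spanning clusters $i,\dots,j$ into a four-cycle ``supported on two consecutive clusters'' by subdividing its boundary: subdivision does not change the global minima and maxima of a face, and cutting the face by chords would change the set of demanding faces and the source/sink bookkeeping you need to preserve. The paper never performs such a reduction. Instead, Lemma~\ref{marriage:lemma} is applied to the sub-graph $G[F]$ induced by a set $F$ of semi-simple faces, with degree-two vertices that are neither minima nor maxima of faces of $F$ suppressed; the bipartition $V_1/V_2$ is ``minima versus maxima'' of faces of $F$, with no requirement that their labels be adjacent. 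To make $F$ fancy one still needs two concrete modifications that are absent from your sketch: (1) ``doubling'' (by vertex splits) every strictly monotone boundary path of a semi-simple face whose internal vertices are extrema of other faces of $F$, and (2) splitting every vertex that is simultaneously a minimum and a maximum of faces in $F$; in both cases one must verify, via Lemma~\ref{lemma:symmDif} and Lemma~\ref{lemma:conjChar}, that no unfeasible pair or trapped vertex is created and that the incidence relation between sources/sinks and semi-simple faces is preserved. The paper also first reduces to the two-connected case (splitting at cut vertices incident to semi-simple faces) and arranges that the source and sink on the outer face are not incident to any semi-simple face, both of which your argument needs but does not supply.

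A second, independent problem is your final step. From an upward planar drawing you propose to ``scale the $y$-coordinates so that each vertex of $V_i$ lies strictly between $y=i$ and $y=i+1$ and swap axes.'' An upward drawing only guarantees that every edge is $y$-monotone; it does not guarantee that all vertices of $V_i$ lie below all vertices of $V_{i+1}$, and a monotone rescaling of the $y$-axis cannot reorder vertices, so this transformation is not available in general. The paper instead augments the upward embedding with extra upward edges inside semi-simple faces (connecting extrema with non-convex angles, as in \cite{ADDF13+}) to obtain a single-source, single-sink digraph, and then builds the strip clustered embedding incrementally, adding one directed path at a time while respecting the left-to-right order of edges at each vertex. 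Without an argument of this kind the passage from upward planarity to a strip clustered embedding remains unjustified.
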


\begin{proof}
The ``if'' part follows directly by Observation~\ref{obs:crossing} and~\ref{obs:trap}.
For the ``only if'' part we first show that we can assume that $G$ is vertex two-connected.

\paragraph{Cut vertices.}
Refer to Figure~\ref{fig:cutvertex}.
Suppose that a cut vertex $v$ is incident to a semi-simple face $f$, whose facial walk $W=vW_1vW_2v$ contains two occurrences of $v$. It holds that either $vW_2v$ is
contained in the closure of the interior of the cycle $vW_1v$ in our embedding of $G$ or vice-versa. Suppose that $vW_1v$ is
contained in the closure of the interior of the cycle $vW_2v$. We divide $G$
into two components $G_1$ and $G_2$ both of which contains $v$. The graph $G_1$ has $vW_1v$ as the outer face, thus, $G_1$ is obtained from $G$ by deleting the exterior
of the cycle $vW_1v$, and $G_2$ is obtained from $G$ by deleting the interior of the cycle $vW_2v$.
We denote by $(G_1,T_1)$ and $(G_2, T_2)$ the corresponding strip clustered graphs inherited from $(G,T)$.
Clearly, the hypothesis of the lemma is satisfied for both $(G_1,T_1)$ and $(G_2, T_2)$ since it is satisfied for $(G,T)$.
Moreover,  strip clustered embeddings of $(G_1, T_1)$ and $(G_2,T_2)$ can be combined thereby obtaining a strip clustered embedding of $(G,T)$.
Thus, we assume that $G$ is two-connected.

\paragraph{Towards an application of Lemma~\ref{marriage:lemma}.}
As explained in Section~\ref{sec:labelings} we combine the  marriage condition of Lemma~\ref{marriage:lemma} with the characterization of upward planar graphs to prove c-planarity of $(G,T)$. To this end we first alter our graph so that Lemma~\ref{marriage:lemma} is applicable.
Let $F$ denote the set of internal semi-simple faces $F$.
Label the vertices of $G$ by $\gamma$.
We would like to modify $(G,T)$ without introducing a trapped vertex or an unfeasible pair so that in the obtained modification $(G',T')$ after suppressing
each vertex of degree two that is neither minimum nor maximum of a face in $F$, the faces of $F$ are fancy in $G'[F]$.
Moreover, we require that in the modification $(G',T')$
the incidence relation between sources and sinks of $\overrightarrow{G'[F]}$ on one side, and internal semi-simple faces of $G'[F]$ on the other side
is isomorphic to the same relation for $G[F]$. We remark that the obtained modification $(G',T')$ does not necessarily have vertex sets corresponding to clusters
independent.

\begin{figure}[htp]
\centering
\subfigure[]{\includegraphics[scale=0.7]{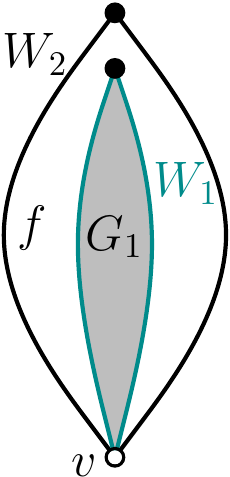}
	\label{fig:cutvertex}
	} \hspace{30px}
\subfigure[]{
\includegraphics[scale=0.7]{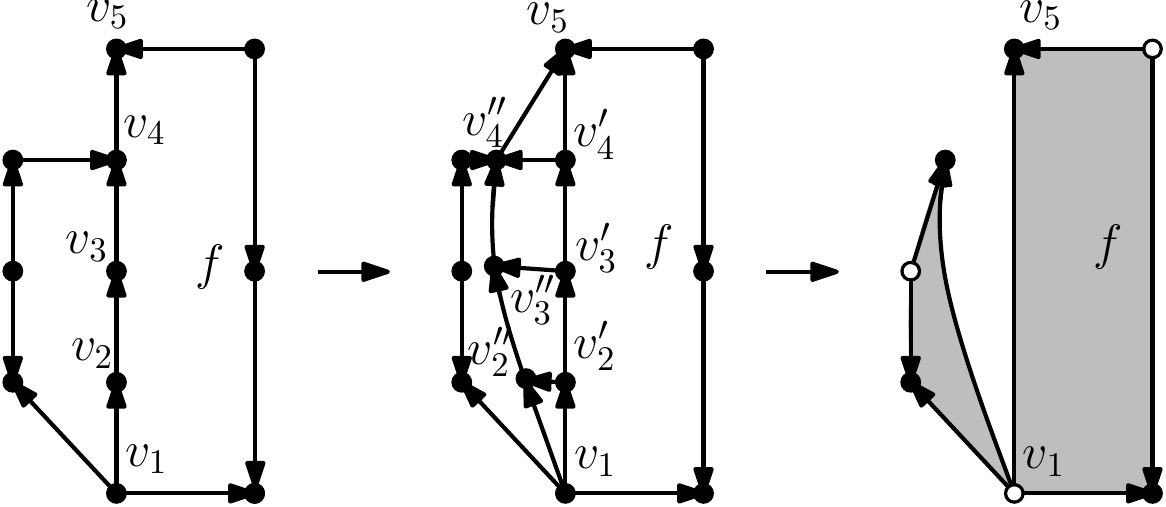}
\label{fig:doubling}}
\hspace{30px}
\subfigure[]{
\includegraphics[scale=0.7]{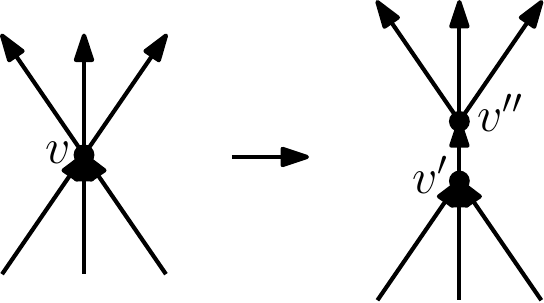}
\label{fig:bipartite}}

\caption{(a) Splitting $G$ into $G_1$ and $G_2$ at the cut-vertex $v$; (b) Doubling of the path $v_1\ldots v_5$; (c) Splitting a vertex that is neither sink nor source of $G'[F]$.}
\end{figure}

\paragraph{Making the semi-simple faces fancy.}
Refer to Figure~\ref{fig:doubling}.
Let $P$ denote a strictly monotone path with respect to $\gamma$
joining a minimum with a maximum along the boundary of a semi-simple face $f$. Suppose that an internal vertex of a path $P=v_1v_2\ldots v_l$
is a (local) minimum or maximum of another face in $F$. In other words, the path $P$ is preventing semi-simple faces of $G[F]$ to
form a set of fancy faces as defined in Section~\ref{sec:labeling}.
 We ``double'' the path $P=vPv_ju$ as follows. We apply the operation of vertex split (as defined in Section~\ref{sec:prelim}) to each internal vertex $v_i$ of $P$
thereby splitting it into two vertices $v_i'$ and $v_i''$ joined by an edge such that in the resulting graph $v_i'$ has degree three and is still incident to $f$.
We put $\gamma(v_i')=\gamma(v_i'')=\gamma(v_i)$.
The vertex $v_i''$ is drawn outside of $f$ and is adjacent to all the neighbors of $v_i$. The splitting is performed for each internal vertex $v_i$ of $P$ without introducing any pair of crossing edges  and while preserving the order in which edges leave newly created vertices.

Suppose that an unfeasible interleaving pair of paths $P_1$ and $P_2$ was introduced by the previous modification.
Note that we can assume that both $P_1$ and $P_2$ do not end in a vertex $v_i''$. Indeed, if that is the case, we shortcut or prolong them
so that they end in $v_i'$. This could not turn $P_1$ or $P_2$ into a cycle, since we would have $1=i_A(P_1,P_2)=i_A(v_i''P_1v_i'v_i'',P_2)=0$.
We turn $P_1$ and $P_2$, respectively, into an unfeasible interleaving pair $P_1'$ and $P_2'$ such that they are both internally disjoint from, let's say,
$v_1''v_2''\ldots v_l''$ (contradiction). We prove this by induction on the size of the edgewise intersection
of $P_1\cup P_2$ with $v_1'',\ldots,v_l''$. Suppose that $P_1$ passes through $v_i''v_{i+1}''$.
 We have $i_A(P_1v_i''v_{i+1}''v_i''v_i'v_{i+1}'v_{i+1}''P_1, P_2)=i_A(P_1,P_2)$ by Lemma~\ref{lemma:symmDif}.
Moreover, the walk $P_1v_i''v_{i+1}''v_i''v_i'v_{i+1}'v_{i+1}''P_1$  contains a path $P_1''$ joining the same pair of vertices as $P_1$ not passing through
 $v_i''v_{i+1}''$, and having edgewise a strictly smaller intersection with $v_1'',\ldots, v_l''$.
 By Lemma~\ref{lemma:symmDif} we have $i_A(P_1'',P_2)=i_A(P_1v_i''v_{i+1}''v_i''v_i'v_{i+1}'v_{i+1}''P_1,P_2)=i_A(P_1,P_2)$. By repeating the argument with $P_2$ and Lemma~\ref{lemma:conjChar} we obtain a desired interleaving pair $P_1'$ and $P_2'$.

No trapped vertex $v$ was introduced as well. To this end note that the trapped vertex $v$ or a cycle
$C$ witnessing this can be assumed to be disjoint from $v_1'',\ldots, v_l''$.
Thus, no unfeasible interleaving pair of paths or a trapped vertex was introduced by our modifications in $(G',T')$.

Finally, we pick an arbitrary orientation for each edge $v_i'v_i''$ in $\overrightarrow{G'}$, $i\not=1,l$. Note that the newly created
faces are simple, and hence, they do not belong to $F$.
Let $(G',T')$ denote a strip clustered graph obtained from $G$ after doubling all paths $P$  joining a minimum with a maximum along a semi-simple face $f$.
Thus, after splitting all problematic paths $P$ we have the same incidence relation between sources and sinks of $\overrightarrow{G'}$ and semi-simple faces of $G'$ as in $G$.

However, $F$ still does not have to form a set of fancy faces after suppressing
each vertex of degree two that is neither minimum nor maximum of a face in $F$,
since a vertex can be simultaneously a minimum and maximum of a face in $F$.
This would violate condition~(\ref{it:v1v2}) of fancy faces.
By our conditions, in $\overrightarrow{G[F]}$ the incoming and outgoing edges do not alternate at any vertex $v$.
Refer to Figure~\ref{fig:bipartite}.
Thus, we can apply a vertex split to each vertex that is simultaneously a minimum and maximum of a face in $F$.
 We turn a vertex $v$ into two vertices $v'$ and $v''$ contained in the same cluster,
where $v'$ has only one outgoing edge and $v''$ has only one incoming edge, namely $v'v''$.
Hence, this operation does not introduce semi-simple faces in $G'$ and
does not affect the incidence relation between sources and sinks of $G'$, and faces in $F$.
Using the notation of Section~\ref{sec:labeling}, the vertices of $V_1$ and $V_2$,
 respectively, correspond to sources and sinks of $\overrightarrow{G'[F]}$.


The last adjustment of $(G,T)$ we need is to make sure that the source and sink incident to the outer face
is not incident to a semi-simple face. Thus, if a source $v$ (a sink is treated analogously)
incident to the outer face is incident to a semi-simple
face, we introduce an additional vertex $u$ joined with $v$ by an edge that belongs to a new cluster so
that $v$ is not a source anymore and $u$ is the new source on the outer face.
Afterwards we add a strictly monotone path joining $u$ with the sink on the outer face so that the
resulting clustered graph is still strip clustered.
Clearly, $u$ is not incident to any semi-simple face, and the last modification does not introduce
an unfeasible pair or a trapped vertex.

%

\paragraph{Upward digraphs.}
In order to simplify the notation we let $(G,T)$ denote the modification $(G',T')$ of $(G,T)$ obtained previously.
By our assumption and Observation~\ref{obs:alternate} in $\overrightarrow{G}$ the incoming and outgoing edges do not alternate at any vertex $v$ .
Thus, the embedding of $\overrightarrow{G}$ is a candidate embedding of $\overrightarrow{G}$.
 \cite[Theorem 3]{BBLM94} implies that our embedding of  $\overrightarrow{G}$
 admits an upward-planar embedding if there exists a mapping of sources and sinks to the faces of $\overrightarrow{G}$
such that

\begin{enumerate}[(i)]
\item
every sink or source of $\overrightarrow{G}$ is mapped to an internal semi-simple face, or to the outer face it is incident to;
\item
each internal face has exactly one source or sink mapped to it, if it is semi-simple, and zero otherwise; and
\item
the outer face has exactly a source and a sink mapped to it.
\end{enumerate}


By Hall's theorem there is
a mapping of sources and sinks of $\overrightarrow{G}$ to the faces of $\overrightarrow{G}$
satisfying (ii), if every subset of $s$ internal semi-simple faces is incident to at least
 $s$ sources or sinks.
Let $F$ denote a connected subset of internal semi-simple faces in $\overrightarrow{G}$.
Let $G'=G[F]$ be the sub-graph of $G$ induced by faces in $F$.

We would like to apply Lemma~\ref{marriage:lemma} to $F$ so that the set of marked vertices $M$ in $G'$ contains all the sinks
and sources of $\overrightarrow{G}$ in $G'$.
The embedding of $G'$ is inherited from our given embedding.
In order to apply the lemma we first
suppress the vertices that are  neither minimum nor maximum of a face of $F$
In what follows we show that the hypothesis of Lemma~\ref{marriage:lemma} is satisfied.
Note that by our modification the two unmarked vertices on the outer-face of $G$ cannot be incident to a face in $F$.


To prove the property~(\ref{it:interior}) consider a pair of a cycle $C$ and a vertex $v$ violating it.
Assume that $\gamma(v)=\max (C)$. The other case is treated analogously.
We have a vertex $u$ joined with $u$ by an edge $\gamma(u)=\gamma(v)+1$
belonging to the interior of $C$. The vertex $u$ is trapped in the interior of $C$ (contradiction).

It remains to show that property~(\ref{it:paths}) is satisfied.
Furthermore, we claim that $\overrightarrow{G'}$ does not contain a forbidden pair of an  $i$-cap $P_1$ and $j$-cup $P_2$, where $i<j$.
 Indeed, let $P_1'$ denote
a path obtained from $P_1$ by appending to both its ends an edge of $G$ joining its end vertex with a vertex in the $(i-1)$-st cluster.
This is possible, since $P_1$ ends in a non-marked vertex with a good
wedge in $G'$.
In an analogous manner we construct $P_2'$, where the end vertices of $P_2'$ belong to $(j+1)$-st cluster.
The paths $P_1'$ and $P_2'$ form an unfeasible pair of paths in our given embedding contradicting our assumption due to properties of a forbidden pair.

Thus, it is left to show that our mapping can be extended to a mapping satisfying (i) and (iii).
The condition (iii) is easy as we have one source and one sink left for the outer face.
In order to show (i) it is enough to prove that we do not have  more sources and sinks than
required by all the faces.
However, \cite[Lemma~5]{BBLM94} directly implies that this is exactly the case.
Since the conditions (i)--(iii) hold for our modified $(G,T)$, they have to hold also for the graph
we started with. Indeed, the incidence relations between semi-simple faces, and sinks and sources before and after the modification are isomorphic,
except possibly for a sink or source that were previously on the outer face.
In the modified $(G,T)$ such sink or source is not mapped to any face, and hence, it can substitute the
newly introduced sink or source on the outer face.

Finally, we show how to turn the planar upward drawing of $\overrightarrow{G}$ into a clustered drawing of $(G,T)$.
Consider an upward straight-line embedding of $\overrightarrow{G}$, whose existence is guaranteed by Theorem~3 from~\cite{BBLM94}.
(Of course, we do not need the embedding to be straight-line, but rather we just stick to the formulation of~\cite{BBLM94}.)
Similarly as in~\cite{ADDF13+} we augment further the embedding of $\overrightarrow{G}$ by adding an edge
inside every semi-simple face $f$ connecting two minima, if a minimum has a non-convex angle inside $f$, and connecting two maxima of $f$,
if a maximum has a non-convex angle inside $f$. We orient the added edges so that they point upward.
The obtained directed graph, let us denote it by $\overrightarrow{G_0}$, has exactly one source $s$ and one sink $t$ that are both incident to the outer face.

We start constructing a clustered drawing of $(G_0,T)$  by drawing an arbitrary directed $s-t$ path such that the resulting drawing is clustered.
In each subsequent step the left-to-right order of the incoming and outgoing edges at each vertex is the same as in our upward straight-line embedding of $\overrightarrow{G_0}$.
In a single step we draw a directed path $P$ in the exterior of the already drawn part joining a pair of vertices on its outer face
such that the number of inner faces is increased by one. Here, we require the new inner face to be also the face in the final embedding. This can be performed while preserving the properties of strip clustered embeddings which concludes the proof.
\end{proof}

\section{Characterization of embedded strip planar graphs}

\label{sec:char}

In this section we prove our characterization of strip planar clustered graphs
by reducing a general instance of strip clustered planarity to a normalized one.
We remark that the normalized instances which are (vertex) two-connected are \emph{jagged instances} from~\cite{ADDF13+}.  Therein
it is also proved that for every instance of strip clustered planarity $(G,T)$, where $G$ is given by an embedding, there exists a finite set of instances $\mathcal{I}$ all of which are jagged such that $(G,T)$ is strip planar if and only if every instance in $\mathcal{I}$ is strip planar.
As a byproduct of our work we obtain an alternative proof of this fact.

\begin{theorem}
\label{thm:characterization}
Let $(G,T)$ denote an embedded strip clustered graph.
$(G,T)$ is strip planar if and only if $(G,T)$ does not contain
an unfeasible interleaving pair of paths, or a trapped vertex.
\end{theorem}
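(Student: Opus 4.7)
The forward direction is immediate: an embedding witnessing strip planarity has no crossings at all, so Observation~\ref{obs:crossing} and Observation~\ref{obs:trap} yield the absence of unfeasible interleaving pairs and trapped vertices. For the converse, the plan is to reduce to the normalized setting of Lemma~\ref{lemma:characterizationSpecial} via a sequence of modifications that preserve both the hypothesis and strip planarity in both directions.

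First, I would render every cluster an independent set. For each edge $uv$ with both endpoints inside some cluster $V_i$, contract $uv$. A contraction neither creates a trapped vertex (a witnessing cycle in the contracted graph lifts to a cycle in the original with the same labels in $V_i$, trapping the same interior vertex) nor an unfeasible interleaving pair (an interleaving cap--cup pair lifts to an interleaving cap--cup pair with the same algebraic intersection number, since the contracted edge carries only a single cluster label). Conversely, any strip clustered embedding of the contracted graph can be reversed by a local vertex split drawing the short intra-cluster edge in a small disk around the merged vertex.

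Second, I would arrange that every internal face is simple or semi-simple and that the outer face is simple. A face $f$ failing this condition has at least three local minima, or at least three local maxima. Picking two consecutive local minima $u,u'$ of $f$, I would add a fresh degree-two vertex $z$ placed in cluster $V_{\gamma(u)-1}$ together with edges $uz$ and $u'z$ drawn as a short arc in the interior of $f$; this splits $f$ into two faces, each with strictly fewer local minima. The symmetric construction handles local maxima, and the outer face is treated the same way with $z$ placed in an adjacent cluster on the unbounded side. Iterating finitely many times produces a normalized instance $(G^{*},T^{*})$. The key check is that such monotone intra-face bridges create neither trapped vertices nor unfeasible pairs: any new enclosing cycle either traps no new vertex (because $z$ sits one cluster away from $u$ and $u'$ and so has $\gamma(z)$ between $\min$ and $\max$ of the cycle) or, after removal of the bridge, witnesses a trapped vertex already present in $(G,T)$; any new interleaving cap--cup pair can be rerouted through $u,u'$ instead of $z$ using the symmetric-difference identity of Lemma~\ref{lemma:symmDif}, yielding a walk decomposition whose algebraic intersection number equals that of the new pair, hence producing an unfeasible interleaving pair in the original graph.

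Applying Lemma~\ref{lemma:characterizationSpecial} to $(G^{*},T^{*})$ yields a strip clustered embedding; deleting the bridging vertices and splitting back the contracted intra-cluster edges by local vertex splits produces a strip clustered embedding of $(G,T)$ with the same outer face, proving the theorem. The main obstacle is the second step, specifically the verification that bridging operations neither destroy nor create the forbidden configurations; this requires a careful application of Lemma~\ref{lemma:symmDif} to transfer unfeasible pairs between $(G,T)$ and the modified graph along walks that detour through the introduced monotone bridges, very much in the spirit of the doubling-path argument in the proof of Lemma~\ref{lemma:characterizationSpecial}.
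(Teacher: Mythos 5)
Your overall strategy (contract intra-cluster edges, then subdivide bad faces, then invoke Lemma~\ref{lemma:characterizationSpecial}) is the paper's strategy, and your first step matches the paper's treatment of clusters. The genuine gap is in the face-splitting gadget. You place the new vertex $z$ in cluster $\gamma(u)-1$, i.e.\ \emph{one level below} a local minimum $u$ of an internal face $f$. When $u$ lies at the global minimum level of $f$ (which is exactly the situation you must handle, e.g.\ a face with three or more local minima all at the same level), this is unsound in both directions. First, $z$ becomes a trapped vertex of the modified graph: the boundary of the bounded face $f$ contains a cycle $C$ enclosing the face region with $\min(C)=\gamma(u)>\gamma(z)$, so the hypothesis of Lemma~\ref{lemma:characterizationSpecial} fails for $(G^{*},T^{*})$ and your chain of reductions stops. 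Second, and consistently with that, strip planarity is not preserved forward: in any strip clustered embedding with the prescribed rotation system and outer face, the region of $f$ is contained in the half-plane $x>\gamma(u)$ (a closed boundary curve lying in $x\ge\gamma(u)$ encloses only points with $x\ge\gamma(u)$), so no vertex of cluster $\gamma(u)-1$ can be drawn inside $f$; a strip planar $(G,T)$ is thus transformed into a non-strip-planar $(G^{*},T^{*})$. Your parenthetical justification that ``$\gamma(z)$ lies between $\min$ and $\max$ of the cycle'' is false precisely in this case, and the fallback ``after removal of the bridge'' does not produce a cycle of $G$ at all (there is no edge $uu'$), while rerouting along the facial walk can increase the maximum of the cycle, so the $\gamma(w)>\max(C)$ case is also not covered.

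This is why the paper's normalization is different: it subdivides a bad face by a \emph{strictly monotone} path $P$ joining two existing boundary vertices $u''$ and $v''$, chosen at the levels of an interior local minimum and local maximum of the facial walk, so every label introduced stays within the range already present on the face; only then can one verify, via Lemma~\ref{lemma:symmDif} together with Lemma~\ref{lemma:conjChar}, that neither an unfeasible interleaving pair nor a trapped vertex is created (your use of these lemmas for the rerouting step is in the right spirit, but it cannot rescue a gadget that already creates a trapped vertex). Two smaller problems would also need fixing even if the gadget were sound: your trigger condition ``at least three local minima or at least three local maxima'' misses faces with exactly two local minima (or maxima) at \emph{different} levels, which are neither simple nor semi-simple, so the iteration need not terminate in a normalized instance; and when $\gamma(u)\ne\gamma(u')$ the edge $u'z$ joins non-neighboring clusters and must be subdivided into a monotone path, which you do not address.
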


Before we turn to the proof of Theorem~\ref{thm:characterization} we discuss its
relation to  Conjecture~\ref{conj:conj1}.
We note that the theorem is, in fact, stronger than the conjecture for $n=2$
due to a more restricted condition on pairs of paths we consider.
However, the strengthening is not significant, since it is rather a simple
exercise to show that forbidding an unfeasible interleaving pair of paths
and trapped vertices renders the hypothesis of the conjecture satisfied.

\begin{wrapfigure}{r}{.3\textwidth}
\centering
\includegraphics[scale=0.7]{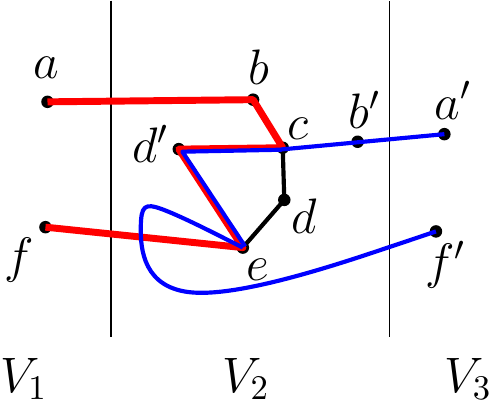}
\caption{Replacing the portion of $P_1'=abcdef$ with the portion of $P_2'=a'b'cd'ef'$ on the cycle $C=cded'$.}
\label{fig:ffEx}
\end{wrapfigure}

Indeed, if a pair of intersecting paths $P_1$ and $P_2$ satisfies $\gamma(P_1) \cap \gamma(\partial P_2) = \emptyset$ and $\gamma(\partial P_1) \cap \gamma(P_2) = \emptyset$,
there exist sub-paths $P_1'$ of $P_1$ and $P_2'$ of $P_2$ such that $i_A(P_1,P_2)=i_A(P_1',P_2')$ that either form an interleaving pair
or do not form an interleaving pair only because they do not intersect in a path.
In the latter, no end vertex of $P_1$ or $P_2$ is contained in the interior
of a cycle in $P_1' \cup P_2'$ due to the non-existence of trapped vertices.
Let $W_2$ be a walk obtained from $P_2'$ by replacing its portion on a cycle $C$
contained in $P_1' \cup P_2'$, such  that $P_1' \cap C$ is a path, with the portion of $P_1'$ for every such cycle (see Figure~\ref{fig:ffEx}). Let $P_2''$ denote the path in $W_2$ connecting
its end vertices.
We have $i_A(P_1,P_2)=i_A(P_1',P_2'')$, and $P_1'$ and $P_2''$ form an interleaving pair.
Hence, we just proved the following.

\begin{lemma}
\label{lemma:conjChar}
Given that $(G,T)$ is free of trapped vertices,
if a pair of intersecting paths $P_1$ and $P_2$ in a strip embedded clustered graph satisfies $\gamma(P_1) \cap \gamma(\partial P_2) = \emptyset$ and $\gamma(\partial P_1) \cap \gamma(P_2) = \emptyset$\footnote{In the case of paths the boundary operator $\partial$ returns the end vertices.},
there exist sub-paths $P_1'$ of $P_1$ and $P_2'$ of $P_2$ such that $i_A(P_1,P_2)=i_A(P_1',P_2'')$, where $P_2''\subset P_1' \cup P_2'$ is constructed as above, forming an interleaving pair.
\end{lemma}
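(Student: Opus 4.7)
The plan is to construct the desired sub-paths in two stages. First, I would take the minimal sub-paths $P_1'\subseteq P_1$ and $P_2'\subseteq P_2$ that carry all the intersections of $P_1$ with $P_2$, so that the contributions outside them vanish and $i_A(P_1,P_2)=i_A(P_1',P_2')$. Because $\gamma(\partial P_1)\cap\gamma(P_2)=\emptyset$ and $\gamma(P_1)\cap\gamma(\partial P_2)=\emptyset$, the labels of the endpoints of $P_1$ lie strictly outside $[\min(P_2),\max(P_2)]$ and the labels of the endpoints of $P_2$ lie strictly outside $[\min(P_1'),\max(P_1')]$; after possibly swapping the roles of $P_1$ and $P_2$, this forces the interleaving inequalities $\min(P_1')\le \min(P_2')\le \max(P_1')\le \max(P_2')$ to hold for the trimmed pair.

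Second, if $P_1'\cap P_2'$ is already connected, then $(P_1',P_2')$ is an interleaving pair and we are done. Otherwise $P_1'\cup P_2'$ contains a cycle $C$ for which $P_1'\cap C$ is a non-trivial arc. I observe that the endpoints of $P_2'$ cannot lie in the interior of any such cycle $C\subseteq P_1'\cup P_2'$: by the interval inequalities above an endpoint of $P_2'$ has label outside $[\min(C),\max(C)]$, so if it lay inside $C$ it would be a trapped vertex, contradicting the hypothesis. Hence for each such cycle $C$ I can legitimately substitute the portion of $P_2'$ that lies on $C$ by the complementary arc $P_1'\cap C$, obtaining a walk $W_2$ that still connects the original endpoints of $P_2'$.

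The key point is that this substitution does not change the algebraic intersection number with $P_1'$: the walk $W_2$ differs from $P_2'$ by the insertion of closed walks traversing $C$, and by Lemma~\ref{lemma:symmDif}, applied with $W=C$, the algebraic intersection of $P_1'$ with such an augmented walk coincides with the algebraic intersection with $P_2'$. Letting $P_2''$ be the (unique) path contained in $W_2$ between its endpoints, one iterates the replacement on the cycles that remain. I would formalize this as an induction on the number of cycles in $P_1'\cup P_2'$: each replacement strictly decreases that number while preserving both the endpoint conditions and the value of $i_A(P_1',\cdot)$.

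The main obstacle is the bookkeeping in the iterative replacement: making sure that after substituting along a cycle $C$ the resulting object $P_2''$ is indeed a path (not a walk with repetitions beyond what is absorbed by the symmetric-difference trick), and that the hypothesis about trapped vertices continues to exclude pathological nested cycles at every step. Both follow from the interval inequalities and the absence of trapped vertices, but they must be checked after each substitution; this is where a careful appeal to Lemma~\ref{lemma:symmDif} does the real work.
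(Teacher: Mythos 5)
Your proposal is correct and follows essentially the same route as the paper: trim $P_1,P_2$ to sub-paths preserving $i_A$, use the absence of trapped vertices to keep the end vertices out of the interiors of cycles in $P_1'\cup P_2'$, reroute $P_2'$ along the complementary arcs $P_1'\cap C$ of those cycles, and invoke Lemma~\ref{lemma:symmDif} to see that the algebraic intersection number is unchanged before extracting the path $P_2''$. The only difference is presentational: you perform the cycle replacements one at a time with an induction on the number of cycles, whereas the paper replaces the portions on all such cycles simultaneously.
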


The application of Theorem~\ref{thm:characterization} in Section~\ref{sec:tree} and~\ref{sec:theta} reveals that a much stronger version of Theorem~\ref{thm:characterization} holds at least in the case of trees and theta graphs.

\begin{proof}
As advertised in Section~\ref{sec:normalized} we proceed by normalizing $(G,T)$
so that Lemma~\ref{lemma:characterizationSpecial} is applicable.
First, we turn every cluster of $(G,T)$ into an independent set,
and augment $G$ so that all internal faces are either simple or semi-simple (as defined in Section~\ref{sec:even}) and the outer face is simple.
Second, we show that during the ``normalization period'' we cannot introduce an unfeasible pair or a trapped vertex.

\paragraph{Turning clusters into independent sets.}
We proceed by the induction on $\sum_{V_i}\sum_{C_j}(|V(C_j)|-1)$, the inner sum is over the connected components $C_j$ induced by $V_i$.
The base case is treated in the next paragraph.
If we have an edge $e$ in $E(G)$ between two vertices $u,v$ in the same cluster
we contract $e$ in the given embedding of $G$. The resulting drawing is still an embedding.
We apply the induction hypothesis on the resulting drawing thereby obtaining a strip clustered embedding of the corresponding embedded strip clustered graph.
The induction works since by a contraction we cannot introduce a trapped vertex or an unfeasible pair of paths as we will see in the paragraph after
the next one. We can ignore loops created by contractions, since they do not influence the algebraic intersection number
between walks. Thus, we can effectively delete them. However, we re-introduce them at the same position in the rotation
when restoring the edge $e$. Since we do not change the rotation at any step, the induction goes through.
Finally, we restore the edge $e$ by splitting the vertex $u$ into two vertices, which can be done while
keeping the embedding strip clustered.

\paragraph{Turning faces into simple or semi-simple.}
 Let $\gamma:V \rightarrow \mathbb{N}$ be a labeling of the vertices of $G$ such that $\gamma(v)=i$ if $v\in V_i$.
 Let $f$ denote an internal face of $f$ which is neither simple nor semi-simple with respect to $\gamma$. Let $W$ denote the
 (closed) facial walk of $f$. Suppose that a sub-walk $W'$ of $W$ between a global minimum $u$ and maximum $v$ of $f$ contains a local
 minimum and a local maximum both different from $u$ and $v$. We augment our drawing with a path $P$
  that subdivides $f$ and yields two new faces, both having smaller number of local minima and maxima than $f$.

Refer to Figure~\ref{fig:augmenting}.
 Let $u'\not=u$ and $v'\not=v$, respectively, denote two consecutive local minima and maxima appearing along $W'$ and assume that $u,v',u'$ and $v$ appear
 in this order along $W'$. Let $v''$ denote the vertex of $W'$ that is the closest vertex to $u'$ on $W'$ with the following properties.
 The vertex $v''$ appears on $W'$ between $u'$ and $v$ and  $\gamma(v'')=\gamma(v')$. Similarly,
 let $u''$ denote the vertex of $W'$ that is the closest vertex to $v'$ on $W'$ such that it appears on $W$ between $u$ and $v'$ and $\gamma(u'')=\gamma(u')$.
We add the path $P$ joining $u''$ and $v''$ in two steps.
 First, we add an edge $e_P$ joining $u''$ and $v''$ inside $f$, and then we subdivide $e_P$ so that the resulting embedded clustered graph is strip clustered. Let $P$ denote the resulting path.
Note that we split $f$ into a semi-simple face and a face that has a smaller number of local minima and maxima than $f$.
Thus, by subdividing faces repeatedly we eventually end up with all faces being either simple or semi-simple and the outer face simple.
The sub-walk of $W'$ between $u$ and $v$ with exactly one local minimum and maximum in its (relative) interior is \emph{covered} by $P$.

\paragraph{No unfeasible pairs of paths or trapped vertices.}
It remains to show that by contracting edges and subdividing faces we do not introduce an unfeasible pair of paths or
a trapped vertex. Clearly, by contracting an edge whose both end vertices belong to the same cluster we cannot introduce
an unfeasible pair of paths or a trapped vertex. Indeed, if that were  the case, the inverse operation of such contraction
would certainly destroy neither a trapped vertex nor an unfeasible pair (contradiction). This follows since for a pair of paths
 $i_A(P_1,P_2)=i_A(P_1',P_2')$, where $P_1'$ and $P_2'$ is obtained from $P_1$ and $P_2$ by
replacing their edges and vertices by their pre-images w.r.t. the contraction.
 Thus, suppose for the sake of contradiction that by subdividing a face
as in the previous paragraph we introduce an unfeasible pair of paths or a trapped vertex.

Refer to Figure~\ref{fig:separ2}.
First, suppose that a vertex $v$ trapped  in the interior of $C$ was introduced by  subdividing a face $f$ by a path $P$.
If $v$ lies in the interior of $P$ then $C$ is edge disjoint from $P$, and $C$ has to separate $v$ from a vertex $u$ in the same
cluster as $v$ incident to $f$ which is impossible.
Otherwise, $u$ is also trapped in $C$ (contradiction).
Hence, $P$ is contained in $C$. We perform the operation of the symmetric difference edgewise to $C$ with the cycle obtained by concatenating $P$ with
the sub-walk of the facial walk of $f$ covered by $P$. Thus, we turned $C$ into a set of closed walks one of which contains $v$ in its interior.
Hence, we obtain a cycle in the original graph such that $v$ is trapped in its interior (contradiction).

Let  $P''=vP''u$ denote the sub-walk of the facial walk $W$ of $f$ covered by $P$.

Refer to Figure~\ref{fig:separ1}.
Second, let  $P_1$ and $P_2$ be an unfeasible pair of paths obtained after we subdivided a face $f$ by a path $P$. Let $P_1'$ and $P_2'$, respectively,
 be obtained from $P_1$ and $P_2$ by replacing its sub-path $P'=vP'$ contained in $vPu$ with a shortest sub-path of $P''$ ending in the same cluster as $P'$.
  Then we show that the replacement does not change $i_A(P_1,P_2)$ which will lead to contradiction.

We deal only with $P_1$. The path $P_2$ is taken care of in the same way.
If $P'=P$, let $W'$ be the concatenation of the reverse of $vPu$, denoted by $\overline{vPu}$, with $vP''u$.
We have $i_A(P_1,P_2)=i_A(P_1uW'uP_1,P_2)$ by Lemma~\ref{lemma:symmDif}. Moreover,
$i_A(P_1uW'uP_1,P_2)=i_A(P_1'vP\overline{P}vP_1',P_2)=i_A(P_1',P_2)$ (by Lemma~\ref{lemma:symmDif}).
Note that $P_1'$ is a walk. However, by  Lemma~\ref{lemma:symmDif} the walk $P_1'$ contains a path joining the same pair of vertices
having the same algebraic intersection number with $P_2$ as $P_1'$. By repeating the argument with $P_2$ 
and Lemma~\ref{lemma:conjChar} we obtain a desired interleaving pair,
since no trapped vertex was introduced.

Otherwise, $P'\not=P$, and an end vertex of, let's say $P_1$, is contained in $P$. We join the other end vertex of $P'$ than $v$ with
the end vertex of $P''$ different from $v$ by a crossing-less edge $e$ drawn inside $f$, and contract $e$ into a vertex $u$.
Let $P_1''$ denote the path $P_1$ after we perform
the previous operation. Note that $i_A(P_1'',P_2)=i_A(P_1, P_2)$ and that $P_1''$ cannot be a cycle, since the intersection number of $P_2$
with such cycle would not be zero.
Now, we proceed as above with $P_1''$ playing the role of $P_1$, and find $P_1'$. Finally, we split $v$ into $e$ and shortened $P_1'$ by $e$.

\begin{figure}
\centering
\subfigure[]{
\includegraphics[scale=0.7]{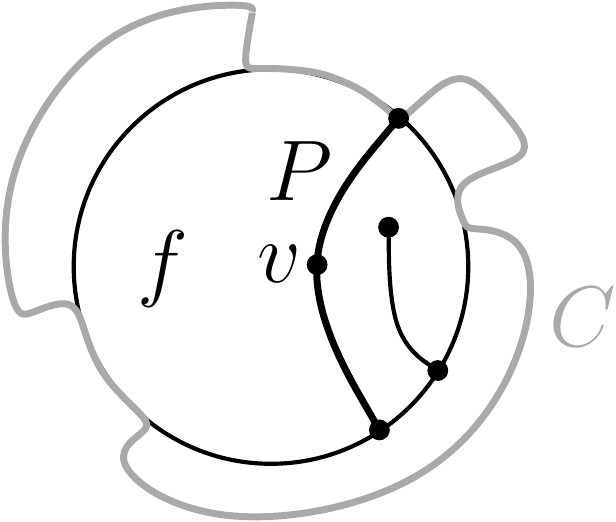}
\label{fig:separ1}}
\hspace{30px}
\subfigure[]{
\includegraphics[scale=0.7]{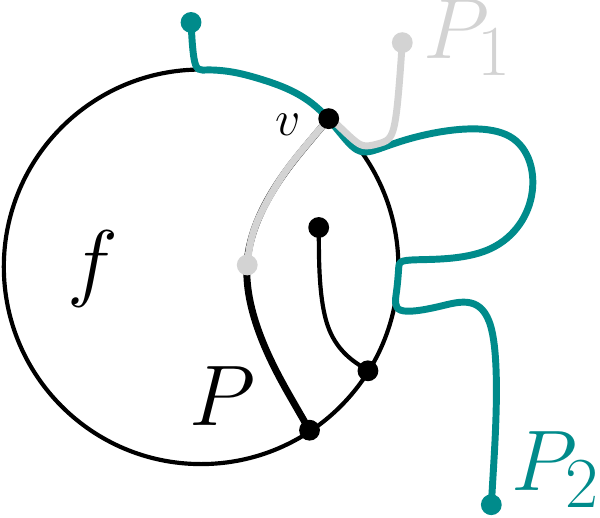}
\label{fig:separ2}}
\caption{(a) A vertex $v$  trapped inside $C$ after we subdivide $f$ by $P$; and
(b) An unfeasible pair of paths $P_1$ and $P_2$ after we subdivide $f$ with $P$. }
\end{figure}

\end{proof}

\begin{figure}[htp]
\centering
\subfigure[]{\includegraphics[scale=0.7]{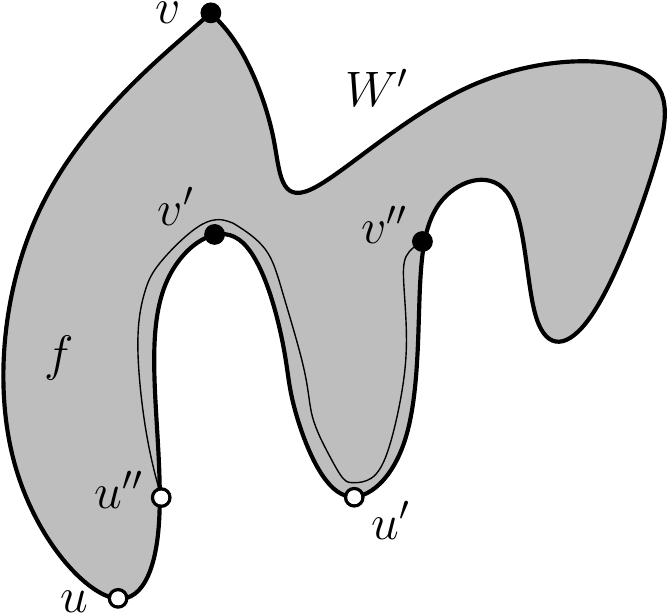}
    \label{fig:augmenting}
	} \hspace{30px}
\subfigure[]{
\includegraphics[scale=0.7]{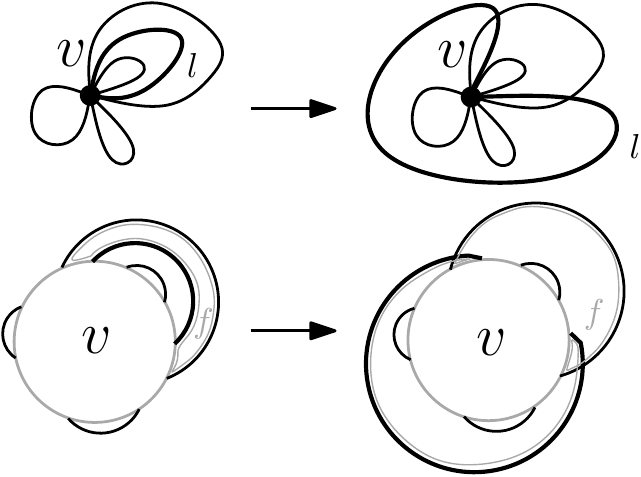}
\label{fig:obser5}}
\caption{(a) The augmenting path $u''$ between $ v''$ drawn along the contour of face $f$;
(b)  A bouquet of loops at $v$ and its modification after we pull a loop  $l$ over $v$ (top). The cyclic intervals
corresponding to the loops in the rotation at $v$ before and after we pull $l$ over $v$ (bottom). }
\end{figure}

\section{The variant of the weak Hanani-Tutte theorem for strip clustered graphs}
\label{sec:linearly}

In this section we prove the weak Hanani-Tutte theorem for strip clustered graphs, Theorem~\ref{thm:linearly}.

Given a drawing of a graph $G$ where every pair of edges cross an even number of times, by the weak Hanani-Tutte theorem~\cite{CN00,PT00,PSS06},
we can obtain an embedding of $G$ with the same rotation system, and hence, the facial structure of an embedding of $G$ is already present in an even drawing. This allows us to speak about faces in an even drawing of $G$. Hence, a face in an
even drawing of $G$ is the walk bounding the corresponding face in the embedding of $G$ with the same rotation system.

 A face $f$ in an even drawing corresponds to a closed (possibly self-crossing)
curve $C_f$ traversing the edges of the defining walk of $f$ in a close vicinity of its edges without crossing an edge that is being traversed, i.e,
$C_f$ never switches to the other side of an edge it follows.
An \emph{inner face} in an even drawing of $G$ is a face for which all the vertices of $G$ except those incident to $f$ are outside of $C_f$.
Similarly, an \emph{outer face} in an even drawing of $G$ is a face such that all the vertices of $G$ except those incident to $f$ are inside of $C_f$.
Note that by the weak Hanani--Tutte theorem every face is either an inner face or an outer face.
Unlike in the case of an embedding (in the plane), in an even drawing the outer face might not be unique. Nevertheless, an outer face always exists in an even drawing of a graph in the plane.

\begin{lemma}
\label{obs:outer-face}
Every even drawing of a connected graph $G$ in the plane has an odd number of outer faces.
\end{lemma}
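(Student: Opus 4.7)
The plan is to fix any vertex $v_0 \in V(G)$ and prove two facts about the quantity $N(v_0) := \sum_f N_f(v_0)$, where $N_f(p) = 1$ if $p$ is inside $C_f$ and $N_f(p) = 0$ otherwise: first, $N(v_0)$ equals the number of outer faces; second, $N(v_0) \equiv 1 \pmod 2$. Combining them yields the lemma.

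For the first equality I would show that, for every face $f$, the indicator $N_f$ is constant on $V(G)$. For any edge $uv \in E(G)$ viewed as a curve in the plane I would argue that $|uv \cap C_f|$ is even. If $uv$ lies on the walk of $f$, then $C_f$ runs close to $uv$ on one side without crossing it, so $|uv \cap C_f|=0$. Otherwise, at each endpoint of $uv$ that lies in $V(f)$ the edge $uv$ borders two non-$f$-wedges (since every $f$-wedge at such a vertex is bordered by two walk-edges of $f$, and $uv$ is not one of them), so $uv$ is disjoint from every detour arc of $C_f$ at its endpoints; every intersection of $uv$ with $C_f$ therefore arises near a transverse crossing of $uv$ with some walk-edge $e'$ of $f$, and each such crossing of $uv$ with $e'$ accounts for $k_{e',f}$ crossings of $uv$ with $C_f$ (one per appearance of $e'$ in the walk of $f$). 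Thus
\[
|uv \cap C_f| \;=\; \sum_{e'} k_{e',f}\,|uv \cap e'|,
\]
which is even because every $|uv \cap e'|$ is even by the even-drawing hypothesis. Hence $N_f(u) = N_f(v)$, and the connectedness of $G$ forces $N_f$ to be a constant $\nu_f \in \{0,1\}$ on $V(G)$. If $V(f) \neq V$ then picking any $v \notin V(f)$ shows, directly from the definitions of inner and outer, that $\nu_f = 1$ iff $f$ is outer; in the vacuous case $V(f) = V$ I adopt this equivalence as the definition. Therefore $\sum_f \nu_f$ is exactly the number of outer faces.

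For the parity I would send a generic ray $r$ from $v_0$ out to infinity, transversal to the edges of $G$ at interior points and disjoint from every other vertex. At the far end $p_\infty$ we have $N_f(p_\infty) = 0$ for every $f$, and therefore
\[
\sum_{f} N_f(v_0) \;\equiv\; \left| r \cap \bigcup_{f} C_f \right| \pmod{2}.
\]
These intersections split into two groups. At each interior crossing of $r$ with an edge $e$ of $G$ the ray passes through $\sum_f k_{e,f} = 2$ edge-arc fragments of $\bigcup_f C_f$ (one per appearance of $e$ in some facial walk), contributing $0 \pmod 2$. By the genericity of $r$ the only other intersections occur inside a small disk around $v_0$, where $r$ lies in the single wedge $w$ into which it emerges; inside that wedge $r$ crosses exactly one curve-fragment of $\bigcup_f C_f$, namely the detour arc of $C_{f_w}$ around $v_0$. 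Hence the right-hand side is $\equiv 1 \pmod 2$.

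Combining the two steps, the number of outer faces equals $\sum_f \nu_f = \sum_f N_f(v_0) \equiv 1 \pmod 2$, which is the lemma. The delicate point, and the main obstacle to the argument, is the local topology at $v_0$: I need to verify that inside a sufficiently small disk around $v_0$ the union $\bigcup_f C_f$ decomposes into pairwise disjoint ``bracket'' pieces, one per incidence of $f$ at $v_0$ (a bracket consisting of two short edge-arc segments along the wedge-bordering edges joined by the detour arc inside the wedge), and that the opening segment of a generic ray through wedge $w$ meets only the bracket of $C_{f_w}$ whose detour sits in $w$. Once this local description is in place, the edge-arc/detour-arc split in the parity count is unambiguous and the argument goes through.
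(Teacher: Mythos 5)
Your proposal is correct in substance, but it takes a genuinely different route from the paper. The paper's proof contracts all edges to a single vertex with a bouquet of loops, defines a winding number of each facial walk around that vertex via cyclic intervals in the rotation, identifies the outer faces with the odd-winding ones, and then observes that pulling a loop over the vertex changes the winding parity of exactly two faces, so the parity of the number of outer faces is invariant and can be read off from the crossing-free case. You instead work directly in the given drawing: evenness forces every edge to cross every $C_f$ an even number of times, so by Lemma~\ref{lemma:twocolor} and connectivity the inside/outside status of the vertices with respect to $C_f$ is a single bit $\nu_f$ per face; then a generic arc from a fixed vertex to infinity meets the multiset $\{C_f\}_f$ an odd number of times, since each interior crossing with an edge $e$ contributes $\sum_f k_{e,f}=2$ parallel face-arcs while exactly one wedge detour is pierced at the start. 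Your route avoids the contraction, the winding-number bookkeeping and the deformation-invariance step, is self-contained (it even recovers the inner/outer dichotomy for faces not incident to all vertices without citing the weak Hanani--Tutte theorem), and additionally shows that all vertices lie on the same side of each $C_f$; the paper's argument, in exchange, is shorter and purely combinatorial once the drawing is reduced to a one-vertex bouquet.

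Three small points to tidy up. First, when $uv$ lies on the walk of $f$ your claim $|uv\cap C_f|=0$ is an overstatement: arcs of $C_f$ running alongside \emph{other} walk-edges of $f$ may cross $uv$; but all you need is evenness, which follows from your own crossing count over $e'\neq uv$ together with the no-side-switching property of the arcs along $uv$ (their only crossings with $uv$ come in pairs near self-crossings of $uv$). Second, for a face incident to all vertices the paper's inner/outer definition is vacuous, so your convention ``outer iff $\nu_f=1$'' is a genuine choice; it is, however, exactly the reclassification the paper's own proof makes implicitly (in the fully contracted drawing every face is incident to the unique vertex), it agrees with the unbounded face in the crossing-free case, and it is what the application in Section~\ref{sec:linearly} needs, so this is acceptable. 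Third, a straight ray need not stay inside one wedge sector near $v_0$ if the incident edges wiggle; replace it by any generic simple arc to infinity whose initial piece stays in a single sector --- the parity argument only needs an arc, and with that choice the bracket decomposition you describe does hold inside a small enough disk, so the delicate point you flag is indeed verifiable and the proof goes through.
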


\begin{proof}
Refer to Figure~\ref{fig:obser5}.
By successively contracting every edge in an even drawing of $G$ we obtain a vertex $v$ with a bouquet of loops, see e.g., the proof of ~\cite[Theorem 1.1]{PSS06}.
Let  $\mathcal{D}$ be the obtained drawing whose underlying abstract graph is not  simple
unless it is edgeless.
Let us treat $\mathcal{D}$ as an even drawing. Thus, we obtain the facial structure in $\mathcal{D}$
by traversing walks consisting of loops at $v$.
Every loop $l$ at $v$ corresponds to a cyclic interval in the rotation at $v$ containing
the end pieces of edges that are in a close neighborhood at $v$ contained inside
$l$. By treating every walk in $\mathcal{D}$ as a walk along cyclic intervals of the loops it traverses we define the \emph{winding number} of a face in $\mathcal{D}$ as
the number of times we walk around $v$ when traversing the intervals of its walk.
The winding number can be positive or negative depending on the sense of the traversal.

Note that the outer faces in $\mathcal{D}$ are those faces $f$ whose corresponding walks wind around $v$ an odd number of times. This follows because whenever we visit $v$ during a walk of $f$ winding an odd number of times around $v$ the corresponding position in the rotation at $v$ is contained inside of an even number of loops of the walk, and hence outside of $C_f$.

By pulling a loop $l$ over $v$ we flip the cyclic interval in its rotation
that corresponds to the inside of $l$. It follows that we change the winding number of both facial walks that $l$ participates in by one.
Hence, we do not change the parity of the total number of outer faces in $\mathcal{D}$.
Since a crossing free drawing of $G$ has an odd number of outer faces the lemma follows.
\end{proof}

Thus, given an even strip clustered drawing of $(G,T)$ we can associate  it with an embedding  $\mathcal{D}$ having an outer face $f$. Note that by the connectivity of $G$ the vertices
incident to $f$ span all the clusters of $(G,T)$.
By Theorem~\ref{thm:characterization} it is enough to prove that $\mathcal{D}$ does not contain an unfeasible pair of paths
or a vertex trapped in the interior of a cycle. However, due to evenness of the given drawing of $(G,T)$ both of these
forbidden substructures would introduce a pair of cycles crossing an odd number of times (contradiction).
In order to rule out the existence of a trapped vertex we use the fact that
the boundary of the outer face $f$ spans all the clusters. If a vertex $v$ is trapped in the interior of  a cycle $C$ then by the connectedness of $G$
we can join $C$ with $v$ by a path $P$ of $G$. By the evenness of the drawing it follows that the end piece of $P$ at $C$
in our drawing start outside of $C$. On the other hand, a path connecting $C$ with any vertex on the outer face $f$ must
also start at $C$ outside of $C$, since the boundary of $f$ spans all the clusters.
 Thus, $v$ cannot be trapped, since the rotation system from the even drawing is preserved
in the embedding.

\section{The variant of the Hanani-Tutte theorem for strip clustered 3-connected graphs}

\label{sec:linearlyStrong}

In this section we prove the Hanani-Tutte theorem for strip clustered graphs if the underlying abstract
graph is three connected, Theorem~\ref{thm:linearlyStrong}.

First, we prove a lemma that allows us to get rid of odd crossing pairs by doing only local redrawings and vertex splits.
A drawing of a graph $G$ is obtained from the given drawing of $G$ by \emph{redrawing edges locally at vertices}
if the resulting drawing of $G$ differs from the given one only in small pairwise disjoint neighborhoods of  vertices not containing any other
vertex. The proof of the following lemma is inspired by the proof of~\cite[Theorem 3.1]{PSS06}.

\begin{lemma}
\label{lemma:removeOdd}
Let $G$ denote a subdivision of a vertex three-connected graph drawn in the plane so that every pair of non-adjacent edges cross an even number of times.
We can turn the drawing of $G$ into an even drawing by a finite sequence of local redrawings of edges at vertices and vertex splits.
\end{lemma}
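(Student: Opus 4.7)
I would follow the strategy of~\cite[Theorem~3.1]{PSS06}, removing the remaining odd crossings vertex by vertex through operations supported inside small pairwise disjoint discs around the vertices. The pivotal observation is that a local redrawing inside a tiny disc around $v$ only changes the parities of pairs of edges that both meet $v$; the parities of pairs not meeting $v$ are preserved. Consequently, at each vertex $v$, the task reduces to zeroing the $\mathrm{GF}(2)$-parity matrix $P^v$ on the incident edges, where $P^v_{e,f}=|e\cap f|\bmod 2$.

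I would begin with the subdivision vertices. At a degree-two vertex $w$ with incident edges $e_1, e_2$, one drag of $e_1$ once around $w$ inside a disc small enough to contain only $e_1$ and $e_2$ toggles $P^w_{e_1,e_2}$ and nothing else, so a single such drag fixes $w$. After this pass every surviving odd pair meets at a branch vertex, i.e.\ a vertex of $G$ corresponding to a vertex of the underlying three-connected graph $H$.

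I would then process the branch vertices using three kinds of local operations: (a) dragging an incident edge once around $v$, which toggles every off-diagonal entry in the row and column of $P^v$ indexed by that edge; (b) transposing two edges adjacent in the rotation at $v$, which toggles exactly one off-diagonal entry of $P^v$ while permuting the rotation; and (c) performing a vertex split at $v$ along a bipartition of the incident edges consistent with the cyclic rotation, which removes the cross-block from $P^v$ at the cost of introducing one short crossing-free edge. The core of the PSS-style argument is to show that at every branch vertex of a subdivision of a three-connected graph, an appropriate combination of (a)-(c) annihilates $P^v$.

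The structural role of three-connectedness of $H$ enters here as follows. For any triple of edges $e, f, g$ at a branch vertex $v$, Menger's theorem in $H - v$ supplies three internally vertex-disjoint paths joining the other endpoints of $e, f, g$ to a common vertex $w$; concatenating these paths with $e, f, g$ pairwise produces three closed walks through $v$ whose pairwise algebraic planar intersection numbers vanish. After invoking Lemma~\ref{lemma:symmDif} and the independent-evenness hypothesis to cancel every contribution coming from a pair of non-adjacent edges, the surviving tally is precisely a $\mathrm{GF}(2)$-linear combination of entries of $P^v$. Collecting these constraints over all triples gives enough structure on $P^v$ to exhibit a finite sequence of operations (a)-(c) that zeroes $P^v$. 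Since each such operation is local to $v$, progress made at other vertices is not disturbed, and the process terminates after finitely many operations over all vertices.

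The main obstacle I expect is the systematic planar parity bookkeeping at branch vertices: one must choose the three internally disjoint paths in $H$ so that all endpoint corrections cancel cleanly (analogous to the symmetric-difference manipulations already used in Lemma~\ref{lemma:symmDif}) and so that the constraints one reads off collectively fall within the span of the operations (a)-(c). This is the delicate step, and the precise point at which three-connectedness of $H$ (as opposed to weaker connectivity) is indispensable, since two internally disjoint paths between the endpoints of two incident edges are not enough to generate the parity constraints needed for a third edge.
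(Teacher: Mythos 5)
Your plan hinges on the assumption that the local operations (a)--(b) let you realize essentially arbitrary $\mathrm{GF}(2)$ patterns of parity changes on the pairs at $v$ (subject only to constraints you propose to extract via Menger), but you never pin down which patterns are actually realizable, and this is where the plan breaks. A single switch of two consecutive edges does toggle exactly one pair, but consecutive switches interact through the rotation: the net flip pattern of any sequence of switches is forced, modulo the row--column toggles of operation (a), by the net change of the rotation at $v$ (two strands whose relative cyclic position is or is not reversed must be transposed an even or odd number of times accordingly --- the usual wiring-diagram parity). If single-pair toggles could be composed freely, then every independently even drawing could be made even by local redrawings alone, Lemma~\ref{lemma:removeOdd} would hold for every graph with no vertex splits and no use of three-connectivity, and the strong variant of Theorem~\ref{thm:linearly} would follow for all graphs --- which the paper explicitly treats as open. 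So the heart of the lemma is precisely to cope with the fact that at a vertex of degree at least four the achievable parity changes form a proper subset; your proposal neither identifies this constraint nor shows that the relations you extract from three-connectivity place $P^v$ inside the achievable set.

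Your use of the vertex split is also backwards. Splitting $v$ along a bipartition of the rotation does not ``remove the cross-block from $P^v$'': it turns the cross pairs into independent pairs, so it preserves independent evenness only if those pairs already cross evenly --- otherwise it creates odd independent pairs. In the paper's proof this is exactly where three-connectivity enters, and in the opposite order to yours: one first makes all edges of a cycle $C$ even by local redrawings, then uses three-connectivity to build, from the two paths guaranteed to exist off $C$, two cycles whose corresponding closed curves would cross an odd number of times if some edge starting inside $C$ crossed an edge starting outside $C$ oddly at a vertex of $C$ (a contradiction, since closed curves cross evenly); only then is the split performed, to separate inside from outside and permanently protect the evenness of $C$, with termination controlled by the potential $\sum_{v}\deg^3(v)$. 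Your Menger step (three internally disjoint paths from the endpoints of an arbitrary triple of edges at $v$ to a common vertex of $H-v$) is moreover not justified as stated, since $H-v$ is only two-connected, and even granted, you do not show that the resulting relations on $P^v$ lie in the span of your operations. As it stands the proposal is a genuinely different, algebraic route, but its key step is missing.
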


\begin{proof}
We process cycles in $G$ containing an edge crossed by one of its adjacent edges an
odd number of times one by one until no such cycle exists.
Let $C$ denote a cycle of $G$. By local redrawings at the vertices of $C$ we obtain a drawing of $G$, where every edge of $C$ crosses
every other edge an even number of times. Let $v$ denote a vertex of $C$.

First, suppose that every edge incident to $v$ and starting inside of $C$ crosses every edge incident to $v$ and starting outside of $C$ an even number of times.
In this case we perform at most two subsequent vertex splits.
If there exists at least two edges starting at $v$ inside (outside) of $C$, we split $v$ into two vertices $v'$ and $v''$ joined by a very short crossing free edge so that $v'$ is incident to the neighbors of $v$ formerly joined with
$v$ by edges starting inside (outside) of $C$, and $v''$ is incident to the rest of the neighbors of $v$.
Thus, $v''$ replaces $v$ on $C$.
Notice that by splitting we maintain the property of the drawing to be independently even,
and the property of our graph to be three-connected. Moreover, all the edges incident to the resulting vertex $v''$ of degree three or four cross one another an even number of times. Hence, no edge of $C$ will ever be crossed by another edge an odd number of times, after
we apply appropriate vertex splits at every vertex of $C$.

Second, we show that there does not exist a vertex $v$ incident to $C$ so that an edge $vu$ starting inside of $C$ crosses an edge $vw$ starting outside of $C$
an odd number of times. Since $G$ is a subdivision of a vertex three-connected graph, there exist two distinct vertices $u'$ and $w'$ of $C$ different from $v$ such that $u'$ and $w'$, respectively, is connected with $u$ and $w$ by a path internally disjoint from $C$. Let $uP_1u'$ and $wP_2w'$, respectively, denote this path. Note that $u$ can coincide with $u'$ and $w$ can coincide with $w'$.
Let $vP_3u'$ denote the path contained in $C$ no passing through $w'$. Let $C'$ denote the cycle obtained by concatenation of $P_1$, $P_3$, and $vu$.
Let $C''$ denote the cycle obtained by concatenating $P_2$ and the portion of $C$ between $w'$ and $v$ not containing $u'$.
Since $vw$ and $vu$ cross an odd number of times and all the other pairs of edges $e\in E(C')$ and $f\in E(C'')$ cross an even number of times,
the edges of $C'$ and $C''$ cross an odd number of times. It follows that their corresponding
curves cross an odd number of times (contradiction).

Notice that by vertex splits we decrease the value of the function $\sum_{v\in V(G)}deg^3(v)$ whose value is always non-negative.
Hence, after a finite number of vertex splits we turn $G$ into an even drawing of a new graph $G'$.
\end{proof}

We turn to the actual proof of Theorem~\ref{thm:linearlyStrong}.

We apply Lemma~\ref{lemma:removeOdd} to the graph $G$ thereby obtaining a clustered graph $(G',T')$, where each vertex obtained by a vertex split,
belongs to the cluster of its parental vertex and the membership of other vertices to clusters is unchanged.
By applying Theorem~\ref{thm:linearly} to $(G',T')$ we obtain a clustered embedding of $(G',T')$. Finally, we contract the pairs of vertices
obtained by vertex splits in order to obtain a clustered embedding of $(G,T)$.

\section{Monotone variant of the weak Hanani--Tutte theorem}

\label{sec:mono}

In this section we obtain the weak Hanani-Tutte theorem for monotone graphs, Theorem~\ref{thm:mono},
as a corollary of Theorem~\ref{thm:linearly}.

Given a graph $G$  with a fixed order of vertices let $\mathcal{D}$ denote its drawing such that $x$-coordinates of the vertices of $G$ respect
their order, edges are drawn as $x$-monotone curves and every pair of edges cross an even number of times.
We turn our drawing $\mathcal{D}$ of $G$ into a clustered drawing $\mathcal{D}'$ of a strip clustered graph $(G',T')$ which is still even.

We divide the plane by vertical lines such that each resulting strip contains exactly
one vertex of $G$ in its interior. Let $(G,T)$ denote the clustered graph, in which
every cluster consists of a single vertex, such that the clusters are ordered according to $x$-coordinates of the vertices.
Thus, every vertical strip corresponds to a cluster of $(G,T)$.
Note that all the edges in the drawing of  $(G,T)$ are bounded, and hence, by
Lemma~\ref{lemma:bounded} can be turned into paths so that the resulting clustered graph is strip-clustered,
and the even drawing clustered.
We denote the resulting strip-clustered graph by $(G',T')$ and drawing by $\mathcal{D}'$.

By applying Theorem~\ref{thm:linearly} to $\mathcal{D}'$, we obtain an embedding of $(G',T')$ that can be turned into an embedding of $(G,T)$ by converting the subdivided edges in $G'$ back to the edges of $G$. The obtained embedding is turned into an $x$-monotone embedding by replacing each edge $e$ with a polygonal path whose bends are intersections of $e$ with vertical lines
separating clusters in $(G,T)$.

\section{Strip trees}

\label{sec:tree}

In this section we prove the Hanani-Tutte theorem for strip clustered graphs if the underlying abstract
graph is a tree, Theorem~\ref{thm:tree}. We also give an algorithm proving Theorem~\ref{thm:ahahah}.

In order to make the present section easier do digest, as a warm-up we first prove Theorem~\ref{thm:tree} in the case, when $G$ is a subdivided star.
Once we establish Theorem~\ref{thm:tree} for stars, we show that a slightly more involved argument
based on the same idea also works for general trees.

\subsection{Subdivided stars}

\label{sec:star}

In the sequel $G=(V,E)$ is a subdivided star. Thus, $G$ is a connected graph that  contains a special vertex $v$, \emph{the center of the star},
of an arbitrary degree and all the other vertices in $G$ are either of degree one or two.
Let $(G,T)$ denote a strip clustered graph.

Recall that $\max (G')$ and $\min (G')$, respectively, denote the maximal and minimal value of $\gamma(v)$, $v\in V(G')$, where
$\gamma(v)$ returns the index of the cluster containing $v$, and that
a path $P$ in $G$ is an \emph{$i$-cap} and \emph{$i$-cup}, respectively, if for the end vertices $u$ and $v$ of $P$ and all $w\not=u,v$ of $P$
we have $\min (P) = \gamma (u) =\gamma(v)=i\not=\gamma(w)$ and $\max (P) = \gamma (u) =\gamma(v)=i\not=\gamma(w)$.

\subsubsection{Algorithm}

\label{sec:alg}

The following lemma is a direct consequence of our characterization stated in Theorem~\ref{thm:characterization}.

\begin{lemma}
\label{lemma:cap-cupG}
Let us fix a rotation at $v$, and thus, an embedding of $G$.
Suppose that every interleaving pair of an $i$-cap $P_1$ and $j$-cup $P_2$ in $G$ containing $v$ in their interiors is feasible in the fixed embedding of $G$.
 Then $(G,T)$ is strip planar and in a corresponding clustered embedding of $(G,T)$ the rotation
at $v$ is preserved.
\end{lemma}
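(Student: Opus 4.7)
The plan is to apply Theorem~\ref{thm:characterization} to the embedded strip clustered graph $(G,T)$ whose embedding is the one determined by the fixed rotation at $v$ (for a tree the choice of outer face is inessential, as the unique face is traversed twice by every edge). That theorem reduces strip planarity to the absence of two obstructions in the given embedding: unfeasible interleaving pairs of paths and trapped vertices. Since a subdivided star is a tree and therefore contains no cycle at all, the second obstruction vanishes trivially, and I can concentrate on the first.

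To rule out unfeasible interleaving pairs I will exploit the following structural fact about $G$: the center $v$ is the unique vertex of degree exceeding two. Consequently, for any two paths $P_1$ and $P_2$ in $G$ the simple graph $P_1\cup P_2$ can have vertices of degree three or four only at $v$, so $i_A(P_1,P_2)$ picks up a nonzero contribution only at $v$, and only when $v$ appears in both paths with total incident degree at least three in $P_1\cup P_2$.

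The remaining step is a short case check on the role of $v$ in each path, using the interleaving inequality $\min(P_1)=i<\min(P_2)\le\max(P_1)<\max(P_2)=j$. If $v$ is an endpoint of the $i$-cap $P_1$, then $\gamma(v)=i$, and since $v\in P_2$ this forces $\min(P_2)\le i$, contradicting $i<\min(P_2)$; the symmetric argument with $v$ an endpoint of the $j$-cup $P_2$ contradicts $\max(P_1)<j$; and if $v$ is an endpoint of both paths then $i=\gamma(v)=j$ contradicts $i<j$. When $v$ does not lie on one of the two paths, its degree in $P_1\cup P_2$ is at most two, and no other vertex can supply a nonzero contribution, so $i_A(P_1,P_2)=0$ automatically. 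Hence the only interleaving pairs that could possibly be unfeasible are precisely those with $v$ in the interior of both paths, and these are feasible by the hypothesis of the lemma.

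With both obstructions excluded, Theorem~\ref{thm:characterization} delivers a strip clustered embedding of $(G,T)$ realising the prescribed embedding of $G$, which in particular preserves the rotation at $v$. I do not anticipate a genuine obstacle here: the whole argument amounts to the single observation that in a subdivided star only the center can be a branching vertex, followed by the short case analysis above on the positions of $v$ relative to an $i$-cap and a $j$-cup.
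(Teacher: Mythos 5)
Your proof is correct and follows essentially the same route as the paper: the paper derives Lemma~\ref{lemma:cap-cupG} directly from Theorem~\ref{thm:characterization}, observing (as you do) that a tree has no cycles and hence no trapped vertices, and that in a subdivided star only the center can carry a nonzero contribution to $i_A$, so any interleaving pair not containing $v$ in both interiors is automatically feasible. Your case analysis on $v$ being an endpoint of the cap or cup just makes explicit the details the paper leaves to the reader.
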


In what follows we show how to use Lemma~\ref{lemma:cap-cupG} for a polynomial-time strip planarity testing if the underlying abstract graph
is a subdivided star. The algorithm is based on testing in  polynomial time  whether the columns  of a  0--1 matrix can be ordered so that, in every row, either the ones or the zeros are
consecutive.
We, in fact, consider matrices containing $0,1$ and also an ambiguous symbol $*$. A matrix containing 0,1 and $*$ as its elements has the \emph{circular-ones property}
if there exists a permutation of its columns such that in every row, either the ones or the zeros are
consecutive among undeleted symbols after we delete all $*$.  Then each row in the matrix corresponds to a constraint imposed on the rotation at $v$ by Lemma~\ref{lemma:cap-cupG}
simultaneously for many pairs of paths.

By Lemma~\ref{lemma:cap-cupG} it is enough to decide if there exists a rotation at $v$
so that every interleaving pair of an $s$-cap $P_1$ and $b$-cup $P_2$ meeting at $v$ is feasible. Note that if either $P_1$ or $P_2$ does not contain
$v$ in its interior the corresponding pair is feasible.
%
An interleaving pair $P_1$ and $P_2$ passing through $v$ restricts the set of all rotations at $v$ in a strip clustered embedding of $G$.
Namely, if $e_i$ and $f_i$ are edges incident to $P_i$ at $v$ then in a strip clustered  embedding of $(G,T)$ in the rotation
at $v$ the edges $e_1,f_1$ do not alternate with the edges $e_2,f_2$, i.e.,
 $e_1$ and $f_1$ are consecutive when we  restrict the rotation to
  $e_1,f_1,e_2,f_2$. We denote such a restriction by $\{e_1f_1\}\{e_2f_2\}$.
Given a cyclic order $\mathcal{O}$ of edges incident to $v$, we can interpret $\{e_1f_1\}\{e_2f_2\}$ as a Boolean predicate
which is ``true'' if and only if $e_1,f_1$ follow the edges $e_2,f_2$ in $\mathcal{O}$.
Of course, for a given cyclic order we have $\{ab\}\{cd\}$ if and only if $\{cd\}\{ab\}$,
and $\{ab\}\{cd\}$ if and only if $\{ba\}\{cd\}$.
Then our task is to decide in  polynomial time if the rotation at $v$ can be chosen so that
 the predicates $\{e_1f_1\}\{e_2f_2\}$ of all the interleaving pairs $P_1$ and $P_2$ are ``true''.
 The problem of finding a cyclic ordering satisfying a given set
  of Boolean predicates of the form $\{e_1f_1\}\{e_2f_2\}$  is $\cNP$-complete in general, since the problem of total ordering~\cite{O79} can be easily  reduced to it in  polynomial time.
  However, in our case the instances satisfy the following structural properties making the problem tractable
  (as we see later).


\begin{observation}
\label{obs:alegebera}
If $\{ab\}\{cd\}$ is false and $\{ab\}\{de\}$ (is true) then $\{ab\}\{ce\}$ is false.
\end{observation}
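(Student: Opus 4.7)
My plan is to reformulate the predicate $\{\cdot\cdot\}\{\cdot\cdot\}$ geometrically so that the observation collapses to a one-line transitivity statement, avoiding any explicit case analysis over the $4! = 24$ or so arrangements. Fix the cyclic order $\mathcal{O}$ of edges at $v$ and represent the five relevant edges $a,b,c,d,e$ as five distinct points placed on a circle in the order $\mathcal{O}$. By definition, for any 4-element subset $\{x,y,z,w\}$ the predicate $\{xy\}\{zw\}$ is true iff $x$ and $y$ are consecutive in the restriction of $\mathcal{O}$ to $\{x,y,z,w\}$; equivalently, $z$ and $w$ lie on the same open arc cut out by $x$ and $y$, i.e., the chord $\overline{xy}$ does \emph{not} separate $z$ from $w$.

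Under this reformulation the three predicates appearing in the statement all refer to the same chord $\overline{ab}$: the hypothesis $\{ab\}\{cd\}$ false means $\overline{ab}$ separates $c$ from $d$; the hypothesis $\{ab\}\{de\}$ true means $d$ and $e$ lie on the same side of $\overline{ab}$; and the desired conclusion $\{ab\}\{ce\}$ false means $\overline{ab}$ separates $c$ from $e$. Since ``lying on the same side of the fixed chord $\overline{ab}$'' is an equivalence relation on the complementary points $\{c,d,e\}$, the observation is exactly transitivity: $c$ and $d$ on opposite sides together with $d$ and $e$ on the same side force $c$ and $e$ on opposite sides. This gives the conclusion.

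I do not foresee any real difficulty. The single point to justify is the equivalence between the combinatorial definition of the predicate (``consecutive in the restriction'') and its geometric reformulation (``the chord does not separate''), but this is immediate from the fact that a cyclic arrangement of four points admits only two combinatorial configurations, interleaving and non-interleaving, and these are distinguished precisely by whether one pair separates the other on the circle. The symmetries $\{ab\}\{cd\}\Leftrightarrow\{cd\}\{ab\}$ and $\{ab\}\{cd\}\Leftrightarrow\{ba\}\{cd\}$ noted just before the observation are then automatic consequences of this reformulation.
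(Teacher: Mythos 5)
Your proposal is correct, and since the paper states this as an immediate observation without giving a proof, your argument essentially spells out the reasoning the paper leaves implicit: with the cyclic order fixed, $\{ab\}\{xy\}$ holds exactly when the chord $\overline{ab}$ does not separate $x$ from $y$, and the claim is then just transitivity of the two-class ``same side of $\overline{ab}$'' relation applied to $c,d,e$. The only tacit assumption, which matches the paper's intended setting, is that $a,b,c,d,e$ are distinct edges at $v$, so that each of $c,d,e$ lies strictly on one of the two arcs determined by $a$ and $b$.
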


The restriction on  rotations at $v$ by the pair of an $s$-cap $P_1$ and $b$-cup $P_2$ is \emph{witnessed} by
an ordered pair  $(s,b)$, where $s<b$. We treat such pair as an interval in $\mathbb{N}$. \\
Let $I=\{(s,b)| \ (s,b) \mathrm{ \ witness \ a \ restriction \  on \ rotations \ at} \ v \mathrm{\ by  \ an  \ interleaving \ pair \
of \ paths} \}$.

\begin{observation}
\label{obs:growth}
If an $s$-cap $P$ contains $v$ then $P$ contains an $s'$-cap $P'$ containing $v$ as a sub-path for every $s'$ such that $s<s'<\gamma(v)$.
Similarly, if a $b$-cup $P$ contains $v$ then $P$ contains a $b'$-cup $P'$ containing $v$ as a sub-path for every $b'$ such that $\gamma(v)<b'<b$.
\end{observation}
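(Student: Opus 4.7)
The plan is to use a discrete intermediate value argument based on the strip clustered property of $(G,T)$: by definition, every edge joins vertices in the same or consecutive clusters, so adjacent vertices along any path have $\gamma$-labels differing by at most one. This is the whole engine of the proof.

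First I would fix the $s$-cap $P$ with endpoints $u,w$ of label $s$ and $v$ internal (so $\gamma(v)>s$ by the cap definition). Given $s'$ with $s<s'<\gamma(v)$, traverse $P$ starting at $v$ toward $u$, reading off the label sequence $\gamma(v)=a_0,a_1,\ldots,a_k=s$; since $|a_i-a_{i+1}|\le 1$ and $a_0>s'>a_k$, the value $s'$ must be attained. Let $u'$ be the first vertex on this traversal with $\gamma(u')=s'$, and define $w'$ analogously by traversing $P$ from $v$ toward $w$. The candidate sub-path is then $P'=u'Pvw'$, the portion of $P$ between $u'$ and $w'$ that passes through $v$.

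Next I would check that $P'$ is an $s'$-cap containing $v$. Its endpoints are distinct (they lie on opposite sides of the internal vertex $v$ on the simple path $P$) and both have label $s'$. To see that every internal vertex of $P'$ has label strictly greater than $s'$, observe that the choice of $u'$ forbids a label of $s'$ strictly between $v$ and $u'$; moreover, a label strictly less than $s'$ on that sub-path would, by the unit-step property, force the label sequence to have crossed $s'$ before reaching $u'$, contradicting minimality. The same holds on the $v$-to-$w'$ side. Since $\gamma(v)>s'=\gamma(u')=\gamma(w')$, the vertex $v$ is an internal vertex of $P'$. The second statement about $b$-cups is completely symmetric: walking from $v$ with $\gamma(v)<b'$ toward each endpoint of label $b>b'$, pick the first vertex of label $b'$ on each side and take the sub-path through $v$.

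There is no serious obstacle; the observation is a direct translation of the discrete intermediate value theorem into the language of strip clustered graphs. The only point requiring care is the exclusion of labels strictly below $s'$ (respectively, strictly above $b'$) on the interior of $P'$, which is precisely where the strip clustered unit-step assumption on labels of adjacent vertices is used.
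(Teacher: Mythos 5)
Your proof is correct: the discrete intermediate-value argument along the label sequence of $P$, using the fact that strip clustered edges join only the same or consecutive clusters, together with the choice of the \emph{first} vertex of label $s'$ on each side of $v$, is exactly the evident argument behind this statement, which the paper records as an observation without a written proof. Nothing further is needed.
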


\begin{observation}
\label{obs:interleave}
Let $s<s'<b<b'$, $s,s',b,b'\in \mathbb{N}$. If the set $I$ contains both $(s,b)$ and $(s',b')$,
it also contains $(s,b')$ and $(s',b)$.
\end{observation}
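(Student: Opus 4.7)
I would unpack the hypothesis as follows: for $(s,b)\in I$ fix an interleaving pair $(P_1,P_2)$ of an $s$-cap and a $b$-cup both passing through $v$, and for $(s',b')\in I$ an interleaving pair $(P_1',P_2')$ of an $s'$-cap and a $b'$-cup both passing through $v$. Since $v$ lies in the interior of each of the four paths, $s<\gamma(v)<b$ and $s'<\gamma(v)<b'$, which combined with $s<s'<b<b'$ yields the master chain $s<s'<\gamma(v)<b<b'$. Throughout I will use that $G$ is a (subdivided) tree, so any two paths meeting at $v$ automatically intersect in a connected sub-path and condition~(ii) of the definition of an interleaving pair is satisfied for free; only the range condition~(i) needs attention.

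For $(s,b')\in I$ I would take the pair $(P_1,P_2')$ unchanged. Here $\min(P_1)=s$ and $\max(P_2')=b'$; the interleaving inequalities of the second pair give $\min(P_2')>s'>s$, while those of the first pair give $\max(P_1)<b<b'$; and the middle inequality $\min(P_2')\le \gamma(v)\le \max(P_1)$ is forced by $v\in P_1\cap P_2'$.

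For $(s',b)\in I$ the symmetric naive choice $(P_1',P_2)$ does \emph{not} suffice, because neither $\min(P_2)>s'$ nor $\max(P_1')<b$ is on the table. The remedy is to appeal to Observation~\ref{obs:growth} twice: since $s<s'<\gamma(v)$ the $s$-cap $P_1$ contains an $s'$-cap $P_1^*$ through $v$ as a sub-path, and since $\gamma(v)<b<b'$ the $b'$-cup $P_2'$ contains a $b$-cup $P_2^{**}$ through $v$ as a sub-path. The pair $(P_1^*,P_2^{**})$ then interleaves: $\max(P_1^*)\le \max(P_1)<b$ and $\min(P_2^{**})\ge \min(P_2')>s'$ are inherited from the two given pairs, and the middle inequality $\min(P_2^{**})\le \gamma(v)\le \max(P_1^*)$ again comes for free because both paths contain $v$.

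The only point that deserves emphasis is which pair of paths to combine and which to shrink: the hard part will be recognizing that $(P_1',P_2)$ is the wrong combination, and that one must instead extract the required cap and cup from the \emph{other} two witnesses, $P_1$ and $P_2'$, in order to inherit the strict inequalities $\max(\cdot)<b$ and $\min(\cdot)>s'$. Once this structural observation is in place, the rest of the argument is routine bookkeeping with $\min$ and $\max$ values using Observation~\ref{obs:growth} and the tree structure of $G$.
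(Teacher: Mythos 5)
Your proposal is correct: taking $(P_1,P_2')$ unchanged to witness $(s,b')$, and shrinking $P_1$ and $P_2'$ via Observation~\ref{obs:growth} to an $s'$-cap and a $b$-cup through $v$ to witness $(s',b)$, is precisely the argument the paper intends — the observation is stated without proof, immediately after Observation~\ref{obs:growth}, which is placed there for exactly this purpose. Your bookkeeping of the interleaving inequalities (using $s'<\gamma(v)<b$ from $v$ lying in the interiors of the caps and cups) and the appeal to the tree structure for condition~(ii) fill in the same implicit steps.
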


We would like to reduce the question of determining if we can choose a rotation at $v$ making all the interleaving pairs feasible
to the following problem.
Let $S=\{e_1,\ldots, e_n\}$ of $n$ elements (corresponding to the edges incident to $v$).
Let $\mathcal{S}'=\{L_i',R_i'|\ i=1,\ldots \}$ of  polynomial size in $n$ such that $R_i',L_i'\subseteq S$ and  
$|L_i'|,|R_i'|\ge 2, \ L_{i+1}' \cup R_{i+1}' \subseteq L_i' \cup R_i'$.
  Can we cyclically order $S$ so that
both $R_i'$ and $L_i'$, for every  $R_i', L_i' \in \mathcal{S}'$, appear consecutively, when restricting the order to $R_i' \cup L_i'$?
Once we accomplish the reduction, we end up with the problem of testing the circular-ones property on matrices containing $0,1$ and $*$ as elements,
where each $*$ has only  $*$ symbol underneath. This problem is solvable in  polynomial time as we will see later.
We construct an instance for this problem which is a matrix $M=(m_{ij})$ as follows. The $i^\mathrm{th}$ row of $M$ corresponds to the pair $L_i'$ and $R_i'$ and each
column corresponds to an element of $S$.
For each pair $L_i',R_i'$ we have $m_{ij}=0$ if $j\in L_i'$, $m_{ij}=1$ if $j\in R_i'$, and $m_{ij}=*$ otherwise.
Note that our desired condition on $\mathcal{S}'$ implies that in $M$ each $*$ has only $*$ symbols underneath.
The equivalence of both problems is obvious.


In order to reduce our problem of deciding if a ``good'' rotation at $v$ exists, we first linearly
order intervals in $I$.
Let $(s_0,b_0)\in I$ be inclusion-wise minimal interval such that $s_0$ is the biggest and similarly let $(s_1,b_1)\in I$ be inclusion-wise minimal
 such that $b_1$ is the smallest one.
By  Observation~\ref{obs:interleave} we have $s_0=s_1$ and $b_0=b_1$.
Thus, let $(s_0,b_0)\in I$ be such that $s_0$ is the biggest and $b_0$ is the smallest one.
Inductively we relabel elements in $I$ as follows.
Let $(s_{i+1},b_{i+1})\in I$ be such that $s_{i+1}<s_i<b_i<b_{i+1}$ and subject to that condition $s_{i+1}$ is the biggest and $b_{i+1}$ is the smallest one. By  Observation~\ref{obs:interleave}
all the elements in $I$ can be ordered as follows \begin{equation}\label{eqn:order}{\bf (s_0,b_0)},
(s_{0,1},b_0),\ldots, (s_{0,i_0},b_0), (s_0,b_{0,1}),\ldots (s_0,b_{0,j_0}), {\bf (s_1,b_1)}, (s_{1,1},b_{1}) \ldots, (s_{1},b_{1,j_1}),  {\bf (s_2,b_2)}, \ldots\end{equation}
where $s_{k,i+1}<s_{k,i}<s_k$ and $b_{k,i+1}>b_{k,i}>b_k$.
For example, the ordering corresponding to the strip clustered graph in Figure~\ref{fig:goingUp} is
$(4,6),(3,6),(2,6),(4,7),(3,7),(2,7)$.
Let $E(s,b)$ and $E'(s,b)$, respectively, denote the set of all the edges incident to $v$
contained in an $s$-cap and $b$-cup, where $(s,b)\in I$. Thus, $E(s,b) \cup E'(s,b)$ contain edges incident to $v$
 contained in an interleaving pair that yields a restriction on rotations at $v$ witnessed by $(s,b)$.
Note that $E(s,b) \cap E'(s,b) = \emptyset$.
By Observation~\ref{obs:growth}, $E(s_{k,j+1},b_k)\subseteq E(s_{k,j},b_k)$ and $E'(s_k,b_{k,j+1})\subseteq E'(s_k,b_{k,j})$.
 The restrictions witnessed by $(s,b)$
correspond to the  following condition. In the rotation at $v$ the edges in $E(s,b)$ follow the edges in $E'(s,b)$.
Indeed, otherwise we have a four-tuple of edges $e_1,e_2,f_1$ and $f_2$ incident to $v$,
 such that $e_1,f_1\in P_1$ and $e_2,f_2\in P_2$, where $P_1$ and $P_2$ form an interleaving pair of an $s_i$-cap and $b_i$-cup,
violating the restriction $\{e_1f_1\}\{e_2f_2\}$ on the rotation at $v$.
However, such a four-tuple is not possible in an embedding by Theorem~\ref{thm:characterization}.

Let $L_i=E(s,b)$ and $R_i=E'(s,b)$, respectively, for $(s,b)\in I$, where $i$ is the
index of the position of $(s,b)$ in~$(\ref{eqn:order})$.
Note that $E(s_{i+1},b_{i+1})\cup E'(s_{i+1},b_{i+1}) \subseteq E(s_{i},b_{i})\cup E'(s_{i},b_{i})$.
Our intermediate goal of reducing our problem to the circular-ones property testing would be easy to accomplish if $I$ consisted only of intervals of the form $(s_i,b_i)$ defined above.
However, in $I$ there might be intervals of the form $(s_i,b)$, $b\not=b_i$, or $(s,b_i)$, $s\not=s_i$.
Hence, we cannot just put $L_i':=L_i$ and $R_i':=R_i$ for all $i$, since
we do necessarily have the condition $L_{i+1} \cup R_{i+1} \subseteq L_i \cup R_i$ satisfied for all $i$.


\paragraph{Definition of $\mathcal{S}'$.}
Let $\mathcal{S}=\{L_i,R_i|\ i=1,\ldots \}$. We obtain $\mathcal{S}'$ from $\mathcal{S}$ by deleting the least
number of elements from $L_i$'s and $R_i$'s so that
$L_{i+1}' \cup R_{i+1}' \subseteq L_i' \cup R_i'$ for every $i$.
More formally, $\mathcal{S}'$ is defined recursively as $\mathcal{S}_m'$, where
$\mathcal{S}_1'=\{L_1',R_1'| \ L_1'=L_1, \ R_1'=R_1 \}$ and $\mathcal{S}_{j}'=\mathcal{S}_{j-1}\cup \{L_j',R_j'| \ L_j'=L_j\cap (L_{j-1}'\cup R_{j-1}'),
 R_j'=R_j\cap (L_{j-1}'\cup R_{j-1}')\} $.
 Luckily, the following lemma lying at the heart of the proof of our result shows that information
contained in $\mathcal{S}'$ is all we need.

\bigskip
\begin{figure}[htp]

\centering
\includegraphics[scale=0.7]{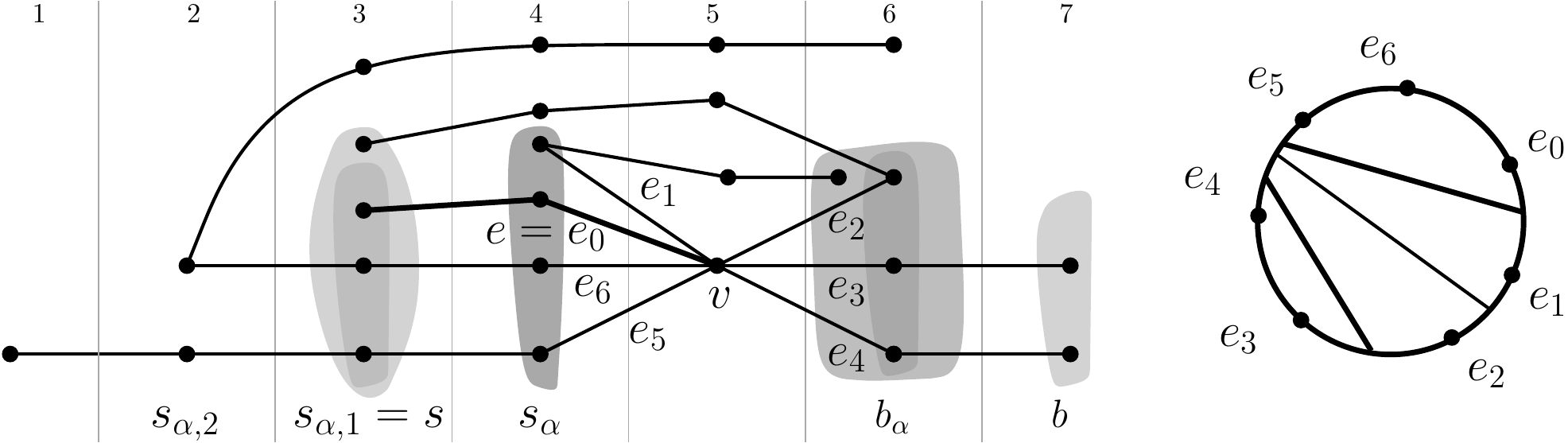}
\caption{A strip clustered subdivided star (on the left) with the center $v$, and some restrictions
  on the set of rotation at $v$ (on the right) corresponding to the intervals $(s_\alpha, b_\alpha), (s_{\alpha,1},b_\alpha)$ and $(s,b)$.
  We have $\{e_0,e_5,e_6\}=E(s_{\alpha,1},b_\alpha)\subseteq E(s,b)=\{e_0,e_2,e_5,e_6\}$ and $\{e_3,e_4\}=E'(s,b)\subseteq E'(s_{\alpha,1},b_\alpha)=\{e_3,e_4,e_1,e_2\}$.
Thus, by removing $e_0$ from $E(s,b)$ we obtain the same restrictions on the rotation at $v$.}
\label{fig:goingUp}
\end{figure}

\begin{lemma}
\label{lemma:star}
We can cyclically order the elements in $S$ so that every pair $L_i',R_i'$ in $\mathcal{S}'$ gives rise to two disjoint cyclic intervals
if and only if $(G,T)$ is strip planar.
\end{lemma}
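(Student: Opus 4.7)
The lemma claims an equivalence between the existence of a cyclic order of $S$ satisfying the (weakened) $\mathcal{S}'$-constraints and strip planarity of $(G,T)$. I would prove the two directions separately.

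For the forward implication, I would start from a given strip-clustered embedding of $(G,T)$. By Theorem~\ref{thm:characterization}, no unfeasible interleaving pair exists in this embedding, so the rotation $\rho$ at $v$ inherited from it satisfies the constraint $\{e_1 f_1\}\{e_2 f_2\}$ for every interleaving cap--cup pair through $v$. As already noted in Section~\ref{sec:alg}, collectively these constraints imply that for each $(s,b) \in I$ the sets $L_i = E(s,b)$ and $R_i = E'(s,b)$ form two disjoint cyclic intervals in the restriction of $\rho$ to $L_i \cup R_i$. Taking the subsets $L_i' \subseteq L_i$ and $R_i' \subseteq R_i$ preserves this, so $\rho$ is the required ordering of $S$.

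For the backward implication, suppose we have a cyclic order $\rho$ of $S$ such that every $(L_i',R_i')\in\mathcal{S}'$ forms two disjoint cyclic intervals. By Lemma~\ref{lemma:cap-cupG}, in order to establish strip planarity of $(G,T)$ with rotation $\rho$ at $v$, it suffices to verify that every interleaving pair of an $s$-cap and $b$-cup meeting $v$ in their interior is feasible under $\rho$. Each such pair gives rise to a constraint $\{e_1 f_1\}\{e_2 f_2\}$ with $e_1,f_1 \in L_i$ and $e_2,f_2 \in R_i$, where $(s,b)\in I$ is the witness at position $i$ in the ordering~\ref{eqn:order}; this constraint is satisfied automatically once one knows that $L_i$ and $R_i$ themselves split into two disjoint cyclic intervals under $\rho$. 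Consequently, the backward implication reduces to proving that the $\mathcal{S}'$-condition on $\rho$ already implies the stronger $\mathcal{S}$-condition.

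I would establish this reduction by induction on the position $i$ of $(s,b)$ in the ordering~\ref{eqn:order}. The base case $i=1$ is immediate from $L_1 = L_1'$ and $R_1 = R_1'$. For the inductive step, the task is to place the ``new'' edges of $L_i \setminus L_i' \subseteq L_i \setminus (L_{i-1}' \cup R_{i-1}')$, and symmetrically of $R_i \setminus R_i'$, on the correct side of the partition of $L_i \cup R_i$ induced by $\rho$. For a new edge $e \in L_i \setminus L_i'$, Observation~\ref{obs:growth} shows that $e$ is the edge at $v$ of an $s'$-cap for every $s'$ with $s < s' < \gamma(v)$; Observation~\ref{obs:interleave} (the closure property of $I$) then produces witnesses $(s'',b'')\in I$ of strictly smaller index at which $e$ already appears. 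Feeding these facts into the algebraic law of Observation~\ref{obs:alegebera} allows one to transfer the position of $e$ relative to elements of $R_i$ from the inductively verified $\mathcal{S}$-constraints at indices $< i$ to the index $i$, showing that $e$ lies on the $L_i$-side.

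The main obstacle is this last inductive step: carefully tracking which edges enter which constraints at which positions, and chaining together pairwise non-alternation predicates via Observation~\ref{obs:alegebera} so that every ``newly appearing'' edge of $L_i$ (respectively $R_i$) is pinned to the $L$-interval (respectively $R$-interval). The structure that makes this work is exactly the specific total order on $I$ encoded in~\ref{eqn:order}, together with the nesting $L_{i+1}' \cup R_{i+1}' \subseteq L_i' \cup R_i'$ built into the definition of $\mathcal{S}'$; once the inductive bookkeeping is completed, Lemma~\ref{lemma:cap-cupG} immediately yields strip planarity of $(G,T)$.
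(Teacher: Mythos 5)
Your overall skeleton is the same as the paper's: the easy direction from an embedding via Theorem~\ref{thm:characterization}, and the hard direction by arguing that an order satisfying the trimmed constraints $\mathcal{S}'$ already satisfies the full constraints $\mathcal{S}$, after which Lemma~\ref{lemma:cap-cupG} finishes. But the inductive step you describe is exactly the mathematical content of the lemma, and you do not actually carry it out -- you state that Observations~\ref{obs:growth} and~\ref{obs:interleave} ``produce witnesses of strictly smaller index at which $e$ already appears'' and that ``feeding these facts into Observation~\ref{obs:alegebera} allows one to transfer the position of $e$,'' and then you yourself flag this bookkeeping as the main obstacle. That transfer does not follow from the existence of an earlier witness containing $e$ alone. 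Observation~\ref{obs:alegebera} chains two predicates only when they share the pair $\{ab\}$ and a common third element; so to pin a trimmed edge $e\in L_j\setminus L_j'$ against a pair $e_1',e_2'\in R_j$ you need (i) the containments $E(s,b_\alpha)\subseteq E(s,b)$ and $E'(s,b)\subseteq E'(s,b_\alpha)$ (relation~(\ref{eqn:subset1}) in the paper), which guarantee that $e_1',e_2'$ also lie on the correct side of the \emph{earlier} full constraint witnessed by $(s,b_\alpha)$, and (ii) an anchor edge $g$ that lies both in that earlier constraint's cap side together with $e$ and in the already-verified part of the current constraint (i.e.\ $g$ has not been trimmed before $e$). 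Only then does the chain ``$\{e_1'e_2'\}\{eg\}$ holds by the earlier constraint, $\{e_1'e_2'\}\{fg\}$ holds by the current partial constraint, hence $\{e_1'e_2'\}\{ef\}$ holds'' go through.

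The existence of such a $g$ is precisely the delicate point, and it is absent from your proposal. The paper needs a genuine argument here: either some edge of $E(s_{\alpha,\beta},b_\alpha)$ survives in the current trimmed set, or, if all of them were trimmed earlier than $e$, one shows (using the structure of the trimming order along~(\ref{eqn:order}) and the monotonicity of the $s_i$'s and $b_i$'s, cf.\ the discussion around Figure~\ref{fig:xAxis1}) that one can instead take $g$ from a set $E(s'',b'')$ appearing even earlier, chosen to minimize the cluster reached; without this case analysis the chaining can fail because every candidate anchor might itself be missing from $S_{j,k-1}$. Relatedly, your induction is only over the index $i$ of~(\ref{eqn:order}); the paper additionally inducts over the new edges one at a time (the inner loop with $S_{j,k}=S_{j,k-1}\cup\{e\}$, choosing $e$ maximizing the last index at which it was present), and this ordering is used to conclude that the anchor $g$ indeed belongs to $S_{j,k-1}$. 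So while your strategy is the right one, the proof as proposed has a genuine gap at its core step: the claimed ``transfer'' via Observation~\ref{obs:alegebera} is asserted, not established, and establishing it requires the subset relation~(\ref{eqn:subset1}), the one-edge-at-a-time refinement, and the nontrivial existence argument for the untrimmed anchor edge.
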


\begin{proof}
The proof of the lemma is by a double-induction. In the ``outer--loop'' we induct over $|\mathcal{S}'|/2$ while respecting the order of pairs
 $L_{i},R_{i}$ given by~(\ref{eqn:order}). In the ``inner--loop'' we induct over the size of $S$, where in the base
 case of the $j^\mathrm{th}$ step of the ``outer--loop''
  we have $S_{j,0}=L_{j}' \cup R_{j}'$. In each ${k}^\mathrm{th}$ step of the ``inner--loop''
 we assume by induction hypothesis that a cyclic ordering $\mathcal{O}$ of $S$
 satisfies all the restrictions imposed by $\{L_{i}, R_{i}| i=1,\ldots, j-1\}$ and
 $L_j \cap S_{j,k-1}, R_j \cap S_{j,k-1}$. Clearly, once we show that $\mathcal{O}$ satisfies restrictions imposed by $L_j \cap S_{j,k}, R_j \cap S_{j,k}$,
 where $S_{j,k}=S_{j,k} \cup \{e\}$ and  $e\in (L_{j} \cup R_{j})\setminus S_{j,k-1}$ we are done.

Refer to Figure~\ref{fig:goingUp}.
Roughly speaking, by~(\ref{eqn:order}) a  ``problematic'' edge $e$ is an initial edge on a path starting at $v$ that never visits a cluster $b_\alpha$ after passing through the cluster $s_{\alpha}$ such that $e\in E(s_\alpha, b_\alpha)$ (or vice-versa with $E'(s_\alpha, b_\alpha)$).
 The edge $e$ is an \emph{$(\alpha,\beta)$-lower trim} (or $(\alpha,\beta)$-\ding{33}) if the lowest index $i$ for which $e\not\in L_i' \cup R_i'$ corresponds to $E(s_{\alpha,\beta},b_\alpha)\cup E'(s_{\alpha,\beta},b_\alpha)$, where $\beta >0$.
 Analogously, the edge $e$ is an \emph{$(\alpha,\beta)$-upper trim} (or $(\alpha,\beta)$-\ding{35}) if the lowest index $i$ for which $e\not\in L_i' \cup R_i'$ corresponds to $E(s_\alpha, b_{\alpha,\beta})\cup E'(s_\alpha, b_{\alpha,\beta})$, where $\beta >0$.
By~(\ref{eqn:order}) and symmetry (reversing the order of clusters) we can assume that $e$ is an $(\alpha, \beta)$-\ding{33}, and $e\in E(s_{\alpha,\beta-\beta'},b_\alpha)$, for some $\beta'>0$,
 where $s_{\alpha,0}=s_\alpha$, and $e\in E(s,b)=L_j$, where $s=s_{\alpha,\beta-\beta'}$ and $b>b_\alpha$, following $E(s_{\alpha,\beta},b_\alpha)$ in our order.
Moreover, we pick $e$ so that $e$ maximizes $i$ for which $e\in L_i' \cup R_i'$. We say that $e$ was ``trimmed'' at the $(i+1)^{\mathrm{th}}$ step.

\begin{figure}[htp]

\centering
\includegraphics[scale=0.7]{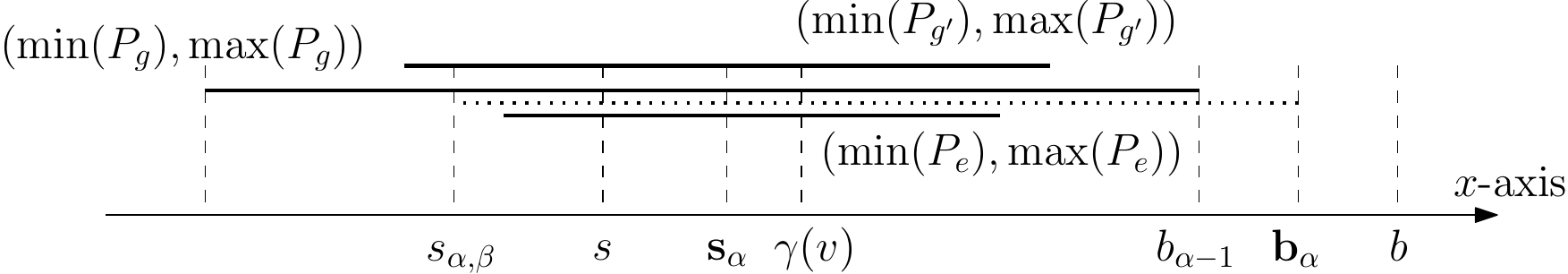}
\caption{Three intervals of clusters corresponding to three paths that start at $v$: $P_e$ that  passes through $e$ and ends in the first vertex in
 the cluster  $s_{\alpha,\beta-1}$, $P_{g'}$ that passes through $g'$ and ends in a leaf, and $P_{g}$ that ends in the first
vertex of the cluster $s''$. (An alternative interval for $P_{g}$ is dotted.) Here, $g'$ was ``trimmed'' before $e$.}
\label{fig:xAxis1}
\end{figure}

 Thus, $e$ is contained in $E(s,b)$ for some $s,b$ such that $E(s,b), E'(s,b)$
follows \\ $E(s_{\alpha,\beta},b_\alpha), E'(s_{\alpha,\beta},b_\alpha)$ in our order. However, it must be that
\begin{equation}
\label{eqn:subset1}
E(s_{\alpha,\beta-\beta'}=s,b_\alpha)\subseteq E(s,b) \  \mathrm{and} \  E'(s,b)\subseteq E'(s_{\alpha,\beta-\beta'}=s,b_\alpha),
\end{equation}
where the first relation follows directly from the fact $b>b_\alpha$ and the second relation is a direct consequence of Observation~\ref{obs:growth}.
In what follows we show that~(\ref{eqn:subset1}) implies that $\mathcal{O}$ satisfies all the required restrictions involving $e$.
We consider an arbitrary four-tuples of edges $e_1',e_2',e_3' \in S_{j,k-1}$ that together with $e$
gives rise to a restriction $\{e_1'e_2'\}\{e_3'e\}$ on $\mathcal{O}$ witnessed by $(s,b)$.
 The incriminating four-tuple must also contain
an element from $E(s,b)\setminus E(s,b_\alpha)$, let us denote it by $f=e_3'$.
Indeed, otherwise by~(\ref{eqn:subset1}) the restriction is witnessed by~$(s,b_\alpha)$ and we are done by induction hypothesis.
 Then $e_1',e_2'\in E'(s,b)$.
For the sake of contradiction we suppose that the order $\mathcal{O}$  violates the restriction $\{e_1'e_2'\}\{ef\}$.
Let $g\in L_{i'}'\subseteq E(s_{\alpha,\beta},b_{\alpha})$, for some $i'$. Note that $g$ exists (see Figure~\ref{fig:xAxis1}) for
if an edge $g' \in  E(s_{\alpha,\beta},b_{\alpha})$ is not in $L_{i'}'$ it means that $g'$ was ``trimmed'' before
$e$ and we can put $g$ to be an arbitrary element from $E(s'',b'')$ minimizing $s''$ appearing before
$E(s_{\alpha},b_{\alpha})$ in our order.

Here, the reasoning goes as follows.
Let $P_{g'}$ denote the path from $v$ passing through $g'$ and ending in a leaf.
Recall that $s_i$'s are decreasing and $b_i$'s are increasing
as $i$ increases. Thus, if we ``trimmed'' $g'$ before $e$, it had to be a \ding{33} by $s_{\alpha,\beta}<s_{\alpha}$, but then
there exists a path starting at $v$ that reaches a cluster with a smaller index than is reached by $P_{g'}$
before reaching even the cluster $b_{\alpha-1}<b_{\alpha}$.
Note that the edge $g$ can be also chosen as an edge in $E(s_{\alpha,\beta},b_{\alpha})$ minimizing $i$ such that the path starting at $v$
passing through $g$ has a vertex in the $i^\mathrm{th}$ cluster. This choice of $g$ plays a crucial role in our proof of the extension of the lemma
for trees.

 Thus, $g\in S_{j,k-1}$ by the choice of $e$, since $e\not\in L_{i'}'$.
Note that $g\in E(s,b_\alpha)$, and hence, $g\in E(s,b)$ by~(\ref{eqn:subset1}).
By Observation~\ref{obs:alegebera}
a restriction $\{e_1'e_2'\}\{fg\}$ is violated as well due to the restriction
 $\{e_1'e_2'\}\{eg\}$, that $\mathcal{O}$ satisfies by induction hypothesis,  witnessed by $(s,b_\alpha)$.  However, by~(\ref{eqn:subset1})  $\{e_1'e_2'\}\{fg\}$
is  witnessed by $(s,b)$ and we reach a contradiction with induction hypothesis.
\end{proof}

By Lemma~\ref{lemma:star} we successfully reduced our question to the problem stated above.
The  problem slightly generalizes the algorithmic question considered by  Hsu and McConnell~\cite{HC03} about testing  0--1 matrices for circular ones property.
An almost identical problem
of testing 0--1 matrices for consecutive ones property was already considered
by Booth and Lueker~\cite{Booth1976335} in the context of interval and planar graphs' recognition.
A matrix has the \emph{consecutive ones}  property if it admits
a permutation of columns resulting in a matrix in which ones  are consecutive in every row.
Our generalization is a special case of the related problem of simultaneous PQ-ordering considered recently by  Bl{\"{a}}sius and Rutter~\cite{BR14}.
In our generalization we allow some elements in the matrices to be ambiguous, i.e., they are allowed to play the roles of both zero or one.
However, we have the property that an ambiguous symbol can have only ambiguous symbols underneath in the same column.

The original algorithm in~\cite{HC03} processes the rows of the 0--1 matrix in an arbitrary order one by one.
In each step the algorithm either outputs that the matrix does not have the circular ones property and stops,
 or produces a data structure called the \emph{PC-tree} that stores \emph{all the
permutations} of its columns witnessing the circular ones property for the matrix consisting of the processed rows.
(The notion of PC-tree is a slight modification of the well-known notion of PQ-tree.)
The columns of the matrix corresponding to the elements of $S$ are in a one-to-one correspondence with the leaves of the PC-tree,
 and a PC-tree produced at every step
is obtained by a modification of the PC-tree produced in the previous step.
Let $\mathcal{Q}_i$ denote the set of permutations captured by the PC-tree after we process the first $i$ rows of the matrix.
Note that $\mathcal{Q}_{i+1}\subseteq \mathcal{Q}_i$.
By deleting some leaves from a PC-tree $T$ along with its adjacent edges we get a PC-tree $T'$
such that $T'$ captures exactly the permutations captured by $T$ restricted to their undeleted leaves.

The original algorithm in~\cite{HC03} runs in a linear time (in the number of elements of the matrix)  The straightforward cubic running time of our algorithm can be improved to a quadratic one.

\paragraph{Running time analysis.}
Let $l$ denote the degree of the center $v$ of the star $G$. 
Let $l_1<\ldots < l_{k-1}$ denote the lengths of paths ending in leaves in $G$ starting at $v$.
Thus, for each $l_i$ there exists such a path of length $l_i$ in $G$.
Let $l_i'$ denote the number of such paths starting at $v$ of length $l_i$.
The number of vertices of $G$ is $n=1+\sum_{i=0}^{k-1}l_il_i'$.
Let $l=\sum_{i=0}^{k-1}l_i$. Let $l'=\sum_{i=0}^{k-1}l_i'$.

Note that a path of length at most $l_i$ cannot ``visit'' more than $l_i$ clusters.
Thus, the number of 0's and 1's in the matrix corresponding to $(G,T)$ is upper bounded
by  $O\left(\sum_{i=0}^{k-1}\left(l'-\sum_{j=0}^{i-1}l_i'\right)l_i^2\right)$.
Indeed, each row of the matrix correponds to a pair of clusters 
and we have  $l'-\sum_{j=0}^{i-1}l_i'$ paths of length at least $l_i$.

In order to obtain a quadratic (in $n$) running time we need to 
upper bound the previous expression by $(\sum_{i=0}^{k-1}l_il_i')^2<n^2$.

We have the following 
 
 $$\sum_{i=0}^{k-1}\left(l'-\sum_{j=0}^{i-1}l_i'\right)l_i^2 \leq l\sum_{i=0}^{k-1}l_il_i' \leq \left(\sum_{i=0}^{k-1}l_il_i'\right)^2$$
 
 where the second inequality is obvious.
To show the first one we proceed as follows.

Consider the region $R$ of the plane bounded by the part of $x$-axis
between $(0,0)$ and $(l',0)$; a vertical line segment
from $(l',0)$ to $(l',l_{k-1})$; and a ``staircase polygonal line'' 
from $(l',l_{k-1})$ to $(0,0)$ with horizontal segments of lengths
$l_{k-1}',l_{k-2}',\ldots, l_0'$ and vertical segments of lenghts
$l_{k-1}-l_{k-2}, l_{k-2}-l_{k-3}, \ldots ,l_1-l_0,l_0$.
Thus, the polygonal line has vertices  \\ $(0,0), (0,l_0),(l_0',l_0), (l_0',l_1), (l_0'+l_1',l_1), \ldots, (l'-l_{k-1}', l_{k-1}), (l', l_{k-1})$.
Let $\mathcal{V}$ denote a three-dimensional set living in the Euclidean three-space obtained 
as a product of $R$ with the interval $[0,l]$ of length $l$ so that
 $\mathcal{V}$ vertically projects to $R$, and we assume that the base $R\times 0$ 
is contained in the $xy$-plane, and the rest of $\mathcal{V}$ is above this plane.
Note that the volume of $\mathcal{V}$ is exactly $ l\sum_{i=0}^{k-1}l_il_i'$.

\begin{figure}[htp]
\centering
\includegraphics[scale=0.7]{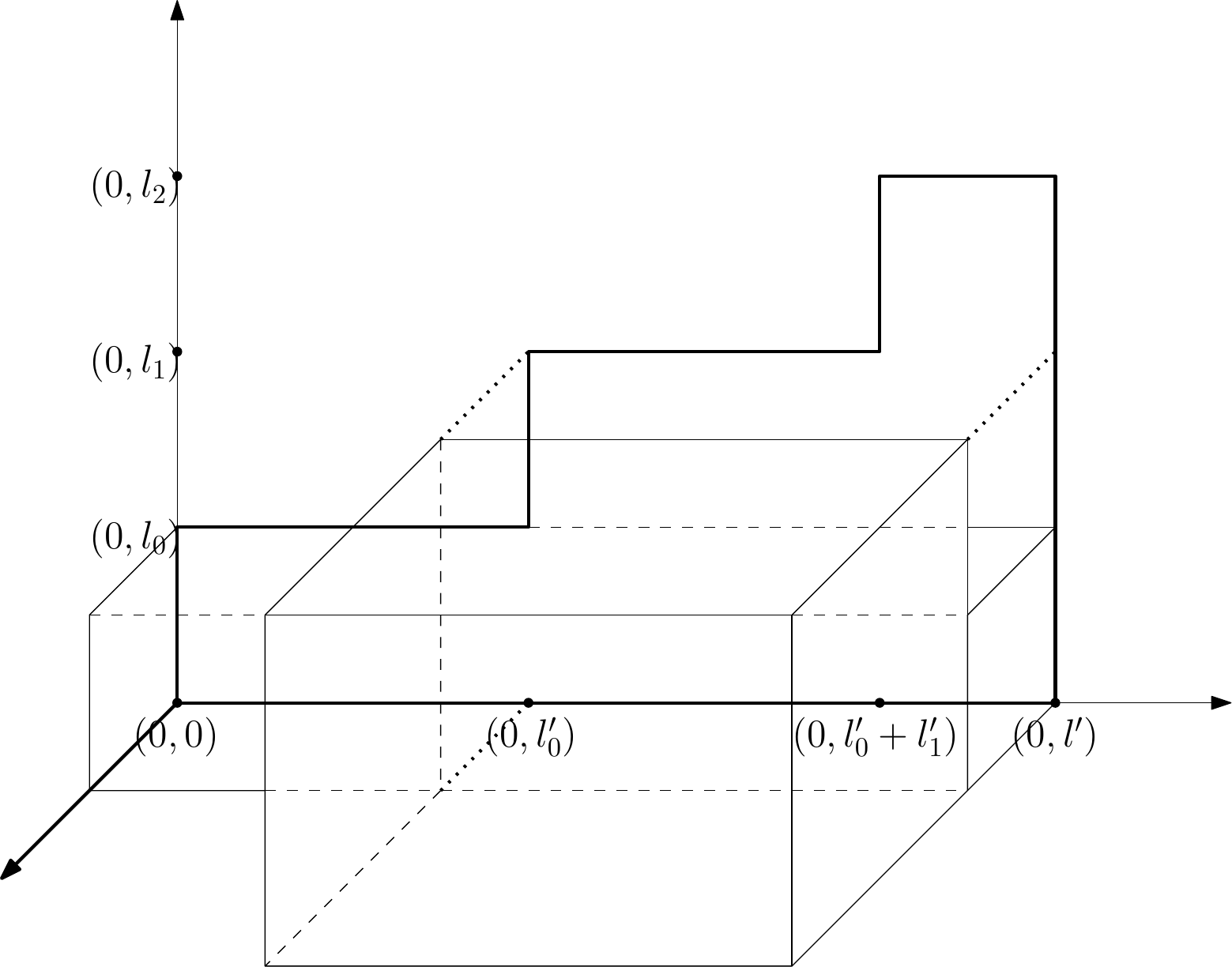}
\caption{The packing of boxes inside $\mathcal{V}$. Only two boxes out of three are shown.}
\label{fig:analysisByBoxes} 
\end{figure}

The expression $\sum_{i=0}^{k-1}\left(l'-\sum_{j=0}^{i-1}l_i'\right)l_i^2$
can be viewed as the sum of volumes of $k$ three-dimensional boxes with integer coordinates. 
Now, it is enough to pack the boxes inside $\mathcal{V}$.
We put the $i^\mathrm{th}$ box with dimensions $\left(l'-\sum_{j=0}^{i-1}l_i'\right)\times l_i \times l_i$ in an axis parallel fashion inside 
$\mathcal{V}$ such that its lexicographically smallest vertex has
coordinates $\left(\sum_{j=0}^{i-1}l_j', 0, \sum_{j=0}^{i-1}l_j\right)$.
It is a routine to check that the boxes are pairwise disjoint and contained in
$\mathcal{V}$ (see Figure~\ref{fig:analysisByBoxes} for an illustration). \\

Every permutation witnessing the circular ones property for the matrix consisting of the processed rows is  obtained as
the topological order of the leaves of the PC-tree in one of its \emph{allowed embeddings} in the plane.
The PC-tree has two types of internal leaves, namely $P$ and $C$. A node of type $P$ allows any order of its adjacent
edges in its rotation in an allowed embedding. A node of type $C$ allows only a prescribed order of its adjacent
edges in its rotation up to the choice of orientation in an allowed embedding.

\paragraph{Constraints of a PC-tree.}
A \emph{constraint} of a PC-tree $T$ is the set of leaves of
 $T$ that must appear consecutively in every allowable ordering. By splitting $T$ using a cut edge
  we obtain an \emph{edge constraint}, called \emph{edge module} in~\cite{HC03}. Analogously we define a \emph{constraint} of a 0-1 matrix $M$ as a subset of its columns
 corresponding to a constraint of its associated PC-tree.
 A 0-1 vector $\rho$ of dimension $n$ is a \emph{constraint vector} of an $n$ by $m$ matrix $M$ if the set of indices of the components of $\rho$ containing zeros (or ones) gives rise to a constraint of $M$. 

\paragraph{Matrix of constraints.}
Let $M=(m_{ij})$ denote an $n$ by $m$ matrix with ambiguous symbols 0,1 and $*$ such that each $*$ has only $*$'s underneath in the same column.
The \emph{depth} of the $j^{\mathrm{th}}$ column in $M$ is $\max \{ i| \ m_{ij}\in\{0,1\}\}$.
Let $C_i$ denote the set of columns of depth at least $i$.
Note that $C_m\subseteq C_{m-1} \subseteq \ldots \subseteq C_1$.
Let $M[i]$ denote the matrix consisting of the first $i$ rows of $M$.
Let $M_i'$ denote the matrix obtained from $M_i'$ by adding to $M[i]$ all constraint vectors of $M[1], \ldots ,M[i]$ (not already appearing in $M[i]$).
Let $M_i$ denote the restriction of $M_i'$ onto the columns of $C_i$.
In the light of Lemma~\ref{lemma:star} the following theorem yields Theorem~\ref{thm:ahahah} as a corollary if the tree is a subdivided star.
The first part of the theorem is needed for proving our variant of the Hanani--Tutte theorem.

\begin{theorem}
\label{thm:matrices}
$M$ has circular ones property if and only if all $M_i$'s have circular ones property.
There exists an algorithm to test if all $M_i$'s have circular ones property running in a quadratic time (in $|V(G)|$).
\end{theorem}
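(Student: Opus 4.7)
For the ``only if'' direction, given a cyclic ordering $\sigma$ of $\{1,\ldots,m\}$ witnessing the circular-ones property of $M$, its restriction $\sigma|_{C_i}$ witnesses the property for $M_i$. For each row of $M[i]$ its 0's (or 1's) lie in $C_i$ and are consecutive in $\sigma$ modulo $*$'s, which is exactly the statement that they are cyclically consecutive in $\sigma|_{C_i}$; the same reasoning transfers each constraint vector of $M[j]$, $j\le i$, to $\sigma|_{C_i}$, since by definition such a vector encodes a subset forced to be consecutive by (some subset of) the rows of $M[j]$ and therefore by any valid ordering of $M$.

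For the ``if'' direction I proceed by downward induction on $i$, using PC-trees. Let $T_i$ be a PC-tree on leaves $C_i$ whose allowed cyclic orderings are exactly those witnessing the circular-ones property of $M_i$, which exists by hypothesis. The main claim is that every ordering allowed by $T_i$ is also allowed by $T_{i-1}|_{C_i}$, the restriction of $T_{i-1}$ obtained by deleting the leaves in $C_{i-1}\setminus C_i$. This holds because the constraints defining $T_{i-1}|_{C_i}$ (the rows of $M[i-1]$ and the constraint vectors of $M[j]$, $j\le i-1$, all restricted to $C_i$) form a sub-collection of the constraints defining $T_i$ (which further include row $i$ and the new constraint vectors of $M[i]$, all restricted to $C_i$). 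By the standard PC-tree extension property, any ordering of $C_i$ allowed by $T_{i-1}|_{C_i}$ extends to an ordering of $C_{i-1}$ allowed by $T_{i-1}$. Starting with any allowed ordering $\sigma_n$ of $T_n$ and iteratively extending $\sigma_i$ to $\sigma_{i-1}$ allowed by $T_{i-1}$ with $\sigma_{i-1}|_{C_i}=\sigma_i$, I reach a cyclic ordering $\sigma_1$ of all columns. Since $\sigma_1|_{C_i}=\sigma_i$ is allowed by $T_i$, the 0's (or 1's) of row $i$ of $M$ are consecutive modulo $*$ in $\sigma_1$, so $\sigma_1$ witnesses the circular-ones property of $M$.

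For the algorithmic part I would not build each $M_i$ explicitly; instead I maintain a single PC-tree initialized as the trivial one on all $m$ columns and process the rows of $M$ from top to bottom, applying each row's consecutiveness constraint (which only touches columns of $C_i$, the remaining columns of smaller depth being already-placed leaves that are untouched). Correctness follows from the equivalence above: the tree never becomes empty iff $M$ has the circular-ones property iff every $M_i$ does. Using the Hsu--McConnell~\cite{HC03} implementation with amortized constant cost per 0/1 entry, and the box-packing bound preceding the theorem which shows that $M$ has $O(|V(G)|^2)$ non-$*$ entries for a subdivided star $G$ (and analogously for a general tree in Section~\ref{sec:tree}), the total running time is $O(|V(G)|^2)$.

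The main obstacle is verifying rigorously the sub-collection claim in the ``if'' direction: namely that the constraints defining $T_{i-1}|_{C_i}$ really do coincide with the restrictions to $C_i$ of the rows of $M[i-1]$ together with the constraint vectors of $M[j]$, $j\le i-1$. This amounts to a stability-under-restriction property of PC-tree constraints, familiar from the PQ-tree theory of Booth and Lueker~\cite{Booth1976335}, but careful bookkeeping is required because of the $*$-entries in $M$ and because constraint vectors of different prefixes $M[j]$ can interact in non-obvious ways once projected to the common column set $C_i$.
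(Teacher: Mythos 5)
Your overall architecture is the same as the paper's: the easy ``only if'' direction by restriction, a downward induction that extends a witnessing ordering of the columns $C_i$ to one of $C_{i-1}$ via PC-trees, and a single-pass algorithm that processes the rows of $M$ with one PC-tree while deleting the leaves that become ambiguous, with the quadratic bound coming from the count of non-$*$ entries. The extension step you invoke (any ordering allowed by a leaf-restricted PC-tree extends to an ordering allowed by the full tree) is the paper's easy observation and is fine.

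The step you call the ``main claim'' is, however, exactly the nontrivial content of the theorem, and you do not prove it; you assert it and then flag it yourself as ``the main obstacle.'' You claim that every ordering allowed by $T_i$ is allowed by $T_{i-1}|_{C_i}$ because the constraints ``defining'' $T_{i-1}|_{C_i}$ are the rows of $M[i-1]$ and the constraint vectors of $M[j]$, $j\le i-1$, restricted to $C_i$, hence a sub-collection of the rows of $M_i$. But a leaf-restriction of a PC-tree is not, in general, the PC-tree generated by the restrictions of the rows that defined the original tree: deleting leaves creates derived consecutivity requirements (with rows forcing $\{a,b\}$ and $\{b,c\}$ consecutive, deleting the column $b$ forces $\{a,c\}$ consecutive, which is the restriction of the tree constraint $\{a,b,c\}$, not of any row). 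The definition of $M_i$ adds constraint vectors precisely to absorb such derived constraints, but the statement that they suffice --- equivalently, that the tree of $M_i$ coincides with the tree obtained by running the PC-tree algorithm on $M$ itself with interleaved leaf deletions (the paper's equivalence between $T_i$ and $T_i'$) --- is what the paper establishes by a step-by-step induction through the Hsu--McConnell update: splitting along the terminal path, introducing the new $C$-node, and exhibiting explicit families of constraint vectors ($C_s^B$, $C_s^T$, $D_s^B$, $D_s^T$, built from the children-leaf sets around that $C$-node) whose application reproduces the restricted tree. Without an argument of this kind the inclusion of the allowed orderings of $T_i$ in those of $T_{i-1}|_{C_i}$ is unsupported --- if the projection enforced a constraint not implied by the rows of $M_i$, your chain of extensions would break --- and the same equivalence is also what justifies the correctness of your single-pass algorithm as a test for all $M_i$; so both halves of the statement remain open in your write-up.
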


\begin{proof}
The ``only if'' direction is obvious.
We prove the ``if'' direction by induction on the number of columns in a considered matrix..
 In the base case we have that $M_m$
has circular ones property, and thus, $M$ restricted to the columns of $M_m$ has circular ones property.
Assuming that $M$ restricted to the columns of $M_i$ has the circular ones property we prove that $M$
restricted to the columns of $M_{i-1}$ has circular ones property.
Let $\mathcal{P}_i$ denote the set of all permutations of $C_i$ witnessing that $M$ restricted to the columns of $C_i$ has circular ones property.
We show that every permutation in $\mathcal{P}_i$ can be extended to a permutation of $C_{i-1}$ witnessing that
$M$ restricted to the columns of $C_{i-1}$ has circular ones property. Clearly, once we show this  induction goes through.

We proceed by analysing the PC-tree algorithm from~\cite{HC03}.
We slightly alter the algorithm so as to suit our purpose. We process the rows of $M$ in the order according to their indices.
 After each step we delete the leaves from the tree
corresponding to the columns containing ambiguous symbols in the rows processed in the subsequent steps.
Let $T_i'$ denote the PC-tree we obtain after processing the first $i$ rows of $M$ and before deleting the leaves corresponding to the columns of depth
$i$. Let $T_i$ the tree corresponding to $M_i$.
The key observations are that (i) $T_i$ is equivalent to $T_i'$.
and that (ii) every ordering of $C_i$ captured by $T_i'$ can be
extended to an ordering of $C_{i-1}$ captured by $T_{i-1}'$.
By the equivalence we mean that $T_i$ and $T_i'$ can differ only in nodes of degree two that could be suppressed or edges that could be contracted without
changing the set of permutations associated with the tree.

Regarding (ii) we just note that every ordering $o$ captured by $T_i'$ is obtained as a restriction of an ordering captured by  $T_{i-1}'$, since
there exists an allowed embedding of $T_{i-1}$, whose restriction to the leaves of $T_i'$ is $o$. Indeed,
an ordering captured by $T_i'$ is captured also by a tree obtained from $T_{i-1}'$ by deleting leaves not corresponding to the elements of $C_i$. \\

In order to obtain (i) we analyse the algorithm~\cite[Section 4]{HC03}. Also for omitted details we refer reader to~\cite{HC03}.
Let $T_{i,j}$ denote the PC-tree corresponding to the first $j$ rows of $M_i$ and rows of $M_i$
 containing constraints of $M[1], \ldots ,M[j-1]$ (restricted to the columns of $M_i$).
Note that $T_{i,i}=T_i$.
 We show by induction on the number of processed rows that $T_{i,j}$ is equivalent to $T_j'$ when restricted just to the leaves corresponding to the columns of $M_i$.

\begin{figure}[t]
\centering
\subfigure[]{
\includegraphics[scale=0.7]{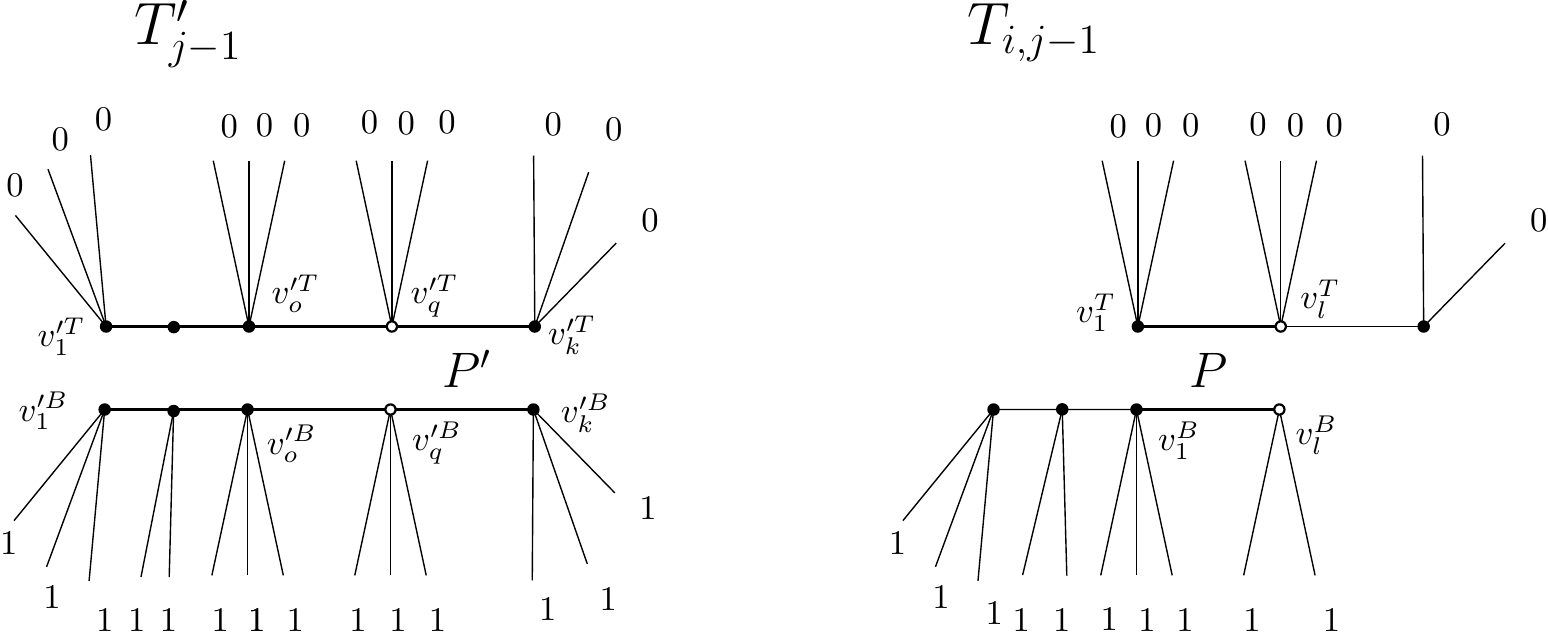}
\label{fig:pctree1}}
\subfigure[]{
\includegraphics[scale=0.7]{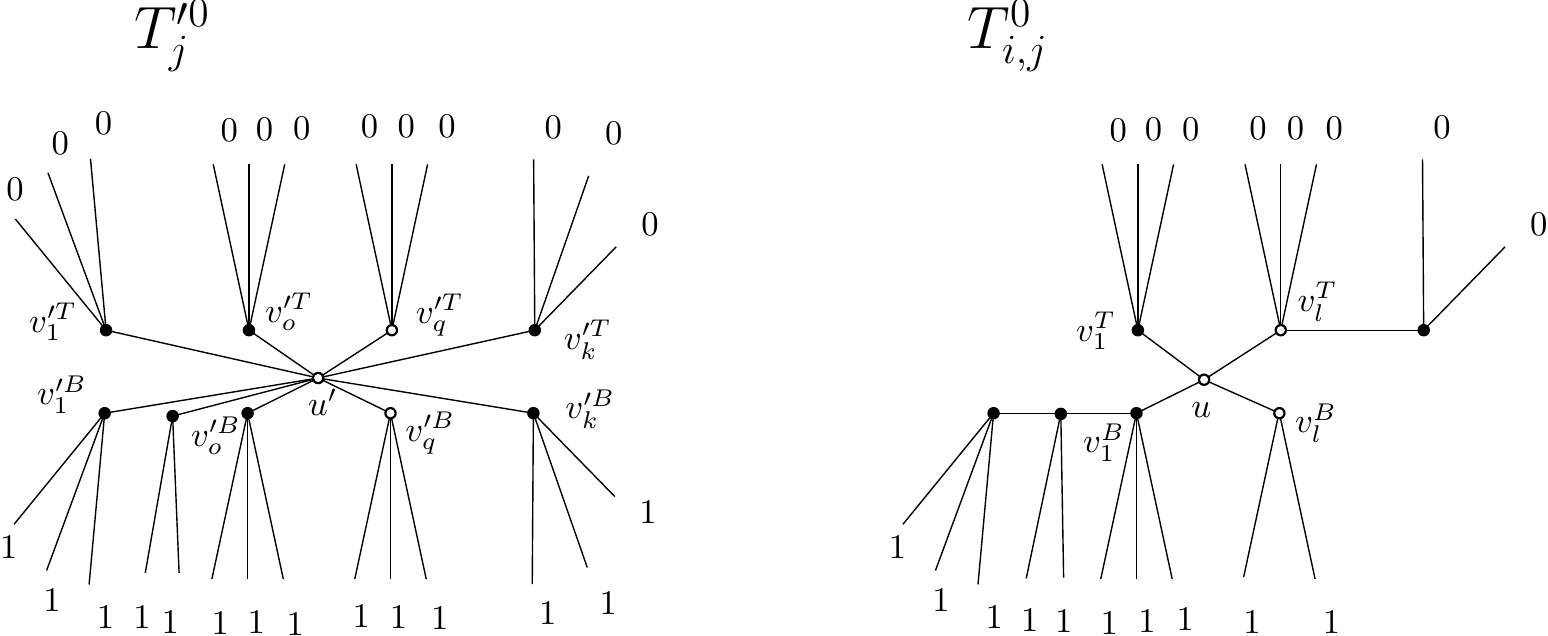}
\label{fig:pctree2}}
\caption{$C$ nodes are depicted by empty circles, $P$ nodes by full discs. The dotted line segments depict paths of defunct nodes.
(a) Splitting trees $T_{j-1}'$ and $T_{i,j-1}$, respectively, along $P'$ and $P$. By identifying bold paths $P'$ and $P$ we recover the original trees; and (b) Introduction of the $C$ node in $T_{j-1}'$ and $T_{i,j-1}$.}
\end{figure}

\begin{figure}[t]
\centering
\includegraphics[scale=0.7]{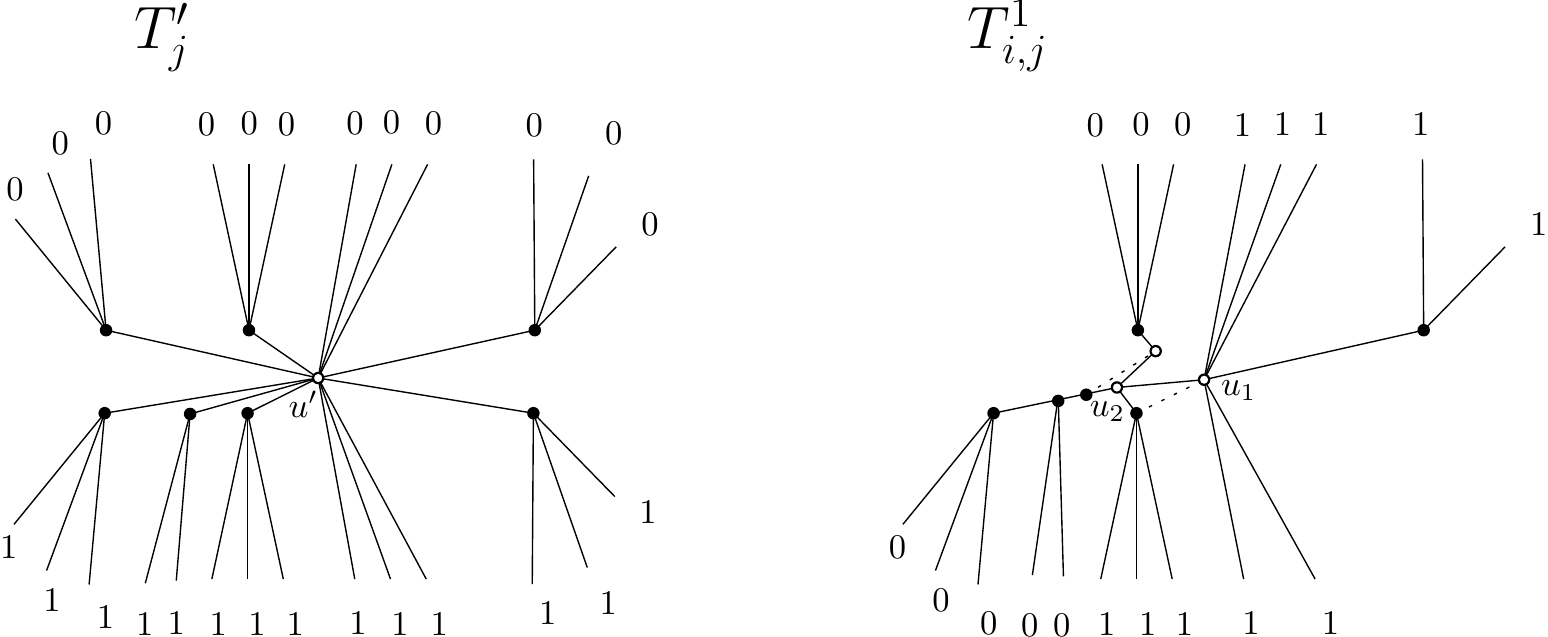}
\includegraphics[scale=0.7]{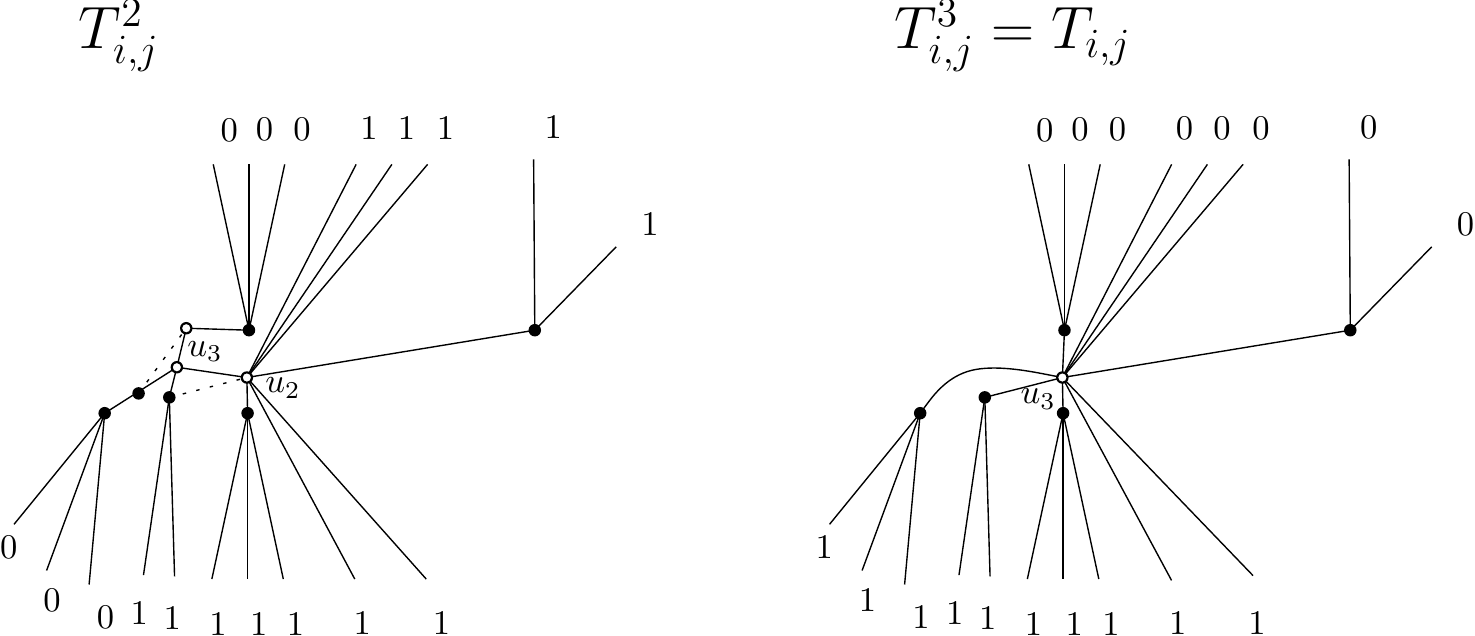}
\caption{$T_{j}'$ and $T_{i,j}$. We depict in each intermediate $T_{i,j}^{i'}$ also the defunct paths from $T_{i,j}^{i'-1}$.}
\label{fig:pctree4}
\end{figure}

 In the beginning of the algorithm the PC-tree $T_{i,0}$ and $T_0'$ are just stars whose center is a $P$ node. Thus, the claim holds.
 In the $j^\mathrm{th}$, $j\leq i$, step of the algorithm we embed $T_{i,j-1}$ and $T_{j-1}'$, respectively, such that its leaves corresponding to zeros and ones in the $j^\mathrm{th}$ row of $M$, respectively, are consecutive in the circular order given by the embedding. We also delete leaves of $T_{j-1}'$ corresponding to $*$ in the $j^\mathrm{th}$ row.
  By induction hypothesis $T_{i,j-1}$ can be obtained from $T_{j-1}'$ by deleting leaves of $T_{i,j-1}$ whose corresponding columns have $*$
  in the $i^\mathrm{th}$ row of $M$ and performing (order preserving) contractions of ``dummy'' edges in both trees.
 By following~\cite[Section 4.3]{HC03} we split $T_{i,j-1}$ and $T_{j-1}'$, respectively, along a path $P'$  and $P$ that consists of \emph{defunct} edges and nodes
 with respect to the $j^\mathrm{th}$ row of $M$. We depict the trees such that leaves representing zeros of the $j^\mathrm{th}$ row
   are above $P'$ ($P)$ and ones are below $P'$ ($P$).
 See Figure~\ref{fig:pctree1}.

 Then in the tree $T_{j-1}'$ and $T_{i,j-1}$, respectively, we introduce a $C$ node $u'$ and $u$, and connect it with the vertices of the path $P'$  and $P$
 by edges. We delete the edges of the paths $P'$ and $P$, and denote the resulting trees by $T_{j}'^0$ and $T_{i,j}^0$. See Figure~\ref{fig:pctree2}.
 The tree $T_{j}'$  is  obtained by from $T_{j}'^0$ by an (order preserving) contraction of its edges having both end vertices of type $C$.  See Figure~\ref{fig:pctree4} (left).
 The tree $T_{i,j}$ is obtained from $T_{i,j}^0$ as $T_{i,j}^{j'}$, for some $j'$,
 by an inductive process presently shown in which in each step we obtain   $T_{i,j}^{i'}$ from $T_{i,j}^{i'-1}$.
 In the $i'^{\mathrm{th}}$ step we apply a constraint of $M[j]$ to $T_{i,j}^{i'-1}$. Observe that we do have to apply all the constraints, since some of them
  do not change the tree $T_{i,j}^{i'-1}$. See Figure~\ref{fig:pctree4} (right).

 Let $P'=v_1'\ldots v_k'$ and $P=v_1 \ldots v_{l}$. By the induction hypothesis we obtain a bijection between the vertices appearing in both trees $T_{j-1}'$ and $T_{i,j-1}$.
Let $v_1$ be identified in the bijection with $v_o'$ and $v_l$ with $v_q'$
 for some $1\leq o\leq q \leq  k$. Let $v_r^T$ and $v_r^B$, for some $r$, denote the top and the bottom copy of $v_r$ obtained by splitting the path $P$.
 In what follows by \emph{children leaves} $CL(v_r^B)$  and $CL(v_r^T)$, respectively, we understand the leaves reachable by a path starting at the $C$ node $u_{i'}$, where $u_0=u$, and passing through $v_r^B$  and $v_r^T$. Similarly, we define $CL(v_r'^B)$  and $CL(v_r'^T)$.

Let the \emph{bottom part} of $T_{i,j}^{i'}$ be its subtree induced by all the vertices on the paths from $u_{i'}$ to a leaf of $CL(v_r^B)$ for some $r$.
Analogously we defined the \emph{top part} of  $T_{i,j}^{i'}$.
 Before we apply a row of $M_i$ to $T_{i,j}^{i'-1}$, we first contract all the edges having both end vertices of type $C$ in $T_{i,j}^{i'-1}$.
 The $C$ node $u_{i'}$ is the newly introduced $C$ node which, as we will see, is always identified with $u_{i'-1}$ by an (order preserving) contraction.
 Let $L(T_{i})$ denote the set of leaves of $T_{i}$.
 To obtain $T_{i,j}$ out of $T_{i,j}^0$ we successively apply constraints of the following type.
 Each time we only specify leaves corresponding to ones or zeros in the corresponding row of $M_i$.
 We use constraints of the form $C_s^B=L(T_{i}) \cap (\mathrm{CL}(v_{o-1-s}'^B) \cup \mathrm{CL}(v_{o-2-s}'^B) \cup\ldots \cup \mathrm{CL}(v_{1}'^B) \cup
 \mathrm{CL}(v_{1}'^T) \cup  \ldots \cup  \mathrm{CL}(v_{o}'^T))$,
 for  $0\le s\le o-2$ if  the bottom part of $T_{i,j}^0$ contains
 a vertex $w$ such that $w$ is identified in our bijection with $v_{r'}'\in P'$ for $r'<o$ in $T_{j-1}'$. In our figures this is the case. Analogously, we
 we use constraints $C_s^T=L(T_{i}) \cap (\mathrm{CL}(v_{o-1-s}'^T) \cup \mathrm{CL}(v_{o-2-s}'^T) \cup\ldots \cup \mathrm{CL}(v_{1}'^T) \cup
 \mathrm{CL}(v_{1}'^B) \cup  \ldots \cup  \mathrm{CL}(v_{o}'^B))$, for $0\le s\le o-2$, if  the top part of $T_{i,j}^0$ contains
 a vertex $w$ such that $w$ is identified in our bijection with $v_{r'}'\in P'$ for $r'<o$ in $T_{j-1}'$. Note that it cannot happen that both the top and bottom  part of
 $T_{i,j}^0$ contains vertices that can be identified with vertices of $T_{j-1}'$.
 We apply the constraints in the order with increasing $s$. For each $s$ we apply it only if $v_{o-s}'^B$, and hence, also $v_{o-s}'^T$,
  is a $P$ node, and only if a vertex that is identified with $v_{o-1-s}'$ exists in $T_{i,j-1}$.
  The chosen sets $C_s^B$ and $C_s^T$ are constraints, since $u'$ is a $C$ node.

  Afterwards, we analogously apply the constraints  $D_s^B=L(T_{i}) \cap (\mathrm{CL}(v_{q+1+s}'^B) \cup \mathrm{CL}(v_{q+2+s}'^B) \cup\ldots \cup \mathrm{CL}(v_{k}'^B) \cup \mathrm{CL}(v_{k}'^T) \cup  \ldots \cup  \mathrm{CL}(v_{q}'^T))$, $0\le s \le k-q-1$, or
 $D_s^T=L(T_{i}) \cap (\mathrm{CL}(v_{q+1+s}'^T) \cup \mathrm{CL}(v_{q+2+s}'^T) \cup\ldots \cup \mathrm{CL}(v_{k}'^T) \cup \mathrm{CL}(v_{k}'^B) \cup  \ldots \cup  \mathrm{CL}(v_{q}'^B))$, $0\le s \le k-q-1$.
 Note that each time we apply a constraint to $T_{i,j}^{i'-1}$ we attach one more $P$ node identified with $v_{r}'^B$ or $v_{r}'^T$, for some $r>q$,
 to the ``central'' $C$ node $u_i$. We do not need to attach a neighbor of a $C$ node joined by an edge with $u_i$ in this way, since such a node is attached by a contraction in the original
 algorithm. Thus, in the end all the relevant $P$ nodes of $T_{i,j}$ are attached to $u_{j'}$, and hence, the induction goes through. \\

Returning to the main argument,
by (i) every permutation $p$ in $\mathcal{P}_i$ captured by $T_i'$ is also captured by $T_i$ corresponding to $M_i$.
Since $M_i$ has circular ones property, $\mathcal{P}_i$ is non-empty.
By (ii) the permutation $p$ can be extended to a permutation witnessing the fact that $M$ restricted to the columns in $C_{i-1}$ has 
 circular ones property, and the first part of the claim follows.

Our algorithm runs in a quadratic time (in $|V(G)|$), since the algorithm in~\cite{HC03} is linear in the number of rows and columns of the matrix.
Indeed, by (i) to check whether all $M_i$'s have circular ones property it is enough to process rows of $M$ each time deleting leaves whose
corresponding columns have $*$ in the currently processed row.
 \end{proof}

\subsubsection{Hanani--Tutte}
In what follows we use the result and technique introduced in the previous section
to derive a variant of the Hanani--Tutte theorem for strip clustered subdivided stars.
Let us fix an independently even strip clustered drawing  $\mathcal{D}$ of a strip clustered subdivided star
 $(G,T)$.
Ultimately, we want to argue that matrices $M_i$'s from Theorem~\ref{thm:matrices} associated with $(G,T)$ 
admitting an independently even strip clustered drawing $\mathcal{D}$ do not contain Tucker's obstructions for circular ones property as sub-matrices.

 We recall some notations and their properties from the above. The set $\mathcal{S}=\{L_i', R_i'| \ L_i' \cap R_i'=\emptyset, \
|L_i'|,|R_i'|\ge 2, \ L_{i+1}' \cup R_{i+1}' \subseteq L_i' \cup R_i'   \}$ is such that
$L_i'\subseteq E(s,b)$ and $R_i' \subseteq E(s,b)$, where $(s,b)$ is the $i^\mathrm{th}$ element in~$\ref{eqn:order}$.
 Let $e_1,e_2,e_3$ and $e_4$ denote four edges in $G$ incident to $v$
 such that $\{e_1, e_2\}$  and $\{e_3,e_4\}$, respectively, is a subset
 of $L_i$ and $R_i$.
Suppose that in the rotation at $v$ the initial pieces
of $e_1$ and $e_2$ follow the initial pieces of $e_3$ and $e_4$.
Let $cr(e,f)$ denote the parity of the number of crossings between $e$ and $f$ in the given drawing of $G$.
In what follows arithmetic operations (including comparisons) are carried out in $\mathbb{Z}_2$.
The proof of the non-existence of Tucker's obstructions in our matrices boils down to the following observation.

\begin{observation}
\label{obs:separation}
$cr(e_1,e_3)+cr(e_1,e_4)+cr(e_2,e_3)+cr(e_2,e_4)=0$
\end{observation}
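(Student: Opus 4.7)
My plan is to close up pairs of branches into closed curves and exploit the classical fact that two closed curves in the plane cross an even number of times. Since $G$ is a subdivided star with center $v$, each of $e_1,e_2,e_3,e_4$ is the initial edge at $v$ of a distinct branch; let $B_\ell$ denote the branch (from $v$ to a leaf) whose initial edge is $e_\ell$. The four branches are pairwise vertex-disjoint apart from $v$. I form the arcs $\alpha=B_1\cup B_2$ and $\beta=B_3\cup B_4$, each of which passes through $v$ exactly once, and close each of them into a simple closed curve by appending an external arc in the unbounded face of $\mathcal{D}$, routing the two external arcs disjointly so they avoid one another and the rest of the drawing. Call the resulting closed curves $C$ and $C'$.

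By the corollary of Lemma~\ref{lemma:twocolor} that two closed curves in the plane cross an even number of times, and since the hypothesis that $e_1,e_2$ follow $e_3,e_4$ in the rotation at $v$ says precisely that the four germs at $v$ do not alternate, the curves $C$ and $C'$ are tangent at $v$ rather than crossing transversally there. A small perturbation of $C'$ that pushes it off $v$ introduces zero or two new intersection points in a neighbourhood of $v$, and non-alternation puts us in the zero case; hence the number of honest crossings of $C$ and $C'$, all of which occur at interior points of edges and therefore away from $v$, is even.

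I then expand this crossing count modulo $2$ edgewise. Every crossing between $C$ and $C'$ occurs between an edge of $B_1\cup B_2$ and an edge of $B_3\cup B_4$; because the four branches share no vertex other than $v$, the only adjacent pairs of edges that arise are the four pairs $(e_i,e_j)$ with $i\in\{1,2\}$ and $j\in\{3,4\}$. All other edge pairs are independent and therefore cross an even number of times, since $\mathcal{D}$ is independently even. Modulo $2$ the even total reduces to
\[
cr(e_1,e_3)+cr(e_1,e_4)+cr(e_2,e_3)+cr(e_2,e_4)\equiv 0\pmod{2},
\]
as claimed.

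The one subtle point will be the tangency/perturbation step at $v$: I must verify that the non-alternation of $\{e_1,e_2\}$ with $\{e_3,e_4\}$ in the rotation at $v$ indeed yields zero (and not two) new crossings when $C'$ is pushed off $v$. This is a routine local check on the cyclic order of the four germs at $v$, and it is precisely the step through which the rotational hypothesis enters the argument.
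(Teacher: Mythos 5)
Your overall strategy---closing the two arcs through $v$ into closed curves, using that two closed curves in the plane cross an even number of times, handling the passage through $v$ via the non-alternation of $\{e_1,e_2\}$ with $\{e_3,e_4\}$, and discarding all independent pairs by independent evenness---is the same as the paper's. The gap is in how you close the curves. You close the two branch-pairs through the unbounded face and assert that the two external arcs can be routed ``avoiding one another and the rest of the drawing.'' That claim is unjustified and false in general: the leaves of the branches need not be accessible from the unbounded face (a leaf can be enclosed by a region bounded by crossing edges, e.g.\ by two independent edges crossing twice), and independent evenness gives no control over the parity of crossings between an auxiliary closing arc and the edges of the other closed curve, since the closing arc is not an edge of $G$. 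Symptomatically, your argument never uses the hypotheses that $\{e_1,e_2\}\subseteq L_i$ and $\{e_3,e_4\}\subseteq R_i$ (i.e.\ that these edges lie on a common $s$-cap and a common $b$-cup) nor that $\mathcal{D}$ is strip clustered; but without them the statement is false. Indeed, in a star with four edges every pair of edges is adjacent, so independent evenness is vacuous, and one can draw it with non-alternating rotation at $v$ and with exactly one of the four pairs, say $e_1,e_3$, crossing once, making the sum odd; for such a drawing no admissible routing of your closing arcs can exist, which is precisely where your proof breaks.

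The paper's proof closes the curves where the hypotheses make the closing arcs harmless: it takes the $s$-cap $P_1$ through $e_1,e_2$ and the $b$-cup $P_2$ through $e_3,e_4$ (these single paths exist because $G$ is a subdivided star, so the two half-paths at $v$ can be concatenated), and joins the two endpoints of $P_1$ by an arc $C_1'$ drawn inside the strip of cluster $s$ and the two endpoints of $P_2$ by an arc $C_2'$ inside the strip of cluster $b$. Since every vertex of $P_2$ has label greater than $s$ and, in a strip clustered drawing, every edge meets each vertical line $x=i$ at most once, no edge of $P_2$ can enter strip $s$; hence $C_1'$ crosses nothing of $C_2=P_2\cup C_2'$, symmetrically $C_2'$ crosses nothing of $C_1=P_1\cup C_1'$, and $cr(C_1',C_2')=0$. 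With the closing arcs neutralized in this way, the remainder of your computation (even total, no contribution at $v$ by non-alternation, independent pairs even, leaving the four adjacent pairs) goes through. So the idea is right, but the closing step must use the cap/cup endpoints and the strip structure; routing through the outer face cannot be repaired.
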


\begin{proof}
Let $P_1$ and $P_2$ denote the two paths meeting at $v$ witnessing
the fact that $\{e_1, e_2\}$  and $\{e_3,e_4\}$, respectively, is a subset
of $L_i$ and $R_i$.
The path $P_1$ is an $s$-cap and $P_2$ is a $b$-cup.
Let $C_1'$ and $C_2'$, respectively, denote a curve joining end vertices of $P_1$ and $P_2$
inside the cluster they belong to.
Let $C_1$ and $C_2$, respectively, denote the curve $P_1 \cup C_1'$ and $P_2 \cup C_2'$. Since $C_1$ crosses $C_2$ an even number of times,
and the initial pieces of $e_1$ and $e_2$ follow the initial pieces of $e_3$ and $e_4$
 in the rotation at $v$ we have $\sum_{e,f}cr(e,f)+cr(C_1',C_2')=0$, where we sum over the pairs $e\in C_1$ and $f\in C_2$.
Since $cr(C_1',C_2')=0$, we have $cr(e_1,e_3)+cr(e_1,e_4)+cr(e_2,e_3)+cr(e_2,e_4)+\sum_{e',f'}cr(e',f')=0$,
where we sum over the pairs $e'\in C_1$ and $f'\in C_2$ not incident to $v$. Since the drawing of $G$ is independently
even, $\sum_{e',f'}cr(e',f')=0$, and the claim follows.
\end{proof}

Observation~\ref{obs:separation} can be generalized as follows.
 Let $e_1,\ldots, e_k,f_l,\ldots, f_l$ denote edges in $G$ incident to $v$
 such that $\{e_1,\ldots, e_k\}$ and $\{f_l,\ldots, f_l\}$, respectively, is a subset
 of $L_i'$ and $R_i'$.
Suppose that in the rotation at $v$ the initial pieces of $e_1,\ldots, e_k$ follow
the initial pieces of $f_1,\ldots, f_l$.

The operation of \emph{pulling an edge $e$ over a vertex} is a continuous deformation of $e$ during which $e$ passes exactly
once over $v$, does not pass over any other vertex and does not change its position in the rotation at $v$.
Note that by pulling $e$ over $v$ we change the parity of the number of crossings of $e$ with every edge incident
to $v$. In the case when $e$ is incident to $v$ a self-crossing of $e$ created during this operation is eliminated
by cutting $e$ at every self-crossing and reconnecting the severed pieces thereby getting rid of the crossing without
affecting the parity of the number of crossings between $e$ and every other edge.

\begin{lemma}
\label{lemma:separation}
By a finite number of operations of pulling an edge incident to $v$ over  $v$, we can transform the given drawing $\mathcal{D}$ of $G$
into a drawing in which $cr(e_i,f_j)=0$ for every $i,j$. (The resulting drawing is necessarily independently odd.)
\end{lemma}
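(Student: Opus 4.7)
The plan is to phrase the conclusion as a $\mathbb{Z}_2$-matrix problem and solve it by realising pulls over $v$ as row and column parity flips. Write $m_{ij} := cr(e_i,f_j) \in \mathbb{Z}_2$ for $1 \le i \le k$, $1 \le j \le l$. The starting point is Observation~\ref{obs:separation} itself: applied to the quadruple $e_i,e_{i'},f_j,f_{j'}$ for any $i \neq i'$ and $j \neq j'$ (all of which satisfy the standing rotational hypothesis around $v$ since they are drawn from $\{e_1,\ldots,e_k\}\cup\{f_1,\ldots,f_l\}$), it yields the $2\times 2$ cocycle identity
\begin{equation*}
m_{ij}+m_{ij'}+m_{i'j}+m_{i'j'} \equiv 0 \pmod{2}.
\end{equation*}

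Next I would use this identity, which now holds for \emph{every} choice of pairs of row and column indices, to extract a rank-one decomposition $m_{ij}=r_i+c_j$ with $r_i,c_j \in \mathbb{Z}_2$. The decomposition is explicit: set $r_1:=0$, $c_j:=m_{1j}$, and $r_i:=m_{i1}+m_{11}$; specialising the cocycle relation to $i'=1$, $j'=1$ gives $m_{ij}=m_{i1}+m_{1j}+m_{11}=r_i+c_j$, as desired.

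The last step is to translate the decomposition into concrete pulls. Pulling an edge $e$ incident to $v$ over $v$ flips the parity of $cr(e,e')$ for every other edge $e'$ incident to $v$; in particular, a single pull of $e_i$ flips the entire $i^\mathrm{th}$ row of $(m_{ij})$, and a single pull of $f_j$ flips its $j^\mathrm{th}$ column. Performing the pull operation once for every $i$ with $r_i=1$ and once for every $j$ with $c_j=1$ therefore replaces $m_{ij}$ by $m_{ij}+r_i+c_j=(r_i+c_j)+r_i+c_j=0$ for every $i,j$, which is exactly the statement of the lemma.

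I do not anticipate a substantial obstacle; the argument reduces to a linear-algebra identity over $\mathbb{Z}_2$, with the pull operation acting precisely as the elementary row/column parity flip required to zero out a rank-one $\mathbb{Z}_2$-matrix. The only piece of housekeeping is that self-crossings of the pulled edge are disposed of as described in the paragraph preceding the lemma, and this does not affect any parity counts used above.
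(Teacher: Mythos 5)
Your proposal is correct and takes essentially the same route as the paper: the paper's proof likewise applies Observation~\ref{obs:separation} to all quadruples $e_{i_1},e_{i_2},f_{j_1},f_{j_2}$ to conclude that the rows of the parity matrix $\bigl(cr(e_i,f_j)\bigr)$ are pairwise equal or complementary (which is exactly your decomposition $m_{ij}=r_i+c_j$ over $\mathbb{Z}_2$), then pulls the edges $e_i$ of one part over $v$ to make all rows agree and finally pulls over $v$ every $f_j$ whose column is all ones. Your explicit rank-one phrasing is just a restatement of that argument, so there is no substantive difference.
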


\begin{proof}
By Observation~\ref{obs:separation} for every pair of edges  $e_{i_1}$ and $e_{i_2}$ either $cr(e_{i_1},f_j)= cr(e_{i_2},f_j)$ for all $j$,
or $cr(e_{i_1},f_j)\not= cr(e_{i_2},f_j)$ for all $j$. Indeed, for $j_1$ and $j_2$ violating the claim we have
$$cr(e_{i_1},f_{j_1})+cr(e_{i_2},f_{j_1}) \not= cr(e_{i_1},f_{j_2})+ cr(e_{i_2},f_{j_2}).$$ Hence,
$$cr(e_{i_1},f_{j_1})+cr(e_{i_2},f_{j_1}) +cr(e_{i_1},f_{j_2})+ cr(e_{i_2},f_{j_2}) \not= 0.$$

This can be also seen if we consider a bipartite graph \\ $G_v=(\{e_1,\ldots, e_k,f_1,\ldots, f_k\}, \{\{e_i,f_j\}| \ cr(e_i,f_j)=0\})$. Then we can partition the set $\{f_1,\ldots , f_l\}$ into two
parts $F_1$ and $F_2$ such that the neighborhood of each vertex $e_i$ in $G_v$ is either $F_1$ or $F_2$.
This in turn implies that we can split the set $\{e_1,\ldots, e_k\}$ into two parts such that in both parts
we have $cr(e_{i_1},f_j)= cr(e_{i_2},f_j)$ for all $j$ and every pair of $e_{i_1}$ and $e_{i_2}$,
while we have $cr(e_{i_1},f_j)\not= cr(e_{i_2},f_j)$ for all $j$ and every pair of $e_{i_1}$ and $e_{i_2}$
coming from different parts.
By pulling every edge $e_i$ in one part over $v$ we obtain  $cr(e_{i_1},f_j)= cr(e_{i_2},f_j)$ for all $j$
and every pair of $e_{i_1}$ and $e_{i_2}$.
Thus, we obtained a drawing in which for all $j$ either  $cr(e_i,f_j) = 1$ for all $i$ or  $cr(e_i,f_j) = 0$ for all $i$.
By pulling all $f_j$, for which  $cr(e_i,f_j) = 1$ for all $i$, over $v$ in the obtained drawing, we obtain a desired
drawing of $G$ and that concludes the proof.
\end{proof}

By associating  edges incident to $v$ with columns of a 0-1 matrix $M$ as explained in the previous section
we extend Lemma~\ref{lemma:separation} to constraint vectors of $M$ corresponding to constraints used in the proof of Theorem~\ref{thm:matrices}.
This extension is necessary in order to apply Lemma~\ref{thm:matrices}, since matrices in its statement contain
constraint vectors as rows.
Let $e_i$'s and $f_i$'s, respectively, correspond to zeros and ones of a constraint vector of a 0-1 matrix $M$ corresponding to a sub-graph of $G$.
Here, we choose a sub-graph of $G$ that does not introduce ambiguous symbols in $M$. Note that matrices $M_i$'s in Lemma~\ref{thm:matrices}
correspond to such sub-graphs.
Suppose that in the rotation at $v$ the initial pieces of $e_1,\ldots, e_k$ follow
the initial pieces of $f_1,\ldots, f_l$.

The operation of \emph{switching} two consecutive edges $e$ and $f$ in the rotation at $v$
switches the position of $e$ and $f$ in the rotation at $v$, and thus, the parity of the number of crossings
between $e$ and $f$. In an actual drawing the operation corresponds to redrawing one of $e$ and $f$ in a close neighborhood of $v$.

\begin{lemma}
\label{lemma:modular}
By a finite number of operations of pulling an edge incident to $v$ over  $v$, we can transform the given drawing $\mathcal{D}$ of $G$
into a drawing in which $cr(e_i,f_j)=0$ for every $i,j$. (The resulting drawing is necessarily independently odd.)
\end{lemma}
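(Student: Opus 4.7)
The plan is to mirror the argument of Lemma~\ref{lemma:separation}. The only new ingredient needed is a proof that the ``rank-one'' parity identity
\[
cr(e_{i_1},f_{j_1})+cr(e_{i_1},f_{j_2})+cr(e_{i_2},f_{j_1})+cr(e_{i_2},f_{j_2})\equiv 0 \pmod 2
\]
holds for any two zeros $e_{i_1},e_{i_2}$ and any two ones $f_{j_1},f_{j_2}$ of a constraint vector of $M$, not only for those coming from an actual row $L_i',R_i'$ of $M$. Once this identity is available, the argument of Lemma~\ref{lemma:separation} goes through verbatim: the identity forces the zero-side to split into at most two crossing-parity classes in the bipartite ``even-crossing'' graph $G_v$, one of these classes is removed by pulling each of its edges over $v$, and a final pass of pulls over the offending ones yields $cr(e_i,f_j)=0$ for all $i,j$.

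I propose to establish the identity for constraint vectors by induction along the PC-tree derivation analyzed in the proof of Theorem~\ref{thm:matrices}. The base case, in which the constraint vector is an actual row of $M$, is precisely Observation~\ref{obs:separation}: such a row is witnessed by an interleaving pair of paths $P_1,P_2$ meeting at $v$, and the closed curves $C_1,C_2$ obtained by joining the endpoints of $P_1$ and $P_2$ inside their extremal clusters cross evenly because $\mathcal{D}$ is independently even, which gives the four-term identity after cancelling the contribution from the cluster-interior connectors and all non-$v$ vertices. The inductive step handles each of the PC-tree operations $C_s^B, C_s^T, D_s^B, D_s^T$ of Theorem~\ref{thm:matrices}: each newly produced constraint vector is obtained by concatenating two previously produced ones along a pivot column $g$ sitting in the shared boundary (the defunct path~$P'$); choosing $g$ in the relevant side of the pivot and applying the inductive hypothesis to the ``triangles'' $\{e_{i_1},g\}\times\{f_{j_1},f_{j_2}\}$ and $\{e_{i_2},g\}\times\{f_{j_1},f_{j_2}\}$ (together with symmetric variants) yields, by $\mathbb{Z}_2$-bilinearity, the required identity for the pair $\{e_{i_1},e_{i_2}\}\times\{f_{j_1},f_{j_2}\}$ of the new constraint.

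An alternative and perhaps cleaner route that I would also pursue is to construct, for each PC-tree constraint $(Z,O)$, a pair of oriented closed walks $W_1,W_2$ in $G$ passing through $v$ such that at $v$ the edges of $W_1$ lie in $Z$ and those of $W_2$ lie in $O$. Independent-evenness of $\mathcal{D}$ together with the fact that any two closed curves in the plane cross evenly forces the parity sum of $cr(e,f)$ over pairs $(e,f)\in E(W_1)\times E(W_2)$ at $v$ to vanish, and by Lemma~\ref{lemma:symmDif} the contributions away from $v$ cancel in pairs, recovering the rank-one identity. The existence of such walks follows from the structural description of PC-tree constraints as consecutive groupings at $v$ obtainable by chaining interleaving pairs of paths through $v$.

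The main obstacle is the inductive propagation step (or, equivalently, the walk-construction in the alternative route): one has to pick the pivot $g$ so that it plays compatible roles in both glued constraints (zero in one, one in the other, or the appropriate symmetric variant), and to chase the case distinctions corresponding to the positions of the split $P$-nodes along the defunct paths $P$ and $P'$. The base case is immediate, and the partition-and-pull finale is a direct repetition of Lemma~\ref{lemma:separation}, so the real work lies entirely in this bookkeeping.
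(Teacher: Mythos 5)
Your proposal follows essentially the same route as the paper's proof: there, too, the lemma is established by induction on the steps of the PC-tree algorithm, with actual rows (Observation~\ref{obs:separation} via Lemma~\ref{lemma:separation}) as the base case and each new edge or non-edge constraint (the sets $C_s^B,C_s^T,D_s^B,D_s^T$ from Theorem~\ref{thm:matrices}) assembled from overlapping previously-handled constraints --- precisely the gluing/bookkeeping you defer --- and your restatement via the $2\times 2$ parity identity is equivalent to the paper's operational formulation, since pulling an edge over $v$ just flips a whole row or column of the crossing-parity matrix. The one point to watch is that the identity as you state it (sum $=0$) is only valid when the two quadruple classes do not alternate in the rotation at $v$; for intermediate constraints, whose zero and one sides need not be separated at $v$ in the given drawing, you must either carry the alternation term at $v$ through the induction or first normalize the rotation, which is what the paper's use of pulls and switchings accomplishes.
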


\begin{proof}
First, we deal with the case of constraint that are induced by a cut edge in a PC-tree.
We proceed by induction on the number of steps (modifications of the PC-tree) of an execution of the algorithm in~\cite{HC03}.
In the base case, when the processed matrix has only one row, is established by Lemma~\ref{lemma:separation}.
In this case we have only one constraint vector, which is the row itself (and its inverse).
By induction hypothesis we assume that before the $i^\mathrm{th}$ step the lemma holds for all the edge constraints
of the PC-tree obtained after $(i-1)^\mathrm{th}$ step.

Refer to Figure~\ref{fig:edgemodules}.
It follows from the algorithm that if $e_1,\ldots, e_k$ and $f_1,\ldots, f_l$ represent a (new) edge  constraint vector of
the matrix corresponding to the PC-tree obtained after $i^{\mathrm{th}}$ step that the lemma holds by induction hypothesis and for $e_1,\ldots, e_k$ and $f_1,\ldots, f_{j}$, $j<l$, by Lemma~\ref{lemma:separation} we have the same property for $e_1,\ldots, e_k$ and $f_{j'},\ldots, f_{k'}$, $j'\leq j$,
and again by induction hypothesis the lemma holds for $e_1,\ldots, e_k$ and $f_{k''},\ldots, f_l$, $k''\leq k'$ (possibly already $k'=l$ and $k''=l$,
and hence, the third claim does not apply).
Thus, we apply the lemma to $e_1,\ldots, e_k$ and $f_1,\ldots, f_{j}$.
Now, since $j'\leq j$ we obtain $cr(e_i,f_{i'})=0$,  for $j'<i'\leq k'$, without pulling any of the $e_i$'s over $v$.
Thus, we will not destroy the property for $i'\leq j'$.
Similarly, we proceed with $f_{i'}$ for $i'>k'$.

Second, we treat the non-edge constraints. Refer to Figure~\ref{fig:nonedge}.
The claim follows by the fact that if the lemma holds for all four sets and their complements depicted by ellipses in the figure,  it also holds for the gray set. Let the ``top and bottom'' sets in the figure correspond to a row constraint of $M$.
By Lemma~\ref{lemma:separation} we can assume that the edges in the top set cross the edges in the bottom set an even number of times,
and that they are separated in the rotation at $v$. Since the ``left and right'' sets in the figure corresponds to edge constraints,
the same can be assumed about them by the first part of the proof. 
 Now, we need to argue that we can simultaneously achieve even intersection number between edges in the top and bottom set,
and between edges in the left (or right) set and its complement. Clearly, once we establish this we are done.
A simple case analysis reveals that if the  edges in the top set cross the edges in the bottom set an even number of times
(Figure~\ref{fig:nonedge} right) either all the pairs of edges from the top or the bottom set consisting of an edge in the left (or right) set and an edge in its complement  cross each other an odd number of times, or all such pairs cross an even number of times.
By switching all the pairs of edges in the top set and all the pairs of edges in the bottom set we turn all such pairs into pairs of edges crossing each other an even number of times while keeping a desired rotation at $v$.

Observation~\ref{obs:separation}~\footnote{By Lemma~\ref{lemma:modular} the observation applies to all constraint vectors.} also gives us the following lemma.

\begin{lemma}
\label{lemma:antiSeparation}
Suppose that in $\mathcal{D}$ in the rotation at $v$ the six edges $e_{i_1}, e_{i_2}, e_{i_3}, e_{i_4}, e_{i_5}$ and $e_{i_6}$
appear in the given order, and that $cr(e_{i_{j}},e_{i_{j'}})=0$ for every pair except for
$e_{i_1}$ and $e_{i_2}$, $e_{i_3}$ and  $e_{i_4}$, and  $e_{i_5}$ and $e_{i_6}$.

It cannot happen that $\{e_{i_{2j}}| \ j=1,2,3\} $ and $\{ e_{i_{2j-1}}|\ j=1,2,3\}$, respectively, is
a subset of $L_i$ and $R_i$.
\end{lemma}
\begin{proof}
Changing the rotation at $v$ by a sequence of switchings so that $e_{i_1}, e_{i_3}, e_{i_5}, e_{i_2}, e_{i_4}$ and $e_{i_6}$
appear in this order yields
\begin{equation}
\label{eqn:ha}
 cr(e_{i_1},e_{i_4})= cr(e_{i_1},e_{i_6})= cr(e_{i_3},e_{i_6}) \not=
cr(e_{i_2},e_{i_3})=cr(e_{i_2},e_{i_5})=cr(e_{i_4},e_{i_5})
\end{equation}
If the lemma does not hold, Observation~\ref{obs:separation} applies to the four-tuple of edges $e_{i_1}, e_{i_3}, e_{i_4}$
and $e_{i_6}$, and the four-tuple of edges $e_{i_3}, e_{i_5}, e_{i_2}$ and $e_{i_4}$,
and hence, by~(\ref{eqn:ha}) we have $cr(e_{i_3},e_{i_4}) \not= cr(e_{i_3},e_{i_4})$ (contradiction).
\end{proof}

\begin{figure}[htp]
\centering
\subfigure[]{
\includegraphics[scale=0.7]{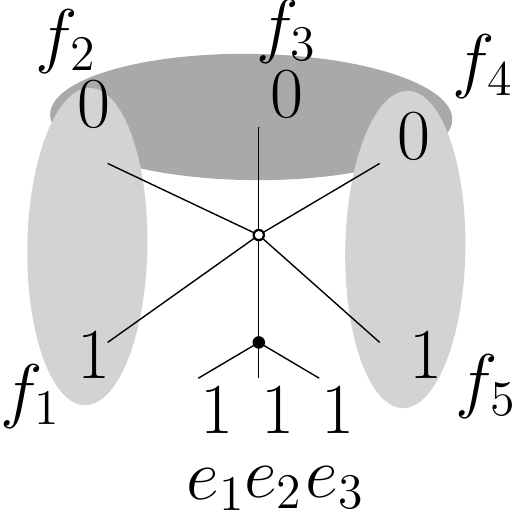}
\label{fig:edgemodules}}
\hspace{5px}
\subfigure[]{
\includegraphics[scale=0.7]{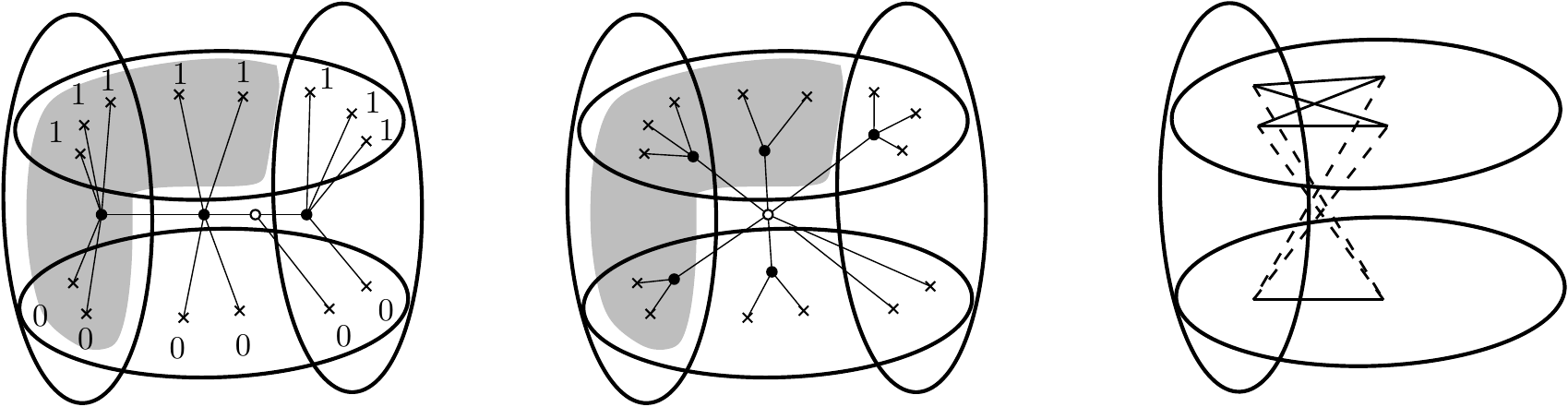}
\label{fig:nonedge}}
\caption{$C$ nodes of a PC-tree are depicted by empty circles, $P$ nodes by full discs.
(a) Sub-tree of a PC-tree corresponding to a new edge constraint $e_1,e_2,e_3$. Ones and zeros corresponds to the just processed row.
The sets $\{f_0, f_1\}$ and $\{f_4, f_5\}$ are (old) edge constraints.
(b) The four sets corresponding to two edge constraints, and a pair of complementary constraints corresponding to a processed row.
The grey set corresponds to a constraint used in the proof of Theorem~\ref{thm:matrices} (left); the corresponding PC-tree (middle),
and the (parity of) intersection number of the corresponding edges, dashed lines segment stand for even and full line segments stand for odd intersection number
(right).}
\end{figure}
\end{proof}

Let $e_1,\ldots ,e_k$ denote the edges incident to $v$.
As explained above our algorithm for testing whether a rotation at $v$ yielding an embedding exists generates a set of partitions of
$\mathcal{P}$ into two parts. The input instance $(G,T)$ is then positive, i.e., $G$ admits a strip clustered
embedding, if there exists a rotation at $v$ in which all partitions give rise to two disjoint cyclic intervals.
As we already mentioned this problem can be rephrased as a problem of testing 0--1 matrices for circular ones property.
A result of Tucker~\cite{Tucker72} says that a 0--1 matrix has the consecutive ones property if
it does not contain a matrix from a certain well-described family of  0--1 matrices as a sub-matrix.
Every matrix in the family is obtained by permuting rows or columns in one of the following  matrices.
\begin{displaymath}
\left(\begin{array}{ccccccc}
1 & 1 & 0 & 0 & \ldots & 0 & 0 \\
0 & 1 & 1 & 0 & \ldots & 0 & 0 \\
0 & 0 & 1 & 1 & \ldots & 0 & 0\\
\vdots & \vdots & \vdots & \vdots & \ddots & \vdots & \vdots\\
0 & 0 & 0 & 0 & \ldots & 1 & 1 \\
1 & 0 & 0 & 0 & \ldots & 0 & 1 \\
\end{array}\right) \hspace{10pt}
\left(\begin{array}{cccccccc}
1 & 1 & 0 & 0 & \ldots & 0 & 0 & 0 \\
0 & 1 & 1 & 0 & \ldots & 0  & 0 & 0 \\
0 & 0 & 1 & 1 & \ldots & 0 & 0 & 0\\
\vdots & \vdots & \vdots & \vdots & \ddots & \vdots & \vdots & \vdots\\
0 & 0 & 0 & 0 & \ldots & 1 & 1 & 0 \\
1 & 1 & 1 & 1 & \ldots & 1 & 0 & 1 \\
0 & 1 & 1 & 1 & \ldots & 1 & 1 & 1 \\
\end{array}\right) \hspace{10pt}
\left(\begin{array}{cccccccc}
1 & 1 & 0 & 0 & \ldots & 0 & 0 & 0 \\
0 & 1 & 1 & 0 & \ldots & 0  & 0 & 0 \\
0 & 0 & 1 & 1 & \ldots & 0 & 0 & 0\\
\vdots & \vdots & \vdots & \vdots & \ddots & \vdots & \vdots & \vdots\\
0 & 0 & 0 & 0 & \ldots & 1 & 1 & 0 \\
0 & 1 & 1 & 1 & \ldots & 1 & 0 & 1 \\
\end{array}\right)
\end{displaymath}

\begin{displaymath}
\left(\begin{array}{cccccc}
1 & 1 & 0 & 0 & 0 & 0 \\
0 & 0 & 1 & 1 & 0 & 0 \\
0 & 0 & 0 & 0 & 1 & 1 \\
0 & 1 & 0 & 1 & 0 & 1 \\
\end{array}\right)
\left(\begin{array}{ccccc}
1 & 1 & 0 & 0 & 0  \\
1 & 1 & 1 & 1 & 0  \\
0 & 0 & 1 & 1 & 0  \\
1 & 0 & 0 & 1 & 1  \\
\end{array}\right)
\end{displaymath}

\bigskip

\begin{observation}
\label{obs:consecutiveVsCircular}
A 0--1 matrix $M$ has circular ones property if and only if
the matrix $M'$ obtained from $M$ by inverting every row
having one in the first column and deleting the first column has the consecutive ones property.
\end{observation}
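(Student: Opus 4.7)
The plan is to prove both implications by an explicit translation between cyclic and linear column orders, analyzing rows of the two types (those with a $1$ versus a $0$ in the first column) separately.

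For the forward direction, I would take a cyclic ordering $\sigma$ of the columns of $M$ witnessing the circular ones property, rotate $\sigma$ so that the first column sits at position $1$, and then read the remaining columns in the linear order $2,3,\ldots,n$. A row with a $0$ in the first column has its $1$s in a contiguous cyclic arc avoiding position $1$, so they remain contiguous after the cut and after deleting position $1$. A row with a $1$ in the first column has its $0$s forming the contiguous complementary cyclic arc, which likewise avoids position $1$; after inverting this row, the contiguous $0$-block becomes a contiguous $1$-block that survives the deletion of position $1$. In both cases the corresponding row of $M'$ has its $1$s consecutive under the induced linear order, so $M'$ has the consecutive ones property.

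For the reverse direction, I would run the same correspondence backwards. Start from a linear ordering $\pi$ of the columns of $M'$ exhibiting the consecutive ones property, prepend the originally deleted column at the leftmost position, and undo the row inversions, obtaining a column ordering of $M$ that I interpret cyclically. A row that was never inverted originally has a $0$ in the first column and its $1$s formed a contiguous linear block inside positions $2,\ldots,n$, hence a contiguous cyclic arc avoiding position $1$. A row that was inverted originally has a $1$ in the first column, and its $0$s in $M$ are exactly where $M'$ had its consecutive $1$s, so in $M$ its $1$s are $\{1\}$ together with the complement of a contiguous interval of $\{2,\ldots,n\}$, which is again a single cyclic arc. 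Hence the constructed cyclic ordering witnesses the circular ones property of $M$.

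The main obstacle is essentially bookkeeping: one must check that cutting a cycle into a line and inverting rows interact correctly, and use the key fact that while the complement of a contiguous linear block need not be contiguous, the complement of a contiguous cyclic arc always is. This is precisely why inverting exactly those rows having a $1$ in the first column is the correct normalization step: it converts the rows whose $1$s would wrap around the cut into rows whose $1$s stay inside a linear interval, so the two properties become formally equivalent.
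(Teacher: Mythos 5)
Your proof is correct and follows essentially the same route as the paper's: cut the cyclic order at the first column (equivalently, exploit that row complementation preserves the circular-ones property) for one direction, and reinsert the deleted column and undo the inversions for the other. Your forward direction is in fact slightly more explicit than the paper's about rotating the cyclic order so that the original first column sits at position~1, which is the only delicate bookkeeping point.
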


\begin{proof}
The ``only if'' direction.
We order the columns of $M$ so that ones or zeros are consecutive in every row.
By inverting every row with one in the first column, we turn $M$ into a matrix that has
ones consecutive in every row, and the first column contains only zeros. Thus, $M'$
has consecutive ones property.

The ``if'' direction.
We order the columns of $M'$ so that ones in every row are consecutive and add an all-zeros
column to $M'$. Clearly, the resulting matrix has  circular ones property.
This property is certainly not changed by inverting rows or permuting columns.
Thus, $M$ has also circular ones property.
\end{proof}

Observation~\ref{obs:consecutiveVsCircular} implies that similarly as matrices with consecutive ones property
we can characterize matrices with circular ones property by the family of matrices that can be obtained
by permuting rows or columns, or inverting rows in one of the following  matrices.
Indeed, by Observation~\ref{obs:consecutiveVsCircular} if a matrix $M$ does not have the circular ones property then by inverting rows so that the
first column is zero and deleting the first column we obtain a matrix $M'$ that does not have consecutive
ones property, and thus, contains one of the above forbidden matrices as a sub-matrix.
It remains to check that all such forbidden matrices with an additional all-zero column
can be transformed by permuting rows or columns, or inverting rows into a matrix containing
one of the two matrices below as a sub-matrix.

\begin{displaymath}
\mathcal{M}_1=\left(\begin{array}{cccccccc}
1 & 1 & 0 & 0 & \ldots & 0 & 0 & 0 \\
0 & 1 & 1 & 0 & \ldots & 0  & 0 & 0 \\
0 & 0 & 1 & 1 & \ldots & 0 & 0 & 0\\
\vdots & \vdots & \vdots & \vdots & \ddots & \vdots & \vdots & \vdots\\
0 & 0 & 0 & 0 & \ldots & 1 & 1 & 0 \\
1 & 0 & 0 & 0 & \ldots & 0 & 1 & 0 \\
\end{array}\right) \hspace{15pt}
M_2=\left(\begin{array}{cccccc}
1 & 1 & 0 & 0 & 0 & 0 \\
0 & 0 & 1 & 1 & 0 & 0 \\
0 & 0 & 0 & 0 & 1 & 1 \\
0 & 1 & 0 & 1 & 0 & 1 \\
\end{array}\right)
\end{displaymath}

For the sake of completeness we give the corresponding reductions.

\begin{displaymath}
\left(\begin{array}{cccccc}
1 & 1 & 0 & 0 & 0 & 0 \\
1 & 1 & 1 & 1 & 0 & 0 \\
0 & 0 & 1 & 1 & 0 & 0 \\
1 & 0 & 0 & 1 & 1 & 0 \\
\end{array}\right) \sim
\left(\begin{array}{cccccc}
1 & 1 & 0 & 0 & 0 & 0 \\
0 & 0 & 0 & 0 & 1 & 1 \\
0 & 0 & 1 & 1 & 0 & 0 \\
1 & 0 & 0 & 1 & 1 & 0 \\
\end{array}\right) \sim
\end{displaymath}

\begin{displaymath}
\left(\begin{array}{cccccc}
1 & 1 & 0 & 0 & 0 & 0 \\
0 & 0 & 0 & 0 & 1 & 1 \\
0 & 0 & 1 & 1 & 0 & 0 \\
1 & 0 & 1 & 0 & 1 & 0 \\
\end{array}\right) \sim
\left(\begin{array}{cccccc}
1 & 1 & 0 & 0 & 0 & 0 \\
0 & 0 & 1 & 1 & 0 & 0 \\
0 & 0 & 0 & 0 & 1 & 1 \\
0 & 1 & 0 & 1 & 0 & 1 \\
\end{array}\right) = M_2
\end{displaymath}

\begin{displaymath}
\left(\begin{array}{ccccccccc}
1 & 1 & 0 & 0 & \ldots & 0 & 0 & 0 & 0\\
0 & 1 & 1 & 0 & \ldots & 0  & 0 & 0 & 0 \\
0 & 0 & 1 & 1 & \ldots & 0 & 0 & 0 & 0\\
\vdots & \vdots & \vdots & \vdots & \ddots & \vdots & \vdots & \vdots & \vdots \\
0 & 0 & 0 & 0 & \ldots & 1 & 1 & 0 & 0 \\
1 & 1 & 1 & 1 & \ldots & 1 & 0 & 1 & 0 \\
0 & 1 & 1 & 1 & \ldots & 1 & 1 & 1 & 0 \\
\end{array}\right) \sim
\left(\begin{array}{ccccccccc}
1 & 1 & 0 & 0 & \ldots & 0 & 0 & 0 & 0\\
0 & 1 & 1 & 0 & \ldots & 0  & 0 & 0 & 0 \\
0 & 0 & 1 & 1 & \ldots & 0 & 0 & 0 & 0\\
\vdots & \vdots & \vdots & \vdots & \ddots & \vdots & \vdots & \vdots & \vdots \\
0 & 0 & 0 & 0 & \ldots & 1 & 1 & 0 & 0 \\
0 & 0 & 0 & 0 & \ldots & 0 & 1 & 0 & 1 \\
1 & 0 & 0 & 0 & \ldots & 0 & 0 & 0 & 1 \\
\end{array}\right) \sim
\end{displaymath}
\begin{displaymath}
\left(\begin{array}{ccccccccc}
1 & 1 & 0 & 0 & \ldots & 0 & 0 & 0 & 0\\
0 & 1 & 1 & 0 & \ldots & 0  & 0 & 0 & 0 \\
0 & 0 & 1 & 1 & \ldots & 0 & 0 & 0 & 0\\
\vdots & \vdots & \vdots & \vdots & \ddots & \vdots & \vdots & \vdots & \vdots \\
0 & 0 & 0 & 0 & \ldots & 1 & 1 & 0 & 0 \\
0 & 0 & 0 & 0 & \ldots & 0 & 1 & 1 & 0 \\
1 & 0 & 0 & 0 & \ldots & 0 & 0 & 1 & 0 \\
\end{array}\right) = \mathcal{M}_1
\end{displaymath}

\begin{displaymath}
\left(\begin{array}{ccccccccc}
1 & 1 & 0 & 0 & \ldots & 0 & 0 & 0 & 0 \\
0 & 1 & 1 & 0 & \ldots & 0  & 0 & 0 & 0 \\
0 & 0 & 1 & 1 & \ldots & 0 & 0 & 0 & 0\\
\vdots & \vdots & \vdots & \vdots & \ddots & \vdots & \vdots & \vdots & \vdots \\
0 & 0 & 0 & 0 & \ldots & 1 & 1 & 0 & 0\\
0 & 1 & 1 & 1 & \ldots & 1 & 0 & 1 & 0\\
\end{array}\right) \sim
\left(\begin{array}{ccccccccc}
1 & 1 & 0 & 0 & \ldots & 0 & 0 & 0 & 0 \\
0 & 1 & 1 & 0 & \ldots & 0  & 0 & 0 & 0 \\
0 & 0 & 1 & 1 & \ldots & 0 & 0 & 0 & 0\\
\vdots & \vdots & \vdots & \vdots & \ddots & \vdots & \vdots & \vdots & \vdots \\
0 & 0 & 0 & 0 & \ldots & 1 & 1 & 0 & 0\\
1 & 0 & 0 & 0 & \ldots & 0 & 1 & 0 & 1\\
\end{array}\right)\unlhd \mathcal{M}_1
\end{displaymath}

\bigskip
%
%

By combining this fact with Theorem~\ref{thm:matrices}, Lemma~\ref{lemma:star},\ref{lemma:separation},\ref{lemma:modular}, and~\ref{lemma:antiSeparation} we obtain a variant of the Hanani--Tutte theorem, i.e., Theorem~\ref{thm:tree},
for subdivided stars.

\begin{theorem}
\label{thm:starHanani}
If a strip clustered subdivided star admits an independently even clustered drawing
then it admits a strip clustered embedding.
\end{theorem}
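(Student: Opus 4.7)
The plan is to chain together the reduction machinery of Section~\ref{sec:alg} with Tucker's forbidden-submatrix characterization of the circular ones property. By Lemma~\ref{lemma:star}, $(G,T)$ is strip planar if and only if the cyclic order of the edges incident to the center $v$ can be chosen so that each pair $L_i',R_i'$ in $\mathcal{S}'$ appears as two disjoint arcs; translated into the language of $0$--$1$ matrices with ambiguous symbols, this is exactly the circular ones property for the matrix $M$ associated with $\mathcal{S}'$. By Theorem~\ref{thm:matrices}, $M$ has this property iff every $M_i$ (whose rows are the processed rows together with the constraint vectors generated along the way) has the circular ones property. So the task reduces to showing that no $M_i$ contains a Tucker obstruction.

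Before attempting the obstruction analysis, I would use Lemma~\ref{lemma:modular} (extended from Lemma~\ref{lemma:separation} to all constraint vectors, including those arising as rows of the $M_i$) to perform a finite sequence of edge pullings over $v$ and local switchings at $v$, putting the drawing $\mathcal{D}$ into a normalized form in which, for every row of every $M_i$, the edges corresponding to $0$'s and the edges corresponding to $1$'s cross pairwise an even number of times. By Observation~\ref{obs:consecutiveVsCircular} and the Tucker list, any $M_i$ violating the circular ones property must contain, up to row and column permutations and row inversions, a copy of either $\mathcal{M}_1$ or $M_2$ as a submatrix.

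The $M_2$ obstruction is eliminated directly by Lemma~\ref{lemma:antiSeparation}: a copy of $M_2$ yields three pairs of edges $\{e_{i_1},e_{i_2}\},\{e_{i_3},e_{i_4}\},\{e_{i_5},e_{i_6}\}$ appearing consecutively in the rotation at $v$ and a ``transversal'' row that, after inversion if needed, places one representative of each pair in $L_i'$ and the other in $R_i'$; Lemma~\ref{lemma:antiSeparation} says exactly that this cannot happen in a drawing normalized as above. For the $\mathcal{M}_1$ obstruction, the natural plan is to iterate Observation~\ref{obs:separation} around the cycle. Let $e_1,\ldots,e_n$ be the columns and interpret the first $n-1$ rows of $\mathcal{M}_1$ as the statement ``the edge pairs $\{e_j,e_{j+1}\}$ (with indices mod $n$) each come from a $\{L',R'\}$ partition where one of the two edges is a $0$ and the other is a $1$''. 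The last row then forces an odd number of sign changes along the cycle $e_1,e_2,\ldots,e_n,e_1$ while the normalization plus repeated application of Observation~\ref{obs:separation} forces the number of sign changes to be even, giving a contradiction.

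The main obstacle, as I see it, is the bookkeeping of the last step: Observation~\ref{obs:separation} is a statement about a single $L_i',R_i'$ pair, but the $n$ ``edge'' rows of $\mathcal{M}_1$ arise from different rows of $M_i$, and one must check that the necessary ``transitivity'' of parities along the cycle really follows from the simultaneous validity of all these partitions in the normalized drawing. This is where one must unpack what it means for Lemma~\ref{lemma:modular} to apply uniformly to all constraint vectors and must verify that the normalization reached for different rows is compatible at $v$; once that is set up, summing the parity identities of Observation~\ref{obs:separation} around the cycle closes the argument, and applying Theorem~\ref{thm:matrices} combined with Lemma~\ref{lemma:star} yields the claimed strip clustered embedding.
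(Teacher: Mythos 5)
Your overall route is the paper's: reduce via Lemma~\ref{lemma:star} and Theorem~\ref{thm:matrices} to excluding Tucker obstructions (in the circular-ones form $\mathcal{M}_1$, $M_2$ of Observation~\ref{obs:consecutiveVsCircular}), kill $M_2$ with Lemmas~\ref{lemma:separation}, \ref{lemma:modular} and~\ref{lemma:antiSeparation}, and attack $\mathcal{M}_1$ by iterating the parity identity of Observation~\ref{obs:separation}. However, the step you label as ``bookkeeping'' and leave open is the heart of the proof, and as you have set it up it does not go through. Lemma~\ref{lemma:modular} normalizes the drawing with respect to a \emph{single} constraint vector; it does not provide a drawing in which, simultaneously for every row of every $M_i$, the $0$-edges and the $1$-edges cross pairwise evenly, and no such simultaneous normalization is claimed (or needed) in the paper. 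In addition, your reading of $\mathcal{M}_1$ is off: in each of its rows the two consecutive columns carry the two \emph{ones} of that row, so each pair $\{e_j,e_{j+1}\}$ lies entirely on one side of the corresponding partition rather than being split between $L'$ and $R'$; consequently the ``odd versus even number of sign changes around the cycle'' formulation would have to be re-derived from scratch.

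What closes the gap in the paper is a sequential renormalization rather than a global one: first pull edges over $v$ so that $cr(e_1,e_j)=cr(e_2,e_j)$ for all $j\ge 3$ and $cr(e_1,e_{k-1})=cr(e_1,e_k)$; then, inductively for each consecutive pair $e_i,e_{i+1}$, pull at most $e_{i+1}$ over $v$ to force $cr(e_i,e_{i+2})=cr(e_{i+1},e_{i+2})$ --- this preserves all previously established equalities because such a pull flips the parity of $e_{i+1}$ with \emph{both} members of any previously equalized pair $(e_{i'},e_{i'+1})$ --- and then Observation~\ref{obs:separation}, applied to the quadruples $e_i,e_{i+1},e_{i+2},e_{i+1+j}$ arising from the $i$-th row, propagates the equality to all later columns. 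Only after this propagation (together with the last-but-one row of $\mathcal{M}_1$) does one obtain $cr(e_1,e_{k-2})+cr(e_1,e_k)+cr(e_{k-1},e_{k-2})+cr(e_{k-1},e_k)=0$, which contradicts what Observation~\ref{obs:separation} forces for the last row after switching $e_{k-1}$ with $e_k$. A similar (smaller) omission occurs for $M_2$, where a case analysis and a possible switch of $e_3,e_4$ with $e_5,e_6$ are needed before the hypothesis of Lemma~\ref{lemma:antiSeparation} literally holds. Until you carry out this induction and the compatibility check it encodes, the exclusion of $\mathcal{M}_1$ is a plan rather than a proof.
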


\begin{proof}
By Lemma~\ref{lemma:star} and Theorem~\ref{thm:matrices}
we just need to prove that the 0--1 matrix corresponding
to our subdivided star does not contain forbidden matrices from Observation~\ref{obs:consecutiveVsCircular}.
In what follows we assume that the rotation at $v$ corresponds to the order of columns in the matrices
$\mathcal{M}_1$ and $M_2$ above.
Note that the property of not containing a given matrix as a sub-matrix is hereditary with respect
to taking (induced) sub-graphs in our context.
Thus, for the sake of contradiction we assume that a clustered subdivided star
 corresponding to $M_2$ admits an independently even clustered drawing.
 Consider the first three rows of $M_2$. Let $e_1,\ldots, e_6$ denote edges corresponding
 to the columns (in this order) of $M_2$. By Lemma~\ref{lemma:separation} and~\ref{lemma:modular} we can assume that $e_1$ and $e_2$ crosses
 every other edge $e_3,e_4,e_5$ and $e_6$ an even number of times, and $e_1$ and $e_2$ are consecutive
  in the rotation.
  Now, either the hypothesis of Lemma~\ref{lemma:antiSeparation} is satisfied,
   or a simple case analysis using Observation~\ref{obs:separation} reveals that by switching the pair $e_3,e_4$ for $e_5, e_6$ (this corresponds
to performing four operations of switching defined above) in the rotation and relabeling the edges the hypothesis of Lemma~\ref{lemma:antiSeparation}
   is satisfied.
     Then the last row of $M_2$ contradicts Lemma~\ref{lemma:antiSeparation}.

In order to rule out $\mathcal{M}_1$ similarly as $M_2$ we proceed by contradiction as well.
Let $e_1,\ldots e_k$ correspond to the columns (in this order) of an instance of $\mathcal{M}_1$ such that
the corresponding subdivided star admits an independently even clustered drawing, where in the rotation at
$v$ we have $e_1,\ldots e_k$ in this order.
Similarly as in the proof of Lemma~\ref{lemma:separation}
we can apply a sequence of operations of pulling an edge incident to $v$ over
 $v$ so that $e_1$  crosses an odd and even, respectively, number
of times exactly the same edges in the set $\{e_3,\ldots, e_k\}$ as $e_2$ does.
In particular, $cr(e_1,e_k)=cr(e_2,e_k)$.

Additionally, we assume that $cr(e_1,e_{k-1})= cr(e_1,e_{k})$, since if that is not the case
we make it happen by pulling either $e_{k-1}$ or $e_{k}$ over $v$.
Inductively, by the $i^{\mathrm{th}}$, $i>1$, row of the matrix and Observation~\ref{obs:separation}
we can guarantee for every pair $e_i$ and $e_{i+1}$, $i+1<k-1$, that
$cr(e_{i+1},e_{i+1+j})$, for $0<j\leq k-i-1$, is the same as $cr(e_{i},e_{i+1+j})$ without violating the same condition
for the previously considered pairs.
Indeed, for $j=1$ we possibly pull only $e_{i+1}$ over the center of the star if that is not the case.
Then Observation~\ref{obs:separation} applied to four-tuples $e_i,e_{i+1},e_{i+2}$ and $e_{i+1+j}$, for $j>1$,
gives us $cr(e_{i},e_{i+1+j})=cr(e_{i+1},e_{i+1+j})$, since $cr(e_{i},e_{i+2})=cr(e_{i+1},e_{i+2})$.
Thus, in particular, we have $cr(e_1,e_k)=cr(e_2,e_k)=\ldots=cr(e_{k-2},e_k)$ and
$cr(e_1,e_{k-1})=cr(e_2,e_{k-1})=\ldots=cr(e_{k-2},e_{k-1})$.

Hence, by taking the last but one row of our instance of $\mathcal{M}_1$  into account we obtain
$cr(e_1,e_{k-2})=cr(e_{k-1},e_k)$ since $cr(e_1,e_{k-1})=cr(e_1,e_{k})=cr(e_{k-2},e_k)$.
Moreover, $cr(e_1,e_{k})=cr(e_1,e_{k-1})=cr(e_{k-2},e_{k-1})$.
Thus, $cr(e_1,e_{k-2})+cr(e_1,e_k)+cr(e_{k-1},e_{k-2})+cr(e_{k-1},e_k)=0$.
 Then the last row of our matrix contradicts Observation~\ref{obs:separation}, since after switching $e_{k-1}$ with $e_k$
we obtain $cr(e_1, e_{k-2})+cr(e_1, e_{k})+ cr(e_{k-1}, e_{k-2}) + cr(e_{k-1}, e_{k})=1$.
Hence, the claim follows.
\end{proof}

\subsection{Trees}

In what follows we extend the argument from the previous section  to general trees.
Thus, for the remainder of this section we assume that $(G,T)$ is a strip clustered tree.
Let $v$ denote a vertex of $G$ of degree at least three. Refer to Figure~\ref{fig:gvstar}. Let $(G_v, T)$ denote a strip clustered star
centered at $v$ obtained as follows. For each path $P$ from $v$ to a leaf in $G$ we include to $G_v$ a path $P'$ of the same length,
whose vertex at distance $i$ from $v$ is in the same cluster as the vertex at distance $i$ from $v$ on $P$.

\bigskip
\begin{figure}[htp]
\centering
\subfigure[]{\includegraphics[scale=0.7]{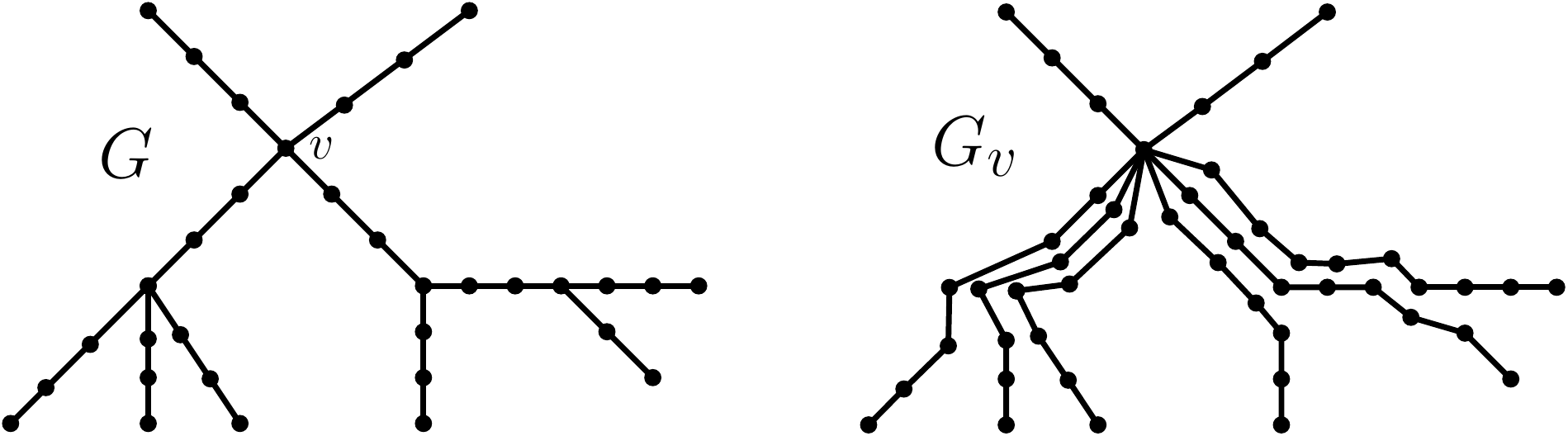}
	\label{fig:gvstar}
	} \hspace{10px}
\subfigure[]{
\includegraphics[scale=0.7]{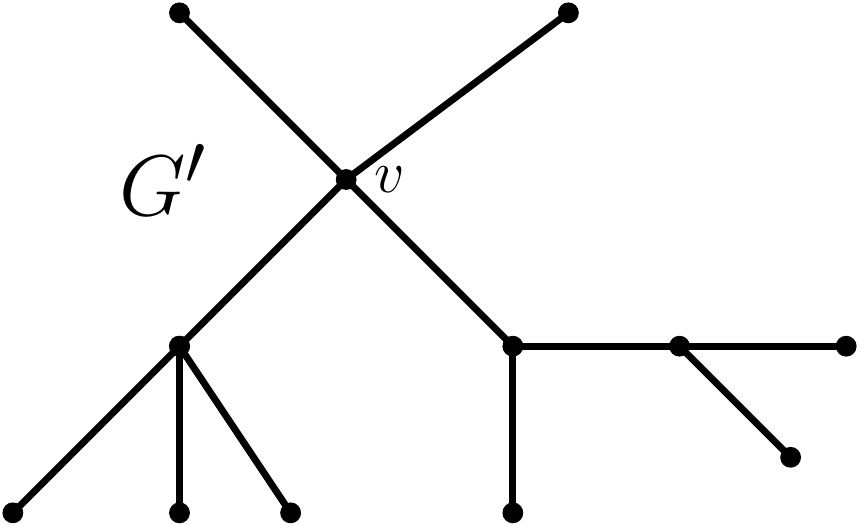}
\label{fig:suppress}}
\caption{(a) A tree $G$ (left) and its subdivided star $G_v$ centered at $v$ (right). (b) The tree $G'$ obtained after suppressing
vertices of degree two in $G$.}
\end{figure}

\subsubsection{Algorithm.}
In the light of the characterization from Section~\ref{sec:char} a na\"ive algorithm to test $(G,T)$ for strip planarity
could use the algorithm from  the previous section to check all the subdivided stars  $(G_v, T)$, $v\in V(G)$ with degree at least three,
for strip planarity. However, there are two problems with this approach.
First, we need to take the structure of the tree $G$ into account, since we pass only a limited amount of information about $(G,T)$ to the subdivided stars.
 Second, we need to somehow decide if the common intersection of the sets of possible cyclic orders of leaves of $G$
corresponding to the respective subdivided stars is empty or not.
This would be easy if we did not have ambiguous symbols in our 0--1 matrices corresponding to $(G_v,T)$.

To resolve the first problem is easy, since for each star we can simply
start the algorithm from~\cite{HC03} with the PC-tree isomorphic to $G$, whose all internal vertices are of type $P$ (see~\cite{HC03}
for a description of PC-trees).
This modification corresponds to adding rows into our 0--1 matrix, where each added rows represents the partition of the leaves
of $G$ by a cut edge, or in other words, by a bridge. Let $M_G$ denote the 0--1 matrix representing these rows.
Since we add $M_G$  at the top of the 0--1 matrix with ambiguous symbols corresponding to the given subdivided star $(G_v,T)$,
we maintain the property that an ambiguous symbol has only ambiguous symbol underneath.
Moreover, it is enough to modify the matrix only for one subdivided star.
To overcome the second problem we have a work a bit more.

First, we root the tree $G$ at an arbitrary vertex $r$ of degree at least three. Let us suppress all the non-root vertices in $G$ of degree two
and denote by $G'$ the resulting tree (see Figure~\ref{fig:suppress} for an illustration). Let us order the stars $(G_v,T)$, where
the degree of $v$ in $G$ is at least three, according to the distance of $v$ from $r$ in $G'$ in a non-increasing manner.
Thus, a subdivided star $(G_v,T)$ appears in the ordering after all the subdivided stars $(G_u,T)$ for the descendants $u$ of $v$.
For a non-root $v$ we denote by $P_v$ the path in $G$ from $v$ to its parent in $G'$.
Let $I_v=(\min(P_v),\max(P_v))$ denote the interval corresponding to $v$.
Let $I_r=(\min (G), \max (G))$.
Let $M_v$ denote a 0--1 matrix with ambiguous symbols defined by $(G_v,T)$ as in Section~\ref{sec:star}, where each row corresponds
to an interval $(s,b)$ and each column corresponds to an edge incident to $v$ in $G'$ or equivalently to a leaf of $G_v$, and hence, to a leaf of $G$.
In every  0--1 matrix $M_v$ with ambiguous symbols representing $(G_v,T)$ for $v\in G'$ with degree at least three
we delete rows that correspond to intervals $(s,b)$ strictly containing  $I_v$, i.e., $s<\min(P_v)<\max(P_v)<b$.
Let $M_v'$ denote the resulting matrix for every $v$. 

\paragraph{Running time analysis.}
We obtain a cubic running time due to the fact that there exists $O(|V(G)|)$ subdivided
stars $(G_v, T)$, $v\in V(G)$, each of which accounts for $O(|V(G)|^2)$ rows in $M$.

\paragraph{Definition of the matrix $M$.}
Refer to Figure~\ref{fig:matrica}.
Let us combine the obtained matrices $M_G$ and all $M_v'$ for $v\in V(G)$, in the given order so that the rows of $M_v'$ for some $v$ are
added at the bottom of already combined matrices. Let $M'$ denote the resulting matrix.
We replace in $M'$ the minimum number of 0--1 symbols by ambiguous symbols so that the resulting matrix has only
ambiguous symbols below every ambiguous symbol in the same column. Let $M$  denote the resulting matrix.
It remains to show the following lemma.

\begin{lemma}
\label{lemma:treelemma}
The matrix $M$ has circular ones property if and only if $(G,T)$ is strip planar.
\end{lemma}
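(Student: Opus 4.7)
The plan is to treat the two implications separately, reducing the harder one to Theorem~\ref{thm:characterization}.

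For the forward direction, I would read off the cyclic order of leaves on the outer face of any fixed strip clustered embedding of $(G,T)$. The rows of $M_G$ are automatically satisfied because in any planar embedding of the tree $G$ the leaves lying on the two sides of a bridge form two cyclic arcs. Each row of $M_v'$ encodes, for some interval $(s,b)$, the restriction on the rotation at $v$ imposed by the interleaving $s$-cap/$b$-cup pairs in the subdivided star $(G_v,T)$ that meet at $v$; every such pair lifts to an interleaving cap-cup pair in $(G,T)$ along the unique path of $G$ realizing it, and that pair is feasible in the embedding by Theorem~\ref{thm:characterization}. Hence the induced row-restriction holds and the cyclic order satisfies $M$.

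For the reverse direction, let $\pi$ be a cyclic order of the columns of $M$ witnessing its circular-ones property. Because $\pi$ satisfies $M_G$, it is the cyclic order of the leaves on the outer face of some planar embedding $\mathcal{E}$ of $G$, and in particular $\pi$ determines a rotation at each internal vertex. I would apply Theorem~\ref{thm:characterization}: a tree contains no cycle so no vertex is trapped, and it only remains to forbid an unfeasible interleaving pair. Assume to the contrary that $(P_1,P_2)$ is such a pair with $P_1$ an $s$-cap and $P_2$ a $b$-cup. Because $G$ is a tree the set $Q=P_1\cap P_2$ is a path, so $i_A(P_1,P_2)$ is a sum of half-contributions at the endpoints of $Q$ and the alternation responsible for unfeasibility is localized at some vertex $u$ of $Q$; by sliding $u$ to the nearest vertex of $V(G')$ along $P_1\cup P_2$ (degree-two subdivision vertices contribute nothing to the rotation constraint) I may assume $u\in V(G')$. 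The restriction of $(P_1,P_2)$ to $G_u$ gives an interleaving cap-cup pair of the star meeting at $u$ with the same interval $(s,b)$, and its restriction on the rotation at $u$ corresponds to one row of $M_u$.

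If $(s,b)$ does not strictly contain $I_u=(\min(P_u),\max(P_u))$ the row lies in $M_u'\subseteq M$ and is satisfied by $\pi$, contradicting the alternation. The remaining case is $s<\min(P_u)$ and $b>\max(P_u)$, where the row was deleted in the passage from $M_u$ to $M_u'$. Here the essential ingredient is that in a strip clustered graph consecutive vertices along any edge differ in cluster index by at most one, so every cap-branch of $P_1$ at $u$ reaches cluster $\min(P_u)$ before reaching cluster $s$, and every cup-branch of $P_2$ at $u$ reaches cluster $\max(P_u)$ before reaching cluster $b$. Truncating each branch at the corresponding first vertex would produce a $\min(P_u)$-cap $P_1'$ and a $\max(P_u)$-cup $P_2'$ that share the four edges of $(P_1,P_2)$ at $u$ and therefore induce the same rotation restriction at $u$; the associated interval equals $I_u$, so the corresponding row belongs to $M_u'\subseteq M$ and is satisfied by $\pi$, giving the contradiction.

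The main obstacle will be verifying that the truncated pair $(P_1',P_2')$ is honestly interleaving in the sense $\min(P_1')<\min(P_2')\le\max(P_1')<\max(P_2')$ and that $P_1'\cap P_2'$ is connected. The naive truncation can fail this when the cup temporarily dips below $\min(P_u)$ or the cap climbs above $\max(P_u)$ along its middle, so that the ``four-edge'' restriction at $u$ would come with a different interval than $I_u$. My intended remedy is to first replace $(P_1,P_2)$ by a minimal sub-pair preserving the alternation at $u$, using the sub-path extraction of Lemma~\ref{lemma:conjChar}, and only then truncate; minimality guarantees that the sub-pair stays in the band $[\min(P_u),\max(P_u)]$ except along its four arms that reach clusters $s$ or $b$. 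The degenerate boundary cases $\gamma(u)\in\{\min(P_u),\max(P_u)\}$ would be handled by sliding $u$ one further step along $P_u$ to the nearest non-degenerate vertex of $G'$.
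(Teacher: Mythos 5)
Your forward direction matches the paper's (easy) direction. The reverse direction, however, has a genuine gap: you argue as if $M$ were simply the stack of $M_G$ and the matrices $M_v'$, but the lemma's matrix $M$ is obtained from that stack $M'$ by additionally replacing $0/1$ entries with ambiguous symbols so that every ambiguous symbol has only ambiguous symbols below it in its column (this is exactly what makes Theorem~\ref{thm:matrices} applicable). Consequently your key step --- ``the row lies in $M_u'\subseteq M$ and is satisfied by $\pi$'' --- does not yield the contradiction: in $M$ the entries of that row in the columns representing the branches of $P_1\cup P_2$ at $u$ may have been turned into $*$ (their leaves were ``trimmed'' while processing a descendant star), and then the circular-ones property of $M$ imposes no restriction whatsoever on those columns in that row. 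Showing that these erased restrictions are still implied by the surviving rows is the heart of the paper's proof: it processes the stars bottom-up (descendants before ancestors), proves the containments~(\ref{eqn:ahaha}) $L_y(s,b')\subseteq L_v(s,b)$ and $L_y'(s,b')\supseteq L_v'(s,b)$, extends the double induction of Lemma~\ref{lemma:star}, exhibits an untrimmed representative leaf $l''$ via a delicate choice of the vertex $w'$, and combines all of this with the bridge constraints $M_G$ and Observation~\ref{obs:alegebera}. None of this appears in your proposal, and without it the reverse implication is not established.

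A secondary divergence: for the rows deleted because $(s,b)$ strictly contains $I_u$, the paper does not truncate the pair; it re-attributes the constraint to the nearest ancestor $v$ of the divergence vertex for which $(s,b)$ does not strictly contain $I_v$ (the root always qualifies) and uses the corresponding row of $M_v'$ together with $M_G$. Your truncation route is shakier than you concede: the branches of $P_1$ and $P_2$ at $u$ are forced paths of the tree, so if the cap branch climbs above $\max(P_u)$ (or the cup dips below $\min(P_u)$) before it first reaches the truncation cluster, passing to a ``minimal sub-pair'' via Lemma~\ref{lemma:conjChar} cannot prevent this, the truncated pair need not be interleaving, and the restriction you need is then not witnessed by the interval $I_u$ in the way you claim. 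Even if that case analysis were repaired, the missing treatment of the ambiguous symbols remains the decisive gap.
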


\begin{proof}
We claim that $M$ has circular ones property if and only if $(G,T)$ is strip planar.
The ``if'' direction is easy. For the ``only if'' direction we proceed as follows.

A path $P$ in $G$ starting at $v$ is \emph{v-represented} by a column of $M$ or $M_v'$ if the column
corresponds to a leaf $u$ connected with $v$ by a path containing $P$. Note that a path $P$ can be $v$-represented
by more than one column.
A path $P$ in $G$ starting at $v$ is \emph{limited} by the interval $(s,b)$
if $s<\min(P)<\max(P)<b$. Let us assume that $P$ joins $v$ with a leaf.
Note that the column of $M_v'$ $v$-representing $P$ limited by $(s,b)$
 contains the ambiguous symbol in the row corresponding to $(s,b)$.
We consider an interleaving pair of an $s$-cap $P_1$ and $b$-cup $P_2$ that are not disjoint.
Since $G$ is a tree, $P_1$ and $P_2$ share a sub-path $P'$
(that could degenerate to a single vertex). Let $v'$ denote the vertex of $P'$ closest to the root $r$.
If the interval $(s,b)$  does not strictly contain $I_{v'}$ we let $v:=v'$.
If the interval $(s,b)$  strictly contains $I_v$ we let $v$ be the closest ancestor of $v'$ in $G'$ for which
 $(s,b)$ does not strictly contain $I_{v}$. Note that at least the root would do.
 Note that it is possible that $v$ belongs to $P'=P_1 \cap P_2$, that it does not belong to $P_1\cup P_2$, and that it belongs
 to exactly one of $P_1$ and $P_2$. However, by the definition of $v$, if $v$ does not belong to $P_1 \cup P_2$ then none of the paths $P_1$ and $P_2$ can be extended into a path containing $v$.  See Figure~\ref{fig:cases}.

  This property is crucial, and it implies that a row of $M_v'$ corresponding to $(G_v,T)$ gives rise to the restriction on the order
 of leaves corresponding to the pair $P_1$ and $P_2$. This in turn implies that in $M'$ there
 exists a row having
ones in two columns $v$-representing $P_{11}$ and $P_{12}$,
where $P_{i1}$ and $P_{i2}$ denote the paths in $G$ joining $v$ with the end vertices of $P_i$ for $i=1,2$,
and zeros in two columns $v$-representing $P_{21}$ and $P_{22}$ (or vice versa),
However, we need to show that there exists such a row in $M$, or similarly as in the case of subdivided stars, that the corresponding restriction on the order of leaves of $G$ is implied by
other rows, if such a row does not exist.

 \bigskip
\begin{figure}[htp]
\centering
\includegraphics[scale=0.7]{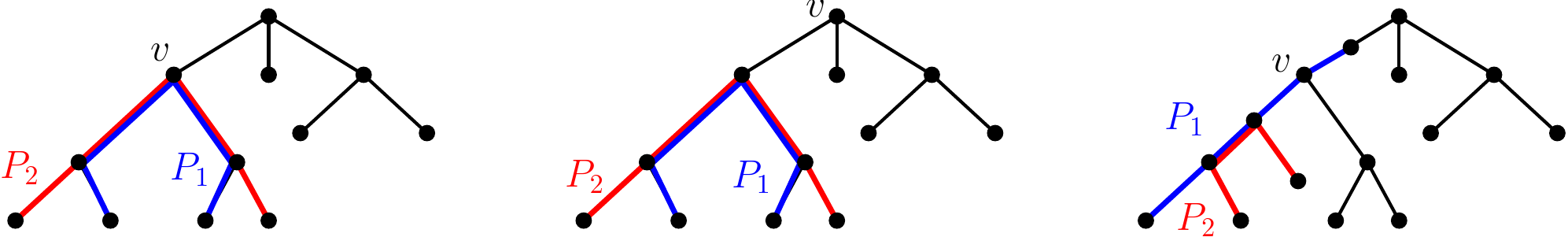}
\caption{Three distinct placements of $v$. On the left, $v$ belongs to $P_1 \cap P_2$, in the middle, $v$ does not belong to $P_1 \cup P_2$, and
on the right, $v$ belongs to exactly one of $P_1$ and $P_2$.}
\label{fig:cases}
\end{figure}

\bigskip
\begin{figure}[htp]
\centering
{
\includegraphics[scale=0.7]{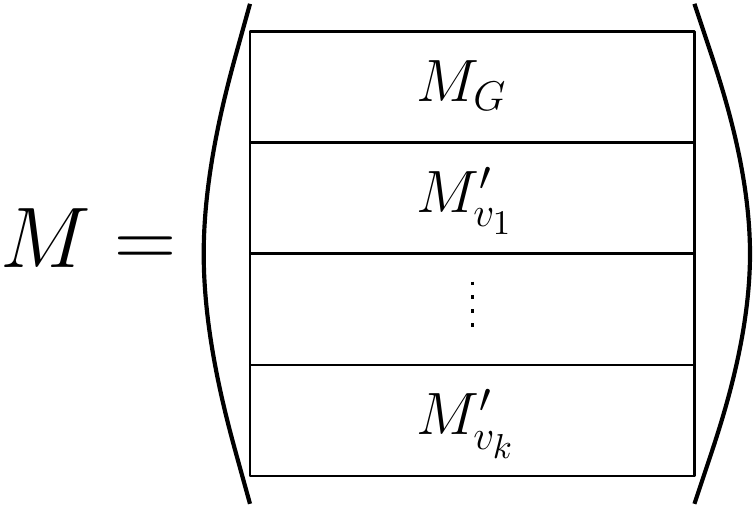}
}
\hspace{10px}
{
\includegraphics[scale=0.7]{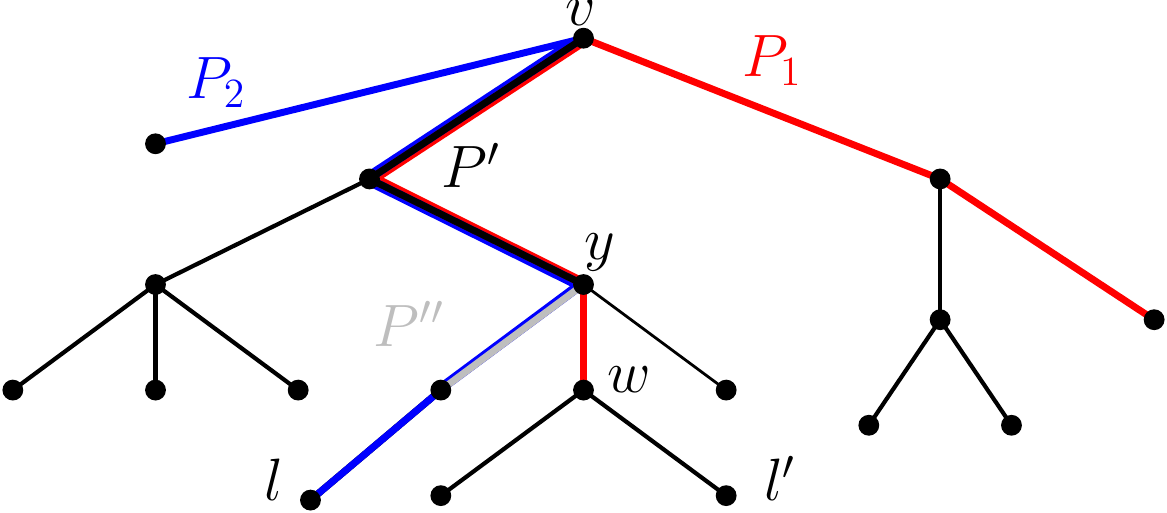}
}
\caption{(a) The composition of the matrix $M$. (b) An interleaving pair of an $s$-cap and $b$-cap $P_1$ and $P_2$ corresponding to
a restriction on a desired cyclic ordering of the leaves of $G$ involving a leaf $l'$ that was ``trimmed'' while processing $(G_y,T)$.}
\label{fig:matrica}
\end{figure}

The PC-tree algorithm processes $M$ from the top row by row.
Enforcing an ambiguous symbol below every ambiguous symbol in $M_y'$ corresponds to ``trimming'' $G$ by shortening every
path $P$ joining $y$ with a leaf $l$ in $(G_y,T)$
 limited by the interval corresponding to the currently processed row so that $P$ ends in the parent of $l$ in $G'$.
  Note that we never ``trim'' the path starting at $y$ going towards its parent in $G'$, when
processing $(G_y,T)$, since such a path can be limited only by an interval strictly containing $I_y$.
Also whenever we ``trim'' a path starting at $y$ we keep at least four paths starting at $y$, and thus, at least three paths
going from $y$ towards leaves.
Thus, if $w$, the other end vertex of $P_{\alpha\beta}$ than $v$, is not a descendant of $y$,
there exists at least one column $c$ of $M_y'$ such that $c$ $v$-represents the path $P_{\alpha\beta}$ and $c$ does not contain an ambiguous symbol in $M_y'$.
Unfortunately, if $w$ is a descendant of $y\not=v$, we could ``trim'' all the leaves that are descendants of $w$.
In this case we would like to argue similarly as in the case of a subdivided star that by introducing an ambiguous symbol in $M$  in the row corresponding
to $(s,b)$ of $M_v'$ in  all the columns $v$-representing $P_{\alpha\beta}$,
 we do not disregard required restrictions imposed on the order of leaves of $G$ by our characterization.

Let $y$ be a descendant of $v$ or equal to $v$.
Suppose that  we ``trimmed'' a descendant $l'$ of $w$ or $w$ (if $w$ is a leaf) that is a descendant of $y$, while processing $(G_y,T)$ that appears, of course, before $(G_v,T)$ in
our order or equals $(G_y,T)$. 
Let $L_z(s'',b'')$ and $L_z'(s'',b'')$, respectively, be defined analogously as $E(s'',b'')$ and $E'(s'',b'')$ in Section~\ref{sec:star}
for some $(G_z,T)$ with leaves of $G$ playing the role of the edges incident to the center of the star and some $s''$ and $b''$, $s''<b''$.
We also have an order corresponding to~(\ref{eqn:order}) for every $(G_z,T)$.
By reversing the order of clusters, without loss of generality we can assume that $w$ is an end vertex of $P_1$ which is an $s$-cap.
Refer to Figure~\ref{fig:xAxis23}(a).
If $\gamma(w)\in I_y$ then $w$ cannot be a descendant of $y$, since we have  $I_y\subseteq (\min (P_1\setminus w), \max(P_1 \setminus w))$ and $w\not\in (\min (P_1\setminus w), \max(P_1 \setminus w))$.
 Thus, $\gamma(w)\not\in I_y$, and hence, $s\not\in I_y$. Moreover, $s<\min(I_y)$, since $P_1$ is an $s$-cap.
Since the descendant $l'$ of $w$ was ``trimmed'' while processing $(G_y,T)$ by~(\ref{eqn:order}) there exists
 $L_y(s',b')$ for some $s',b'$ not containing $l'$.
Since the interval $(s',b')$ does not strictly contain $I_y$, contains $s$, and $s<\min(I_y)$ we have $s'\not\in I_y$, $s'<s$ and $b'\in I_y$.
  Moreover, if $y\not= v$ for some row of $M_y'$ we have the corresponding $L_y(s,b')$ containing $l'$
   since there exist at least two paths in $G_y$ from $y$ whose initial pieces correspond to the path from $y$ toward its parent as $v$ has degree at least three. Note that $L_y(s,b')$ can contain only descendants of $y$.
We claim the following (the proof is postponed until later)
\begin{equation}
{L_y(s,b')\subseteq L_v(s,b) \ \mathrm{and} \
L_y'(s,b')\supseteq L_v'(s,b)
\label{eqn:ahaha}}
\end{equation}
  Now, by using~(\ref{eqn:ahaha}) we can extend  the double-induction argument from Lemma~\ref{lemma:star}.

In the same manner as in the previous section $R_i$ and $L_i$ correspond to the $i^\mathrm{th}$ row of $M'$.
We define $\mathcal{S}'$ recursively as $\mathcal{S}_m'$, $m$ is the number of rows of $M'$, where
$\mathcal{S}_1'=\{L_1',R_1'| \ L_1'=L_1, \ R_1'=R_1 \}$ and $\mathcal{S}_{j}'=\mathcal{S}_{j-1}\cup \{L_j',R_j'| \ L_j'=L_j\cap (L_{j-1}'\cup R_{j-1}'),
 R_j'=R_j\cap (L_{j-1}'\cup R_{j-1}')\} $. Let $S_{j,0}=L_j' \cup R_j'$.
We need, in fact, to apply the condition~(\ref{eqn:ahaha}) only when a new leaf $l'$, such that $S_{j,k}=S_{j,k} \cup \{l'\}$,
 added to $S_{j,k-1}$ (playing the role of the edge $e$ from the proof of Lemma~\ref{lemma:star})
 is the only descendant of $w$ in ${S}_{j,k}$.
For the other descendants of $w$, $M_G$ forces the corresponding restriction.
  More formally, we need to show that a cyclic ordering of leaves $\mathcal{O}$ respecting restrictions imposed by the first $j-1$ rows, and
  the columns corresponding to $S_{j,k-1}$ in the $j^\mathrm{th}$ row, respects also restrictions imposed by
  the columns corresponding to $S_{j,k}$ in the $j^\mathrm{th}$ row. Let $\{l'l_1\}\{l_2l_3\}$ be such a restriction for a leaf $l'$ trimmed the most
  recently similarly as for $e$ in the case of  subdivided stars.
  Let the restriction  $\{l'l_1\}\{l_2l_3\}$, $l_1\in L_v(s,b)\setminus L_y(s,b')$, $l'\in L_v(s,b)$  and $l_2,l_3\in L_v'(s,b)$, induced by $S_{j,k}$ in the $j^\mathrm{th}$ row
 correspond to pair of an $s$-cap $P_1$ and $b$-cup $P_2$,
  where the leaf $l'$ $v$-represents a sub-path $P_{\alpha\beta}$ (for $v$ and $P_{\alpha\beta}$ defined as above)
  of $P_1$ ending in $w$ such that $\gamma(w)=s$.

\bigskip
\begin{figure}[htp]
\centering
{
\includegraphics[scale=0.7]{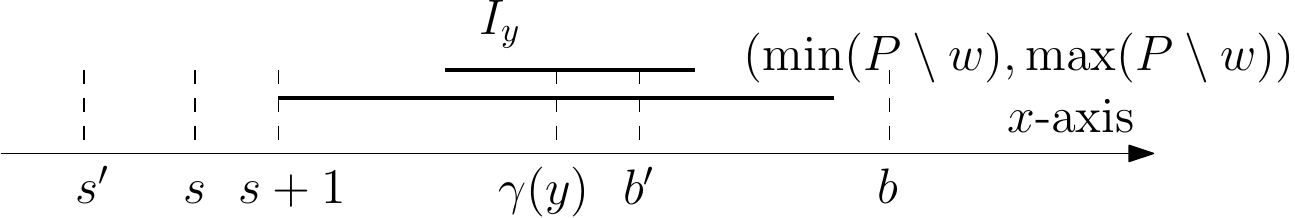}
}
\hspace{5px}
{
\includegraphics[scale=0.7]{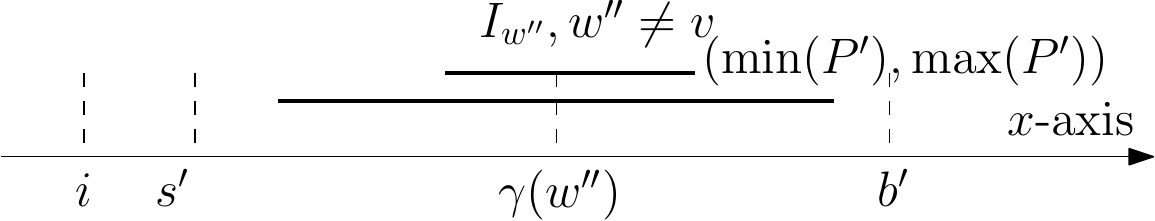}
}
\caption{(a) The interval corresponding to $P\setminus \{w\}$ containing $I_y$. (b) ) The interval corresponding to $P$ containing $I_{w'}$.}
\label{fig:xAxis23}
\end{figure}

  First, we assume that $y=v$. First, note that $l_2,l_3\in E_v'(s,b')$ by~(\ref{eqn:subset1}).
  We proceed by the same argument as in Lemma~\ref{lemma:star}, since we have~(\ref{eqn:order})
  for $G_v$. Here, $s,s',b$ and $b'$, respectively, plays the role of $s, s_{\alpha,\beta},b$ and $b_\alpha$.
After we find $l''\in  L_v(s,b')$ that was ``trimmed'' after $l'$ by induction hypothesis we have
   $\{l'l''\}\{l_2l_3\}$ and $\{l''l_1\}\{l_2l_3\}$. Thus, by Observation~\ref{obs:alegebera} we are done.
The only problem could be that we cannot find a leaf $l''\in L_v(s,b')$ (analogous to the edge $g$) in the proof of Lemma~\ref{lemma:star}
that was not ``trimmed'' before $l'$
   since all such leaves could be potentially ``trimmed'' while processing previous $(G_{y'},T)$. However, recall that we ``trim''
   only descendant of such $y'$ while processing $(G_{y'},T)$.  We show that we can pick $l''$ such that this does not happen.

Refer to Figure~\ref{fig:xAxis23}(b).
    Indeed, consider the path from $v$ to its descendant $w'$ that is an ancestor of $l''$
   such that $w'$ is an end vertex of an $s'$-cap witnessing presence of $l''$ in $L_v(s',b')$.
 Among all possible choices of $l''$ and $w'$, where, of course $l''$ is
   in the subtree rooted at $w'$, let us choose the one minimizing $i$ such that the subtree rooted at $w'$ contains a vertex in the $i^\mathrm{th}$  cluster.
   Moreover, we assume that $l''$ can be reached from $w'$ by following a path passing through the vertex in the $i^{\mathrm{th}}$ cluster.
   Let $P'$ denote the path between $v$ and $w'$.
   Let $w''$ be an ancestor of $w$  that belongs to $V(G')$ and is not an ancestor of $v$.
(For other choices of $w''$ we cannot trim $l''$ while processing $G_{w''}$.)
  If $v=w''$ we show that we cannot ``trim''  $l''$ while processing $G_{w''}$ by the argument from Lemma~\ref{lemma:star}.
Otherwise, suppose that $w''\not=v$.
   Since $I_{w''}\subseteq (\min(P'),\max(P'))$ and $b_\alpha\not\in (\min(P'),\max(P'))$, we cannot ``trim'' the descendant $l''$ of $w'$ while processing $(G_{w''},T)$. For otherwise  $l''$  is a \ding{33}
   and we would get into a contradiction with the choice of $w'$, since there exists a descendant $l'''$ of $w''$
   in a cluster with the index smaller than $i$ good for us.
    The choice of $w''$ and $l'''$ is good since the interval corresponding
     to the row of $M$ with the maximal index, in which a column of the leaf $l''$ has 0 or 1, does not strictly contain $I_{w''}$,
 and thus, the path from $v$ towards $l'''$ that ends in the cluster $s'$ never visits $b'$ cluster.

  Second, we assume that $y\not=v$. In this case we also have that $l_2,l_3\in E_y'(s,b')$ by~(\ref{eqn:ahaha}).
  We again repeat the argument from
Lemma~\ref{lemma:star} in the same manner as for the case  $y=v$.
  Here, again $s,s',b$ and $b'$, respectively, plays the role of $s, s_{\alpha,\beta},b$ and $b_\alpha$.
  Also a leaf $l''\in L_y(s,b')$ that was not ``trimmed'' before $l'$ playing the role of $g$ is found by the analysis in the previous paragraph,
where $y$ plays the role of $v$.

  If $l'$ is not a descendant of $v$, recall that
  there exists at least one ``untrimmed'' leaf $l''$ such that $l''$ $v$-represents the path $P_{\alpha\beta}$.
  Then by induction hypothesis a restriction $\{l''l_1\}\{l_2l_3\}$ witnessed by $(s,b)$ corresponding to the  $j^{\mathrm{th}}$ row gives
  the desired restriction $\{l'l_1\}\{l_2l_3\}$ on $\mathcal{O}$ by Observation~\ref{obs:alegebera}, since we have $\{l''l'\}\{l_2l_3\}$ by $M_G$.

It remains to prove~(\ref{eqn:ahaha}). Refer to Figure~\ref{fig:matrica}(b).
If $v=y$ we are done by the argument in Lemma~\ref{lemma:star}. Thus, we assume that $y\not= v$.
 We start by proving the first relation $L_y(s,b')\subseteq L_v(s,b)$.
If a leaf descendant $l$ of $y$ is in $L_y(s,b')$ then $l\in L_v(s,b)$, since $b>b'>\gamma(y)$ by $b'\in I_y,b\not\in I_y$,
and $s\not\in (\min(P'),\max(P'))$, where $P'$ is a path between $y$ and $v$.
Then the corresponding witnessing path from $y$ towards $l$ of the fact $l\in L_y(s,b)$ can be extended by $P'$ to the
path witnessing $l\in L_v(s,b')$.
In order to prove the second relation $L_y'(s,b')\supseteq L_v'(s,b)$ we first observe that
$L_y'(s,b')$ contains all the leaves that are not descendants of $y$, since $b'\in I_y$.
On the other hand, if a leaf descendant $l$ of $y$ is in $L_v'(s,b)$ then $l\in L_y'(s,b')$, since the corresponding witnessing path
from $v$ toward $l$ of the fact $l\in L_v'(s,b)$ contains a sub-path $P''$ starting at $y$ and ending in the cluster $b'$ due to $b>b'>\gamma(y)$
witnessing $l\in L_y'(s,b')$.

Thus, if the 0--1 matrix $M$ with ambiguous symbols has circular ones property then there exists an embedding of $G$ such that
every interleaving pair is feasible, and hence, by Theorem~\ref{thm:characterization} the clustered graph $(G,T)$ is strip planar.
\end{proof}

\subsubsection{Hanani--Tutte.}
Given an independently even strip clustered drawing $\mathcal{D}$ of $(G,T)$ where $G$ is a tree
we successively contract all the edges except those that are incident to the leaves of $G$ as follows.
We process the edges not incident to leaves in an arbitrary order.
By a sequence of operations of pulling an edge over a vertex we
turn the currently processed edge $e$ in $G$ into an edge that crosses every other edge an even number of times. In the resulting drawing
we contract $e$ while dragging their adjacent edges along. Note that after each step the resulting drawing is independently even.

Consider the resulting drawing of a star $G'$ with the vertex $v'$ at its center.
Hanani--Tutte variant for strip clustered graphs $(G,T)$ follows  from
Lemma~\ref{lemma:treelemma} in the same way as Theorem~\ref{thm:starHanani} from Lemma~\ref{lemma:star}
 once we extend Observation~\ref{obs:separation} to $G'$ so that the leaf end vertices of $e_1,e_2$ and $e_3,e_4$ represent
two pairs of leaves separated by a bridge in $G$, or two pairs of leaves corresponding to an interleaving pair of paths intersecting in their interiors.
For such an interleaving pair of paths $P_1$ and $P_2$ the corresponding leaves are the leaves reachable from the respective end vertices
of $P_1$ and $P_2$ by using edges not contained in their union, and hence, are not uniquely determined.

If the leaf end vertices of $e_1,e_2$ and $e_3,e_4$ represent
two pairs of leaves separated by a bridge in $G$, we clearly have that $cr(e_1,e_3)+cr(e_1,e_4)+cr(e_2,e_3)+cr(e_2,e_4)=0$,
since this property was true for the edges incident to these leaves in the original drawing, and it was not altered by contracting the non-leaf edges
or pulling an edge over a vertex.
Otherwise, if the leaf end vertices of $e_1,e_2$ and $e_3,e_4$ represent two pairs of leaves corresponding to an interleaving pair
 of paths $P_1$ and $P_2$ intersecting in their interiors,
we distinguish two cases. In the first case  $P_1\cap P_2$ is a vertex $v$. In the second case $P_1 \cap P_2$ is a non-trivial path $P$.
In the first case suppose that $e_1',e_2',e_3'$ and $e_4'$ are four edges contained in $P_1 \cup P_2$ incident to $v$.
We observe that after contracting, let's say $e_1'$,  $cr(e_1',e_3')+cr(e_1',e_4')+cr(e_2',e_3')+cr(e_2',e_4')=
cr(e_1'',e_3')+cr(e_1'',e_4')+cr(e_2',e_3')+cr(e_2',e_4')$, where $e_1''$ is an edge incident to the other end vertex $u\not=v$ of $e_1'$.
Since $cr(e_1',e_3')+cr(e_1',e_4')+cr(e_2',e_3')+cr(e_2',e_4')=0$, by Observation~\ref{obs:separation} applied to $P_1 \cup P_2$ in the original
drawing of $G$ and induction with the original drawing of $G$ as the base case,  we are done in this case.

\begin{figure}[htp]
\centering
\includegraphics[scale=0.7]{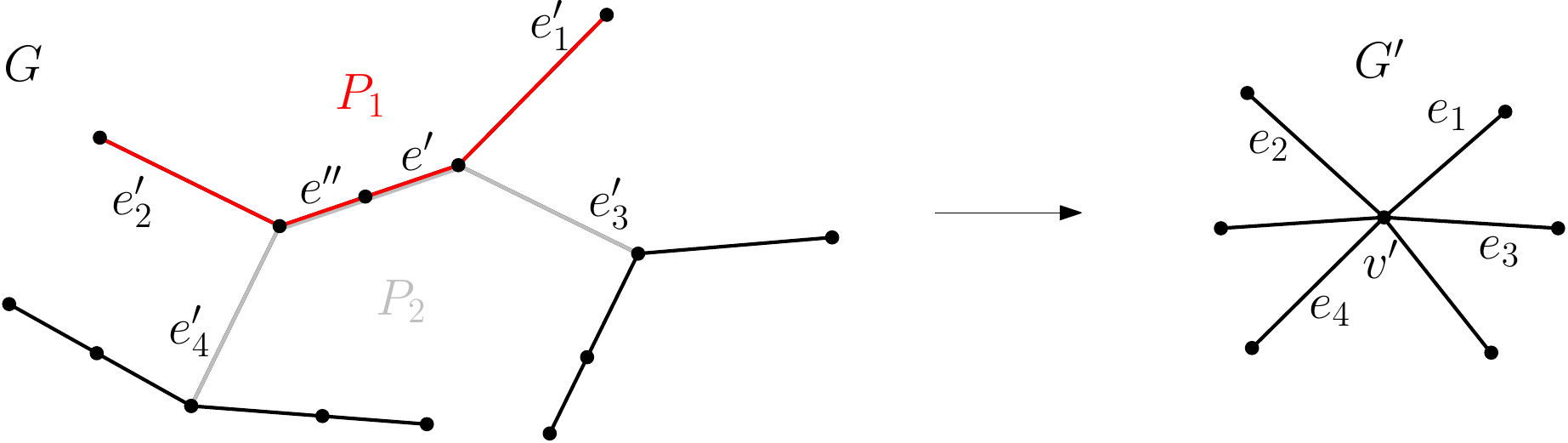}
\caption{The interleaving pair of paths $P_1$ and $P_2$ in $G$ and the edges $e_1,e_2,e_3$ and $e_4$ in $G'$ representing $e_1',e_2',e_3'$ and $e_4'$
adjacent to the path $P= P_1 \cap P_2$. Note that $e_3$ and $e_4$ are not uniquely determined by $e_3'$ and $e_4'$.}
\label{fig:last}
\end{figure}

Refer to Figure~\ref{fig:last}.
In the second case we observe that if in $G'$ the edges $e_1,e_2$ and $e_3,e_4$, respectively,
in the star $G'$ representing $P_1$ and $P_2$ appear in the rotation at $v'$
in the order $e_1,e_2,e_3,e_4$ then in the small neighborhoods of end vertices of $P=P_1 \cap P_2$ the end
pieces of the edges belonging to $E(P_1)\setminus E(P_2)$
start on the same side of $P_2$. This follows from the fact that by contracting an edge $e$ we do not alter rotations of the remaining edges and
the rotations of the two end vertices of $e$ are combined without interleaving.
Let $e_1',e_2'$ and $e_3',e_4'$, respectively, denote the corresponding edges of $P_1$ and $P_2$ so that $e_1'$ and $e_3'$ (and hence also $e_2'$ and $e_4'$)
 and incident to the same end vertex of $P$. We want to show that in the beginning before we contracted any edge the following holds
\begin{equation}
\label{eqn:last}
cr(e_1',e')+cr(e_3',e')+cr(e_1',e_3')=cr(e_2',e'')+cr(e_4',e'')+cr(e_2',e_4'),
\end{equation}
where $e'$ and $e''$, respectively, belongs to $P$ and is adjacent to $e_1'$ and $e_2'$. It can happen that $e'=e''$ if
$P$ consists of a single edge. Note that pulling an edge over a vertex does not change the validity of~(\ref{eqn:last}).
Thus, we can alter the drawing such that $cr(e_1',e')=cr(e_3',e')=cr(e_2',e'')=cr(e_4',e'')=0$. By joining two end vertices of $P_1$ by a path
completely contained in a single cluster we obtain a cycle $C$. The path $P_2$ intersects $C$ in $P$ and
is disjoint from $C$ otherwise. Note that, both, the end piece of $e_3'$ and the end piece of $e_4'$ at $C$ start either inside or outside of $C$.
Indeed, they start on the same
 side of $P_1$ and when traversing $P$ we encounter an even number of self-crossings of $C$, since the self-intersections of $P$ are counted twice
and $cr(e_1',e')=cr(e_2'',e'')=0$.  Thus, since both end vertices of $P_2$ are outside of $C$, we have
 $cr(e',e_3')+cr(e_1',e_3')=cr(e'',e_4')+cr(e_2',e_4')$, and since $cr(e',e_1')=cr(e'',e_2')$ the equation~(\ref{eqn:last}) follows.

If $P$ is formed by a single edge $e'=e''$ and $cr(e',e_i')=0$ for all $i=1,2,3,4$ by~(\ref{eqn:last})
we have $cr(e_1',e_3')=cr(e_2',e_4')$. Hence, after contracting $e'$ we have $cr(e_1',e_3')+cr(e_1',e_4')+cr(e_2',e_3')+cr(e_2',e_4')=0$ since
$cr(e_1',e_4')=cr(e_2',e_3')=0$.
 Otherwise, by contracting an edge incident to an end vertex of $P$ during our process
we preserve the parities of crossings between edges incident to the end vertices of the path that is the intersection of $P_1$ and $P_2$
after the contraction, and hence also the validity of~(\ref{eqn:last}).
This concludes the proof of Theorem~\ref{thm:tree}.

\section{Theta graphs}

\label{sec:theta}

In this section we extend result from the previous one to the class of strip clustered graphs
whose underlying abstract graph is a \emph{theta graph} defined as a union of internally vertex disjoint paths joining a pair of vertices that we call \emph{poles}. Hence, in the present section $(G,T)$ is a strip clustered graph, where $G$ is a theta graph. 

\subsection{Algorithm}

Our efficient algorithm for testing strip planarity of $(G,T)$ relies on the work of 
Bl\"asius and Rutter~\cite{BR14}. We refer the reader unfamiliar with this work
to the paper for necessary definitions. Thus, our goal is to reduce the problem to the 
problem of finding an ordering of a finite set that satisfies constraints 
given by a collection of PC-trees.\footnote{Despite the fact that~\cite{BR14} has the word
``PQ-ordering'' in the title, the authors work, in fact, with un-rooted  PQ-trees, which are our
PC-trees.}

The construction of the corresponding instance $\mathcal{I}$ of the simultaneous PC-ordering for the given $(G,T)$ 
is inspired by~\cite[Section 4.2]{BR14}.
Thus, the instance consists of a star $T_C$ having a P-node in the center  (\emph{consistency tree}),
and a collection of \emph{embedding trees} $T_0,\ldots, T_m$ constructed analogously
as in Section~\ref{sec:tree}.
The DAG (directed acyclic graph) representing $\mathcal{I}$ contains edges $(T_0,T_C,\varphi_1)$, $(T_0,T_C,\varphi_2)$, and $(T_i,T_{i+1},\phi_i)$ for $i=0,\ldots ,m-1$~\footnote{see~\cite[Section 3]{BR14} for
the definition of the DAG representing $\mathcal{I}$.} . Tree $T_0$ (see Figure~\ref{fig:thetaTree}) will consist, besides leaves and their incident edges, only of a pair of $P$-nodes joined by an edge.
It follows that the instance is solvable in a polynomial time by~\cite[Theorem 3.3 and Lemma 3.5]{BR14}\footnote{Using the terminology of~\cite{BR14} the reason is that the instance is 2-fixed.
Therein in the definition of $\mathrm{fixed}(\mu)$ it is assumed that 
every P-node $\mu$ in a  tree $T$ fixes in every parental tree $T_i$ of $T$ at most one 
P-node. This is not true in our instance due to the presence of multi-edges in the DAG.
However, multi-edges are otherwise allowed in the studied model.
 We think that the authors, in fact, meant to say that for every incoming edge of $(T_i,T)$
 the node  $\mu$ fixes in the corresponding projection of $T_i$ to the leaves of $T$ at most one P-node. Nevertheless, we can still fulfill the condition by getting rid of the multi-edge as follows. We introduce two additional copies of $T_0$, let's say $T'$ and $T''$,
 and instead of  $(T_0,T_C,\varphi_1)$ and $(T_0,T_C,\varphi_2)$ we
 put $(T',T_C,\varphi_1)$ and $(T'',T_C,\varphi_2)$. Finally, 
 we put $(T',T_0,\varphi)$ and $(T'',T_0,\varphi)$, where $\varphi$ is the identity.
 It is a routine to check that the resulting instance is 2-fixed.}
The running time follows by~\cite[Theorem 3.2]{BR14}.

\begin{figure}[htp]
\centering
\includegraphics[scale=0.7]{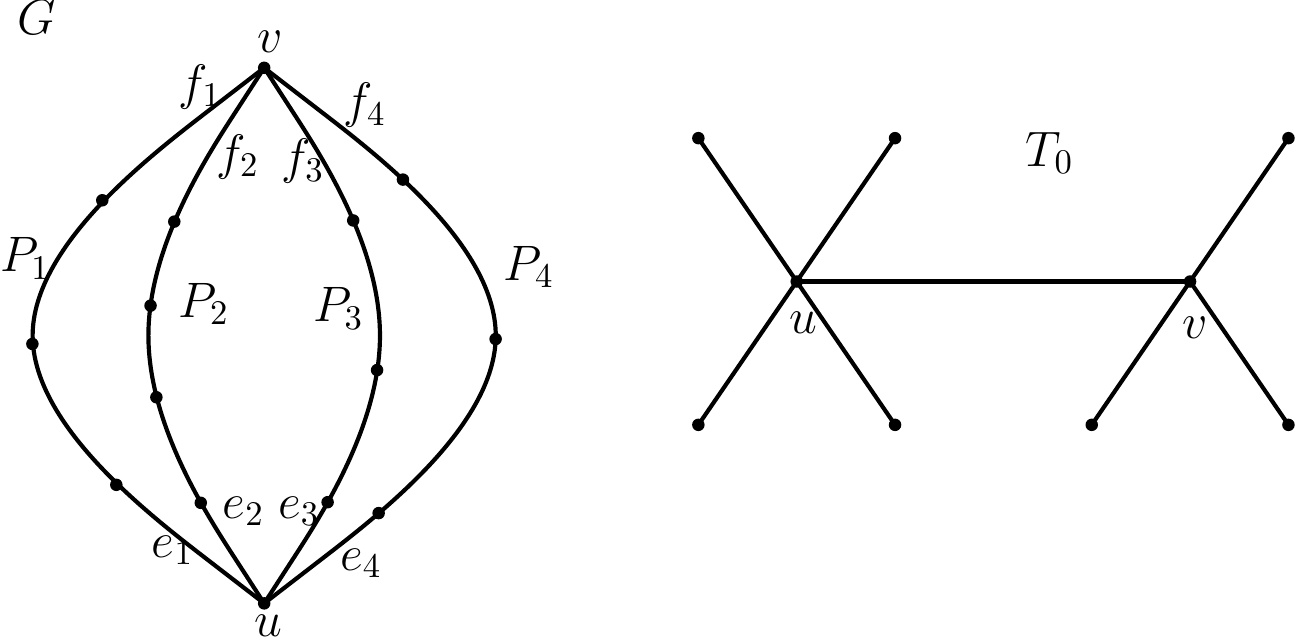}
\caption{The theta-graph $G$ (left) and the PC-tree $T_0$ in
its corresponding $\mathcal{I}$ instance.}
\label{fig:thetaTree}
\end{figure}

\paragraph{Description of $(T_0,T_C,\varphi_1)$, $(T_0,T_C,\varphi_2)$.}
Let $u$ and $v$ denote the poles of $G=(V,E)$.
Let $e_1,\ldots, e_n$ denote the edges incident to $u$.
Let $f_1,\ldots, f_n$ denote the edges incident to $v$.
We assume that $e_i$ and $f_i$ belong to the same  path $P_i$ between $u$ and $v$.
The non-leaf vertices of $T_0$ are $P$-nodes $u$ and $v$, corresponding 
to $u$ and $v$ of $G$, joined by an edge. The $P$-node $u$ is adjacent to $n$ leaves 
and $v$ is adjacent to $n-1$ leaves.
The tree $T_C$ is a PC-tree with a single $P$-node and $n$ leaves.
The map $\varphi_1$ maps injectively every leaf of $T_C$ except one to a leave adjacent to $u$
the remaining leaf is mapped to an arbitrary leave of $v$.
The map $\varphi_2$ maps injectively every leaf of $T_C$ except one to a leave adjacent to $v$
the remaining leaf is mapped arbitrarily such that the map is injective.\\

Let $I_{i}= (\min(P_i), \max(P_i))$ for every $i=1,\ldots,n$.
Let $I_j$ be chosen so that $I_i\subseteq I_j$ implies $i=j$.
Thus, $I_j$ is spanning a minimal number of clusters among $I_i$'s.
Let $I_{\alpha}:=I_j$ and $P_{\alpha}:=P_j$.
The leaves of $\varphi_1$ are mapped to the leaves of $u$ corresponding to edges 
incident to $u$ except for $e_{\alpha}$ and a single leave corresponding to the position of the 
the outer-face,
and we have the analogous compatible correspondence for $\varphi_2$ except that $v$ has 
no leave representing the outer-face that $\varphi_2$ avoids.
Since $T_C$ is the consistency PC-tree, the inherited cyclic orders of end pieces of
paths $P_i$ corresponding to $\varphi_1$ and $\varphi_2$ have opposite orientations. Thus,  one of the arcs  $(T_0,T_C,\varphi_1)$ and $(T_0,T_C,\varphi_2)$ is orientation preserving and the other one
orientation reversing.
\bigskip

\begin{figure}[htp]
\centering
\includegraphics[scale=0.7]{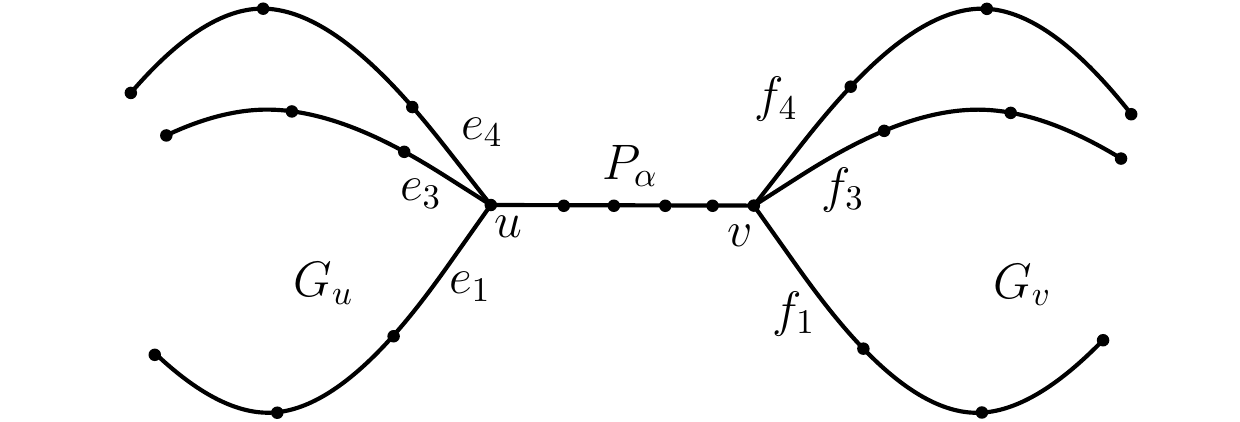}
\caption{The graph $G'$ consisting of $G_u$, $G_v$ and $P_\alpha$.}
\label{fig:thetaTree2}
\end{figure}

\paragraph{Description of $(T_i,T_{i+1},\phi_1)$.}
We will construct a strip clustered tree $(G',T')$ (see Figure~\ref{fig:thetaTree2}) yielding desired embedding trees $T_i$'s, for $i>0$, in
$\mathcal{I}$. The tree $G'$ is obtained as the union of a pair of vertex disjoint
subdivided star $G_u$ and $G_v$, and the path $P_\alpha$ (defined in the previous paragraph) joining the centers of $G_u$ and $G_v$.
The graph $G_u$ is isomorphic to  $G\setminus (\{v\}\cup E(P))$ 
and $G_v$ is isomorphic to $G\setminus (\{u\}\cup E(P))$.
 The assignment $T'$ of the vertices to clusters
is inherited from $(G,T)$.

The path $T_0T_1\ldots T_m$ in the  DAG of $\mathcal{I}$ corresponds to a variant $M'$ 
of the matrix $M$ from Section~\ref{sec:tree} constructed for $G'$ with additional rows between $M_{G'}'$ and $M_u'$.
 The leaves of $T_1$ corresponds to the columns of $M'$. \\

\noindent 
(i) The tree $T_0$ corresponds to $M_{G'}$. \\ (ii) $T_1$ takes care of the trapped vertices (that
we did not have to deal with in the case of trees).
 \\ (iii) $T_1,\ldots, T_l$, for some $l$, corresponds to $M_u'$, and  $T_{l+1},\ldots, T_m$, to $M_v'$. \\
 
The described trees naturally correspond to constraints on the rotations at $u$ and $v$.
Before we proceed with proving the correctness of the algorithm we describe the last missing
piece of the construction, the PC-tree $T_1$.

\paragraph{Description of $T_1$.}
The PC-tree $T_1$ corresponds to the set of constraints given by the following 0--1 matrix $M''$.
The leaves corresponding to $e_{\alpha}$ are all the leaves incident to $v$, and the leaves
corresponding to $f_{\alpha}$ are the leaves incident to $u$.
The correspondence of the remaining edges incident to $u$ and $v$ to the columns of $M''$ is
given by their correspondence with leaves of $T_1$ explained above.
The matrix $M''$ has only zeros in the column representing the outer-face in the rotation at $u$.
Let us take the maximal subset of edges incident to $u$, whose elements
$e_1,\ldots, e_{n'}$  are ordered (we relabel the edges appropriately) such that
$i<i'$ implies $\min (P_i)<\min(P_{i'})$. For every $j$, $1\leq j \leq n'$, we introduce
a row having zeros in the columns corresponding to $e_1,\ldots ,e_j$ and columns
corresponding to $e_i$ for which there exists $i'$, $1\leq i'\leq j$, $\min (P_i)=\min(P_{i'})$, and having ones
in the remaining columns except the one representing the outer-face.

Let us take the maximal subset of edges incident to $u$, whose elements
$e_1,\ldots, e_{n'}$ are ordered  such that
$i<i'$ implies $\max (P_i)>\max(P_{i'})$. For every $j$, $1\leq j \leq {n'}$, we introduce
a row having zeros in the columns corresponding to $e_1,\ldots ,e_j$ and columns
corresponding to $e_i$ for which there exists $i'$, $1\leq i'\leq j$, $\max (P_i)=\max(P_{i'})$, and having  ones
in the remaining columns except the one representing the outer-face.  

A trapped vertex $v'$ on $P_{i}$ in a cycle $C$ consisting of $P_{i'}$ 
and $P_{i''}$ would violate a constraint $\{e_i'e_i''\}\{e_ie_0\}$,
where $e_0$ is the dummy edge on the outer-face enforced by $T_1$. \\

It remains to prove that the instance $\mathcal{I}$ is a if and only if 
any corresponding witnessing order of the leaves yields an embedding of $(G,T)$ that
is strip planar by Theorem~\ref{thm:characterization}.
This might come as a surprise since some constraints on the rotation system enforced by the original instance $(G,T)$ 
might be missing in $\mathcal{I}$, and on the other hand some additional constraints might be introduced.

\begin{theorem}
\label{thm:theta_alg}
The instance $\mathcal{I}$ is a ``yes'' instance if and only if $(G,T)$ is strip planar.
\end{theorem}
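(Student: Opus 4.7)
The plan is to prove both directions of the equivalence using Theorem~\ref{thm:characterization}, which reduces strip planarity of $(G,T)$ to the combinatorial conditions of no unfeasible interleaving pair of an $s$-cap and $b$-cup, and no trapped vertex. Since $G$ is a theta-graph, an embedding of $(G,T)$ is determined, up to reflection and choice of outer face, by the rotations at the two poles $u$ and $v$. The instance $\mathcal{I}$ is designed so that simultaneous orderings of the leaves of the embedding trees $T_i$ correspond precisely to pairs of rotations at $u$ and $v$ that (i) satisfy the local feasibility constraints at each pole encoded in $M_u'$ and $M_v'$, (ii) avoid the rotation patterns that produce trapped vertices (encoded in $T_1$ via the matrix $M''$), (iii) are compatible across both poles (encoded in the consistency tree $T_C$ via the maps $\varphi_1,\varphi_2$), and (iv) admit $P_\alpha$ as a path lying on the outer face of the resulting drawing.

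For the ``only if'' direction, suppose $(G,T)$ has a strip clustered embedding $\mathcal{D}$. The induced rotations at $u$ and $v$ give a simultaneous ordering at the leaves of each $T_i$. The constraints from $M_u'$ and $M_v'$ are satisfied because $\mathcal{D}$ restricts to a strip clustered embedding of the tree $(G',T')$, to which Lemma~\ref{lemma:treelemma} applies and yields exactly those constraints; the constraints from $T_1$ are satisfied because any violation would correspond to a vertex of some $P_i$ trapped inside a cycle $P_{i'}\cup P_{i''}$, contradicting Observation~\ref{obs:trap}; the consistency tree $T_C$ is satisfied by choosing $\varphi_1$ or $\varphi_2$ according to the orientation of $\mathcal{D}$, with the outer-face leaf in $T_C$ matched to the face of $\mathcal{D}$ bordering $P_\alpha$, which we may assume to lie on the outer face since $I_\alpha$ has minimal cluster span. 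Conversely, given a solution to $\mathcal{I}$, the prescribed rotations at $u$ and $v$ feed into Lemma~\ref{lemma:treelemma} to produce a strip clustered embedding $\mathcal{D}'$ of $(G',T')$; we then re-insert the remaining paths $P_i$, $i\neq\alpha$, between $u$ and $v$ using the rotations given by the solution. By Theorem~\ref{thm:characterization}, to verify that $\mathcal{D}'$ extends to a strip clustered embedding of $(G,T)$ it suffices to rule out unfeasible pairs and trapped vertices in the extended drawing; these are precisely excluded by the combined constraints $M_u'$, $M_v'$, $T_1$, and $T_C$.

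The main obstacle is the analysis of interleaving pairs of an $s$-cap and $b$-cup whose combined image in $G$ passes through both poles: each such pair imposes a constraint jointly on the rotations at $u$ and $v$, and one must argue that this global constraint factors through the combined data of the local matrices $M_u'$, $M_v'$ and the consistency tree $T_C$. The crucial combinatorial input is the minimality of $I_\alpha$: placing $P_\alpha$ on the outer face (as enforced by the outer-face leaf in $T_C$ and by the structure of $T_1$) guarantees that $P_\alpha$ cannot itself participate in any interleaving pair as either the $s$-cap or the $b$-cup, so every interleaving pair in $G$ becomes visible to either $M_u'$ or $M_v'$ after restriction to $G_u\cup P_\alpha$ or $G_v\cup P_\alpha$, while the coherent pairing of the two restrictions is precisely what $T_C$ enforces. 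A secondary obstacle is verifying the exact form of the trapped-vertex matrix $M''$ in $T_1$; this amounts to enumerating the rotation patterns at $u$ that force two paths $P_{i'},P_{i''}$ to enclose an interior vertex of some $P_i$ and checking that the associated $0$--$1$ rows correctly exclude each such pattern, which is technical but essentially routine once the ordering of the edges at $u$ by $\min(P_i)$ and by $\max(P_i)$ used in the construction of $M''$ is unpacked.
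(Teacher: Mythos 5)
Your overall frame coincides with the paper's: both directions are run through Theorem~\ref{thm:characterization}, the easy direction reads the orders off an embedding (drawing the doubled copies of the $P_i$'s alongside the originals so that $\mathcal{D}$ induces a strip clustered embedding of $(G',T')$), and the hard direction must show that a simultaneous ordering excludes every trapped vertex and every unfeasible interleaving pair. The gap is in that hard direction, at exactly the point you dismiss as handled by ``restriction''. Your key claim --- that minimality of $I_\alpha$ lets you place $P_\alpha$ on the outer face so that every interleaving pair of $G$ ``becomes visible to either $M_u'$ or $M_v'$ after restriction to $G_u\cup P_\alpha$ or $G_v\cup P_\alpha$'', with $T_C$ only supplying the pairing --- is not correct. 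The critical pairs are those whose intersection is an entire path $P_l$ with $l\neq\alpha$: there $P_1'$ and $P_2'$ traverse all of $P_l$, diverging at $u$ into $e_p,e_q$ and at $v$ into $f_r,f_s$, and feasibility is a \emph{joint} condition relating the rotation at $u$ to the rotation at $v$. In the tree $G'$ the two copies of $P_l$ hang separately off $u$ and off $v$, so no row of $M_u'$ or $M_v'$ expresses this condition, and $T_C$ (which only forces the cyclic orders of the $e_i$'s and $f_i$'s to be reverses of each other) does not by itself imply it. Note also that minimality of $I_\alpha$ is inclusion-minimality: it does not give $I_\alpha\subseteq I_l$ for all $l$, nor does it prevent $P_\alpha$ (or its subpaths) from occurring in interleaving pairs; pairs meeting exactly in $P_\alpha$ are in fact one of the easy cases.

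The paper's proof spends essentially all of its effort on precisely these two-pole pairs. When $I_l\supseteq I_\alpha$ one can replace $P_l$ by $P_\alpha$ inside $P_1'\cup P_2'$ to get a cap/cup pair that \emph{is} a constraint of the tree instance, giving $\{e_pf_r\}\{e_qf_s\}$, and then combine it with $\{e_\alpha e_l\}\{e_pe_q\}$ from $T_1$, $\{f_\alpha f_l\}\{f_rf_s\}$ from $T_C$, and the $T_0$ constraints to conclude the orientations of $e_l,e_p,e_q$ at $u$ and $f_l,f_r,f_s$ at $v$ are opposite. When $I_l$ does not contain $I_\alpha$ the replaced pair need not be an $s$-cap/$b$-cup pair at all, and the paper runs a separate, delicate derivation (the implication $\{e_pf_r\}\{e_qf_s\}\Rightarrow(\{e_lf_l\}\{e_pe_q\}\Leftrightarrow\{e_lf_l\}\{f_sf_r\})$ together with its negated counterpart, verified by a case analysis over the admissible orderings) in which the $\min(P_i)$/$\max(P_i)$ rows of $T_1$ are used to certify feasibility of interleaving pairs, not merely to exclude trapped vertices as you assume. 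None of this is recoverable from your restriction argument, so the ``if'' direction as proposed does not go through; the remaining issue you would also need (and which the paper settles first) is that the doubled copies in $G'$ can create constraints with no counterpart among pairs of paths of $G$, and these must be recognized as exactly the trapped-vertex conditions rather than as spurious restrictions.
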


\begin{proof}
By the discussion above it remains to prove that the order constraints corresponding to the
trees $T_2,\ldots, T_m$ are all the constraints given by the unfeasible
interleaving pairs of paths $P_1'$ and $P_2'$, and possibly additional
constraint enforced by trapped vertices.

First, we note that no additional constraints are introduced due to the fact 
that we might have two copies of a single vertex of $G$ in $G'$. Indeed, such a 
constraint corresponds to a pair of paths $P_1'$ and $P_2'$ intersecting in the copy of $P_{\alpha}$,
such that their union contains at least one whole additional $P_l$, for $l\not=\alpha$.
However, in this case it must be that the two copies of a single vertex in the union of $P_1'$ and $P_2'$ are the endpoints of, let's say $P_1'$. Thus, the constraint of $P_1'$ and $P_2'$ exactly prevents end vertices of $P_2'$ from being trapped
 in the cycle of $G$ obtained by identifying the end vertices of $P_1'$.

\begin{figure}[htp]
\centering
\includegraphics[scale=0.7]{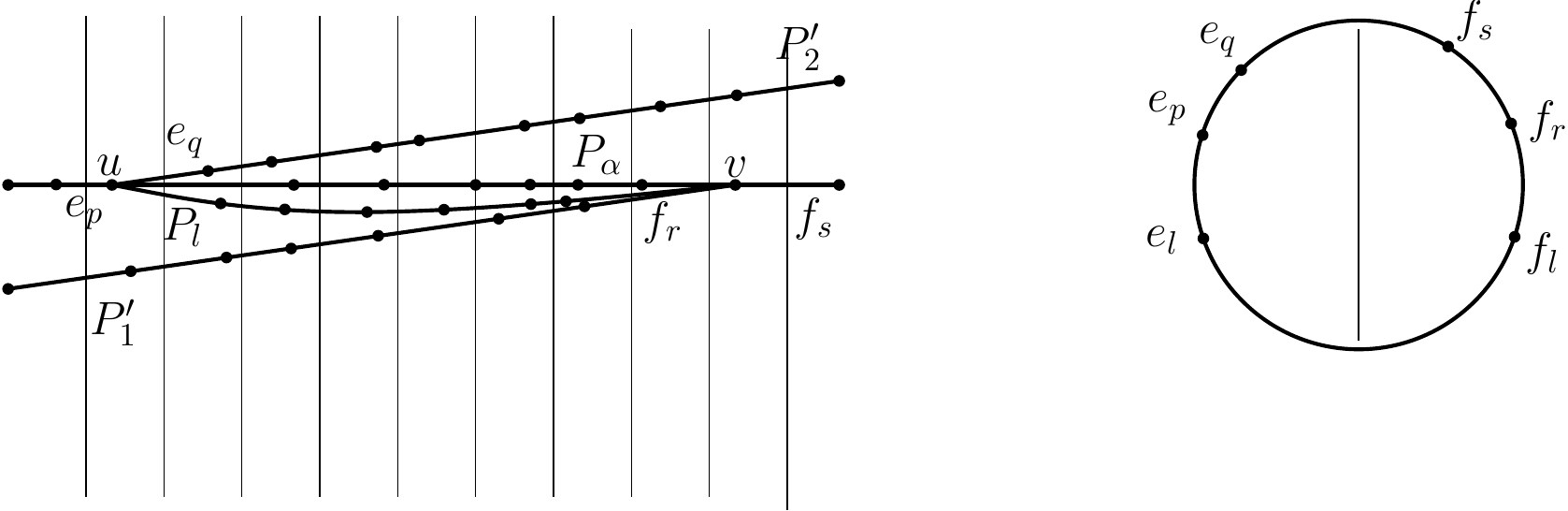}
\caption{The paths $P_1'$ and $P_2'$ meeting in $P_l$ such that $I_{l}= (\min(P_l), \max(P_l))$ 
contains $I_\alpha$ (left). The corresponding cyclic order of the leaves  corresponding to the edges incident to $u$ and $v$ captured by $T_0$ (right).}
\label{fig:thetaTree3}
\end{figure}

  If $P_1'$ and $P_2'$ intersect exactly in the vertex $u$ or $v$ the corresponding constraint is definitely captured. Similarly, if $P_1'$ and $P_2'$ intersects exactly in the path $P_\alpha$.
Otherwise, if $P_1'$ and $P_2'$ intersects in a path $P_l$ 
such that $I_{l}= (\min(P_l), \max(P_l))$ contains $I_\alpha$, we have the corresponding
constraint present implicitly. 

Refer to Figure~\ref{fig:thetaTree3}. Indeed, an ordering $\mathcal{O}$ of the columns of $M'$  witnessing that $\mathcal{I}$ is a ``yes'' instance satisfies 
$\{e_{\alpha}e_l\}\{e_pe_q\}$  by $T_1$, where
$e_p$ and $e_q$, respectively, is the edge incident to $u$ belonging to $P_1'$ and $P_2'$,
and  also $\mathcal{O}$ satisfies $\{f_{\alpha}f_l\}\{f_rf_s\}$, by the consistency tree $T_C$, where
$f_r$ and $f_s$, respectively, is the edge incident to $v$ belonging to $P_1'$ and $P_2'$,
Moreover, by the constraint obtained from the union of
paths  $P_1'$ and $P_2'$ by replacing $P_l$ with $P_{\alpha}$, we obtain
that $\mathcal{O}$ satisfies $\{e_pf_r\}\{e_qf_s\}$. By the constrains of $T_0$ it then 
follows that in the rotation at $u$ 
 the edges $e_l,e_p,e_q$ appear w.r.t. to this order 
 with the opposite orientation as $f_l,f_r,f_s$. Hence, the pair of $P_1'$ and $P_2'$ is feasible with respect to $\mathcal{O}$.

\begin{figure}[htp]
\centering
\includegraphics[scale=0.7]{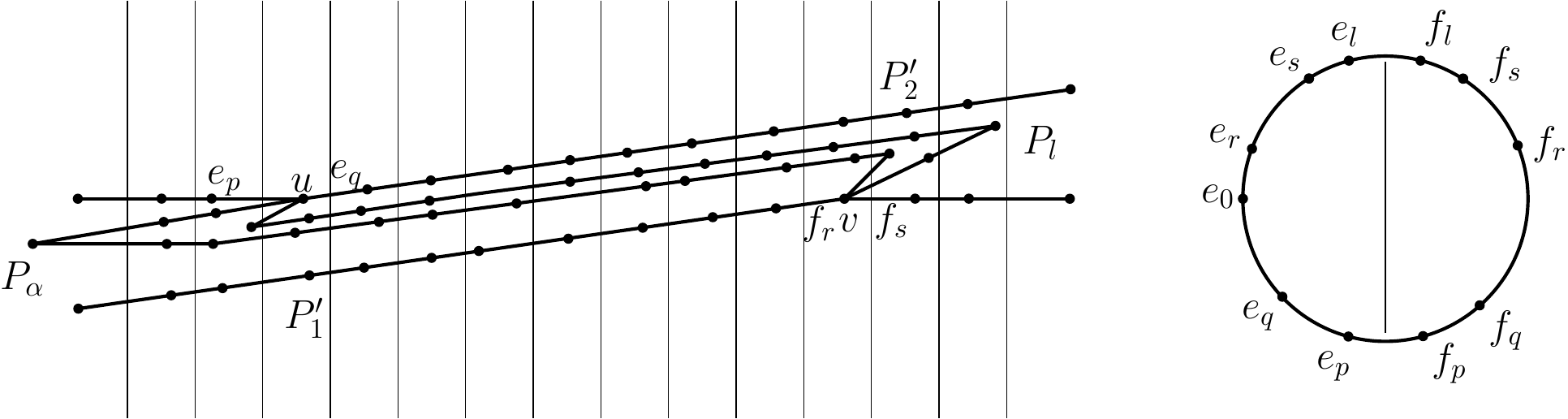}
\caption{The paths $P_1'$ and $P_2'$ meeting in $P_l$ such that $I_{l}= (\min(P_l), \max(P_l))$ 
does not contain $I_\alpha$ (left). The corresponding cyclic order of the leaves  corresponding to the edges incident to $u$ and $v$ captured by $T_0$ (right).}
\label{fig:thetaTree4} 
\end{figure}

Finally, if $I_{l}$ does not contain $I_\alpha$, the previous argument does not apply if the interval \\
$(\min(P_1'\cup P_2'), \max(P_1' \cup P_2'))$ does not contain 
$I_\alpha$. We assume that \\ $$\max(P_1' \cup P_2')>\max(I_l)>\max(I_\alpha)\ge \min(I_l)> \min(P_1' \cup P_2'))\ge \min(I)$$ and handle the remaining cases by the symmetry. We assume that $P_1'$ is a cap
 passing through edges $e_p$ and $f_r$.  We assume that $P_2'$ is a cup
 passing through edges $e_q$ and $f_s$. 
 
 The ordering $\mathcal{O}$ satisfies $\{e_pe_{\alpha}\}\{e_le_q\}$ and $\{f_rf_{\alpha}\}\{f_lf_s\}$.
 By  $T_1$ we have $\{e_qe_s\}\{e_{\alpha}e_l\}$. By $T_0$ and $T_C$,  $e_i$'s and $f_i$'s appear consecutively and they are reverse of each other in $\mathcal{O}$. Thus, we have $\{e_lf_l\}\{e_qf_s\}$.
 
 Refer to Figure~\ref{fig:thetaTree4}.
  A simple case analysis reveals that the observations in the previous paragraph gives us the following.
 If  $\{e_pf_r\}\{e_qf_s\}$ is
  satisfied by $\mathcal{O}$,
  then $\mathcal{O}$ satisfies $\{e_lf_l\}\{e_pe_q\}$ if and only if $\{e_lf_l\}\{f_sf_r\}$.
   On the other hand, if $\{e_pf_r\}\{e_qf_s\}$ is not satisfied by $\mathcal{O}$,
   then  $\mathcal{O}$ satisfies $\{e_lf_l\}\{e_pe_q\}$ if and only if it does not satisfy $\{e_lf_l\}\{f_sf_r\}$.
    By the symmetry there are four cases to check (see Figure~\ref{fig:thetaTree5}).
   Using the language of the formal logic the previous fact about $\mathcal{O}$
   is expressed by the following formula.

$$  \{e_pf_r\}\{e_qf_s\} \Rightarrow  (\{e_lf_l\}\{e_pe_q\} \Leftrightarrow \{e_lf_l\}\{f_sf_r\})  \ \bigwedge \  \neg \{e_pf_r\}\{e_qf_s\} \Rightarrow  (\{e_lf_l\}\{e_pe_q\} \Leftrightarrow \neg \{e_lf_l\}\{f_sf_r\}) $$
   
   \begin{figure}[htp]
\centering
\includegraphics[scale=0.7]{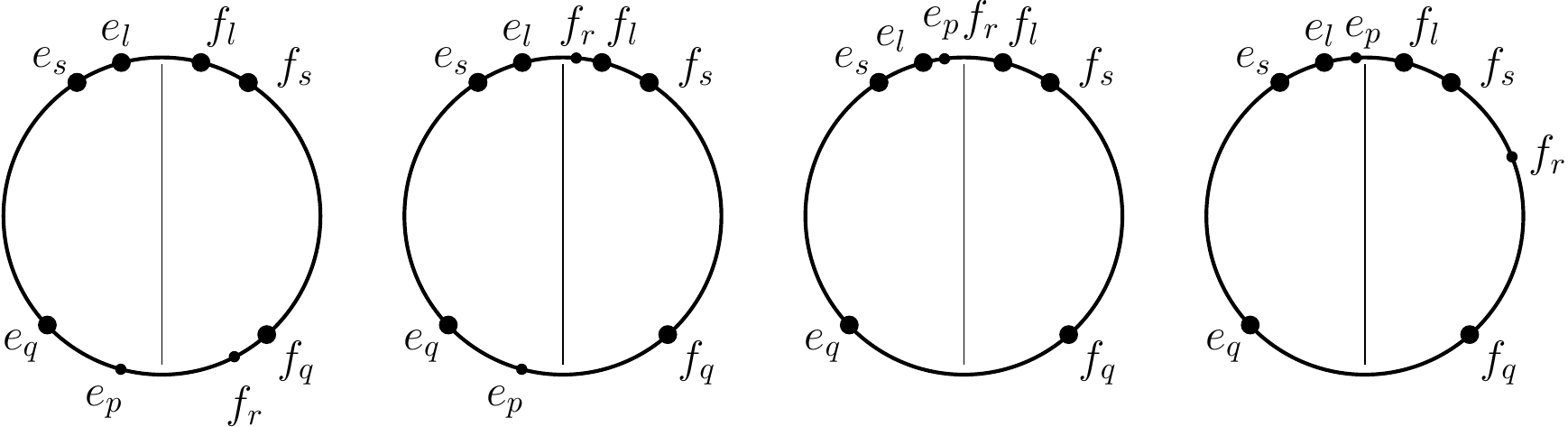}
\caption{The orderings of the leaves in $T_0$ corresponding to the true assignment of the propositional formula. Since only $e_p,e_q,f_r,f_s,e_l$ and $e_l$ appear in the formula.
We can fix $e_q,f_s,e_l$ and $f_l$ and use top-bottom symmetry.}
\label{fig:thetaTree5} 
\end{figure}

By the formula, it easily follows that the orientation  around $u$ and $v$, respectively, of the edges $e_l,e_p,e_q$ and $f_l,f_r,f_s$ is opposite of each other and we are done.
\end{proof}

\subsection{Hanani--Tutte}

We note that in our instance $\mathcal{I}$ in every tree $T$ all the $Q$-nodes are fixed by $Q$-nodes in only one of its children. This property remains to be true in the expansion graph. Thus,  $\mathcal{I}$ is a ``yes'' instance if all its
trees in the expansion graph are.

By the same token as in Section~\ref{sec:tree} the corresponding variant of the Hanani--Tutte theorem for strip clustered theta graphs
 follows once we extend Observation~\ref{obs:separation} to all the constraints implied
 by the trees in the expansion graph of $\mathcal{I}$~\cite{BR14}. By the proof of Theorem~\ref{thm:tree} we can extend
  Observation~\ref{obs:separation} to the constraints implied by trees $T_0,T_2,\ldots,T_m$, if
a strip clustered independently even
drawing of $(G',T')$ (corresponding to $T_0$) exists.
To this end in the initial drawing of $(G,T)$ we draw a copy of every path between $u$ and $v$  of $G$
in its close neighborhood. The desired drawing of $(G',T')$ is contained in the doubled
drawing.

  Let us fix an independently odd strip clustered drawing of $(G,T)$.
We show that constraints implied by $T_1$ satisfy the condition of Observation~\ref{obs:separation}.
First, suppose that $\{e_1e_2\}\{e_3e_4\}$, where $e_1,e_2,e_3$ and $e_4$ are arbitrary edges incident to $u$ is implied by $T_1$.
Let $C$ denote the cycle in $G$ passing through $e_1$ and $e_2$.
W.l.o.g. assume that $e_3$ and $e_4$, respectively belongs to the path $P_3$ and $P_4$ in $G$ containing a vertex $u'$ and $v'$
such that $\gamma(u'),\gamma(v')<\min (C)$ or $\gamma(u'),\gamma(v')>\max(C)$. 
Since in an independently
even strip clustered drawing of $(G,T)$ the paths $P_3$ and $P_4$ eventually visit the unbounded region in the complement of $C$ in the plane, we have $cr(e_1,e_3)+cr(e_2,e_3)=cr(e_1,e_4)+cr(e_2,e_4)$
if $\{e_1e_2\}\{e_3e_4\}$ is respected by the rotation at $u$.

\paragraph{Dummy edge.}
Let $e_0$ denote a dummy edge incident to $u$ 
corresponding to the column of $M'$ representing the position of the outer face in the rotation at $u$.
The edge $e_0$ is added to the drawing and spans all the clusters. We do not worry that it
violates the properties of strip clustered drawing. We just need $e_0$ to cross every edge of $G$
not adjacent to it an even number of times. To achieve this is fairly easy.
We notice that the edges of $G$ not adjacent to $e_0$ form a tree $G\setminus u$. Thus, no matter how we draw $e_0$, the edges crossing it an odd number of times will form an edge cut in $G\setminus u$.
By  switching  $e_0$ with all the vertices on one side the cut we make $e_0$ independently even.\\

Suppose that $\{e_0e_1\}\{e_2e_3\}$ is implied by $T_1$, and respected by the rotation at $u$.
Similarly as above we obtain $cr(e_0,e_2)+cr(e_1,e_2)=cr(e_0,e_3)+cr(e_1,e_3)$.
Indeed, both $e_0$ and $e_1$ belong to a path that visits a vertex outside of the cycle containing $e_2$ and $e_3$.

As we already mentioned in our instance of $\mathcal{I}$ every $Q$-node of a PC-tree is fixed only by $Q$-nodes of only one of its children. This property is kept in the expansion tree of $\mathcal{I}$.
It remains to show that  Observation~\ref{obs:separation} is extendable to constraints implied by expansion trees of $\mathcal{I}$. To this end we  need to show that the projection of  
the representatives of the fixed edges of $\mu$, when constructing  $(\mu,T'',T''')$ (see ~\cite[Section 3]{BR14}) maintains the validity of Observation~\ref{obs:separation} in the resulting PC-tree.
By Lemma~\ref{lemma:modular}, Observation~\ref{obs:separation}
holds for edge modules.

We proceed by a  top-to-bottom induction in the corresponding DAG.
We took care of the base case above.
Suppose that $\{e_1e_2\}\{e_3e_4\}$ is implied by $T''$, and in the tree
$(\mu,T'',T''')$, the constraint projects to $\{e_1'e_2'\}\{e_3'e_4'\}$, where
$e_i$ is mapped to $e_i'$.
We have $cr(e_1,e_3)+cr(e_2,e_3)+cr(e_1,e_4)+cr(e_2,e_4)=0$ (by induction hypothesis),
if  $\{e_1e_2\}\{e_3e_4\}$ is respected in the drawing.
We want to show that $cr(e_1',e_3')+cr(e_2',e_3')+cr(e_1',e_4')+cr(e_2',e_4')=0$,
if $\{e_1'e_2'\}\{e_3'e_4'\}$ is respected in the drawing.

If $e_1\not=e_1'$, we have $\{e_1e_1'\}\{e_3e_4\}$,
implied by a common parent of $T''$ and $T'''$. 
Thus, $cr(e_2,e_3)+cr(e_2,e_4)=cr(e_1,e_3)+cr(e_1,e_4)=cr(e_1',e_3)+cr(e_1',e_4)$,
if $\{e_1'e_2\}\{e_3e_4\}$ is respected by the drawing.
Similarly, we obtain $cr(e_1',e_3)+cr(e_1',e_4)=cr(e_2,e_3)+cr(e_2,e_4)=cr(e_2',e_3)+cr(e_2',e_4)$,
if $\{e_1'e_2'\}\{e_3e_4\}$ is respected by the drawing.
Repeating the same argument for $e_3$ and $e_4$ we obtain 
$cr(e_1',e_3')+cr(e_2',e_3')+cr(e_1',e_4')+cr(e_2',e_4')=0$,
if $\{e_1'e_2'\}\{e_3'e_4'\}$ is respected in the drawing.

\subsection{Beyond theta graphs and trees}
 
 One might wonder if our algorithm and/or Hanani--Tutte variant for theta graphs can be extended to strip clustered
 graphs with an arbitrary underlying abstract planar graph.
 
 It is tempting to consider the following  definition of a strip clustered tree $(G',T')$ corresponding 
 to a strip clustered instance $(G,T)$.
 Let us suppress every vertex of degree two in $G$. 
Let $G''$ denote the  resulting multi-graph. Each edge $e$ in $G''$ has a corresponding
path $P:=P(e)$ in $G$ (which is maybe just equal to $e$).
Let us assign a non-negative weights $w(e)$ to every edge $e$ of $G''$ equal to $\max(P)-\min(P)$.
Let $G'''$ denote the minimum weight spanning tree of $G''$.

The tree $G'$ is obtained as the union of the sub-graph of $G$ corresponding to
$G'''$, and pairs of copies of paths $P(e)$, for each $e=uv$, where
one copy of each pair is attached to $u$ and the other one to $v$, but otherwise 
disjoint from $G'''$. The assignment of the vertices to the clusters  in $(G',T')$ 
is inherited from $(G,T)$.

Note that if $G$ is a theta graph the resulting $G'$ is almost equal to the one defined
in the previous subsections, except that earlier we shortened copies of $P(e)$'s for $e$ not in $G'''$
so that they do not contain one end vertex of $P(e)$. In the case of theta graphs this does
not make a difference, but for more general class of graphs our new definition of $G'$ might
be more convenient to work with. 

Now, we would like to use the construction of~\cite[Section 4.1]{BR14} enriched by
the constraints of the strip clustered tree $(G',T')$, and other necessary contraptions if 
the graph is not two-connected.

If the graph is two-connected it seems plausible that
it is enough if we  generalize our construction of $T_1$ (taking care of trapped vertices),
and prove that the constraints of $(G',T')$ together with other constraints account for 
all unfeasible interleaving pairs of path.
To construct $T_1$ is not difficult, since we just repeat our construction of the matrix corresponding to $T_1$ for each consistency tree corresponding to a pair of vertices participating in the two-cut,
and combine the resulting matrices.

Then it seems that the only challenging part is to adapt the last paragraph in the proof of Theorem~\ref{thm:theta_alg}, which does not seem to be beyond reach. To this end it is likely that  a more efficient version of the characterization in Theorem~\ref{thm:characterization} is also needed, where by ``more efficient'' we mean a version
that restricts the set of interleaving pairs of paths considered. This should be possible,
since in our proofs for trees and theta graphs a considerably  restricted subset
of interleaving feasible pairs implied feasibility for the rest.

When the graph $G$ is not guaranteed to be two-connected, generalizing $T_1$ does not
seem to be a way to go. However, our general strategy of combining
simultaneous PC-orderings with our characterization could work, if the constraints preventing an occurrence of trapped vertices and unfeasible interleaving pairs  are treated simultaneously using a more efficient version of 
 Theorem~\ref{thm:characterization}.

\section{Open problems}

\label{sec:open}

We proved the weak variant of the Hanani-Tutte theorem for strip clustered graphs, and verified the corresponding strong variant for
three-connected graphs, trees and theta-graphs.
A next natural step would be the case, when $G$ is a series-parallel graph,
possibly with cycle-free components in the cactus decomposition,
which should be within reach using the developed techniques.

If we look at c-planarity testing of clustered graphs as a graph augmentation problem  (as explained in the introduction)
 one might feel that the problem, if tractable, could be solved by a variant of matroid intersection algorithm~\cite{L75}.
 However, until now the matroid intersection algorithm was shown to imply a tractability of c-planarity testing only in a very limited special case~\cite{FKP12J}.
 Our work supports this connection, even though less directly, since a perfect matching whose existence is guaranteed by Hall's theorem can be found as a set  in the intersection
of  two partition matroids.



Also it would be interesting to decide if Corollary~\ref{thm:treeHT} can be extended to flat clustered graphs not containing
a family of counter-examples from~\cite{FKP12J} as described in the full version~\cite{FKP12J}.

\paragraph{Acknowledgment.}

We would like to express our sincere gratitude to the organizers and
participants of the 11th GWOP workshop, where we could discuss the research
problems treated in the present paper. In particular, we especially benefited from the discussions with Bettina Speckmann, Edgardo Rold\'{a}n-Pensado and Sebastian Stich.

Furthermore, we would like to thank J\'{a}n Kyn\v{c}l for useful discussions at the initial stage of this work and many useful comments, Juraj Stacho
for directing my attention to 0--1 matrices with consecutive ones property, Arnaud de Mesmay for pointing out a gap in one of the proofs, and
 G\'abor Tardos for comments that helped to improve the presentation of the results.

\bibliographystyle{plain}

\bibliography{bib}





\end{document}